\documentclass[11pt, a4paper, twoside, titlepage]{report}

\usepackage{graphicx} 
\usepackage{bm}
\usepackage{graphicx}
\usepackage{epstopdf}
\usepackage[cmex10]{amsmath}
\usepackage{amssymb}
\usepackage{enumerate}
\usepackage{amsfonts}
\usepackage{float}
\usepackage{epigraph}
\usepackage{amsmath}
\usepackage{cuted}
\usepackage{pdfpages}
\usepackage{booktabs}
\usepackage{qcircuit}
\usepackage{braket}
\usepackage{placeins}
\usepackage{bm}
\usepackage{graphicx}
\usepackage{epstopdf}
\usepackage[usenames,dvipsnames]{pstricks}
\usepackage{epsfig}
\usepackage{amsmath}
\usepackage{amssymb}
\usepackage{amsfonts}
\usepackage{amsmath}

\usepackage{fancybox}
\usepackage{fancyhdr}
\usepackage{float}
\usepackage{lipsum}
\usepackage{mathtools}
\usepackage{cuted}
\usepackage{glossaries}
\usepackage{dsfont}
\usepackage{placeins}
\usepackage{booktabs}
\usepackage{multirow}
\usepackage{multicol}
\usepackage{dblfloatfix}
\usepackage{caption}
\usepackage[margin=1cm]{caption}
\usepackage{subcaption}
\usepackage{chngcntr}
\usepackage[toc,page]{appendix}
\usepackage{amsmath}
\usepackage{dsfont}
\usepackage{amsthm}
\newtheorem{theorem}{Theorem}[section]
\newtheorem{lemma}{Lemma}[section]
\newtheorem{definition}{Definition}[section]
\newtheorem{corollary}{Corollary}[theorem]

\newtheorem{proposition}{Proposition}[section]
\newtheorem{postulate}{Postulate}
\newtheorem*{postulate2*}{Postulate 2$'$}

\usepackage{tesis_ESI_en}
\usepackage{amsmath}
\usepackage{pstricks}

\setlength{\epigraphrule}{0pt}

\newcommand{\comment}[1]{}

\setlength{\belowcaptionskip}{-30pt}

\begin{document}

\pagenumbering{roman}
\pagestyle{empty}
\pagestyle{empty}
\vspace*{6cm}
\begin{center}
\begin{minipage}[c][0.7\height][t]{\textwidth}
	\begin{center}
	\LARGE{\bf Decoherence and Quantum Error Correction for Quantum Computing and Communications}
	\end{center}
\end{minipage}\end{center}
\cleardoublepage


\begin{center}
\begin{minipage}[c][0.8\height][t]{\textwidth}
	\begin{center}
	\LARGE{\bf{Decoherence and Quantum Error Correction for Quantum Computing and Communications}}
	\end{center}
\end{minipage}\vspace*{1cm}
\begin{minipage}[c][0.2\height][t]{\textwidth}
	\begin{center}
	\Large{\bf{by}}
	\end{center}
\end{minipage}\vspace*{1cm}
\begin{minipage}[c][0.8\height][t]{\textwidth}
	\begin{center}
	\Large{\bf{Josu Etxezarreta Martinez}}
	\end{center}
\end{minipage}
\begin{minipage}[c]{\textwidth}
\begin{center}
\vspace{1cm}
\includegraphics[scale=0.8]{./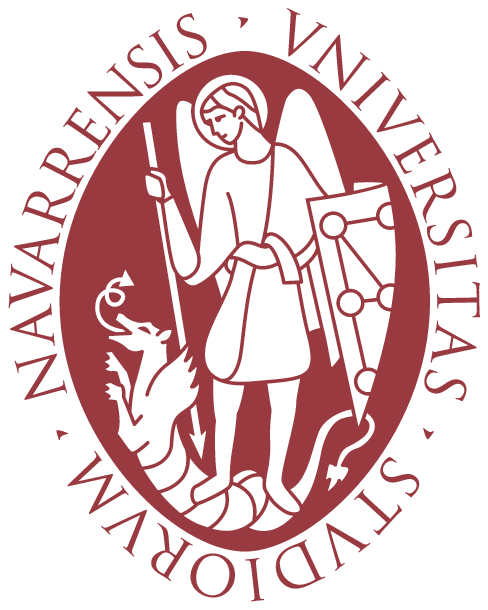}
\vspace{1cm}
\end{center}
\end{minipage}
\vspace{1cm}
\begin{minipage}[t][1.1\height][t]{1.0\textwidth}
	\begin{center}
	\large In Partial Fulfillment of the Requirements \\ for the Degree of \\ Doctor of Philosophy

	\end{center}
	\vspace{0.24cm}
\end{minipage}

\vspace{0cm}
\begin{minipage}[t][1.1\height][t]{\textwidth}
	\begin{center}
	\large Under the supervision of:\\
	\vspace{0.5cm}
	Professor Pedro M. Crespo Bofill
	\end{center}
	\vspace{1.15cm}
\end{minipage}

{\large University of Navarra--TECNUN\\
Faculty of Telecommunication Engineering\\
2021}\end{center}
\clearpage

\normalsize\vspace*{15cm}
\begin{minipage}{13cm}
Doctoral Dissertation by the University of Navarra-TECNUN\\ 

\copyright{Josu Etxezarreta Martinez}, Dec 2021, All Rights Reserved\\[2ex] 

This doctoral dissertation has been funded by the Basque Government predoctoral research
grant.\\

Printed in TECNUN, Donostia-San Sebasti\'an
\end{minipage}

\cleardoublepage
\epigraph{Herodotus of Halicarnassus here presents the results of his study, so that time may not abolish the works of men...}{\textbf{Herodotus}, \textit{Histories}}
\epigraph{Eta azken hitzak ere ahots apalez esaten ahaleginduko naiz, lehenagokoetan ere horixe izan baita ene asmoa, apalki, ia zurrumurrua bezala, ene hitzok ere zarataren munduan gal ez daitezen.}{\textbf{Joseba Sarrionandia},\\ \textit{Ni ez naiz hemengoa}}
\cleardoublepage

\pagestyle{plain}
\pagenumbering{roman}
\setcounter{page}{3}

\pagestyle{fancyplain}

\chapter*{Eskerrak}
Bizitzea ez al da oso arriskutsua? \\

Esaldi honekin hasi eta bukatzen du Joseba Sarrionandiak 2018ko lehenengo bost hilabetetan idatzitako gaukaria. Urte berdineko urtarrilean hasi nuen lau urtez luzatu den abentura hau, Master Amaierako Lana berarekin egin nahi nuela esan nionean Pedrori. Orduztik, galdera erretoriko horrek aipatzen duen arriskua antzeman ahal izan det: dudak, gorabeherak, oinazeak, pandemia mundial bat, angustia ... Hala ere, bizitzak jartzen dizkigun erronka hauei aurre egitea beharrezkoa degu gizakiok. ``Desatseginak diren bizipenak ukatuz plazera lortu daiteke, baina zoriontasuna ez'' idatzi zuen Konrad Lorenz etologoak. Hortaz, bizitzak botatako gorabeherengandik ikasi behar degu momentu oro, hau ez baita behin ere gelditzen Berri Txarrak-ek abesten duen bezela. Ez pentsa, ordea, dena iluna izan denik, urte hauek momentu zoragarriak eman baitizkidate pertsona zoragarriagoez inguraturik. Lorenzekin jarraituz, eta lehenagoko aipua zehaztuz, zoriontasuna onaren eta txarraren arteko orekan datza. Lerro hauen bidez eskerrak eman nahi dizkiet urte hauetan modu batean edo bestean nire alboan egondako pertsonei.  \\

Hasteko, etxekoei. Aitari eta amari, betidanik nire alboan izateagatik eta bide hau hasi nuenetik nigan konfiantza izateagatik. Nire anai Imanoli, nire ikerketei buruzko interesa izateagatik eta nire haserreak jasateagatik, gehienbat pandemia garaiean. Etxezarreta eta Martinez familiei, zuen animo eta hitz gozoengaitik. Zuen babesa ezinbestekoa det nire lana aurrera eraman ahal izateko. \\

Muchas gracias a todo el grupo MAT$\pi$COM de Tecnun, principalmente por el buen trato que he recibido y por hacerme sentir c\'omodo en la universidad. Me parece importante destacar el papel que tuvisteis Pedro, Xabi, Jes\'us y Adam durante mis estudios ya que conseguisteis que sintiera un gran inter\'es por las matem\'aticas y la teor\'ia de la informaci\'on. Gracias en especial a Pedro por darme la oportunidad de poder realizar esta tesis doctoral con total libertad, apoyarme en todo momento y compartir tus conocimientos conmigo. Sigues inspir\'andome todos los d\'ias por tu pasi\'on por la investigaci\'on y la docencia; y espero con ganas seguir estudiando el mundo de la decoherencia y la correcci\'on cu\'antica de errores contigo. Ernesto Sabato escribi\'o que ``es el tema el que lo elige a uno'' y creo que gracias a tu intuci\'on el tema de la computaci\'on cu\'antica me eligi\'o a m\'i. Gracias por todo. \\

Continuando con el grupo, quiero agradecer tambi\'en a mis compa\~neros de doctorado y despacho: Imanol, Patrick, Fernando, I\~nigo, Fran y Ton. Hemos conseguido que estos a\~nos hayan sido tremendamente divertidos entre caf\'es, snatches en el despacho, compras de criptomonedas, gritos extra\~nos y danzas. Imanol, gracias por ayudarme al comienzo de esta aventura y por los momentos vividos en ella. Espero que podamos volver a cruzar nuestros caminos laborales en alg\'un momento. Me llevo un buen amigo. Patrick, muchas gracias por estos m\'as de dos a\~nos y medio en los cuales hemos crecido juntos tanto como investigadores como personas. Creo que tu incorporaci\'on le dio un empuje definitivo a la parte de Quantum del grupo y ha sido un placer investigar las ``eldritch laws of quantum mechanics'' junto a ti, dicho en tu ingl\'es Shakesperiano obviamente. Dicho esto, me llevo una gran amistad que ha crecido m\'as en estos a\~nos de tesis. Ton, moltes gr\`acies per aquests \'ultims mesos de tesi. Espero seguir desenvolupant la nostra relaci\'o tant acad\`emica com personal d'aqu\'i endavant. Enrecorda't, no treballis els caps de setmana. Tengo muchas ganas de seguir investigando los tres juntos, los Quantum Bois. Como solemos decir en el despacho: se vienen cositas. \\

Tambi\'en quiero agradecer a Javier Garcia-Fr\'ias por el apoyo que nos das para a las investigaciones que realizamos en el grupo. Siempre tienes alguna idea interesante para poder estudiar un tema m\'as a fondo y tienes un ojo cl\'inico a la hora de corregir los art\'iculos. Gracias tambi\'en a Rom\'an Or\'us y a Gabriel Molina por confiar en nuestro grupo para pedir algunos proyectos que tienen muy buena pinta y por aceptar estar en el tribunal de esta tesis doctoral. Tengo muchas ganas de poder trabajar juntos en el futuro. Thank you Jonas Bylander for useful discussions regarding decoherence that were critical for our npj Quantum Information paper.

Horretaz gain, milesker Miramoneko jendeari: Mar\'ia, Xabi, Luisvi, Ane, Francesco, Jon ... Momentu ederrak pasa ditugu kafe gehiegi edaten atseden amaiezinetan. Zuen falta sentitu degu Ibaetara jeitsi gintuztenetik, gehienbat paperren argitaratzea tortilla janez ospatzeko. Aitor, Zaba, Nagore eta Iratiri, lizeotik doktoretza tesiraino egindako bideagatik. Zuekin ere kafe ugari hartu ditut, zorte on tesia bukatzeko gelditzen zaizuen denboran. Unairi, ostiral arratsaldeetan kañengatik aldatzen ditugun asteazkenetako kafeengatik. \\

Aurreko parrafoak irakurrita, kafe asko edaten detela antzeman daiteke. Ondorioz, eskerrak eman behar dizkiet Nespressori, Pelican Rougeko makinei, kafetegietako kamareroei eta kafea egin dezakeen edozein makina zein pertsonari ere. Asko zor diet, halaber, Spotify eta Youtuberi. Musikarik gabe ezingo nintzake bizi. \\

Bera Berako taldekideei. Astebururoko borrokengatik. \\

Kourt eta Jaureri. N\"urnberg, Sydney eta Donosti urrun daude, baina gu gertu. Urtero gabonetan egiten degun gure QITCEAI konferentzia, eta bazkaria, egiten jarrai dezagun. Ea COVIDa amaitzean jarduera presentzialera bueltatu gaitezkeen. \\

Kuadrilakoei eta lagun minei. Nire alboan egoteagatik, une onetan eta txarretan. Nire ikerketei buruz galdetzeagatik, barreengatik, negarrengatik. Zertan ari nintzen erabat asmatu ezinda ere nire poz eta tristurak ondo asko ulertu dituzuelako. Bizi ditugunengaitik eta biziko ditugunengaitik. Aristotelesek bazioen lagunik gabe zoriontasuna lortzea ez dela posible. Ez dakit posible den ala ez, baina lagunek zoriontasuna dakarkidazuela argi daukat. \\

Amaitzeko, master amaierako lana hasi nuenetik nire bizitzatik pasa zareten eta noizbait nigan edo nire lanaren inguruan interesa izan dezuenoi. Askorentzako ulergaitza den gai bati buruz galdetzeagatik eta egiazko arreta jartzeagatik. Bestalde, nire bidaia aberasteagatik. \\

Bizitzea ez al da oso arriskutsua? Horrela hasi ditut eskerrak emateko lerro hauek. Erantzuna? Bai, bizitzea oso ariskutsua da. Baina merezi du. Albert Camus-en hitzetan: ``Neguaren erdian neure barnean aurkitu nuen uda garaitezina''.

\cleardoublepage
\chapter*{Laburpena}
Teknologia kuantikoak, konputazionalki konplexuak diren eta egungo teknologien bidez ebatzi ezin daitezkeen arazoei aurre egiteko aukera eskaintzen du. Adibidez, modu eraginkor batean zenbakiak beren osagai lehenetan faktorizatzeko, datu-base desegituratuetan bilaketak burutzeko edota makromolekula konplexuak simulatzeko aukera eskaintzen du. Ondorioz, konputazio kuantikoa gizartearen eta zientziaren aurrerapenean ezinbesteko tresna bilakatu daiteke. Besteak beste, botika diseinuan, finantza-krisien aurreikustean, konputagailu sareen segurtasuna sendotzean edo genomen sekuentziazioan aplikatu daiteke teknologia kuantikoa. Hala ere, egungo teknologiaren bitartez oraindik ezinezkoa da konputazio kuantikoak eskaini ditzaken aukera guztiak burutzeko gai den konputagailu kuantikoa eraikitzea. Informazio kuantikoak erroreak jasateko duen joerak sorturiko fidagarritasun falta da ezgaitasun horren kausa. Errore horiek, sistema kuantikoek euren ingurumenarekin dituzten interakzioen ondorio dira. Prozesu fisiko horien multzoari dekoherentzia deritzaio eta teknologia kuantikoen zeregin guztietan ageri da. Hortaz, informazio kuantikoa errore-zuzentze kodeen bidez babestea beharrezkoa da era zuzenean funtzionatzeko ahalmena duten konputagailu kuantikoak eraiki ahal izateko. Kode horiek modu eraginkor batean sortu ahal izateko, dekoherentzia prozesuak ulertzea eta matematikoki modelatzea funtsezkoa da. Tesi honetan dekoherentziaren modelatze matematikoa eta errore-zuzentze kode kuantikoen optimizazioa ikertu ditugu.

\cleardoublepage

\chapter*{Abstract}

Quantum technologies have shown immeasurable potential to effectively solve several information processing tasks such as prime number factorization, unstructured database search or complex macromolecule simulation. As a result of such capability to solve certain problems that are not classically tractable, quantum machines have the potential revolutionize the modern world via applications such as drug design, process optimization, unbreakable communications or machine learning. However, quantum information is prone to suffer from errors caused by the so-called decoherence, which describes the loss in coherence of quantum states associated to their interactions with the surrounding environment. This decoherence phenomenon is present in every quantum information task, be it transmission, processing or even storage of quantum information. Consequently, the protection of quantum information via quantum error correction codes (QECC) is of paramount importance to construct fully operational quantum computers. Understanding environmental decoherence processes and the way they are modeled is fundamental in order to construct effective error correction methods capable of protecting quantum information. In this thesis, the nature of decoherence is studied and mathematically modelled; and QECCs are designed and optimized so that they exhibit better error correction capabilities.

\cleardoublepage
\chapter*{List of Publications}
\section*{Journal Papers}
\begin{enumerate}
\item \textbf{Josu Etxezarreta Martinez}, Patricio Fuentes, Pedro Crespo and Javier Garcia-Frias, Quantum outage probability for time-varying quantum channels. \emph{Physical Review A} 105, 012432 (2022).
\item \textbf{Josu Etxezarreta Martinez}, Patricio Fuentes, Pedro Crespo and Javier Garcia-Frias, Time-varying quantum channel models for superconducting qubits. \emph{npj Quantum Information} \textbf{7,} 115 (2021).
\item \textbf{Josu Etxezarreta Martinez}, Patricio Fuentes, Pedro Crespo and Javier Garcia-Frias, Approximating Decoherence Processes for the Design and Simulation of Quantum Error Correction Codes on Classical Computers. \emph{IEEE Access} \textbf{8}, 172623-172643 (2020).
\item \textbf{Josu Etxezarreta Martinez}, Pedro Crespo and Javier Garcia-Frias, Depolarizing Channel Mismatch and Estimation Protocols for Quantum Turbo Codes. \emph{Entropy} \textbf{21(12)}, 1133 (2019).
\item \textbf{Josu Etxezarreta Martinez}, Pedro Crespo and Javier Garcia-Frias, On the Performance of Interleavers for Quantum Turbo Codes. \emph{Entropy} \textbf{21(7)}, 633 (2019).
\item Patricio Fuentes, \textbf{Josu Etxezarreta Martinez}, Pedro Crespo and Javier Garcia-Frias, On the logical error rate of sparse quantum codes. \emph{IEEE transactions on Quantum Engineering} (under review).
\item Patricio Fuentes, \textbf{Josu Etxezarreta Martinez}, Pedro Crespo and Javier Garcia-Frias, Degeneracy and its impact on the decoding of sparse quantum codes. \emph{IEEE Access}  \textbf{9,} 89093-89119 (2021).
\item Patricio Fuentes, \textbf{Josu Etxezarreta Martinez}, Pedro Crespo and Javier Garcia-Frias, Design of low-density-generator-matrix–based quantum codes for asymmetric quantum channels. \emph{Physical Review A} \textbf{103,} 2 (2021).
\item Patricio Fuentes, \textbf{Josu Etxezarreta Martinez}, Pedro Crespo and Javier Garcia-Frias, Approach for the construction of non-calderbank-steane-shor low-density-generator-matrix–based quantum codes. \emph{Physical Review A} \textbf{102,} 1 (2020).
\end{enumerate}

\section*{Conference Papers}
\begin{enumerate}
\item \textbf{Josu Etxezarreta Martinez}, Patricio Fuentes, Pedro Crespo and Javier Garcia-Frias, Pauli Channel Online Estimation Protocol for Quantum Turbo Codes. \emph{2020 IEEE International Conference on Quantum Computing and Engineering (QCE)}, 102-108 (2020).
\item Patricio Fuentes, \textbf{Josu Etxezarreta Martinez}, Pedro Crespo and Javier Garcia-Frias, Performance of non-CSS LDGM-based quantum codes over the misidentified depolarizing channel. \emph{2020 IEEE International Conference on Quantum Computing and Engineering (QCE)}, 93-101 (2020).
\end{enumerate}
\cleardoublepage


\chapter*{Glossary}

A list of the most repeated acronyms is provided below.\\

\begin{tabular}{ll}
\textbf{AD}  		&       \textit{Amplitude damping (channel)} \\
\textbf{APD}  		&       \textit{Amplitude and phase damping (channel)} \\
\textbf{AWGN}  		&       \textit{Additive white Gaussian noise} \\
\textbf{CHSH}  		&       \textit{Clauser, Horne, Shimony and Holt} \\
\textbf{CPTP}  		&       \textit{Completely-positive trace-preserving} \\
\textbf{CSI}  		&       \textit{Channel state information} \\
\textbf{CSS}  		&       \textit{Calderbank-Shor-Steane} \\
\textbf{CTA}  		&       \textit{Clifford twirl approximation} \\
\textbf{D}  		&       \textit{Depolarizing (channel)} \\
\textbf{DQLMD}  	&       \textit{Degenerate Quantum Maximum Likelihood decoding} \\
\textbf{EA}  		&       \textit{Entanglement-assisted} \\
\textbf{EPR}  		&       \textit{Einstein-Podolsky-Rosen} \\
\textbf{EXIT}  		&       \textit{EXtrinsic Information Transfer} \\
\textbf{IQR}  		&       \textit{Interquartile range} \\
\textbf{LDGM}  		&       \textit{Low-Density-Generator-Matrix (code)} \\
\textbf{LSD}  		&       \textit{Lloyd-Shor-Devetak} \\
\textbf{MC}  		&       \textit{Medcouple} \\
\textbf{MP}  		&       \textit{Markovian Pauli (channel)} \\
\textbf{MWPM}  		&       \textit{Minimum weight perfect matching} \\
\textbf{NISQ}  		&       \textit{Noisy intermediate-scale quantum} \\
\textbf{NMR}  		&       \textit{Nuclear Magnetic Resonance} \\
\textbf{P}  		&       \textit{Pauli (channel)} \\
\end{tabular}\\

\begin{tabular}{ll}
\textbf{PCM}  		&       \textit{Parity check matrix} \\
\textbf{PD}  		&       \textit{Phase damping or dephasing (channel)} \\
\textbf{PSD}  		&       \textit{Power Spectral Density} \\
\textbf{PTA}  		&       \textit{Pauli twirl approximation} \\
\textbf{QBER}  		&       \textit{QuBit error rate} \\
\textbf{QCC}  		&       \textit{Quantum Convolutional Code} \\
\textbf{QEC}  		&       \textit{Quantum Error Correction} \\
\textbf{QECC}  		&       \textit{Quantum error correction code} \\
\textbf{QHB}  		&       \textit{Quantum Hamming bound} \\
\textbf{QIRCC}  	&       \textit{Quantum IrRegular Convolutional Code} \\
\textbf{QKD}  		&       \textit{Quantum key distribution} \\
\textbf{QLDPC}  	&       \textit{Quantum Low-Density-Parity-Check (code)} \\
\textbf{QMLD}  		&       \textit{Quantum Maximum Likelihood decoding} \\
\textbf{QSC}  		&       \textit{Quantum Stabilizer Code} \\
\textbf{QTC}  		&       \textit{Quantum Turbo Code} \\
\textbf{QURC}  		&       \textit{Quantum Unity Rate Code} \\
\textbf{SISO}  		&       \textit{Soft-input soft-output} \\
\textbf{SLD}  		&       \textit{Symmetric logarithmic derivative} \\
\textbf{SNR}  		&       \textit{Signal-to-noise ratio} \\
\textbf{SQSC}  		&       \textit{Single-qubit single-channel} \\
\textbf{TV}  		&       \textit{Time-varying} \\
\textbf{TVQC}  		&       \textit{Time-varying quantum channel} \\
\textbf{WER}  		&       \textit{Word Error Rate} \\
\textbf{WSS}  		&       \textit{Wide-sense stationary} \\
\textbf{iid}  		&       \textit{independent and identically distributed} \\
\end{tabular}

\cleardoublepage
\chapter*{Notation}

Although all symbols are defined at their first appearance, here we provide a list of the notation for the symbols used throughout the text.  \\

\begin{tabular}{ll}
\(\hbar\) 	&  Reduced Planck's constant. \\
\(\mathcal{H}\) 	&  Complex Hilbert space. \\
\((\mathbb{F}_2)^n\) 	&  Space of binary strings of length $n$. \\
\(A^T\) 	&  Matrix transpose. \\
\(A^*\) 	&  Matrix complex conjugate. \\
\(A^\dagger\) 	&  Hermitian transpose. \\
\(\mathrm{Tr}(\cdot)\) 	&  Trace of a matrix. \\
\(\mathrm{Q}(\cdot)\)		&  Q-function.\\
\(\otimes\) 	&  Tensor product. \\
\(\odot\)	&  Symplectic product.\\
\([A,B]\) 	&  Commutator operator $[A,B] = AB-BA$. \\
\(\{A,B\}\) 	&  Anticommutator operator $\{A,B\} = AB+BA$. \\
\(\delta_{ij}\) 	&  Kronecker delta. \\
\(|\cdot|\) 	&  Cardinality of a set. \\
\(\mathrm{E}[\cdot]\) 	&  Expected value. \\
\(K(\Delta t)\) 	&  Covariance function. \\
\(||\cdot||_\diamond\) 	&  Diamond norm. \\
\(||\cdot||_1\) 	&  Trace norm. \\
\end{tabular}\\

\begin{tabular}{ll}
\(\mathcal{GN}_{[a,b]}(\mu_X,\sigma_X^2)\)		&  Truncated normal random variable $X$.\\
\(\mathcal{CN}(0,1)\)		&  Circularly symmetric complex normal random variable.\\
\([\cdot]\)			&  Equivalence class. \\
\(\mathrm{I,X,Y,Z}\)		&  Pauli matrices. \\
\(\mathcal{P}_n\)		&  set of $n$-fold tensor products of Pauli matrices.\\
\(\mathcal{G}_n\)		&  $n$-fold Pauli group.\\
\([\mathcal{G}_n]\)		&  $n$-fold effective Pauli group.\\
\(\mathcal{C}_1^{\otimes n}\)		&  $n$-fold Clifford group.\\
\(\mathcal{S}_1^{\otimes n}\)		&  $n$-fold Symplectic group.\\
\(\Upsilon\)		&  symplectic-to-Pauli map.\\
\(\hat{\mathrm{H}}\)		&  Hamiltonian.\\
\(\ket{\cdot}\)   	&  State vector (ket). \\
\(\ket{0},\ket{1}\)   	&  Standard basis. \\
\(\ket{+},\ket{-}\)   	&  Hadamard basis. \\
\(\ket{\Phi^+},\ket{\Phi^-},\ket{\Psi^+},\ket{\Psi^-}\)		&  Bell states. \\
\(\bra{\cdot}\)    	&  Vector dual to $\ket{\cdot}$, i.e. $\bra{\cdot}=\ket{\cdot}^\dagger$ (bra). \\
\(\rho\)		&  Density matrix.\\
\(\mathrm{H}\)		&  Hadamard gate.\\
\(\mathrm{R}_\phi\)		&  Phase-shift gate.\\
\(\mathrm{P}\)		&  Phase gate $\mathrm{P}=\mathrm{R}_{\pi/2}$.\\
\(\mathrm{CNOT}\)		&  controlled-NOT gate.\\
\(C(U)\)		&  controlled-Unitary gate.\\
\(\mathrm{SWAP}\)		&  SWAP gate.\\
\(M_m\)		&  Measurement operator.\\
\(P_m\)		&  Projective measurement.\\
\(\gamma\)		&  Damping probability.\\
\(\lambda\)		&  Scattering probability.\\
\(\mathrm{L}_k\)		&  Lindblad or jump operator of decoherence source $k$.\\
\(\Gamma_k\)		&  Interaction rate of decoherence source $k$.\\
\(T_1\)		&  Relaxation time.\\
\(T_2\)		&  Dephasing or Ramsey time.\\
\(T_\phi\)		&  Pure dephasing time.\\
\(f_{01}\)		&  Qubit frequency.\\
\(\Delta \omega_q\)		&  Qubit frequency shift.\\
\(\mathcal{N}\)		&  Quantum channel.\\
\(\mathcal{N}_1\circ\mathcal{N}_2\)		&  Serial channel composition.\\
\(E_k\)	&  Kraus or error operators.\\
\(\mathcal{E}\)	&  Channel error.\\
\(\bar{\mathcal{N}}^{\mathcal{U}}\)		&  twirled channel by set of unitary operators $\mathcal{U}$.\\
\(\alpha\)		& Asymmetry coefficient.\\
\end{tabular}\\

\begin{tabular}{ll}
\(R\)		& Classical coding rate.\\
\(\mathrm{H}_2(\cdot)\)		& Binary entropy of a random variable.\\
\(I(X;Y)\)		& Mutual information of random variables $X$ and $Y$.\\
\(C\)		& Classical channel capacity.\\
\(R_\mathrm{Q}\)		& Quantum coding rate.\\
\(S(\cdot)\)		& von Neumann entropy of quantum state.\\
\(Q_{\mathrm{coh}}\)		& Quantum channel coherent information.\\
\(Q_{\mathrm{reg}}\)		& Regularized quantum channel coherent information.\\
\(C_\mathrm{Q}\)		& Quantum channel capacity.\\
\(C_\mathrm{H}\)		& Hashing bound.\\
\(\mathcal{S}\)		&  Stabilizer set.\\
\(\mathrm{C}(\mathcal{S})\)		& Quantum stabilizer code.\\
\(d\)		& Quantum code distance.\\
\(\mathcal{R}\)		&  Recovery operation.\\
\(\bar{s}\)		&  Error syndrome.\\
\(H\)		&  Parity check matrix.\\
\(\mu_{T_i}\)		&  Mean relaxation/dephasing/pure dephasing time.\\
\(\sigma_{T_i}\)		&  Standard deviation of relaxation/dephasing/pure dephasing time.\\
\(c_\mathrm{v}\)		&  Coefficient of variation.\\
\(\mathrm{Lor}(\omega,t)\)		&  Lorentzian noise process.\\
\(N(\omega,t)\)		&  White noise process.\\
\(\mathrm{BW}\)		&  Bandwidth.\\
\(T_\mathrm{c}\)		&  Stochastic process coherence time.\\
\(t_\mathrm{algo}\)		&  Quantum algorithm processing time.\\
\(t_{\mathrm{1Q}}\)		&  $1$-qubit gate time.\\
\(t_{\mathrm{2Q}}\)		&  $2$-qubit gate time.\\
\(t_{\Delta}\)		&  Gate-to-gate delay time.\\
\(t_{\mathrm{meas}}\)		&  Measurement time.\\
\(\rho_{X,Y}\)		&  Pearson correlation coefficient for random variables $X$ and $Y$.\\
\(\alpha(\omega,t)\)		&  Fading gain random process.\\
\(p_\mathrm{out}(R,\mathrm{SNR})\)		&  Classical outage probability.\\
\(p_\mathrm{out}^\mathrm{Q}(R_\mathrm{Q})\)		&  Quantum outage probability.\\
\(p_\mathrm{out}^\mathrm{H}(R_\mathrm{Q})\)		&  Quantum hashing outage probability.\\
\(\gamma^*(R_\mathrm{Q})\)		&  Noise limit.\\
\(T_1^*(R_\mathrm{Q},t_\mathrm{algo})\)		&  Critical relaxation time.\\
\(\delta_{\mathrm{out}}(@\chi)\)		&  Gap of a code to the $(p_\mathrm{out}^\mathrm{Q}(R_\mathrm{Q})$ in dBs at $\mathrm{WER}=\chi$.\\
\end{tabular}\\

\begin{tabular}{ll}
\(\Pi\)		&  Quantum interleaver.\\
\(\pi\)		&  Scrambling pattern.\\
\(S\)		& Interleaver spread parameter.\\
\(\eta\)		& Interleaver disperssion parameter.\\
\(p\)		& Channel error probability.\\
\(\hat{p}\)		& Estimated channel error probability.\\
\(\sigma\)		& Quantum probe.\\
\(F(p)\)		& Classical Fisher information.\\
\(J(p)\)		& Quantum Fisher information.\\
\(\hat{L}(p)\)		& Symmetric logarithmic derivative.\\
\(\mathrm{P}^a\)		& A priori information.\\
\(\mathrm{P}^o\)		& A posteriori information.\\
\(\mathrm{P}^e\)		& Extrinsic information.\\ 
\end{tabular}

\cleardoublepage

\clearemptydoublepage
\pagestyle{fancyplain}
\def\contentsname{Contents}
\setcounter{tocdepth}{3}
\tableofcontents
\clearemptydoublepage

\pagenumbering{arabic}

\listparindent=10mm
\parskip=10pt

\setcounter{page}{1}
\def\chaptername{CHAPTER}


\chapter{Introduction} \label{cp1_intro}
Since Richard Feynman's original and ground-breaking proposal in \cite{feynman} of constructing computers that follow the laws of quantum mechanics to simulate physical systems that obey said laws, the scientific community has gone to extraordinary lengths in order to build an operational quantum computer. Following the introduction of these novel ideas about constructing quantum machines, research has shown that application of quantum physics theory is not only useful for simulation of complex quantum mechanical systems such as macromolecules for drug discovery \cite{drugs, energies, reactions, chemistry}, but also to efficiently solve tasks which are computationally unmanageable in a reasonable amount of time for classical computers. The most prominent of these tasks are the factorization of prime numbers and the discrete logarithm problem \cite{shor}, Byzantine agreement \cite{byzantine}, or searching an unstructured database or an unordered list \cite{grover1,grover2}. This way, quantum machines are thought to have the potential to revolutionize the modern industry with applications such as the design of medicines optimized to cure specific diseases, the optimization of materials, more accurate weather forecasting or advanced artificial intelligence among others.

Furthermore, quantum computing is a powerful asset for secure communications. For instance, Quantum Key Distribution (QKD) protocols enable two parties to create a shared random secret key only known to them. The key remains secure given that these protocols allow the aforementioned parties to detect if a malicious entity is trying to gain knowledge of it, which enables them to adapt their message exchange so that the eavesdropper extracts no information. The best-known cryptographic QKD protocols are the BB84 protocol proposed by Bennett and Brassard in \cite{BB84} and the E91 protocol by Ekert presented in \cite{E91}. Given the fact that the power of quantum computers can break the state-of-the-art classical cryptographic protocols such as RSA or Diffie-Hellman in the blink of an eye, the development of such QKD protocols or the so-called post-quantum (classical security algorithms that are not compromised by quantum technologies) cryptographic protocols are a necessity for modern security.

In light of the astonishing potential of quantum frameworks, the construction of devices capable of exploiting the benefits offered by the paradigm of quantum computing represents a step forward in the advance of technology, regardless of these machines being in the form of fully operational quantum computers \cite{VonNeumann}, quantum processors as accelerators of classical computers \cite{accel}, or specific devices that perform QKD \cite{QKDhard}. Unfortunately, quantum information is vulnerable to errors that arise and corrupt quantum states while they are being processed, which oftentimes leads to incorrect algorithm outcomes. The frailty of quantum information is caused by the phenomenon known as quantum decoherence, which refers to the destruction of the superposition of quantum states from the interaction that they have with the environment \cite{decoherenceShor}. This quantum noise arises during every task related to the quantum computing paradigm: information storage, processing or communication. Hence, it is necessary to invoke Quantum Error Correction Codes (QECC) in order to have qubits with sufficiently long coherence times\footnote{The coherence time of a qubit is defined as the time during which the superposition that defines said state remains uncorrupted.} for practical applications. Quantum information is so sensitive to decoherence that many think that quantum computation is unfeasible without the aid of quantum error correction tools.

The earliest formulation of QECCs appeared in 1995 \cite{decoherenceShor} when Shor proposed a $9$-qubit code, which was later aptly named after him. This code is capable of correcting errors of weight one and it does not saturate the Quantum Hamming Bound (QHB) \cite{bounds}, which implies that the same task could be achieved with codes of shorter length. Nevertheless, it stands as the first proposal of an error correcting code for the quantum computing paradigm. Since then, research efforts have focused on deriving QECC schemes that approach the quantum capacity limits \cite{quantumcap} at a reasonable complexity cost. Encoding and decoding gate depths play an important role in the paradigm of correcting quantum errors due to the fact that quantum gates introduce additional errors. Additionally, the runtimes of the decoding algorithms should not exceed the coherence times of the qubits that are being processed, else these states would suffer from new decoherence effects while being corrected. A major breakthrough in the development of QECCs came in Gottesman's Ph.D. thesis \cite{QSC}, where he proposed the theory of Quantum Stabilizer Codes (QSC), which is a very useful framework that facilitates the construction of QECC families from classical binary and quaternary error correction codes. This formalism has led to the development of several promising QECC families such as Quantum Reed-Muller codes \cite{QRM}, Quantum Low Density Parity Check (QLDPC) codes \cite{bicycle,qldpc15}, Quantum Low Density Generator Matrix (QLDGM) codes \cite{jgf}, Quantum Convolutional Codes (QCC) \cite{QCC}, Quantum Turbo Codes (QTC) \cite{QTC,EAQTC}, and Quantum Topological Codes \cite{toric,QEClidar}.

The design of QSCs is closely related to the construction of good classical error correction codes \cite{EAQECC,QECCortGeo,CQiso}. The problem of finding good QECCs was reduced to that of constructing classical dual-containing quaternary codes \cite{QECCortGeo}. As a result, the thoroughly studied classical coding theory that was developed since Claude Shannon published his ground-breaking work \emph{A Mathematical Theory of Communication} \cite{shannon} can be integrated in the framework of quantum error correction. Unfortunately, the requirement of needing a self-orthogonal classical parity check matrix (dual-containing code) poses a challenge for importing some of the best classical codes to the quantum realm. The aforementioned restriction is a consequence of the so-called symplectic product criterion, which lays on the core of stabilizer code theory and must be fulfilled for the stabilizer generators to commute. Anyway, coding theorists have been able to import several classical code families to the framework of quantum information in a successful manner \cite{QRM,bicycle,jgf,QCC,QTC} and, thus, this isomorphism between the classical and quantum error correction theories is a priceless asset for the development of the latter.

QECC design requires the assumption of an error model in the form of a quantum channel that accurately represents the decoherence processes that affect quantum information. Based on these error models, appropriate strategies to combat the effects of decoherence can be derived. In order for the designed QECCs to be applicable in realistic quantum devices, the quantum channels should capture the essential characteristics of the physical processes that make qubits lose their coherence. The decoherence effects experienced by the qubits of a quantum processor are generally characterized using the relaxation time ($T_1$) and the dephasing time ($T_2$) \cite{SchlorPhD}. Those measurable parameters of the qubits are then used in order to describe the evolution of the open\footnote{In this context, open means that the quantum system interacts with its surrounding environment. In practice, every quantum system is subjected to open evolution.} quantum system. By doing so, quantum noise models that accurately describe how those qubits decohere can be obtained. However, simulations of those error models show exponential complexity in classical computers \cite{twirl1} and, thus, they cannot be efficiently done with classical resources when the number of qubits of the system grows. As a consequence, approximated decoherence models are needed in order to simulate noisy quantum information in classical computers. The depolarizing model is a widespread quantum error model used to evaluate the error correcting abilities of QECC families \cite{decoherenceShor,bounds,QSC,QRM,bicycle,qldpc15,QCC,QTC,EAQTC,toric,QEClidar}. This decoherence model is especially useful due to the fact that it makes the system fulfill the Gottesman-Knill theorem\footnote{The depolarizing channel can be implemented by applying random Pauli gates to the encoded quantum states. The Gottesman-Knill theorem states that Pauli gates are quantum computations that can be efficiently simulated on classical computers.} and, therefore, it can be efficiently simulated on a classical computer \cite{NielsenChuang,GotKnill}.

At the time of writing, the availability of quantum computers for researchers is limited and the accessible machines operate on a reduced number of qubits. Therefore, classical resources remain an invaluable tool for the design of advanced QECCs that will be used beyond the Noisy Intermediate-Scale Quantum (NISQ) era. The NISQ era \cite{NISQ} is a term coined by Preskill that makes reference to the time when quantum computers will be able to perform tasks that classical computers are incapable of, but will still be too small (in qubit number) to provide fault-tolerant implementations of quantum algorithms. One of the most important milestones in the track of fully operational quantum computers, named quantum advantage\footnote{Proving quantum advantage means running an algorithm in a quantum machine such that said algorithm cannot be executed in a classical machine in a reasonable time.}, has been recently claimed both by Google in 2019 \cite{GoogleSup} and China in 2020 \cite{ChinaSup}. In addition, some companies are already making use of quantum machines with low qubit overheads for applications that are nowadays a reality. For example, the Basque start-up company Multiverse Computing has recently released their Singularity spreadsheet that uses D-Wave quantum computers in order to optimize investment portfolios. Setting aside any controversies regarding the veracity of the claims by Google and China; and taking into account the fact that some practical applications of quantum computing are emerging, it can be said that the begining of the 2020's is accompanied by the begining of the NISQ-era.

\section{Motivation and Objectives}

The motivation and objectives of this thesis are two-fold: the study of decoherence as the source of quantum noise and the optimization of the family of QECCs named Quantum Turbo Codes. Both of those topics are closely related as QECCs are the tools used in order to protect quantum information from the deleterious effects caused by decoherence, therefore, knowledge about how decoherence corrupts quantum information is necessary for the task of designing good protection methods. To that end, we split the thesis into two main parts:
\begin{itemize}
\item \textbf{Part I: }Quantum Information Theory: Decoherence modelling and asymptotical limits
\item \textbf{Part II: }Quantum Error Correction: Optimization of Quantum Turbo Codes
\end{itemize}

In this way, the investigations regarding decoherence modelling and QECC optimization are clearly separated. Next, the motivation and objectives for these two parts of the thesis are described.

\subsection{Quantum Information Theory: Decoherence modelling and asymptotical limits}

Quantum error models that describe the decoherence processes that corrupt quantum information become a necessity when seeking to construct any error correction method. The amplitude damping channel, $\mathcal{N}_{\mathrm{AD}}$, and the combined amplitude and phase damping channel, $\mathcal{N}_{\mathrm{APD}}$, are a pair of widely used quantum channels that provide a mathematical abstraction that describes the decoherence phenomenon. However, such channels cannot be efficiently simulated in a classical computer when the number of qubits exceeds a small amount. For this reason, a quantum information theory technique known as twirling has been used in order to approximate $\mathcal{N}_{\mathrm{AD}}$ and $\mathcal{N}_{\mathrm{APD}}$ to the classically tractable family of Pauli channels, $\mathcal{N}_\mathrm{P}$ \cite{twirl1}. The dynamics of the $\mathcal{N}_{\mathrm{APD}}$ channel depend on the qubit relaxation time, $T_1$, and the dephasing time, $T_2$, while the $\mathcal{N}_{\mathrm{AD}}$ channel depends solely on $T_1$. These dependencies are also displayed by the Pauli channel families obtained by twirling the original channels. All these models consider $T_1$ and $T_2$ to be fixed parameters (i.e., they do not fluctuate over time). This implies that the noise dynamics experienced by the qubits in a quantum device are identical for each quantum information processing task, independently of when the task is performed. However, the assumption that $T_1$ and $T_2$ are time invariant has been disproven in recent experimental studies on quantum processors \cite{decoherenceBenchmarking,klimov,fluctAPS,fluctApp,temperature,fluctGoogle}. The data presented in these studies showed that these parameters experience time variations of up to $50\%$ of their mean value in the sample data, which strongly suggests that the dynamics of the decoherence effects change drastically as a function of time. These fluctuations indicate that qubit-based QECCs implemented in superconducting circuits may not perform, in average, as predicted by considering quantum channels to be static through all the error correction rounds. 

In this thesis, we aim to amalgamate the findings of the aforementioned studies with the existing models for quantum noise. This way the objective is to have a more realistic portrayal of quantum noise so that when designed QECCs are implemented in hardware, their operation will be more reliable. Furthermore, the quantum information limits in error correction for this new class of time-varying quantum channels are also studied.

\subsection{Quantum Error Correction: Optimization of Quantum Turbo Codes}

Quantum turbo codes have shown excellent error correction capabilities {in the setting of quantum communication}, achieving a performance less than 1 dB away from their corresponding hashing bounds. The serial concatenation of the inner and outer Quantum Convolutional Codes (QCCs) used to construct QTCs is realized by means of an {interleaver}, which permutes the symbols so that the error locations are randomized and error correction can be improved. The reason for the use of an interleaver in concatenated coding schemes is that the first stage in the decoding process generates bursts of errors that are more efficiently corrected in the second stage if they are scrambled. The QTCs proposed in the literature \cite{QTC, EAQTC, EAQECC, isomorphism, EXITQTC, QIrCC, EAQIRCC} use the so-called {random interleaver}. However, it is known from classical turbo codes that interleaving design plays a central role in optimizing performance, specially when the error floor region is considered \cite{SrandomConstruction, Lazcano, twoStep, IntDesignforTC, Kovaci, Vafi, turboCoding, SrandomFlexible}.

The QTC decoder, consisting of two serially concatenated soft-input soft-output (SISO) decoders, uses channel information as the input in order to engage in degenerate decoding and to estimate the most probable error coset to correct the corrupted quantum state. Previous works on QTCs \cite{QTC,EAQTC,isomorphism,EXITQTC,QIrCC,EAQIRCC,QURC,
QSBC,Catastrophic} were based on the assumption that such a decoder has {perfect channel knowledge}, that is, the system is able to estimate the depolarizing probability of the quantum channel perfectly. However, this scenario is not realistic since the decoder should work with {estimates} of the depolarizing probability rather than with the exact value. The effect of channel mismatch on QECCs has been studied for quantum low density parity check (QLDPC) codes in \cite{QLDPCmismatch,QLDPCMismatchMethods}, which showed that such codes are pretty insensitive to the errors introduced by the imperfect estimation of the depolarizing probability and proposed methods to improve the performance of such QECCs when the depolarizing probability is estimated. For classical parallel turbo codes, several studies showed their insensitivity to SNR mismatch \cite{ReedMismatch,SummersMismatch,MismatchNoEstimation,MismatchSOVA}. In such works, it was found that the performance loss of parallel turbo codes is small if the estimated SNR is above the actual value of the channel. However, QTCs can only be constructed as serially concatenated convolutional codes, and classical studies about channel information mismatch for such codes \cite{VarianceMismatchSCCC,SCCCmismatchTalwar} showed that such structures are more susceptible to channel identification mismatch, suffering a significant performance loss if the SNR is either overestimated or underestimated. \mbox{The authors} in \cite{VarianceMismatchSCCC} justified the increase in sensitivity by pointing out that in parallel turbo codes, the channel estimates are fed to both decoders, which distributes the errors between the two constituent codes. However, for serial turbo codes, the errors are not as evenly spread as the channel information is just used in the inner decoding stage. This strongly suggests that (serially concatenated) QTCs will be more sensitive to depolarizing probability mismatch.

The main objective of the second part of the thesis is to improve the error rate metric of existing QTCs. First, by making use of the existing classical-quantum isomorphism, we aim to improve the error floor region of QTCs in the error floor region by adapting classical interleaving schemes that have shown to be effective in classical turbo coding theory. Second, by analyzing the performance loss incurred by QTCs due to the channel’s depolarizing probability mismatch, we propose ``blind'' estimation methods that aid the decoder to overcome channel information mismatch so that the performance of the  QTCs can approach to the one shown by QTCs with perfect channel state information.

\section{Outline and Contributions of the Thesis}

This thesis is organized as follows: Chapter 2 gives the background material needed to better understand the Ph.D. dissertation, where concepts regarding Quantum Mechanics, Decoherence of quantum systems and basics of QEC theory are explained in detail; Part I, Quantum Information Theory: Decoherence modelling and asymptotical limits (Chapters 3, 4 and 5), includes the mathematial modelling of quantum channels under time-varying behaviour, the reinterpretation of the QECC asymptotical limits for those channels, and the impact of those quantum channels in the operation and benchmarking of QECCs; Part II (Chapters 6 and 7), Quantum Error Correction: Optimization of Quantum Turbo Codes, covers the optimization of QTCs using interleavers and the operation of QTCs when the decoders are blind to the channel state information; Chapter 8 covers additional research done during the Ph.D. regarding QLDPC design and the study of degeneracy. Finally, Chapter 9 provides the conclusion of the thesis and future work.

Note that the outline of this thesis does not correspond to the actual chronological ordering of the Ph.D. research done during the years. As a matter of fact, the chronological order of the research corresponds to Part II followed by Part I, as it can be seen in the dates of the published papers \cite{Qpout,TVQC,josu3,josuchannels,josu2,josu1}. However, we consider that the outline presented for this dissertation makes it easier to understand as a whole as we go from the most general research done to the most specific one. This way, we intend to make the thesis more readable and easy to understand.

Some passages have been quoted verbatim from the following sources \cite{Qpout,TVQC,josu3,josuchannels,josu2,josu1,logicalRate,degen,patrick3,patrick2,patrick}, which are the articles that have been pusblished throughout the development of this dissertation.

\subsection{Chapter 2: Preliminaries of Quantum Information}

This chapter provides the required background of quantum computing and quantum information theory to better understand the technical work developed in the following chapters. The chapter begins by presenting basic notions of quantum mechanics. We follow by giving a detailed description of decoherence as the source of the quantum noise that corrupts quantum information, and the way that decoherence is integrated into error models known as quantum channels. Furthermore, the twirling method to obtain approximate error models that can be simulated in classical computers is explained. Finally, we provide basic notions of how quantum error correction works and the way it can be simulated in classical computers. The aim of this chapter is to make the dissertation self-contained.

Some of the discussions and descriptions given in this chapter have been published in review article \cite{josuchannels}.

\subsection{Part I: \emph{Quantum Information Theory: Decoherence modelling and asymptotical limits} (Chapters 3, 4 and 5)}
The first part of the dissertation focuses on understanding how decoherence acts on quantum systems and how such set of physical effects can be mathematically modeled. Based on experimental results found in the literature, we propose quantum channel models that vary through time. Thus, the objective is to include the inherent fluctuations of the decoherence parameters, which have been experimentally observed in qubits from the state-of-the-art quantum hardware, into the mathematical error models used to describe the noise processes suffered by quantum information. Next, we study how the asymptotical limits of QEC are changed due to the incorporation of time-variation into the framework of quantum channels. Finally, we study how the performance of QECCs is affected by the proposed time-varying quantum channels and use the new asymptotical limits to benchmark their error correction capabilities.

\subsubsection{Chapter 3: Time-varying quantum channels}
This chapter begins by describing the recent experimental research on the time-varying nature of the decoherence parameters, i.e., the relaxation time $T_1$ and dephasing time $T_2$, of superconducting qubits. First, an extensive analysis of the stochastic processes that describe the time fluctuations is done in order to appropriately include the time-variations in the quantum channel framework. Following that, time-varying quantum channels (TVQCs) are proposed as the decoherence models that include the experimentally observed dynamic nature of $T_1$ and $T_2$. Finally, we discuss the divergence that exists between TVQCs and their static counterparts, which have been considered in previous literature of QEC, by means of a metric known as the diamond norm. In many circumstances this deviation can be significant, which indicates that the time-dependent nature of decoherence must be taken into account, if one wants to construct models that capture the real nature of quantum devices.

The research described in this chapter has been published in the journal article \cite{TVQC}.

\subsubsection{Chapter 4: Quantum outage probability}

Quantum channel capacity establishes the quantum rate limit for which reliable (i.e., with a
vanishing error rate) quantum communication/correction is asymptotically possible. However, the inclusion of the time-varying nature of $T_1$ and $T_2$ in the quantum channel framework, resulting in the TVQCs proposed in Chapter 3, implies that the notions of quantum capacity based on static channels must be reinterpreted. In this chapter, we introduce the concepts of quantum outage probability and quantum hashing outage probability as asymptotically achievable error rates by a QECC with quantum rate $R_Q$ operating over a TVQC. We derive closed-form expressions for the family of time-varying amplitude damping channels (TVAD) and study their behaviour for different scenarios. We quantify the impact of time-variation as a function of the relative variation of $T_1$ around its mean. Furthermore, the behaviour of these limits as a function of the quantum rate $R_Q$ and the coefficient of variation of the qubit relaxation time $c_\mathrm{v}(T_1)$ is also studied.

The research described here has been published in the journal article \cite{Qpout}.

\subsubsection{Chapter 5: Time-varying quantum channels and quantum error correction codes}
In this chapter, we study how the performance of quantum error correction codes is affected when they operate over the TVQCs proposed in Chapter 3. At first glance, it may seem that due to the deviation found in terms of the diamond norm between the TVQCs and their static counterparts, the performance of QECCs when operate over those channels may differ significantly. However, the effect of this norm divergence on error correction is not straightforward. We provide a qualitative analysis regarding the implications of such decoherence model on QECCs by simulating Kitaev toric codes and QTCs. In this way, we obtain conclusions about the impact that the fluctuation of the decoherence parameters produce on the performance of QECCs. It has been observed that in many instances the performance of QECCs is indeed limited by the inherent fluctuations of their decoherence parameters and conclude that parameter stability is crucial to maintain the excellent performance observed under a static quantum channel assumption. We also use the quantum outage probability proposed in Chapter 4 for benchmarking the performance of the simulated QECCs.

The research described here has been published in the journal articles \cite{Qpout} and \cite{TVQC}.

\subsection{Part II: \emph{Quantum Error Correction: Optimization of Quantum Turbo Codes} (Chapters 6 and 7)}
The goal of the second part of the dissertation is to optimize the performance of QTCs for different scenarios. To that end, we first use the classical-quantum isomorphism and use some existing interleaving methods for classical turbo codes in order to lower the error floor of QTCs. Secondly, we study the problem of QTCs when channel state information is not perfectly available at the decoder. We begin by studying the sensitivity shown by the decoder of those QECCs when there exists a mismatch between the true channel information and the actual CSI fed to the decoder. We then propose estimation schemes for the CSI and compare the perfomance of the decoders that use such schemes with the decoders that have perfect CSI available.

\subsubsection{Chapter 6: Optimization of the error floor performance of QTCs via interleaver design}
Entanglement-assisted quantum turbo codes have shown extremely good error correcting ability. EXtrinsic Information Transfer (EXIT) chart techniques have been used to narrow the gap to the Hashing bound, resulting in codes with a performance as close as 0.3 dB to the hashing bounds. However, such optimization of QTCs comes at the expense of increasing their error floor region. Based on classical turbo coding theory, we aim to lower such error floors by using interleavers with some construction rather than the originally proposed random ones. Motivated by such studies, in this chapter we investigate the application of different types of interleavers in QTCs, aiming at reducing the error floors. Simulation results show that the QTCs designed using the proposed interleavers present similar behavior in the turbo-cliff region as the codes with random interleavers, while the performance in the error floor region is improved by up to two orders of magnitude. Simulations also show reduction in memory consumption, while the performance is comparable to or better than that of QTCs with random interleavers.

The research described here has been published in the journal article \cite{josu1}.

\subsubsection{Chapter 7: QTCs without channel state information}
Chapter 7 first analyzes the performance loss incurred by QTCs due to the channel’s depolarizing probability mismatch. Then, different off-line estimation protocols are studied and their impact on the overall QTCs' performance is analyzed. Some heuristic guidelines are provided for selecting the number of quantum probes required for the estimation of the depolarizing channel probability so that the performance degradation of such codes is kept low. Additionally, we propose an on-line estimation procedure by utilizing a modified version of the turbo decoding algorithm that allows, at each decoding iteration, estimating the channel information to be fed to the inner SISO decoder. Finally, we propose an extension of the above online estimation scheme to include the more general case of the Pauli channel with asymmetries.

The research described here has been published in the journal article \cite{josu2} and the conference article \cite{josu3}.

\subsection{Chapter 8: Quantum Low-Density-Generator-Matrix Codes and Degeneracy}
This chapter summarizes the additional work done on quantum error correction throughout the Ph.D. thesis. First, we briefly describe how we used the properties of classical LDGM codes in order to design a class of non-Calderbank-Shor-Steane (non-CSS) QLDPC codes which we have named non-CSS QLDGM codes. We use extensive Monte Carlo simulations in order to compare our design with other QLDPC constructions proposed in the literature, concluding that our non-CSS outerperforms them. Furthermore, we propose CSS QLDGM designs for the general asymmetric Pauli channel and heuristically prove that their performance is better than other state-of-the-art QLDPC families when operating over such error model. We also present the study of QLDGMs when they are blind to the channel state information. As it was done for QTCs in Chapter 7, we studied the sensitivity of QLDGM to channel mismatch and proposed an online estimation methods in order to aid those codes to be succesful in decoding when they ignore the depolarizing probability.

This chapter also describes the research done on the decoding degeneracy property which only appears in the quantum error correction realm. Degeneracy refers to the effect that two distinct quantum errors that share syndrome corrupt encoded quantum information in a similar manner, and so they share the necessary recovery operation. We briefly describe the group theoretical approach we used in order to describe the problem of degeneracy from the point-of-view of sparse quantum codes. To finish Chapter 8, we investigate the problem of obtaining the logical error rate of sparse quantum codes. We observed that literature is obscure regarding the benchmark metrics of QLDPC codes, and consequently, we study existing methods in order to calculate the true error correction ability of such codes and propose another method for doing so from the point-of-view of classical coding theory. By using the proposed method, we heuristically study the percentage of degenerate errors that occur whn decoding QLDGM codes.

The research described here correspond to journal articles \cite{logicalRate,degen,patrick3,patrick} and coference article \cite{patrick2}, for which I am the second author.

\subsection{Chapter 9: Conclusions and Future Work}

We conclude the Ph.D. dissertation in this chapter. We begin by summarizing the main conclusions obtained regarding both parts of the thesis. Once describing the main outcomes of the dissertation, we procede to describe future lines of research that can potentially comence from the research developed here.

\section{Reading this Thesis}

\begin{figure}[h!]
	\begin{center}
		\includegraphics[width=\columnwidth]{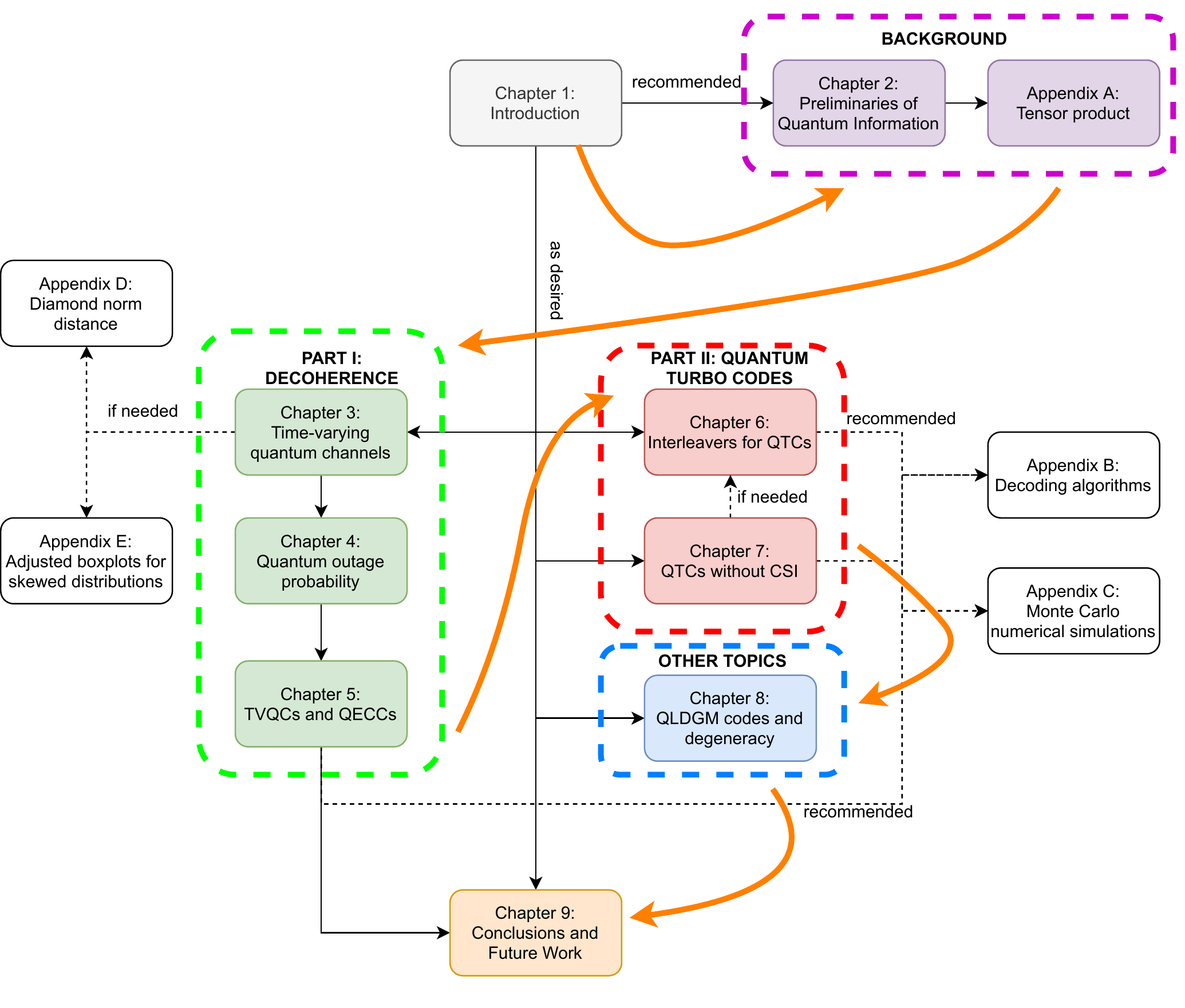}
		\caption{Block diagram detailing the dependencies between chapters.}
		\label{cp1_reading}
	\end{center}
\end{figure}

The reading of this dissertation does not need to be sequential, and each of the parts can be read independently. However, the chapters in Part \ref{part1} should be read in a sequential manner. Figure \ref{cp1_reading} shows the dependencies between the different Chapters of the thesis in an schematic manner. In this way, the reader may use such figure to know which of the parts are needed in order to properly understand some specific part of the thesis. Nevertheless, the dissertation has been written so that the sequential read of the thesis provides the best experience for the reader. Therefore, we encourage the reader to start from Chapter \ref{ch:preliminary} and follow the rest in order (the orange path in Figure \ref{cp1_reading} shows this path that we consider to be optimal).

 \label{chapter1}
\clearemptydoublepage
\chapter{Preliminaries of Quantum Information}\label{ch:preliminary}
This chapter provides the necessary background to understand the basics of quantum information theory. The main objective is to give the reader the elementary tools of quantum mechanics, decoherence and quantum error correction so that the investigations in this Ph.D. dissertation can be understood in the correct manner. It is a self-explanatory introduction so it maybe useful as an introduction to those researchers interested in the field of quantum error correction and information theory.

We start the chapter by presenting the postulates of quantum mechanics and their implications both from the state vector and density matrix perspectives. The no-cloning theorem and the quantum teleportation protocol are also introduced given their importance in quantum error correction. We follow the chapter by providing an entire subsection dedicated to the subject of decoherence and quantum noise. We extensively describe the way quantum information degrades from decoherence and how decoherence can be mathematically modelled via quantum channels. In addition, we cover the techniques often used to obtain simplified quantum channel models so that they can be efficiently simulated in classical computers and explain why these approximated channels are useful when designing quantum error correction codes. The basics on quantum channel capacity are also discussed. Finally, an introduction to stabilizer codes and their simulation via classical resources is given.

\section{Quantum Mechanics}\label{subsec:quantummechanics}
Quantum mechanics is a mathematical framework for the development of physical theories. Quantum mechanics does not tell what laws a physical system must obey, but it does provide a mathematical and conceptual framework for developing such physical laws. The connection between the physical world and the mathematical formalism of quantum mechanics is given by the \textit{postulates of quantum mechanics}.

The postulates of quantum mechanics were derived by the physicists Paul Dirac \cite{Dirac} and John von Neumann \cite{vonNeumann} after a long process of trial and error, which involved a considerable amount of guessing and fumbling. These postulates are the base from where quantum information theory will arise.

This section will also present the formulation of quantum mechanics by the usage of the so-called \textit{density matrix} and some quantum effects such as the \textit{no-cloning theorem} and \textit{quantum teleportation} will be described due to their importance in the paradigm of quantum computing and communications. Finally, we briefly describe the most important technologies used for physically implementing qubits. This section is partly based on chapter 2.2 of \cite{NielsenChuang}.

\subsection{The postulates of quantum mechanics}\label{subsub:postulates}
We begin the description of Quantum Mechanics from the state vector perspective. The first postulate of quantum mechanics deals with the mathematical vector space in which quantum mechanics takes place, the so-called \textit{state space}.

\begin{postulate}[State space]
Associated to any isolated physical system is a complex inner product vector space (that is, a Hilbert space) known as the \textit{state space} of the system. The system is completely described by its \textit{state vector}, which is a unit vector in the system's state space.
\label{post:postulate1}
\end{postulate}

The simplest quantum mechanical system is the \textit{qubit}. A qubit is associated to a two dimensional complex Hilbert space $\mathcal{H}_2$, and an arbitrary state vector in $\mathcal{H}_2$ is denoted by
\begin{equation}\label{eq:qubit}
\ket{\psi}=\alpha\ket{0}+\beta\ket{1},
\end{equation}
where $\alpha,\beta\in\mathbb{C}$ and $\ket{0},\ket{1}\in\mathcal{H}_2$ form an orthonormal basis of such Hilbert space. Later in this section, we will briefly describe the technologies that are being used for the experimental implementation of qubits. The orthonormal basis composed by those two elements is known as the \textit{standard basis} and in two dimensional vector notation are written as
\begin{equation}\nonumber
\ket{0}=\begin{pmatrix}
1 \\
0
\end{pmatrix}; \qquad
\ket{1}=\begin{pmatrix}
0 \\
1
\end{pmatrix}.
\end{equation}

As stated in postulate \ref{post:postulate1}, the vector $\ket{\psi}$ must be a unit vector, so the condition $\braket{\psi|\psi}=1\rightarrow |\alpha|^2+|\beta |^2=1$ must hold. This last condition is known as the \textit{normalization condition}.

The qubit will be the elementary quantum mechanical system for the rest of this document. Intuitively, the two states $\ket{0},\ket{1}$ are analogous to the states $0$ and $1$ of a classical bit. However, qubits differ from a classical bits from the fact that the state $\ket{\psi}$ described by \eqref{qubit} is in a \textit{superposition} of these two states.
Therefore, it cannot be said that a qubit is definitely in one of the two basis states. The coefficients $\alpha$ and $\beta$ are known as \emph{amplitudes} of the qubit. In general, it is said that any linear combination $\sum_k \alpha_k\ket{\psi_k}$ is a superposition of the states $\ket{\psi_k}$ with amplitudes $\alpha_k$.

The second postulate of quantum mechanics refers to how a quantum mechanical system evolves with time.

\begin{postulate}[Evolution]
The evolution of a \textit{closed} quantum system is described by a \textit{unitary transformation}. That is, the state $\ket{\psi}$ of the system at time $t_1$ is related to the state $\ket{\psi'}$ of the system at time $t_2$ by a unitary operator\footnote{Unitary operators fulfill $UU^\dagger=U^\dagger U=I$.} $U$ which depends only on the times $t_1$ and $t_2$,
\begin{equation}
\ket{\psi'}=U\ket{\psi}.
\end{equation}
\label{post:postulate2}
\end{postulate}

It can be seen that postulate \ref{post:postulate2} does not imply which are the unitary operators $U$ that describe real world quantum mechanics, it only assures that the evolution of any closed quantum system should be described in such way. The obvious question that arises is which unitaries are natural to consider when dealing with quantum system evolutions. In the case of single qubits, it turns out that \textit{any} unitary operator can be realized in real systems.

Next, some examples of unitary qubit operators will be presented, the so-called \textit{quantum gates}. Quantum gates are the analogous of logic gates in the classical digital world, and they are the basic elements to construct the \textit{quantum circuits} required to run \textit{quantum algorithms} in quantum processors. Note that the operation of a quantum gate over a qubit is just the evolution of the qubit by the unitary describing such gate.

The first elementary quantum gates are the \textit{Pauli gates}, denoted by $\mathrm{X}$, $\mathrm{Z}$ and $\mathrm{Y}$. Their operation over qubits is described by the Pauli matrices \cite{NielsenChuang}. The $\mathrm{X}$ Pauli gate is known as the quantum \textit{bit flip} gate as its operation over an arbitrary qubit $\ket{\psi}=\alpha\ket{0}+\beta\ket{1}$ is
\begin{equation}\nonumber
\mathrm{X}\ket{\psi} = \mathrm{X}(\alpha\ket{0}+\beta\ket{1}) = \alpha \mathrm{X}\ket{0}+\beta \mathrm{X}\ket{1}=\alpha\ket{1} + \beta\ket{0},
\end{equation}
so it flips the amplitudes respect to the standard basis. The $\mathrm{Z}$ Pauli matrix defines the so called \textit{phase flip} quantum gate as its operation leaves $\ket{0}$ unchanged and transforms $\ket{1}$ to $-\ket{1}$. The operation of a $\mathrm{Z}$ Pauli gate on an arbitrary qubit $\ket{\psi}=\alpha\ket{0}+\beta\ket{1}$ is then
\begin{equation}\nonumber
\mathrm{Z}\ket{\psi}=\mathrm{Z}(\alpha\ket{0}+\beta\ket{1})=\alpha \mathrm{Z}\ket{0}+\beta \mathrm{Z}\ket{1}=\alpha\ket{0}-\beta\ket{1}.
\end{equation}
Finally, the operation of the Pauli matrix $\mathrm{Y}$ on an arbitrary qubit $\ket{\psi}$ is a bit flip and a phase flip times a multiplication by $\pm i$, as $\mathrm{ZX}=i\mathrm{Y}$ and $\mathrm{XZ}=-i\mathrm{Y}$. Consequently, the operation of such gate is
\begin{equation}\nonumber
\mathrm{Y}\ket{\psi}=\mathrm{Y}(\alpha\ket{0}+\beta\ket{1})=\alpha \mathrm{Y}\ket{0}+\beta \mathrm{Y}\ket{1}=i\alpha \mathrm{XZ}\ket{0}+i\beta \mathrm{XZ}\ket{1}=i\alpha\ket{1}-i\beta\ket{0}.
\end{equation}

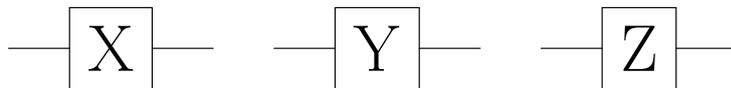
\begin{figure}[h]
\centering
\leavevmode
\Huge
\Qcircuit @C=1em @R=1em {
& \gate{\mathrm{X}}  & \qw &  &\gate{\mathrm{Y}} & \qw  &  & \gate{\mathrm{Z}} & \qw
}
\caption{Schematic representation of Pauli gates for quantum circuits.}
\label{fig:pauligates}
\end{figure}
Other two one-qubit quantum gates of interest for the construction of quantum error correction theory are the \textit{Hadamard} and the \textit{phase shift} quantum gates. The unitary that describes the  operation of the Hadamard gate is given by
\begin{equation}\nonumber
\mathrm{H}=\frac{1}{\sqrt{2}}\begin{pmatrix}
1 & 1 \\
1 & -1
\end{pmatrix},
\label{eq:hadamard}
\end{equation}
and it maps the basis states $\ket{0}$ and $\ket{1}$ into the superposition states $\ket{+}\equiv \frac{1}{\sqrt{2}}(\ket{0}+\ket{1})$ and $\ket{-}\equiv \frac{1}{\sqrt{2}}(\ket{0}-\ket{1})$ respectively. The basis formed by $\ket{+}$ and $\ket{-}$ is named \textit{Hadamard basis}.
The unitary for the general phase shift gate is
\begin{equation}\nonumber
\mathrm{R}_{\phi}=\begin{pmatrix}
1 & 0 \\
0 & e^{i\phi}
\end{pmatrix},
\label{eq:phaseshift}
\end{equation}
and maps $\ket{0}$ and $\ket{1}$ as $\ket{0}$ and $e^{i\phi \ket{1}}$ respectively. In other words, the phase of the quantum state is changed.

Two special cases for the phase shift gate are the $\mathrm{R}_\pi$ gate, as it is equal to the Pauli $\mathrm{Z}$ gate, and the $\mathrm{R}_\frac{\pi}{2}$ gate, as it will take an important role for error correction encoder and decoder design. That's why this gate will have the special notation $\mathrm{P}\equiv \mathrm{R}_\frac{\pi}{2}$ and its unitary matrix is
\begin{equation}\nonumber
\mathrm{P}\equiv \mathrm{R}_\frac{\pi}{2} = \begin{pmatrix}
1 & 0 \\
0 & i
\end{pmatrix}.
\end{equation}

\begin{figure}[h]
\centering
\leavevmode
\Huge
\Qcircuit @C=1em @R=1em {
& \gate{\mathrm{H}}  & \qw &  &\gate{\mathrm{R}_\phi} & \qw  &  & \gate{\mathrm{P}} & \qw
}
\caption{Schematic representation of Hadamard, phase shift and P gates.}
\label{fig:hadpermgates}
\end{figure}
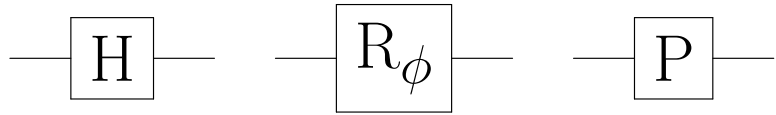

Posutlate \ref{post:postulate2} describes how the evolution of a quantum system at two different times are related. A more refined version of the postulate is given in postulate $2'$ which describes the evolution of a closed quantum system in \textit{continuous time}.
\begin{postulate2*}[Continuous-time evolution]
The time evolution of the state of a closed quantum system is described by the \textit{Sch\"odinger equation},
\begin{equation}\nonumber
i\hbar\frac{d\ket{\psi}}{dt}=\mathrm{\hat{H}}\ket{\psi},
\end{equation}
where $\hbar$ is a physical constant known as the \textit{reduced Planck's constant} and $\mathrm{\hat{H}}$ is a fixed Hermitian operator known as the \textit{Hamiltonian} of the closed quantum system.
\end{postulate2*}

In \cite{NielsenChuang}, postulate \ref{post:postulate2} is derived from this continuous time postulate implying a one-to-one correspondence between the discrete-time dynamics using unitary operators and the continuous-time dynamics via the differential equation that describes the evolution of the Hamiltonians of quantum systems.

The third postulate of quantum mechanics deals with the measurement of quantum systems. This measurement occurs when some external party observes the system to find out what is going on in the system, an interaction that makes the system no longer closed and so not necessarily subject to unitary evolution.

\begin{postulate}[Quantum Measurement]
Quantum measurements are described by a collection $\{M_m\}$ of \textit{measurement operators}. These are operators acting on the state space of the system being measured. The index $m$ refers to the measurement outcomes that may occur in the experiment. If the state of the quantum system is $\ket{\psi}$ immediately before the measurement then the probability that result $m$ occurs is given by
\begin{equation}\nonumber
p(m) = \bra{\psi}M_m^\dagger M_m\ket{\psi},
\end{equation}
and the state of the system after the measurement $\ket{\psi_m'}$ is
\begin{equation}\nonumber
\ket{\psi_m'}=\frac{M_m\ket{\psi}}{\sqrt{\bra{\psi}M_m^\dagger M_m\ket{\psi}}}.
\end{equation}
\label{post:postulate3}
\end{postulate}

The measurement operator set $\{M_m\}$ should verify the so called \textit{completeness equation}
\begin{equation}\nonumber
\sum_m M_m^\dagger M_m = \mathrm{I},
\end{equation}

so that the probabilities of measurement outcomes sum to one
\begin{equation}\nonumber
\begin{split}
\sum_m p(m)&=\sum_m \bra{\psi}M_m^\dagger M_m\ket{\psi} = \bra{\psi}\left(\sum_m M_m^\dagger M_m\right)\ket{\psi} \\ &=\bra{\psi}\mathrm{I}\ket{\psi}=\braket{\psi|\psi}=1.
\end{split}
\end{equation}

A simple but important example of a measurement is the \textit{measurement of a qubit in the standard basis}. This measurement on a single qubit with two outcomes is defined by the two measurement operators\footnote{Note that $M_0$ and $M_1$ are idempotent ($M_0^2=M_0$,$M_1^2=M_1$) and Hermitian.} $M_0=\ket{0}\bra{0}$ and $M_1 = \ket{1}\bra{1}$. Using this measurement, the probabilities of the outcomes $0$ and $1$ for an arbitrary qubit $\ket{\psi}=\alpha\ket{0}+\beta\ket{1}$ are
\begin{equation}\nonumber
\begin{split}
p(0)&=\bra{\psi}M_0^\dagger M_0\ket{\psi} = \bra{\psi}M_0\ket{\psi}=(\alpha^*\bra{0}+\beta^*\bra{1})\ket{0}\bra{0}(\alpha\ket{0}+\beta\ket{1}) \\
&=(\alpha^*\braket{0|0}+\beta^*\braket{1|0})(\alpha\braket{0|0}+\beta\braket{0|1})=\alpha^*\alpha=|\alpha|^2,
\end{split}
\end{equation}
and
\begin{equation}\nonumber
\begin{split}
p(1)&=\bra{\psi}M_1^\dagger M_1\ket{\psi} = \bra{\psi}M_1\ket{\psi}=(\alpha^*\bra{0}+\beta^*\bra{1})\ket{1}\bra{1}(\alpha\ket{0}+\beta\ket{1}) \\
&=(\alpha^*\braket{0|1}+\beta^*\braket{1|1})(\alpha\braket{1|0}+\beta\braket{1|1})=\beta^*\beta=|\beta|^2.
\end{split}
\end{equation}

The states after measurement associated with each of the outcomes are
\begin{equation}\nonumber
\ket{\psi_0'}=\frac{M_0\ket{\psi}}{\sqrt{\bra{\psi}M_0^\dagger M_0\ket{\psi}}}=\frac{\ket{0}\bra{0}(\alpha\ket{0}+\beta\ket{1})}{|\alpha|
}=\frac{\ket{0}(\alpha\braket{0|0}+\beta\braket{0|1})}{|\alpha|}=
\frac{\alpha}{|\alpha|}\ket{0}
\end{equation}
and
\begin{equation}\nonumber
\ket{\psi_1'}=\frac{M_1\ket{\psi}}{\sqrt{\bra{\psi}M_1^\dagger M_1\ket{\psi}}}=\frac{\ket{1}\bra{1}(\alpha\ket{0}+\beta\ket{1})}{|\beta|
}=\frac{\ket{1}(\alpha\braket{1|0}+\beta\braket{1|1})}{|\beta|}=
\frac{\beta}{|\beta|}\ket{1}.
\end{equation}

Ignoring the phase multipliers $\frac{\alpha}{|\alpha|}$ and $\frac{\beta}{|\beta|}$, the two post-measurement states are $\ket{0}$ and $\ket{1}$, so the superposition state held by $\ket{\psi}$ has been \textit{destroyed} due to its measurement. This last observation leds to one of the most intriguing effects of quantum mechanics, being that the collapse of the quantum states after measurement. What this means is that when a quantum state is measured, the post-measurement state \textit{changes} from the superposition state to the specific state consistent with the measurement results. This conclusion will be important when introducing the theory of quantum error correction, as we will not be able to measure the states received because they would be destroyed by such operation.

Before presenting the last postulate of quantum mechanics, an special class of the general measurements presented in postulate \ref{post:postulate3} will be presented due to its importance in developing error correcting codes in the quantum paradigm.
\begin{definition}[Projective measurements]
A \textit{projective measurement} is described by an \textit{observable}, $M$, an Hermitian operator on the state space of the system being observed. The observable has a spectral decomposition
\begin{equation}\nonumber
M=\sum_m mP_m,
\end{equation}
where $P_m$ is the projector\footnote{A projector matrix $P$ is hermitian, $P^\dag=P$, and idempotent, $P^2=P$.} onto the eigenspace of $M$ with eigenvalue $m$. The possible outcomes of the measurement correspond to the eigenvalues, $m$, of the observable. Upon measuring the state $\ket{\psi}$, the probability of getting result $m$ is given by
\begin{equation}\nonumber
p(m)=\bra{\psi}P^\dag_mP_m\ket{\psi}=\bra{\psi}P_m\ket{\psi},
\end{equation}
and given that the outcome $m$ occurred, the state of the quantum system immediately after measurement is
\begin{equation}
\ket{\psi'_m}=\frac{P_m\ket{\psi}}{\sqrt{p(m)}}.
\end{equation}
\end{definition}

Postulate \ref{post:postulate4} deals with the description of composite quantum mechanical systems, that is, systems composed of the state spaces of the component quantum systems (see Appendix \ref{app:tensor} for the definition and properties of the tensor product).

\begin{postulate}[Composite systems]
The state space of a composite physical system is the tensor product of the state spaces of the component physical systems. Moreover, if we have systems numbered $1$ through $n$, and the system number $k$ is prepared in the state $\ket{\psi_k}$, then the joint state of the total system is $\ket{\psi_1}\otimes\ket{\psi_2}\otimes\cdots\otimes\ket{\psi_n}$.
\label{post:postulate4}
\end{postulate}

Postulate \ref{post:postulate4} then provides the tool to work with systems composed of several state vectors. From this postulate, one of the most surprising ideas associated with composite quantum systems can be defined, being that \textit{entanglement}.

\begin{definition}[Entangled state]
A composite quantum system is said to be an \textit{entangled system} if it cannot be written as a product of states of its component systems, that is
\begin{equation}\nonumber
\nexists \ket{\psi}\in\mathcal{H}_A,\ket{\varphi}\in\mathcal{H}_B : \ket{\psi}_{AB}=\ket{\psi}\otimes\ket{\varphi},\ket{\psi}_{AB}\in\mathcal{H}_A\otimes\mathcal{H}_B.
\end{equation}
\label{def:entanglement}
\end{definition}

One typical example of entangled systems is given by the \textit{EPR pairs\footnote{Named after Einstein, Podolsky and Rosen as they introduecd the so-called EPR paradox related with the strange phenomena of entanglement.}} or \textit{Bell states\footnote{They receive this name as they are subject to the \textit{Bell inequality}.}}. The state vectors that describe this set of entangled quantum systems are
\begin{equation}\nonumber
\ket{\Phi^+}=\frac{\ket{00}+\ket{11}}{\sqrt{2}},
\end{equation}
\begin{equation}\nonumber
\ket{\Phi^-}=\frac{\ket{00}-\ket{11}}{\sqrt{2}},
\end{equation}
\begin{equation}\nonumber
\ket{\Psi^+}=\frac{\ket{01}+\ket{10}}{\sqrt{2}},
\end{equation}
\begin{equation}\nonumber
\ket{\Psi^-}=\frac{\ket{01}-\ket{10}}{\sqrt{2}}.
\end{equation}

This states are two qubit composite systems that cannot be expressed as tensor products of single qubits. Even though this is the property that defines entanglement, the real interest of it lays on the fact that the measurement outcome of either qubit determines the outcome of the other qubit. To see why this happens consider EPR pair $\ket{\Phi^+}$ and measurement operators $M_0=\ket{00}\bra{00},M_1=\ket{01}\bra{01},M_2=\ket{10}\bra{10},M_3=\ket{11}\bra{11}$. This set of operators are consistent with postulate \ref{post:postulate3}, so the probabilities of each outcome are given by
\begin{equation}\nonumber
p(0)=\bra{\Phi^+}M_0^\dagger M_0\ket{\Phi^+}=\frac{1}{2},
\end{equation}
\begin{equation}\nonumber
p(1)=\bra{\Phi^+}M_1^\dagger M_1\ket{\Phi^+}=0,
\end{equation}
\begin{equation}\nonumber
p(2)=\bra{\Phi^+}M_2^\dagger M_2\ket{\Phi^+}=0,
\end{equation}
\begin{equation}\nonumber
p(3)=\bra{\Phi^+}M_3^\dagger M_3\ket{\Phi^+}=\frac{1}{2}.
\end{equation}

And so it can be seen that the only possible outcomes for the measurements are, up to a multiplicative factor, $\ket{00}$ and $\ket{11}$. What this means is that if one of the qubits is measured, then the outcome will be completely random between $\ket{0}$ and $\ket{1}$, but the outcome of the other qubit will be exactly the one for the first measured qubit\footnote{For $\ket{\Psi^\pm}$ the other qubits outcome will be exactly the contrary of the outcome of the first one.}. That is, measuring one of the qubits \textit{determines} the otucome of the other qubit. This is general for measuring $\ket{\Phi^+}$ in any basis. This kind of entangled states are a central part of the entanglement-assisted quantum error correction codes.

Entanglement may seem to be the same as distributed randomness between two parties in the classical setting, that is, the scenario where both parties share a common random variable which has the same realizations for both of them. However, and as it will be seen later, entanglement is a much more powerful asset than classical shared randomness since it has properties that cannot be replicated in the classical setting. A very insightful experiment where such advantage can be observed is the so-called CHSH game\footnote{CHSH stands for Clauser, Horne, Shimony and Holt.}, in which the probability of winning a thought game by the optimal strategy in the classical setting is overcomed by means of entanglement \cite{chsh}.

To finish with section \ref{subsub:postulates}, some two-qubit quantum gates will be presented. This gates are very relevant for quantum computing and error correction since they will be the ones used to make operations between the different individual elements that constitute the quantum system. The first two-qubit gate of interest is the \textit{controlled-NOT} or CNOT gate, where a NOT operation controlled by the first qubit is done in the second qubit. The unitary matrix that describes such operation is
\begin{equation}\nonumber
\mathrm{CNOT}=\begin{pmatrix}
1 & 0 & 0 & 0 \\
0 & 1 & 0 & 0 \\
0 & 0 & 0 & 1 \\
0 & 0 & 1 & 0
\end{pmatrix}.
\end{equation}

The CNOT gate then performs a NOT operation on the second qubit only if the first of the qubits is in the state $\ket{1}$, otherwise leaving it unchanged. The second two-qubit gate that will be presented is the general \textit{controlled-Unitary} or $C(U)$ gate. The controlled-$U$ gate performs a unitary gate controlled by the first qubit on the second qubit. The matricial expression for such an operation is
\begin{equation}\nonumber
C(U)=\begin{pmatrix}
1 & 0 & 0 & 0 \\
0 & 1 & 0 & 0 \\
0 & 0 & u_{00} & u_{01} \\
0 & 0 & u_{10} & u_{11}
\end{pmatrix},
\end{equation}
where $u_{ij}$ refer to the elements that constitute the unitary matrix $U$. As it happened with the CNOT gate, the controlled-$U$ gate will perform the unitary operation only if the first qubit is in state $\ket{1}$, otherwise leaving the second qubit unchanged. It is straightforward to see that the CNOT gate is just an specific case of the controlled-$U$ gate when $U=\mathrm{X}$. However, it has been introduced specifically because it is a widely used gate and will have an important role in the theory of quantum error correction.

The last two-qubit gate of interest that will be presented is the \textit{SWAP} gate, whose operation over the two qubits is described by the unitary matrix
\begin{equation}\nonumber
\mathrm{SWAP} = \begin{pmatrix}
1 & 0 & 0 & 0 \\
0 & 0 & 1 & 0 \\
0 & 1 & 0 & 0 \\
0 & 0 & 0 & 1
\end{pmatrix}.
\end{equation}
The SWAP gate over two qubits just swaps the two qubits of order in a quantum circuit. Consequently, the operation of such gate is
\begin{equation}\nonumber
\mathrm{SWAP}(\ket{\psi}\otimes\ket{\varphi})=\ket{\varphi}\otimes\ket{\psi}.
\end{equation}

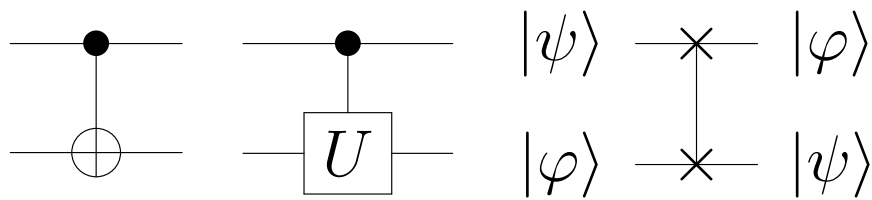
\begin{figure}[h]
\centering
\leavevmode
\Huge
\Qcircuit @C=1em @R=1.25em {
& \ctrl{1}  & \qw  \\
& \targ  & \qw
}
\Qcircuit @C=1em @R=1em {
& & \ctrl{1}  & \qw \\
& & \gate{U} & \qw \\
}
\Qcircuit @C=1em @R=2em @!{
& & & \lstick{\ket{\psi}}& \qswap  & \rstick{\ket{\varphi}}\qw \\
& & & \lstick{\ket{\varphi}}& \qswap\qwx &   \rstick{\ket{\psi}}\qw
}
\caption{Diagram representation of the CNOT, controlled-U and SWAP gates.}
\label{fig:controlledswap}
\end{figure}

\subsection{The density matrix formulation}\label{subsub:density}
In section \ref{subsub:postulates} a formulation of quantum mechanics has been given by means of state vectors. An equivalent formulation of the theory can be given by using the so-called \textit{density matrices} or \textit{density operators}. This reformulation of quantum mechanics in terms of density matrices provides a more convinient tool in order to describe systems whose state is not known completely in terms of state vectors. This description of quantum mechanics is equivalent to the one previously presented, but it presents an easier way to deal with some problems of the physical theory. In this section, such reformulation of quantum mechanics will be presented shortly so that its usage is clear when needed.

\begin{definition}[Density operator]
Suppose a quantum system is in one of a number states $\ket{\psi_k}$, where $k$ is an index, with respective probabilities $p_k$. We shall call $\{p_k,\ket{\psi_k}\}$ an \textit{ensemble of pure states}. The density operator or density matrix  for such system is defined by
\begin{equation}\nonumber
\rho\equiv\sum_k p_k\ket{\psi_k}\bra{\psi_k}.
\end{equation}
\label{def:density}
\end{definition}

The density operator description has been introduced as a way of describing ensembles of quantum states. A description that does not rely on the state vector will be given by the next theorem so that we can consistently introduce the reformulation of quantum mechanics.
\begin{theorem}
An operator $\rho$ is the density operator associated to some ensemble $\{p_k,\ket{\psi_k}\}$ if and only if it satisfies the conditions
\begin{itemize}
\item \textbf{Trace condition}: $\mathrm{Tr}(\rho)=1$.
\item \textbf{Positivity condition}: $\rho$ is a positive operator, that is $\bra{\psi}\rho\ket{\psi}\geq 0,\forall \ket{\psi}$.
\end{itemize}
\label{thm:density}
\end{theorem}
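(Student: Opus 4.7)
The plan is to prove both directions of this biconditional by direct computation for the forward direction and by spectral decomposition for the reverse direction.

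For the forward implication ($\Rightarrow$), I would start from $\rho = \sum_k p_k \ket{\psi_k}\bra{\psi_k}$ and verify each condition. For the trace, I would use linearity and the fact that $\mathrm{Tr}(\ket{\psi_k}\bra{\psi_k}) = \braket{\psi_k|\psi_k} = 1$ (by Postulate~\ref{post:postulate1}, the states $\ket{\psi_k}$ are unit vectors), which together with $\sum_k p_k = 1$ (ensemble probabilities) yields $\mathrm{Tr}(\rho) = 1$. For positivity, I would take an arbitrary $\ket{\psi}$ and compute
\begin{equation}\nonumber
\bra{\psi}\rho\ket{\psi} = \sum_k p_k \bra{\psi}\ket{\psi_k}\bra{\psi_k}\ket{\psi} = \sum_k p_k |\braket{\psi|\psi_k}|^2 \geq 0,
\end{equation}
since each $p_k \geq 0$ and each $|\braket{\psi|\psi_k}|^2 \geq 0$.

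For the reverse implication ($\Leftarrow$), the idea is to explicitly produce an ensemble from any operator $\rho$ satisfying the two conditions. Since positivity of $\rho$ implies that $\rho$ is Hermitian, the spectral theorem gives a decomposition
\begin{equation}\nonumber
\rho = \sum_k \lambda_k \ket{k}\bra{k},
\end{equation}
where $\{\ket{k}\}$ is an orthonormal basis of eigenvectors and $\lambda_k \in \mathbb{R}$ are the corresponding eigenvalues. Positivity forces $\lambda_k = \bra{k}\rho\ket{k} \geq 0$, and the trace condition gives $\sum_k \lambda_k = \mathrm{Tr}(\rho) = 1$. Hence $\{\lambda_k\}$ is a valid probability distribution, and the ensemble $\{\lambda_k, \ket{k}\}$ realizes $\rho$ as a density operator in the sense of Definition~\ref{def:density}.

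The only subtle step is invoking the spectral theorem, which requires $\rho$ to be Hermitian; I would justify this briefly by noting that any positive operator on a complex Hilbert space is automatically Hermitian (since $\bra{\psi}\rho\ket{\psi}$ is real for all $\ket{\psi}$ implies $\rho = \rho^\dagger$). Beyond that, the argument is essentially a tidy bookkeeping of the two defining properties, so I do not expect any real obstacle.
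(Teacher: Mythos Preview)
Your proof is correct and is precisely the standard argument (forward direction by direct computation, reverse direction via the spectral decomposition of a positive operator). The paper itself states this theorem without proof, as it is quoting the characterisation from Nielsen and Chuang~\cite{NielsenChuang}; your write-up matches that reference's proof essentially line for line, including the observation that positivity on a complex Hilbert space forces Hermiticity so that the spectral theorem applies.
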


This way, any positive operator with trace equal to one is a density matrix describing some quantum state. By means of the description of the quantum system using the density operator, the postulates of quantum mechanics presented in \ref{subsub:postulates} can be reformulated like
\setcounter{postulate}{0}

\begin{postulate}[State space]
Associated to any isolated physical system is a complex vector space with inner product (that is, a Hilbert space $\mathcal{H}$) known as the \textit{state space} of the system. The system is completely described by its \textit{density operator}, which is a positive operator $\rho$ with trace one, acting on the state space of the system. If a quantum system is in the state $\rho_k$ with probability $p_k$, then the overall density operator for the system is $\sum_k p_k\rho_k$.
\end{postulate}

\begin{postulate}[Evolution]\label{densityevo1}
The evolution of a \textit{closed} quantum system is described by a \textit{unitary transformation}. That is, the state $\rho$ of the system at time $t_1$ is related to the state $\rho'$ of the system at time $t_2$ by a unitary operator $U$ which depends only on the times $t_1$ and $t_2$,
\begin{equation}\nonumber
\rho'=U\rho U^\dagger.
\end{equation}
\end{postulate}

\begin{postulate2*}[Continuous-time evolution]
The time evolution of a density matrix $\rho$ of a closed quantum system is described by the von Neumann equation
\begin{equation}\nonumber
i\hbar \frac{\partial \rho}{\partial t} = [\mathrm{\hat{H}},\rho],
\end{equation}
where $\hbar$ is the reduced Planck's constant, $[a,b]=ab-ba$ is the commutator operator and $\mathrm{\hat{H}}$ is a fixed Hermitian operator known as the Hamiltonian of the closed quantum system.
\end{postulate2*}

\begin{postulate}[Quantum measurement]
Quantum measurements are described by a collection $\{M_m\}$ of \textit{measurement operators}. These are operators acting on the state space of the system being measured. The index $m$ refers to the measurement outcome that may occur in the experiment. If the state of the quantum system is $\rho$ immediately before the measurement then the probability that result $m$ occurs is given by
\begin{equation}\nonumber
p(m)=\mathrm{Tr}(M_m^\dagger M_m\rho),
\end{equation}
and the state of the system $\rho_m'$ after the measurement is
\begin{equation}\nonumber
\rho_m'=\frac{M_m\rho M_m^\dagger}{\mathrm{Tr}(M_m^\dagger M_m\rho)}
\end{equation}
\end{postulate}

\begin{postulate}[Composite systems]
The state space of a composite quantum system is the tensor product of the state spaces of the component physical systems. Moreover, if we have systems numbered $1$ through $n$, and system $k$ is prepared in state $\rho_k$, then the joint state of the total system is $\rho_1\otimes\rho_2\otimes\cdots\otimes\rho_n$.
\end{postulate}

Postulate \ref{densityevo1} can be derived from postulate 2$'$ once again. The reformulation of quantum mechanics using the language of density matrices is mathematically equivalent to the description in terms of the state vector. However, this way of thinking about quantum mechanics specially shines for two applications: the description of quantum systems whose vector state is not completely known, and the description of composite quantum systems. Finally, the densisty operator allows distinguishing the so-called \textit{pure} and \textit{mixed states}.

\begin{definition}[Pure state]
A quantum system whose state $\ket{\psi}$ is perfectly known is said to be in a \textit{pure state}. In this case, the density operator is just $\rho=\ket{\psi}\bra{\psi}$. A criteria for determining if a state is pure is that its density matrix fulfills $\mathrm{Tr}(\rho^2)=1$
\label{def:purestate}
\end{definition}
\begin{definition}[Mixed state]
A quantum state is said to be in a \textit{mixed state} if the state vector of the system is not exactly known. In this case, the state is said to be the \textit{mixture} of the different pure states in the ensemble for $\rho$. A criteria for determining if a state is mixed is that its density matrix fulfills $\mathrm{Tr}(\rho^2)<1$.
\label{def:mixedstate}
\end{definition}

\subsection{No-cloning Theorem and Quantum Teleportation}\label{subsub:no-cloning}
This last section finishes this introduction to quantum mechanics required for understanding the basic theory of quantum information and error correction. Here, two shocking results of quantum mechanics will be presented, due to their importance when dealing with correction of information in quantum channels.

\textit{The no-cloning} theorem is a basic theorem of quantum mechanics that refers to the prohibition of cloning quantum states. This means that an arbitrary quantum state cannot be replicated in order to obtain redundancy of information and, thus, have several copies of the same data to protect the information from noise. This technique is very common in the classical world, and due to such result it cannot be done in the quantum paradigm.

\begin{theorem}[No-cloning theorem]
It is not possible to create an identical copy of an arbitrary quantum state. This means that there is no unitary operator $U$ that realizes the action
\begin{equation}\nonumber
\ket{\psi}\otimes\ket{\varphi}\rightarrow U(\ket{\psi}\otimes\ket{\varphi})=\ket{\psi}\otimes\ket{\psi},
\end{equation}
for any arbitrary qubit tuple $\{\ket{\psi},\ket{\varphi}\}$.
\label{thm:no-cloning}
\end{theorem}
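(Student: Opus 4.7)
The plan is to argue by contradiction, exploiting the inner-product-preserving property of unitary operators ($U^\dagger U = \mathrm{I}$) against the nonlinear requirement that a single operator duplicates an arbitrary superposition. So I would first assume, for contradiction, that a single unitary $U$ acts on any pair $\ket{\psi}\otimes\ket{\varphi}$ by producing $\ket{\psi}\otimes\ket{\psi}$. Note that the ``blank'' register $\ket{\varphi}$ must be a fixed state independent of $\ket{\psi}$, since $U$ itself cannot depend on the unknown state being cloned; without loss of generality take $\braket{\varphi|\varphi}=1$.

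Next, I would pick two arbitrary pure single-qubit states $\ket{\psi_1}$ and $\ket{\psi_2}$ and apply the hypothesis in both slots:
\begin{equation}\nonumber
U\bigl(\ket{\psi_1}\otimes\ket{\varphi}\bigr)=\ket{\psi_1}\otimes\ket{\psi_1}, \qquad U\bigl(\ket{\psi_2}\otimes\ket{\varphi}\bigr)=\ket{\psi_2}\otimes\ket{\psi_2}.
\end{equation}
I would then compute the inner product of the right-hand sides in two ways. Using $U^\dagger U=\mathrm{I}$ together with the multiplicativity of the inner product on tensor products, I get $\braket{\psi_1|\psi_2}\braket{\varphi|\varphi}=\braket{\psi_1|\psi_2}$. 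Computing it directly on the cloned states yields $\braket{\psi_1|\psi_2}^2$. Equating the two expressions gives $\braket{\psi_1|\psi_2}=\braket{\psi_1|\psi_2}^2$, hence $\braket{\psi_1|\psi_2}\in\{0,1\}$.

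To close the argument, I would exhibit a concrete pair violating this dichotomy: for instance $\ket{\psi_1}=\ket{0}$ and $\ket{\psi_2}=\ket{+}=\tfrac{1}{\sqrt{2}}(\ket{0}+\ket{1})$, for which $\braket{\psi_1|\psi_2}=1/\sqrt{2}\notin\{0,1\}$. This contradicts the hypothesis, so no such $U$ exists. The main obstacle is essentially a bookkeeping one: one must be careful that the theorem does not forbid cloning a \emph{fixed} orthonormal basis (which is exactly what the $\mathrm{CNOT}$ gate accomplishes on the standard basis); the impossibility only rules out a single device that clones \emph{every} state, and the inner-product argument above captures precisely that distinction. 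An alternative route I would note is a direct linearity argument: if $U$ clones $\ket{0}$ and $\ket{1}$ then by linearity $U(\ket{\psi}\otimes\ket{\varphi})=\alpha\ket{00}+\beta\ket{11}$ for $\ket{\psi}=\alpha\ket{0}+\beta\ket{1}$, whereas $\ket{\psi}\otimes\ket{\psi}=\alpha^2\ket{00}+\alpha\beta\ket{01}+\alpha\beta\ket{10}+\beta^2\ket{11}$, and these agree only when $\alpha\beta=0$, again a contradiction.
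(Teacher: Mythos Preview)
Your argument is correct and is the standard textbook proof (essentially the one in Nielsen and Chuang, which this thesis cites). However, the paper itself does \emph{not} supply a proof of Theorem~\ref{thm:no-cloning}: it simply states the result and then comments on its relevance for quantum error correction, so there is nothing in the paper to compare your approach against. Your inner-product preservation argument and the alternative linearity argument you sketch are both valid and would serve perfectly well as a proof here.
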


The nonexistence of a universal qubit cloning machine is a result of special importance for the design of quantum error correction codes as this implies that a repetition code cannot be used in order to protect the information from quantum noise. Theorem \ref{thm:no-cloning} then limits how quantum codes have to be designed.

\textit{Quantum teleportation} refers to the technique of moving quantum states around, even in the absence of quantum communication channels linking the sender and the recipient of the quantum state. This phenomena is based on the usage of entanglement and some classical communication in order to transmit an arbitrary qubit between the two parties.

Imagine that Alice wants to send a qubit to Bob. Consider that the qubit Alice posses is
\[\ket{\psi}_{ A'}=\alpha|0\rangle_{A'}+\beta|1\rangle_{A'}\]
Suppose that she shares a EPR pair $\ket{\Phi^+}_{AB}$ with Bob. The joint state of the systems of Alice $(A',A)$ and Bob $B$ is $\ket{\psi}_{ A'}\ket{\Phi^+}_{AB}$.
In addition, Alice and Bob can send two classical bits across two uses of a classical channel.
Note that Alice does not know the state $\ket{\psi}_{ A'}$ to be sent, and it cannot determine it since $\ket{\psi}_{ A'}$ would be destroyed when measuring it (also a single measurement of the state cannot give enough information to determine its amplitudes $\alpha$ and $\beta$). Moreover, even if the sender would knew the state of the qubit, it would be impossible to send $\ket{\psi}_{ A'}$ by transmitting two classical bits as $\ket{\psi}_{ A'}$ operates on a \textit{continuous space} and, therefore, an infinite amount of classical bits would be required to describe such $\ket{\psi}_{ A'}$.

The key to obtain the desired \textit{teleportation} is to use the pre-shared entangled EPR pair, plus a little amount of classical information. The circuit used for the teleportation is shown in Figure \ref{fig:teleport}, and the procedure works as follows.
\begin{figure}[h]
\centering
\leavevmode
\Qcircuit @C=1em @R=0.5em @!{
& \lstick{\ket{\psi}_{ A'}} & \ctrl{1} & \gate{H} & \meter & \cw & \cw & \ustick{M_1}\cw \\
& \lstick{} &\targ & \qw  & \meter & \cw & \ustick{M_2}\cw & \cwx \\
& \lstick{\raisebox{3em}{$\ket{\Phi^+}_{AB}$\ }} & \dstick{\ket{\psi_0}\;}\qw & \dstick{\ket{\psi_1}\;}\qw & \dstick{\ket{\psi_2}}\qw & \dstick{\ket{\psi_3}}\qw
 &\gate{X^{M_2}} \cwx & \gate{Z^{M_1}} \cwx &  \rstick{\ket{\psi}}\qw
\gategroup{2}{2}{3}{2}{.8em}{\{}
}
\caption{Quantum teleportation scheme. $M_1$ and $M_2$ are the measurement outcomes of the first two qubits.}
\label{fig:teleport}
\end{figure}
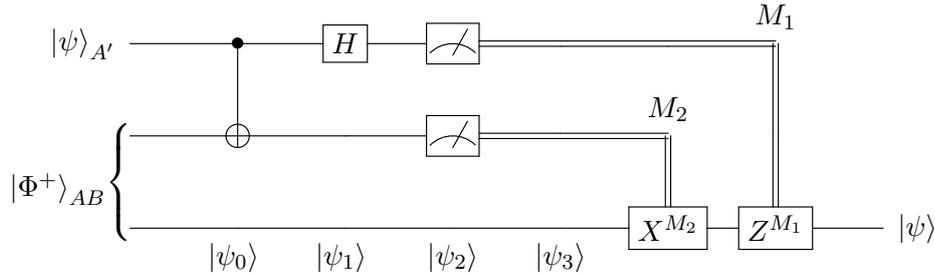

The initial composite state of the qubit to teleport and the EPR pair shared by the sender and the receiver $\ket{\psi_0}$ is
\begin{equation}\nonumber
\ket{\psi_0}_{A'AB}=\ket{\psi}_{A'}\ket{\Phi^+}_{AB}=\frac{1}{\sqrt{2}}[\alpha\ket{0}_{A'}(\ket{00}_{AB}+\ket{11}_{AB})+\beta\ket{1}_{A'}(\ket{00}_{AB}+\ket{11}_{AB})],
\end{equation}
After the CNOT operation, the state will be
\begin{equation}\nonumber
\ket{\psi_1}_{A'AB}=\frac{1}{\sqrt{2}}[\alpha\ket{0}_{A'}(\ket{00}_{AB}+\ket{11}_{AB})+\beta\ket{1}_{A'}(\ket{10}_{AB}+\ket{01}_{AB})]
\end{equation}
The next step is to apply a Hadamard gate to the first qubit of the composite state. This operation changes the state to
\begin{equation}\nonumber
\ket{\psi_2}_{A'AB}=\frac{1}{2}[\alpha(\ket{0}_{A'}+\ket{1}_{A'})(\ket{00}_{AB}+\ket{11}_{AB})+\beta(\ket{0}_{A'}-\ket{1}_{A'})(\ket{10}_{AB}+\ket{01}_{AB})],
\end{equation}
which after regrouping terms
\begin{equation}
\begin{split}
\ket{\psi_3}_{A'AB}=&\frac{1}{2}[\ket{00}_{A'A}(\alpha\ket{0}_B+\beta\ket{1}_B)+\ket{01}_{A'A}(\alpha\ket{1}_B+\beta\ket{0}_B) \\ +&\ket{10}_{A'A}(\alpha\ket{0}_B-\beta\ket{1}_B)+\ket{11}_{A'A}(\alpha\ket{1}_B-\beta\ket{0}_B)].
\end{split}
\label{eq:psi2}
\end{equation}

Expression \ref{eq:psi2} gives the composite state of Alice and Bob and is sum of four different terms. Then, if Alice measures her states, by the projective measurements $\{\ket{00}\bra{00}_{A'A}, \ket{01}\bra{01}_{A'A}, \ket{10}\bra{10}_{A'A}, \ket{11}\bra{11}_{A'A}\}$, one of the four possible outcomes $00,01,10,11$ will result. The post-measurement states of the composite system associated with each of the outcomes will be:
\begin{equation}\nonumber
|00\rangle_{A'A}\otimes \overbrace{(\alpha\ket{0}_B+\beta\ket{1}_B)}^{\ket{\psi_3}_B}=\ket{00}_{A'A}\otimes \ket{\psi}_B
\end{equation}
\begin{equation}\nonumber
|01\rangle_{A'A} \otimes \overbrace{(\alpha\ket{1}_B+\beta\ket{0}_B)}^{\ket{\psi_3}_B}=\ket{01}_{A'A}\otimes X\ket{\psi}_B
\end{equation}
\begin{equation}\nonumber
|10\rangle_{A'A} \otimes\overbrace{(\alpha\ket{0}_B-\beta\ket{1}_B)}^{\ket{\psi_3}_B}=\ket{10}_{A'A}\otimes Z\ket{\psi}_B
\end{equation}
\begin{equation}\nonumber
|11\rangle_{A'A} \otimes\overbrace{(\alpha\ket{1}_B-\beta\ket{0}_B)}^{\ket{\psi_3}_B}=\ket{01}_{A'A}\otimes XZ\ket{\psi}_B
\end{equation}
Note that these composite states are given by the product states of Alice and Bob. The two left qubits correspond to the state of Alice whereas the third qubit $\ket{\psi_3}_B$ belongs to Bob. Therefore, in order for Bob to recover the state $\ket{\psi}_B=\alpha\ket{0}_B+\beta\ket{1}_B$, Alice transmits two classical bits to Bob indicating which of the four results have occurred. After Bob receives the two bits, he knows what operation he has to perform in his qubit to recover the state. From the above results, the recover protocol will be:
\begin{itemize}
\item If Bob receives $00$, then no operation must be applied.
\item If Bob receives $01$, then he applies $\mathrm{X}$ gate.
\item If Bob receives $10$, then he applies $\mathrm{Z}$ gate.
\item If Bob receives $11$, then he applies a $\mathrm{X}$ gate first followed by a $\mathrm{Z}$ gate.
\end{itemize}

The recovery operation can be compactly written as $\ket{\psi}=Z^{M_1}X^{M_2}\ket{\psi_3}_B$, where $M_1$ and $M_2$ denote the measurement result of qubits $A'$ and $A$, respectively. Note that quantum teleportation does not violate Einstein's special relativity \cite{einstein} since the classical information that is needed in order to complete the teleportation protocol is limited by the speed of light.

\subsection{Qubit implementations}
We close this section by briefly introducing the most popular technologies that are being used nowadays for state-of-the-art qubit implementations. We will not go through the specific details of these technologies as such is out of the scope of this Ph.D. dissertation. The discussions about the results obtained in this thesis are hardware agnostic in the sense that we work with qubits as two-level coherent quantum systems without taking into account how they are physically constructed. However, many of the experimental results used to derive our research are based on some specific qubit construction. Therefore, we consider important to at least briefly introduce some of the most promising technologies for qubit implementation.

 The most relevant qubit implementations at the time of writing are:
\begin{itemize}
\item \textbf{Superconducting qubits:} Currently, this is the most common qubit implementation. The main idea behind superconducting qubits is to use a microwave signal in order to place a resistance-free current in a superposition state as it oscillates around a circuit loop. This is usually realized by using Josephson junctions \cite{josephson}, which are electrical elements consisted by two superconductors coupled by a weak link. The need of superconducting materials implies that the qubits need to be cooled down to temperatures near the absolute zero. There exist several types of superconducting qubits such as the transmon \cite{transmon2,transmon} or the Xmon \cite{xmon}.
\begin{figure}
\centering
\includegraphics[scale=0.7]{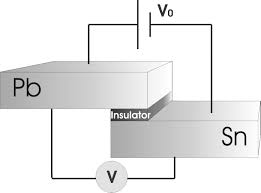}
\caption{Schematic representation of a Josephson junction. It can be seen that both of the superconductors (if both the lead (Pb) and tin (Sn) are cooled down) are linked with a thin insulator.}
\label{fig:jj}
\end{figure}
\item \textbf{Ion traps:} Ion traps are electrically trapped charged atoms (ions). The outermost electron orbiting the nucleus can be used as a qubit. Elements named Paul traps \cite{paultrap} are used in order to place the charged atoms as desired. This atoms must also be kept at very low temperatures so that the qubit can be formed.
\item \textbf{Photonic qubits:} Photonic qubits are based on the degrees of freedom that individual particles of light, named photons, present in order to construct the superposition state. The photonic qubits do not need extreme cooling and can be manufactured using silicon.
\end{itemize}

\begin{table}[h!]
\centering
\begin{tabular}{|c|c|c|}
\hline
\textbf{Technology} & \textbf{Players}                  \\
\hline
Superconducting     & IBM, Google, Rigetti, Intel, China  \\
\hline
Ion traps           & IonQ, Honeywell                    \\
\hline
Photonics           & Xanadu, PsiQuantum, China         \\
\hline
Neutral atoms       & Atom Computing, PASQAL             \\
\hline
NV centers          & Yale, Harvard                     \\
\hline
Quantum dots        & Intel                             \\
\hline
Topological         & Microsoft                         \\
\hline
\end{tabular}
\caption{Table summarizing state-of-the-art technologies used for qubit implementations.}
\label{tab:qubitTech}
\end{table}

Those are the main approaches to experimental quantum computing at the time. However, in table \ref{tab:qubitTech} we summarize several other technologies that are also being developed nowadays. Each of the technologies present several pros and cons and some of those will be presented throughout the dissertation, specially when decoherence is discussed.

\section{Decoherence and Quantum channels}\label{sec:deco}
Environmental decoherence is described as the undesired interaction\footnote{In quantum mechanical terms this interaction is more specifically called entanglement.} of a qubit with the environment, resulting in the perturbation of the coherent superposition, i.e., variations of the original amplitudes. Decoherence arises from several physical processes that should be mathematically modeled. In the previous section, postulates \ref{post:postulate2} and 2$'$ described the evolution of quantum systems. However, the Sch\"rodinger and von Neumann equations describe the evolution of \textit{closed} quantum mechanical systems. Consequently, due to the fact that actual implementable systems will never be closed\footnote{Any physical implementation of qubits will have some kind of interaction with its surrounding environment and, thus, decoherence.}, we have to describe the evolution of quantum information when it is subjected to interaction with the environment. Therefore, the theory of \textit{open} quantum systems must be considered in order to be able to understand how decoherence corrupts quantum information.

The physical processes that generate this quantum noise depend on the qubit technology being used to construct the quantum computer. As it was presented in table \ref{tab:qubitTech}, the most promising technologies currently in use are transmon superconducting qubits \cite{transmon}, trapped ion qubits \cite{trapped} and quantum dot qubits \cite{quantumdot} among others. Decoherence also takes place during the transmission of quantum states through fiber optics or via laser beams. Fortunately, despite the diversity of the quantum decoherence processes, they can be modeled by the same mathematical tools.

Next, we will discuss decoherence and the way it can be mathematically modelled by means of \textit{quantum channels}. It is important to point out that in this thesis we focus on quantum computation and error correction based on two-level physical constructions, in which the physical elements are realized by discrete two-level systems. However, other promising routes towards universal quantum computation and error correction exist, such as bosonic codes \cite{bosonic1,bosonic2}.

That being said, we begin this section describing some of the physical processes that compose decoherence. We follow by introducing the Lindblad master equation to describe the evolution of open quantum systems and how the set of interactions described before are encapsulated in such mathematical equation. Afterwards, we discuss how the results of the Lindblad master equation can be condensed into quantum channels, and present a pretty fair description of decoherence via the \textit{combined amplitude and phase damping channel}. We continue by discussing how to approximate such noise model into quantum channels that can be efficiently simulated in classical computers by using the quantum information technique named \textit{twirling}. We explain why the use of the approximate quantum channels is valid for designing and simulating QECCs that will work in real quantum hardware. We go on by briefly discussing the presence of memory in quantum channels. We finish the section by introducing the concept of quantum channel capacity and discussing it for some of the quantum channels of interest for this dissertation.

\subsection{Interaction of qubits with the environment}\label{sub:decWhy}
Quantum systems interact with their surrounding environment in a variety of ways. Thus, in the reality, the concept of closed quantum systems is just an idealisation. As a consequence of such leakage to the environment, quantum information suffers the so-called \textit{decoherence}, which is usually defined as the loss of coherence of the two-level coherent system that leds to the perturbation of the superposition state that comprises the qubit. The general problem described by the theory of open quantum systems is shown in Figure \ref{fig:openQ} in an schematic manner. There, it can be seen that both the environment and the quantum system form a complete system which is closed. However, we are interested in what happens with the quantum system, and so having to deal with the whole Hilbert space, which is often of infinite dimensions, does not make sense and is intractable. The theory of open quantum systems then studies the dynamics of the smaller quantum system of interest when subjected to interactions with the environment.

\begin{figure}[h!]
\centering
\includegraphics[scale=0.3]{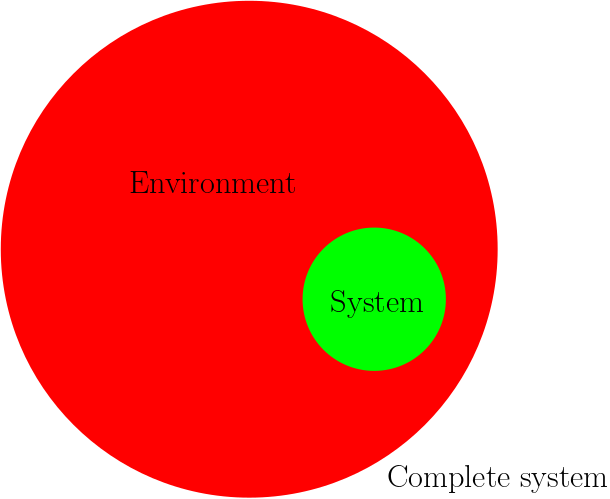}
\caption{Schematic representation of the problem treated by the theory of open quantum systems.}
\label{fig:openQ}
\end{figure}

Quantum systems suffer from \textit{relaxation} or \textit{energy dissipation}, which is the name given to the effects that quantum mechanical states suffer as a result of spontaneous energy losses. This can happen when an atom in an excited state emits a photon and returns to its ground state, when a spin system at high temperature approaches equilibrium with its environment, when a photon in an interferometer is subjected to scattering and attenuation; or when a photon is absorbed during its transmission through an optical fiber. An example of why this energy loss occurs can be appreciated by observing the energy profile of a transmon qubit, which is shown in Figure \ref{fig:transmon}. In the energy level chart presented for said qubits, it is easy to see that the states defining the qubits (or qudits if more than two energy levels of the transmon are taken) are given by the possible quantized energy states of such a superconducting transmon. Therefore, the $|0\rangle$ state corresponds to the ground state of the system, and the $|1\rangle$ state is related to an excited state. Due to the fact that an isolated system tends to drop into its lowest energy state, decoherence for this type of qubit takes place when energy is lost to the environment as a consequence of the excited state approaching the ground state.

\begin{figure}[h!]
\centering
\includegraphics[scale=0.3]{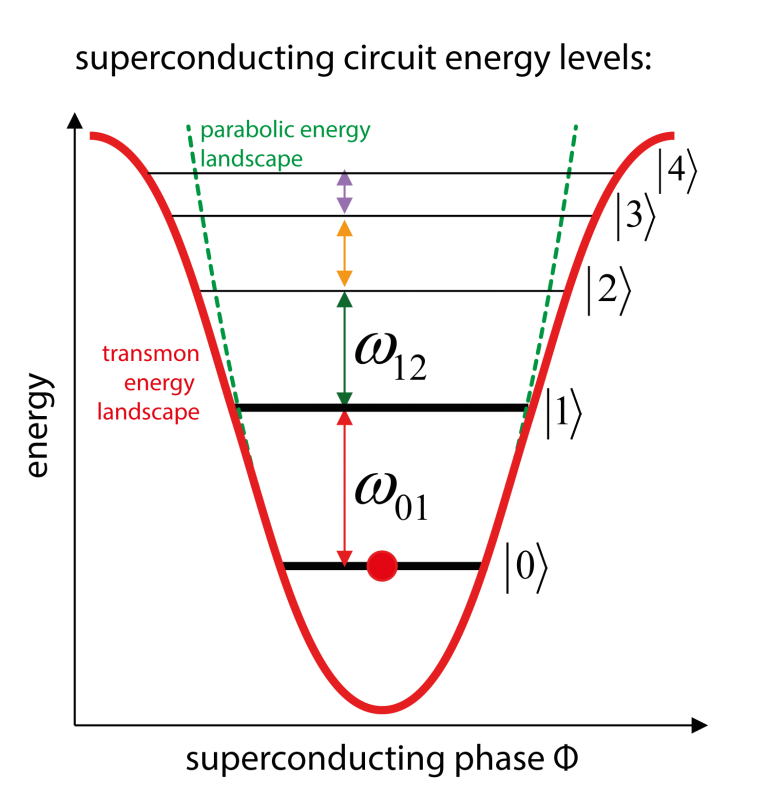}
\caption{Energy levels of typical transmon superconducting qubits. The $|0\rangle$ state corresponds to the ground state of the system while the $|1\rangle$ state is an excited state of such a transmon. Source: \cite{qutechTransmon}.}
\label{fig:transmon}
\end{figure}

Every energy dissipation process has its own unique features, but their effect in the qubits can be described mathematically in a similar manner. Consider the two-level transmon qubit described before, whose ground state is denoted by $|0\rangle_T$ and its excited state by $|1\rangle_T$. Additionally, consider the mathematical abstraction that the environment is also a two-level system with a vacuum state $|0\rangle_E$ and an excited state $|1\rangle_E$, that is always initialized in the vacuum state. Then energy dissipation processes can be described as
\begin{equation}\label{eq:ampdamp1}
\begin{split}
& |0\rangle_T|0\rangle_E \rightarrow |0\rangle_T|0\rangle_E, \\
& |1\rangle_T|0\rangle_E \rightarrow \sqrt{1 - \gamma}|1\rangle_T|0\rangle_E + \sqrt{\gamma} |0\rangle_T|1\rangle_E,
\end{split}
\end{equation}
where $\gamma$ refers to the damping probability or the probability that the system loses energy to the environment when it is in its excited state. The expressions in \eqref{ampdamp1} mathematically describe, in a general and unified way, what was previously discussed regarding the processes of energy dissipation that occur for the various methods used for the construction of qubits. The effect that energy dissipation induces in a general qubit\footnote{Note that the subscript $T$ is no longer used. The rationale now refers to every qubit technology, and not just transmon superconducting qubits. Its prior use was intended as a particular instance for explanatory purposes.} $|\psi\rangle = \alpha|0\rangle + \beta|1\rangle$ is described by the following transformation:
\begin{equation}\label{eq:ampdamp2}
|\psi\rangle|0\rangle_E \rightarrow \left(\alpha|0\rangle + \beta\sqrt{1-\gamma}|1\rangle\right)|0\rangle_E + \beta\sqrt{\gamma}|0\rangle|1\rangle_E.
\end{equation}
Following the rationale of \eqref{ampdamp1}, what \eqref{ampdamp2} is describing is the fact that state $|\psi\rangle$ decoheres to state $\left(\frac{\alpha}{\sqrt{1-\gamma\beta^2}}|0\rangle + \frac{\beta\sqrt{1-\gamma}}{\sqrt{1-\gamma\beta^2}}|1\rangle\right)$ with probability $1-\gamma\beta^2$ leaving the environment unchanged; or it decoheres to $|0\rangle$ with probability $\gamma\beta^2$ releasing a photon (or energy quanta) to the environment, and so leaving it in state $|1\rangle_E$.

It is important to remark that in the discussion above $|\psi\rangle$ has been assumed to be an isolated qubit. However, this is not true in general, since in a real system more than one qubit is present, and they will most assuredly interact with each other. This interaction is reflected in the form of entanglement, which implies that the decoherence processes affect the set of qubits in a combined fashion, and not in the isolated manner described before. Regardless, the assumption that each of the qubits interacts with the environment independently is still reasonable \cite{isomorphism}, simplifying the description to that of \eqref{ampdamp2}.

Another set of physical mechanisms that affect quantum information are those encompassed by the term \textit{dephasing}. These processes are uniquely quantum mechanical and describe the loss of quantum information without loss of energy. This kind of environmental decoherence can arise when a photon scatters randomly as it travels through a waveguide or when electronic states are perturbated due to the action of stray electrical charges. What happens during these events is that the system evolves for an amount of time which is not known with precision, and so partial information about its quantum phase is lost.

\begin{figure}[h!]
\centering
\includegraphics[scale=0.5]{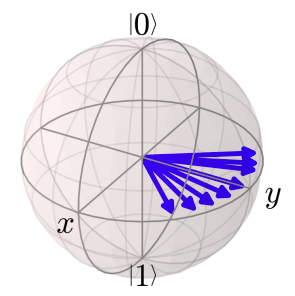}
\caption{Representation of the evolution of the phase of the qubit when it is involved in dephasing interactions with the environment.}
\label{fig:dephasing}
\end{figure}

A simple model to mathematically describe this particular set of processes can be constructed using \textit{phase kicks} on the qubits \cite{NielsenChuang}. Phase kicks are rotations applied to the qubit $|\psi\rangle$ with a random angle $\theta$. Assuming that the random variable $\theta$ follows a Gaussian distribution with mean $0$ and variance $2\lambda$ the phase of the qubit is perturbed with such random phase, obtaining the effects caused by dephasing mechanisms. Figure \ref{fig:dephasing} shows a graphical representation of such phase change.

Quantum information might be corrupted in other ways provoked by other physical interactions that have not been described before. An interesting example is the erasure and deletion errors that may happen caused by the loss of a photon when being processed in an optical quantum device \cite{erasure1,erasure2,deletion}. In those cases, the photons, that are used as qubits, are lost completely to the media, implying not that the information is corrupted but completely lost. In this thesis we will focus specifically on the decoherence processes related to relaxation and dephasing, so in the following we will discuss such as the mechanisms that corrupt quantum information.

\subsection{The Lindblad Master Equation}\label{sub:lindblad}
As explained in the previous section, we want to describe the time evolution of a quantum system when this is subjected to interaction with the environment. In order to do so, we could study the time evoultion of the whole quantum system following the rationale of Figure \ref{fig:openQ} which is subjected to closed evolution. This is generally done by means of the von Neumann equation \cite{SchlorPhD,Lindblad}
\begin{equation}
i\hbar \frac{\partial\rho_\mathrm{T}}{\partial t} = [\mathrm{\hat{H}}_\mathrm{T}, \rho_\mathrm{T}],
\end{equation}
where here $\rho_\mathrm{T}$ and $\mathrm{\hat{H}}_\mathrm{T}$ refer to the density matrix and the Hamiltonian of the complete system combining the quantum system of interest and the environment. Unfortunately, this approach is unfeasible and computationally intractable due to the size (usually infinite) of Hilbert space associated to the complete quantum system \cite{SchlorPhD}. This issue can be overcomed by means of the so-called \textit{Lindblad master equation}, which describes the evolution of a mixed state $\rho$ coupled to its environment. Such master equation describes the time evolution of the mixed state $\rho$ as \cite{SchlorPhD,Lindblad}
\begin{equation}\label{eq:lindblad}
i\hbar\frac{\partial \rho}{\partial t} = [\mathrm{\hat{H}},\rho] + \sum_k \Gamma_k \left(\mathrm{L}_k\rho\mathrm{L}_k^\dagger - \frac{1}{2}\left\lbrace \mathrm{L}_k^\dagger \mathrm{L}_k,\rho\right\rbrace\right),
\end{equation}
where $\rho$ is the density operator of the quantum system of interest, $\mathrm{\hat{H}}$ is its time-independent Hamiltonian\footnote{This Hamiltonian represents the coherent part of the dynamics \cite{NielsenChuang}.} including the so-called \textit{Lamb shift} Hamiltonian\footnote{Its role is to renormalize the system energy levels due to the interaction with the environment \cite{Lindblad}.}, $\mathrm{L}_k$ are the Lindblad or jump operators\footnote{The Lindblad operators are generally obtained by using the so-called Born-Markov approximations \cite{NielsenChuang}.} that describe the exchange of single quanta of the quantum system with the environment and $\Gamma_k$ is the interaction rate of the quanta for decoherence source $k$. $\{a,b\}=ab + ba$ refers to the anticommutator operator.

This way, using the master equation, we can study the evolution of the quantum state of interest, $\rho$, when it interacts with the environment by the decoherence mechanisms described by the Lindblad operators $\mathrm{L_k}$. As discussed in the previous section, we are interested in decoherence mechanisms named relaxation and dephasing, which usually affect and limit most of the qubits that are physically constructed. That said so, the main mechanisms of decoherence are:
\begin{itemize}
\item\textbf{Energy relaxation: }this relates to the energy loss processes described. The Lindblad operator associated to this set of decoherence mechanisms is $\mathrm{L}_1 = \frac{1}{2} (\mathrm{X}-i\mathrm{Y})$. The rate at which the quantum system interacts this way with the environment is defined by $\Gamma_1$, named relaxation rate.
\item\textbf{Pure dephasing: }this relates to the processes involving just change of phase of the quantum information. The jump operator associated to this set of interactions is $\mathrm{L}_\phi = \frac{1}{\sqrt{2}}\mathrm{Z}$. The pure dephasing rate is given by $\Gamma_\phi$.
\item\textbf{Thermal excitation: }this refers to the undesired excitation of the qubit caused by the finite temperature of the system. All real systems have finite temperature. However, the physical constructions are cooled down to very low temperatures ($T\approx 20$ mK), almost vanishing the rate of interaction with the environment \cite{SchlorPhD}. Consequently, we can theoretically neglect this decoherence mechanism\footnote{Note that photonic qubits are not usually cooled down to such extreme temperatures, even operating at room temperature, but photons are insensitive to thermal excitation \cite{photonsG}.}.
\end{itemize}

Once the decoherence mechanisms we are going to consider have been defined, we can follow up to solve the dynamics of the density matrix, $\rho$, by solving the Lindblad master equation. We know from postulate 1 of quantum mechanics in the density matrix picture that a $\rho$ is a unity trace positive operator, thus, it can be written at time $t=0$ as
\begin{equation}\label{eq:rhoMat}
\rho(0) = \begin{pmatrix}
1 - \rho_{11} & \rho_{01}\\
\rho_{01}^* &  \rho_{11}
\end{pmatrix},
\end{equation}
where $\rho_{ij}$ refer to the elements of the density matrix.

Now we can proceed to solve master equation \eqref{lindblad} to obtain the time evolution of the density matrix subjected to relaxation and dephasing. The solution for $\rho(t)$ is then \cite{SchlorPhD,NielsenChuang}
\begin{equation}\label{eq:stepLind1}
\rho(t) = \begin{pmatrix}
1 - \rho_{11}e^{-t\Gamma_1} & \rho_{01}e^{-t\Gamma_2} \\
\rho_{01}^*e^{-t\Gamma_2} &  \rho_{11}e^{-t\Gamma_1}
\end{pmatrix} = \begin{pmatrix}
1 - \rho_{11}e^{-\frac{t}{T_1}} & \rho_{01}e^{-\frac{t}{T_2}} \\
\rho_{01}^*e^{-\frac{t}{T_2}} & \rho_{11}e^{-\frac{t}{T_1}}
\end{pmatrix},
\end{equation}
where $\Gamma_2 = \Gamma_1/2 + \Gamma_\phi$ is the dephasing rate, $T_1 = 1/\Gamma_1$ is the relaxation time, $T_\phi = 1/\Gamma_\phi$ is the pure dephasing time and $T_2 = 1/\Gamma_2$ is the dephasing or Ramsey time. The solution of the master equation implies that the elements of the density matrix decay to the ground state with rate $\Gamma_1$. Note also that the off-diagonal elements, which encode the phase of the qubit, also vanish with the contribution of both $\Gamma_1$ and $\Gamma_\phi$. Consequently, the quantum information that is stored in density matrix $\rho$ is corrupted through the time, with rates $\Gamma_1$ and $\Gamma_2$, due to the relaxation and pure dephasing interactions with the environment.

It is important to emphasize the relation
\begin{equation}\label{eq:reldeph}
\frac{1}{T_2} = \frac{1}{2T_1} + \frac{1}{T_\phi},
\end{equation}
since it can be seen that relaxation always implies dephasing. This comes from the fact that the relaxation does change the off-diagonal elements of the density matrix, and those are the ones related with the phase of the quantum state. As said before, pure dephasing refers to effects that only imply phase change. Note that when the pure dephasing rate $\Gamma_\phi=0$, the actual dephasing rate is limited by the relaxation time as $T_2 = 2T_1$. The $T_2\approx 2T_1$ is called Ramsey limit, and qubits that show such relation are said to be $T_1$-limited.

Additionally, the parameters in \eqref{reldeph} are important as they are experimentally measurable and, thus, they are used to benchmark how robust are the qubits that are physically implemented when they face decoherence. We will not explain the experiments that need to be done in order to measure such parameters, as they are out of the scope of this dissertation.

To sum up, we have solved the master equation when relaxation and pure dephasing interactions are considered for open evolution, and we have seen that the quantum state decays to a classical state without phase information when the time is past. The rate at which such decay occurs is given by the experimentally measurable relaxation, pure dephasing and Ramsey times.  It is interesting to comment that the interaction of quantum states with the environment when the thermal contribution of decoherence is not negligible can be also studied via a master equation \cite{masterTemp}. However, as stated before, we will consider that the system under study is cooled down to very low temperatures.

\subsection{Quantum Channels}\label{sec:qChans}
The Lindblad master equation was presented in last section as a tool for describing the evolution of open quantum systems. We have solved it when the interactions are the energy relaxation and pure dephasing in order to see the decay of the density matrix. In general, solving a master equation gives the time dependency of a density matrix when subjected to interactions with its surrounding environment. Similarly to the fact that the continuous evolution described by the Sch\"odinger and the von Neumann equations is equivalent to unitary evolution (postulates 2 and 2$'$) when considering closed quantum systems; the open evolution described by the Lindblad master equation can be also interpreted by completely-positive, trace-preserving (CPTP) linear maps between the spaces of operators \cite{NielsenChuang}.

In general, the evolution of a quantum system is described by means of a \textit{quantum channel}, $\mathcal{N}$. Quantum channels are defined as \cite{wildeQIT}
\begin{definition}[Quantum channel]\label{def:qchan}
\textit{A quantum channel, $\mathcal{N}$, is a linear, completely-positive, trace-preserving map between spaces of operators, corresponding to a quantum physical evolution.}
\end{definition}

The three conditions that a map between the spaces of operators must fulfill in order to be a quantum channel are obtained in an axiomatic way \cite{wildeQIT}. The map must be linear so that it respects conves mixtures of states, it should be completely-positive so that the map is from quantum states to quantum states, and it should be trace-preserving again so that the outcome is a valid quantum state. This way, we know that quantum channels are indeed valid possible evolutions in the context of quantum mechanics.

The above definition of a quantum channel leads to the \textit{Choi-Kraus theorem}, which states that any map satisfying the above criteria can only take the expression named \textit{Choi-Kraus} decomposition or \textit{operator-sum} representation \cite{wildeQIT}.
\begin{theorem}[Choi-Kraus]
\textit{A map $\mathcal{N}:\mathcal{H}_A\rightarrow\mathcal{H}_B$ is linear, completely-positive, and trace-preserving if and only if it has a Choi-Kraus decomposition as}
\begin{equation}
\mathcal{N}(\rho) = \sum_k E_k\rho E_k^\dagger,
\end{equation}
\textit{where the $E_k$ matrices are operators on the state space of the principal system and receive the name of \textit{Kraus operators} or error operators of the channel. As quantum channels are trace-preserving operators, Kraus operators must fulfill $\sum_k E_k^\dagger E_k = \mathrm{I}$. The minimum number of Kraus operators is called the Kraus rank of the quantum channel $\mathcal{N}$. Channels with Kraus rank $1$ are called pure.}
\end{theorem}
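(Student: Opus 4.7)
The plan is to prove both implications of the equivalence separately, with the forward direction (Kraus form implies CPTP) being straightforward and the converse direction requiring the Choi--Jamio{\l}kowski isomorphism as the main technical tool.

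For the easy direction, assume $\mathcal{N}(\rho)=\sum_k E_k\rho E_k^{\dagger}$ with $\sum_k E_k^{\dagger}E_k=\mathrm{I}$. Linearity is immediate from the sum structure. Trace preservation follows from cyclicity of the trace: $\mathrm{Tr}(\mathcal{N}(\rho))=\sum_k\mathrm{Tr}(E_k\rho E_k^{\dagger})=\mathrm{Tr}\bigl(\sum_k E_k^{\dagger}E_k\rho\bigr)=\mathrm{Tr}(\rho)$. For complete positivity, I would extend $\mathcal{N}$ by the identity on an arbitrary reference system $R$ and verify that $(\mathcal{N}\otimes\mathcal{I}_R)(\sigma)=\sum_k(E_k\otimes\mathrm{I})\sigma(E_k\otimes\mathrm{I})^{\dagger}$ is a sum of conjugations of a positive operator $\sigma\geq 0$, hence positive.

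For the converse, I would invoke the Choi--Jamio{\l}kowski construction. Let $\{\ket{i}\}$ be an orthonormal basis of $\mathcal{H}_A$ and consider the (unnormalized) maximally entangled vector $\ket{\Phi}=\sum_{i}\ket{i}_A\ket{i}_R$ on $\mathcal{H}_A\otimes\mathcal{H}_R$, with $\mathcal{H}_R\cong\mathcal{H}_A$. Define the Choi matrix $J(\mathcal{N})=(\mathcal{N}\otimes\mathcal{I}_R)(\ket{\Phi}\bra{\Phi})$. By complete positivity, $J(\mathcal{N})\geq 0$, so it admits a spectral decomposition $J(\mathcal{N})=\sum_k\ket{v_k}\bra{v_k}$, where the $\ket{v_k}\in\mathcal{H}_B\otimes\mathcal{H}_R$ absorb the nonnegative eigenvalues. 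Using the vec/unvec correspondence, each vector $\ket{v_k}$ may be written as $\ket{v_k}=(E_k\otimes\mathrm{I})\ket{\Phi}$ for a unique operator $E_k:\mathcal{H}_A\to\mathcal{H}_B$; these will be the candidate Kraus operators.

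The remaining step is to show that $\mathcal{N}(\rho)=\sum_k E_k\rho E_k^{\dagger}$ for every input $\rho$, and that $\sum_k E_k^{\dagger}E_k=\mathrm{I}$. For the first, I would exploit the identity $\rho=\mathrm{Tr}_R\bigl[(\mathrm{I}_A\otimes\rho^{T})\ket{\Phi}\bra{\Phi}\bigr]$, apply $\mathcal{N}$ to the $A$-system and move the partial trace across $\mathcal{N}$ using linearity, reducing the expression to $\sum_k E_k\rho E_k^{\dagger}$ via the explicit form of the $\ket{v_k}$. For the second, trace preservation forces $\mathrm{Tr}_B J(\mathcal{N})=\mathrm{I}_R$, and unpacking this with the expression $J(\mathcal{N})=\sum_k(E_k\otimes\mathrm{I})\ket{\Phi}\bra{\Phi}(E_k\otimes\mathrm{I})^{\dagger}$ produces $\sum_k E_k^{\dagger}E_k=\mathrm{I}$ after a short manipulation. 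The main obstacle I anticipate is the bookkeeping involved in moving between the operator picture and the vectorized picture cleanly; in particular, showing that the spectral decomposition of $J(\mathcal{N})$ really translates into a sum-of-conjugations form on arbitrary inputs $\rho$ is the conceptual crux, while everything else reduces to linear algebra once the isomorphism is in place.
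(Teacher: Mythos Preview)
Your proposal is the standard and correct argument: the easy direction is clean, and for the converse you correctly use the Choi--Jamio{\l}kowski matrix, its positivity, its spectral decomposition, and the vec/unvec correspondence to extract the Kraus operators, with trace preservation giving the completeness relation. There is no gap worth flagging.

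However, there is nothing to compare against: the paper does not prove this theorem. It simply states the Choi--Kraus theorem with a citation to Wilde's textbook \cite{wildeQIT} and moves on. Your argument is in fact essentially the proof one finds in that reference, so in that sense you have reconstructed the cited proof rather than diverged from anything in the paper itself.
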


Therefore, the evolution of any open quantum mechanical systems is described by maps that can only be represented by the operator-sum representation.

In the previous section we studied the evolution of open quantum mechanical systems by means of the Lindblad master equation. Both of the approaches are equivalent as the solutions of the master equations can be written as a Choi-Kraus decomposition \cite{NielsenChuang}. This way, one can obtain the error operators, $E_k(t)$ for the quantum channels by solving the master equation, and then work with the CPTP linear map acting on the density matrices. Both approaches are equally valid, and they have their own places. However, none of them is the most general approach since some problems can be worked out with one of the approaches but not with the other \cite{NielsenChuang}. In this thesis we will work with quantum channels, but as we will see now, the error operators used are the ones obtained from the solution that was obtained from the master equation.

\subsubsection{Amplitude damping channel}\label{sub:ampdampchan}
The amplitude damping channel, $\mathcal{N}_{\mathrm{AD}}$ is the quantum channel that models the evolution of a quantum system that suffers from energy loss \cite{NielsenChuang,isomorphism}. Therefore, this channel describes a quantum density matrix that can decay to the ground state from its excited state with probability $\gamma$. Such probability is usually named \textit{damping parameter}. The amplitude damping channel has Kraus rank $2$ and its error operators are
\begin{equation}
E_0 = \begin{pmatrix}
1 & 0 \\
0 & \sqrt{1- \gamma}
\end{pmatrix}\text{ and }
E_1 = \begin{pmatrix}
0 & \sqrt{\gamma} \\
0 & 0
\end{pmatrix}.
\end{equation}

If we apply each of the error operators to a pure state $|\psi\rangle=\alpha\ket{0}+\beta\ket{1}$, we will see that each of the possible outcomes that were present in \eqref{ampdamp2} are obtained. Applying $E_0$,
\begin{equation}\label{eq:krauspure1}
\begin{split}
E_0|\psi\rangle &= \begin{pmatrix}
1 & 0 \\
0 & \sqrt{1- \gamma}
\end{pmatrix}\begin{pmatrix}
\alpha \\ \beta
\end{pmatrix} \\ &=
\begin{pmatrix}
\alpha \\
\sqrt{1-\gamma}\beta
\end{pmatrix} \\ &=
\alpha\begin{pmatrix}
1 \\ 0
\end{pmatrix} +\sqrt{1-\gamma}\beta \begin{pmatrix}
0 \\ 1
\end{pmatrix} \\ & =
\alpha|0\rangle + \sqrt{1-\gamma}\beta |1\rangle,
\end{split}
\end{equation}
which, up to normalization, is the state that occurs with probability $|E_0|\psi\rangle|^2 = (1-\gamma\beta^2)$ in \eqref{ampdamp2} if the environment is left unchanged. If $E_1$ is applied,

\begin{equation}\label{eq:krauspure2}
\begin{split}
E_1|\psi\rangle &= \begin{pmatrix}
0 & \sqrt{\gamma} \\
0 & 0
\end{pmatrix}\begin{pmatrix}
\alpha \\ \beta
\end{pmatrix} \\ &=
\begin{pmatrix}
\sqrt{\gamma}\beta \\
0
\end{pmatrix} \\ &=
\sqrt{\gamma}\beta\begin{pmatrix}
1 \\ 0
\end{pmatrix}=
\sqrt{\gamma}\beta|0\rangle,
\end{split}
\end{equation}
which, up to normalization, is the state that occurs with probability $|E_1|\psi\rangle|^2 = \gamma\beta^2$ in \eqref{ampdamp2} if an energy quanta is lost to the environment. In consequence, we can see the equivalence relationship between the Kraus operators of the CPTP map of the amplitude damping channel and the description of energy loss that was given for pure states.

Now that the amplitude damping channel has been defined, a physical interpretation of the damping probability must be obtained. Note that $\mathcal{N}_{\mathrm{AD}}$ describes the open evolution of $\rho$ when subjected to relaxation. We know that the evolution of such density matrix is described by the solution of the Lindblad master equation with $\Gamma_1\neq 0$ and $\Gamma_\phi = 0$. This implies that $T_2=2T_1$. The evolution of the density matrix is then
\begin{equation}\label{eq:ampdampLind}
\rho(t) = \begin{pmatrix}
1 - \rho_{11}e^{-\frac{t}{T_1}} & \rho_{01}e^{-\frac{t}{2T_1}} \\
\rho_{01}^*e^{-\frac{t}{2T_1}} & \rho_{11}e^{-\frac{t}{T_1}}
\end{pmatrix}.
\end{equation}

Otherwise, expanding the Choi-Kraus representation of the amplitude damping channel, it can be seen that the evolution of the density matrix is
\begin{equation}\label{eq:ampdamprho}
\mathcal{N}_{\mathrm{AD}}(\rho) = \begin{pmatrix}
1-(1-\gamma)\rho_{11} & \sqrt{1-\gamma}\rho_{01} \\
\sqrt{1-\gamma}\rho_{01}^* & (1-\gamma)\rho_{11}
\end{pmatrix}.
\end{equation}

From both \eqref{ampdampLind} and \eqref{ampdamprho}, it is easy to see that by taking $\gamma(t)=1-e^{-t/T_1}$, both expressions will be the same. This way, we can see that the damping probability depends on how much time the system is let to evolve, implying that the longer it evolves, it is more probable that the quantum information is corrupted. This way we used the result of the master equation in order to obtain the relationship between the abstraction and the physical quantity. It is important to state that the amplitude damping channel is an accurate model in order to describe the decoherence of $T_1$-limited qubits, as it is considered that the pure dephasing rate is negligible and so the Ramsey limit is saturated.

\begin{table}[h!]
\centering
\begin{tabular}{|l|l|}
\hline
\textbf{Name}         & $\mathbf{T_1(\mu s)}$ \\ \hline
IBM Q System One \cite{IBM}    & $73.9$  \\ \hline
Rigetti 32Q Aspen-7 \cite{Rigetti}      & $41$               \\ \hline
Google Sycamore  \cite{EAQIRCC},\cite{Google}     & $\approx 1$         \\ \hline
Ion Q 11 Qubit \cite{IonQ}       & $>10^{10}$             \\ \hline
Intel Q (Spin Qubits) \cite{EAQIRCC},\cite{Intel} & $>10^9$          \\ \hline
\end{tabular}
\caption{Relaxation times $T_1$ for some of the quantum devices available nowadays. Each company uses a different technology to implement the qubits of the quantum machines.}
\label{tab:qdevicesT1}
\end{table}

We finish the presentation of the amplitude damping channel by showing some typical values of $T_1$. Table \ref{tab:qdevicesT1} shows the relaxation time of the quantum machines that some of the companies working with experimental quantum computers possess at the time of writing. The values of $T_1$ vary from several microseconds to a few hours. Such a large variation is the result of the different technologies that are used to construct the quantum information units of the devices. Having a longer relaxation time is better for decoherence purposes, as it will take longer for the damping probability to reach relevant values. However, the technologies shown in table \ref{tab:qdevicesT1} that present such long $T_1$ have some drawbacks (such as slow operation or size), and thus qubit technology selection can not be solely based on this parameter. Note that technology evolves quickly, so the values of table \ref{tab:qdevicesT1} will change accordingly.

\subsubsection{Dephasing channel}\label{sub:phasedampchan}
The dephasing or phase damping channel, $\mathcal{N}_{\mathrm{PD}}$, describes the evolution of the qubits when it is subjected to pure dephasing \cite{NielsenChuang,isomorphism}. We have seen in the solution of the Lindblad master equation that relaxation implies dephasing, but this channel refers to interactions that exclusively lead to dephasing. In this channel, the probability that the phase of the qubit suffers from decay occurs with probability $\lambda$, which is generally named as \textit{scattering probability}\footnote{It is usually interpreted as the scattering probability of a photon without energy loss \cite{isomorphism}.}. The dephasing channel, $\mathcal{N}_{\mathrm{PD}}$, has Kraus rank $2$ and acts on the qubit with density matrix $\rho$ followed by Kraus operators

\begin{equation}
E_0 = \begin{pmatrix}
1 & 0 \\
0 & \sqrt{1- \lambda}
\end{pmatrix}\text{ and }
E_1 = \begin{pmatrix}
0 & 0 \\
0 & \sqrt{\lambda}
\end{pmatrix}.
\end{equation}

In a similar fashion to what was done for the amplitude damping channel, we apply the error operators of the channel to an arbitrary pure state $|\psi\rangle = \alpha|0\rangle + \beta|1\rangle$. The application of $E_0$ results in

\begin{equation}\label{eq:phasekrauspure1}
\begin{split}
E_0|\psi\rangle &= \begin{pmatrix}
1 & 0 \\
0 & \sqrt{1- \lambda}
\end{pmatrix}\begin{pmatrix}
\alpha \\ \beta
\end{pmatrix} \\ &=
\begin{pmatrix}
\alpha \\
\sqrt{1-\lambda}\beta
\end{pmatrix} \\ &=
\alpha\begin{pmatrix}
1 \\ 0
\end{pmatrix} +\sqrt{1-\lambda}\beta \begin{pmatrix}
0 \\ 1
\end{pmatrix} \\ & =
\alpha|0\rangle + \sqrt{1-\lambda}\beta |1\rangle,
\end{split}
\end{equation}
with probability $|E_0|\psi\rangle|^2 = (1-\lambda\beta^2)$. The effect of $E_1$ produces

\begin{equation}\label{eq:phasekrauspure2}
\begin{split}
E_1|\psi\rangle &= \begin{pmatrix}
0 & 0 \\
0 & \sqrt{\lambda}
\end{pmatrix}\begin{pmatrix}
\alpha \\ \beta
\end{pmatrix} \\ &=
\begin{pmatrix}
0 \\
\sqrt{\lambda}\beta
\end{pmatrix} \\ &=
\sqrt{\lambda}\beta\begin{pmatrix}
0 \\ 1
\end{pmatrix}=
\sqrt{\lambda}\beta|1\rangle,
\end{split}
\end{equation}
which is the state $|1\rangle$, up to normalization, occurring with probability $\lambda\beta^2$.

Following the rationale done for the amplitude damping channel, now we will use the solution of the master equation in order to relate the scattering probability with some of the physical parameters that can be measured. Note that for the dephasing channel, the relaxation rate vanishes ($\Gamma_1 = 0$) while the pure dephasing rate ($\Gamma_\phi\neq 0$) is finite. Consequently, the time evolution of the qubit subjected to pure dephasing is
\begin{equation}\label{eq:pdampLind}
\rho(t) = \begin{pmatrix}
1 - \rho_{11} & \rho_{01}e^{-\frac{t}{T_\phi}} \\
\rho_{01}^*e^{-\frac{t}{T_\phi}} & \rho_{11}
\end{pmatrix},
\end{equation}
and so it can be seen that the decay only occurs for the off-diagonal elements, which are related to the phase of the qubit.

Expanding the operator-sum representation of the dephasing channel, we obtain
\begin{equation}\label{eq:pdamprho}
\mathcal{N}_{\mathrm{AD}}(\rho) = \begin{pmatrix}
1-\rho_{11} & \sqrt{1-\lambda}\rho_{01} \\
\sqrt{1-\lambda}\rho_{01}^* & \rho_{11}
\end{pmatrix}.
\end{equation}

From both \eqref{ampdampLind} and \eqref{ampdamprho}, it is easy to see that by taking $\lambda(t)=1-e^{-2t/T_\phi}$, both expressions will be the same. Using the relation in \eqref{reldeph}, the scattering probability can be rewritten as $\lambda(t) = 1 - e^{(t/T_1 - 2t/T_2)}$, which is the typical expression for $\lambda(t)$ found in the literature \cite{twirl1,isomorphism}. From this, it can be seen that the phase decay caused by pure dephasing is also more probable with time, as it happened with the damping parameter.

Table \ref{tab:qdevicesT2} shows the dephasing times $T_2$, in microseconds, for the experimental devices presented in table \ref{tab:qdevicesT1}.

\begin{table}[h!]
\centering
\begin{tabular}{|l|l|}
\hline
\textbf{Name}         & $\mathbf{T_2(\mu s)}$ \\ \hline
IBM Q System One \cite{IBM}    & $69.1$  \\ \hline
Rigetti 32Q Aspen-7 \cite{Rigetti}      & $35$               \\ \hline
Google Sycamore  \cite{EAQIRCC},\cite{Google}     & $\approx 0.1$         \\ \hline
Ion Q 11 Qubit \cite{IonQ}       & $>10^{6}$             \\ \hline
Intel Q (Spin Qubits) \cite{EAQIRCC},\cite{Intel} & $>10^3$          \\ \hline
\end{tabular}
\caption{Dephasing times $T_2$ for some of the quantum devices available nowadays. Each company uses a different technology to implement the qubits of the quantum machines.}
\label{tab:qdevicesT2}
\end{table}

The table shows that the values of $T_2$ display substantial variation among devices, since each company uses different qubit technologies for their machines. This is similar to what happened for the relaxation time.

\subsubsection{Combined amplitude and phase damping channel}\label{sub:ampphasdamp}
If the previously described amplitude and phase damping channels are combined, a fairly complete mathematical model of qubit decoherence is obtained, as several physical processes that corrupt quantum information are encompassed by this abstraction. We will designate this amalgamation of channels as the \textit{combined amplitude and phase damping channel}, $\mathcal{N}_{\mathrm{APD}}$.

The combined amplitude and phase damping channel can be obtained by the serial concatenation of the action of each of the individual channels. The serial concatenation of two quantum channels, $\mathcal{N}_1$ and $\mathcal{N}_2$, with Kraus operators $E_k^1$ and $E_l^2$ is another quantum channel, $\mathcal{N}_3 = \mathcal{N}_1\circ\mathcal{N}_2$, with Kraus operators $E_m^3=E_k^1E_l^2$ \cite{wildeQIT}.

The effect of the combined channel is then captured by the next operator-sum representation\footnote{Note that in order to obtain this set of Kraus operators, the composition done here is $\mathcal{N}_{\mathrm{APD}}=\mathcal{N}_{\mathrm{PD}}\circ \mathcal{N}_{\mathrm{AD}}$. The result for the composition with the inverse order is equivalent, but the set of Kraus operators obtained by doing so is different. Note that the Kraus representation of a quantum channel is not unique \cite{watrous}.} with Kraus rank $3$ \cite{twirl1}:
\begin{equation}\label{eq:combinedKraus}
\mathcal{N}_{\mathrm{APD}}(\rho) = E_0\rho E_0^\dagger + E_1\rho E_1^\dagger + E_2\rho E_2^\dagger,
\end{equation}
with the following error operators:
\begin{equation}\label{eq:ampphaseKraus}
\begin{aligned} E_0 &= \begin{pmatrix}
1 & 0 \\
0 & \sqrt{1-\gamma-(1-\gamma)\lambda}
\end{pmatrix} \\
 &= \frac{1+\sqrt{1-\gamma-(1-\gamma)\lambda}}{2} \mathrm{I} \\
 &+ \frac{1-\sqrt{1-\gamma-(1-\gamma)\lambda}}{2} \mathrm{Z}
\\E_1 &= \begin{pmatrix}
0 & \sqrt{\gamma} \\
0 & 0
\end{pmatrix} = \frac{\sqrt{\gamma}}{2}\mathrm{X} + i\frac{\sqrt{\gamma}}{2}\mathrm{Y}
\\E_2 &= \begin{pmatrix}
0 & 0 \\
0 & \sqrt{(1-\gamma)\lambda}
\end{pmatrix} = \frac{\sqrt{(1-\gamma)\lambda}}{2}\mathrm{I} - \frac{\sqrt{(1-\gamma)\lambda}}{2}\mathrm{Z},
\end{aligned}
\end{equation}
where $\mathrm{I,X,Y,Z}$ are the Pauli matrices. We use $\mathcal{P}_n$ to refer to the set of $n$-fold tensor products of the Pauli matrices $\mathcal{P}_n =\{ \mathrm{I,X,Y,Z}\}^{\otimes n}$ \cite{EAQECC}. The damping, $\gamma$, and the scattering, $\lambda$, probabilities, as well as their corresponding time evolutions, are defined as in earlier sections of this work.

If we now consider the combination of \eqref{combinedKraus} and \eqref{ampphaseKraus}, the time evolution of an arbitrary quantum state with density matrix $\rho$ can be obtained as

\begin{equation}\label{eq:combinedDensityMatrix}
\begin{split}
\rho \rightarrow \mathcal{N}_{\mathrm{APD}}(\rho) &=\begin{pmatrix}
1-(1-\gamma)\rho_{11} & \rho_{01}\sqrt{1-\gamma-(1-\gamma)\lambda} \\
\rho_{01}^*\sqrt{1-\gamma-(1-\gamma)\lambda}  & (1-\gamma)\rho_{11}
\end{pmatrix}\\ &= \begin{pmatrix}
1 - \rho_{11}e^{-\frac{t}{T_1}} & \rho_{01}e^{-\frac{t}{T_2}} \\
\rho_{01}^* e^{-\frac{t}{T_2}} & \rho_{11}e^{-\frac{t}{T_1}}
\end{pmatrix},
\end{split}
\end{equation}
where $\rho_{ij}$ corresponds to the element of the matrix $\rho$ in row $i$ and column $j$. This expression is the same as the solution to the Lindblad master equation in \eqref{stepLind1} and, thus, the combined amplitude and phase damping channel is a fairly complete mathematical model to describe the decoherence suffered by a qubit. This last expression shows that the qubits are likely to decohere if the operation time (either information transmission, processing or storage) $t$ is of the same order of magnitude as either the relaxation time ($t\approx T_1$) or the dephasing time ($t\approx T_2$). This means that the reliability time of a qubit is defined by $T_1$ and $T_2$, and so the run times of the algorithms that will be executed on such devices will be limited to the aforementioned reliability time\footnote{The motivation behind QECCs is to extend the reliability times of qubits so that time demanding algorithms can be effectively executed in the quantum devices.}.

\subsection{Twirl approximations of Quantum Channels}\label{sec:approximatedChannels}
The combined amplitude and phase damping channel introduced in the previous section embodies an accurate mathematical depiction of several decoherence processes that corrupt the quantum information that is processed in a quantum device, transmitted between machines or stored in a quantum memory. Thus, the application of this model to create QECCs that will guarantee the stability of quantum information in the presence of decoherence seems reasonable. Unfortunately, the fact that the dimensions of the Hilbert spaces of $n$-qubit composite systems scale with a factor of $2^n$ makes matters increasingly more complex, since this implies that the classical simulation of quantum algorithms or error correction mechanisms may ultimately turn into an intractable problem. For example, a simple surface code of distance $d=5$ needs $25$ physical qubits, which is a system that has a Hilbert space with $33$ millions of dimensions \cite{twirl1}. As a result, the error dynamics of QECC schemes cannot be efficiently modeled on classical computers\footnote{We assume that these QECC systems will be designed on classical computers due to the lack of practical quantum machines at the time of writing.} by using \eqref{combinedDensityMatrix},  meaning that an approximation that is manageable employing conventional methods must be found.

In this section we present how approximated channels that can be efficiently simulated on classical computers are obtained by using a quantum information theory technique called \textit{twirling} \cite{twirl1,twirl2,twirl3,twirl4,twirl5,twirl6}. Additionally, we justify why QECCs designed for the twirled channels will also be valid for the original channel, and why those approximated channels are indeed tractable as classical problems. This means that the quantum computing community has access to the necessary design tools working with algorithms that rely only on the classical machines that are available in this day and age.

\subsubsection{The Gottesmann-Knill theorem and the Pauli channel}\label{sub:gottKnill}
The decoherence model presented in section \ref{sub:ampphasdamp} in the form of the combined amplitude and phase damping channel cannot be efficiently implemented on a classical computer when multiqubit systems are considered. The quandary then becomes which quantum systems can be efficiently simulated using classical methods, if any at all.

Such a question is answered by the Gottesman-Knill theorem for stabilizer circuits\footnote{We will use the term ``stabilizer formalism" to refer to the quantum circuits that are constructed according to the restrictions that are described in the Gottesman-Knill theorem.} \cite{NielsenChuang,GotKnill}. The theorem states the following \cite{NielsenChuang}:

\begin{theorem}[Gottesman-Knill Theorem]\label{thm:gotknill}
\textit{Suppose a quantum computation is performed using only the following elements: state preparations in the computational basis, Hadamard gates, phase gates, controlled-NOT gates, Pauli gates, and measurements of observables in the Pauli group (which includes measurement in the computational basis as a special case), together with the possibility of classical control conditioned on the outcomes of such measurements. Such computation can be efficiently simulated on a classical computer.}
\end{theorem}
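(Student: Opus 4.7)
The plan is to prove the theorem via the \emph{stabilizer formalism}, which trades the exponentially-large description of a generic $n$-qubit state (its $2^n$ complex amplitudes) for a polynomially-sized representation in terms of the Pauli operators that fix it. Specifically, I would represent any state reachable by the allowed operations as a \emph{stabilizer state}, uniquely determined (up to a global phase) by its stabilizer group $\mathcal{S}\subseteq \mathcal{G}_n$, an abelian subgroup of the Pauli group that does not contain $-\mathrm{I}$ and which is generated by $n$ independent elements. Since each generator is an $n$-fold Pauli operator, it can be stored classically with $2n+1$ bits (two bits per qubit for the choice among $\{\mathrm{I},\mathrm{X},\mathrm{Y},\mathrm{Z}\}$ plus one sign bit), so the whole description fits in $O(n^2)$ classical bits. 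Crucially, the initial computational-basis state $\ket{0}^{\otimes n}$ required by the hypothesis is already a stabilizer state, generated by $\{\mathrm{Z}_1,\ldots,\mathrm{Z}_n\}$, so the simulator has a compact starting point.

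Next, I would verify that every allowed unitary $U\in\{\mathrm{H},\mathrm{P},\mathrm{CNOT},\mathrm{X},\mathrm{Y},\mathrm{Z}\}$ is a \emph{Clifford gate}, i.e.\ $U\,\mathcal{P}_n\,U^\dagger\subseteq \mathcal{P}_n$. This reduces to checking a handful of conjugation identities on a single-qubit or two-qubit generating set of the Pauli group, such as $\mathrm{H}\mathrm{X}\mathrm{H}^\dagger=\mathrm{Z}$, $\mathrm{H}\mathrm{Z}\mathrm{H}^\dagger=\mathrm{X}$, $\mathrm{P}\mathrm{X}\mathrm{P}^\dagger=\mathrm{Y}$, $\mathrm{P}\mathrm{Z}\mathrm{P}^\dagger=\mathrm{Z}$ and the analogous CNOT rules. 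Consequently, if $\mathcal{S}$ is generated by $g_1,\ldots,g_n$ then $U\ket{\psi}$ is stabilized by $\{Ug_1U^\dagger,\ldots,Ug_nU^\dagger\}$, and because $U$ acts on at most two qubits, each updated generator is obtained in $O(n)$ time. Thus a circuit of $T$ gates is simulated in $O(nT)$ time, which is polynomial.

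Third, I would handle the measurement of a Pauli observable $M\in\mathcal{P}_n$, splitting into two cases according to the symplectic/commutation structure. If $M$ commutes with every generator $g_i$, then $\pm M\in\mathcal{S}$, the outcome is deterministic, and the required sign is recovered by expressing $M$ as a product of the $g_i$ via Gaussian elimination over the symplectic vector space $(\mathbb{F}_2)^{2n}$, costing $O(n^3)$. If instead $M$ anticommutes with at least one generator, say $g_1$, then the outcome $\pm 1$ is uniformly random by Postulate 3; after drawing a fair coin, one replaces $g_1$ by $\pm M$ and multiplies every other anticommuting generator by $g_1$ to restore commutativity, again at cost $O(n^2)$. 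Finally, classical control conditioned on outcomes is free, since the simulator already has each bit of classical information in hand.

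The main obstacle is precisely the measurement step: one must argue carefully that the updated generator set is indeed a valid stabilizer group (independent, abelian, not containing $-\mathrm{I}$) and that the resulting stabilizer state matches, up to a global phase, the post-measurement state dictated by the projective measurement postulate. The algebraic content that makes this go through is the faithful symplectic representation $\mathcal{P}_n/\{\pm 1,\pm i\}\cong (\mathbb{F}_2)^{2n}$ together with the fact that commutation of Pauli operators corresponds to the vanishing of the symplectic product $\odot$; under this correspondence all consistency checks and updates reduce to polynomial-time linear algebra over $\mathbb{F}_2$. Combining the three ingredients — compact stabilizer representation of the initial state, polynomial-time conjugation update for each Clifford gate, and polynomial-time symplectic update for each Pauli measurement — yields an efficient classical simulation of any circuit of the stated form.
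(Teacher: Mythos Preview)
Your proposal is a correct and well-structured sketch of the standard stabilizer-formalism proof of the Gottesman--Knill theorem, following essentially the argument of Gottesman and of Nielsen--Chuang: represent the state by $n$ stabilizer generators, update generators by conjugation under Clifford gates, and handle Pauli measurements via symplectic linear algebra over $\mathbb{F}_2$.

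However, you should be aware that the paper does \emph{not} supply its own proof of this theorem. Theorem~\ref{thm:gotknill} is stated in the preliminaries chapter as a cited result (with references to Nielsen--Chuang and to Gottesman's original paper) and is then used as a black box to justify that Pauli-channel dynamics can be simulated classically. So there is nothing in the paper to compare your argument against beyond the citation; your sketch simply fills in what the paper defers to the literature, and it does so along the same lines as those references.
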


This theorem shows that certain quantum computations that involve very complex and highly entangled states are actually tractable problems for classical computers. The stabilizer formalism does not describe all possible quantum computations, but it does so for a sizeable quantity of these operations.

As a consequence of Theorem \ref{thm:gotknill}, it is desirable to operate with a quantum channel whose dynamics are described by the stabilizer formalism, so that the error model of the system can be efficiently included in the classical simulation. In quantum information theory, the \textit{Pauli channel}, $\mathcal{N}_{\mathrm{P}}$, is the CPTP map that transforms the quantum state with density matrix $\rho$ as
\begin{equation}\label{eq:Pauli}
\begin{split}
\mathcal{N}_\mathrm{P} (\rho) =& (1 - p_\mathrm{x} - p_\mathrm{y} - p_\mathrm{z})\mathrm{I}\rho\mathrm{I} + p_\mathrm{x}\mathrm{X}\rho\mathrm{X}+p_\mathrm{y}\mathrm{Y}\rho\mathrm{Y} \\
& +p_\mathrm{z}\mathrm{Z}\rho\mathrm{Z},
\end{split}
\end{equation}
where the constants $\{p_\mathrm{x},p_\mathrm{y},p_\mathrm{z}\}$ are interpreted as the probabilities of each specific operator impinging on the state $\rho$. The Kraus operators of this channel are $\sqrt{p_\mathrm{A}}\mathrm{A}$, with $\mathrm{A}\in\mathcal{P}_1$.

The operation of each of the error operators on a qubit $|\psi\rangle$ is given by the operation of the Pauli matrices described before.

Following \eqref{Pauli}, we see that $\mathcal{N}_\mathrm{P}$ maps qubits $|\psi\rangle$ onto a linear combination of the original state ($\mathrm{I}$), the phase-flipped state ($\mathrm{Z}$), the bit-flipped state ($\mathrm{X}$) and the phase-and-bit-flipped state ($\mathrm{Y}$), where the sum is weighted by the probabilities $p_\mathrm{A},\mathrm{A}\in\mathcal{P}_1$. The symmetric instance of this channel\footnote{The scenario where $p_\mathrm{x}=p_\mathrm{y}=p_\mathrm{z}=p/3$ and $p_\mathrm{I} = 1- p$.} is a widely used model for decoherence in quantum information theory \cite{QRM,bicycle,qldpc15,jgf,patrick,QCC,QTC,EAQTC,josu1,josu2,toric,QEClidar} and it is known as the \textit{depolarizing channel} $\mathcal{N}_\mathrm{D}$. The expression \eqref{Pauli} then simplifies to
\begin{equation}\label{eq:depolarizing}
\mathcal{N}_\mathrm{D} (\rho) = (1-p)\rho + \frac{p}{3}\left(\mathrm{X}\rho\mathrm{X}+\mathrm{Y}\rho\mathrm{Y}+\mathrm{Z}\rho\mathrm{Z}\right),
\end{equation}
and $p$ receives the name of depolarizing probability. This symmetric instance of the Pauli channel represents the worst case scenario for the family of channels, as all the possible errors are equally likely to happen \cite{isomorphism}.

The generalization of the Pauli channel from \eqref{Pauli} to an $n$-qubit system is mathematically described by
\begin{equation}\label{eq:NPauli}
\mathcal{N}_\mathrm{P}^{(n)}(\rho) = \sum_{\mathrm{A}\in\mathcal{P}_n} p_\mathrm{A} \mathrm{A}\rho\mathrm{A},
\end{equation}
where $p_\mathrm{A}$ is the probability for each of the $n$-fold Pauli operators, $\mathrm{A}\in\mathcal{P}_n$, to occur.

Observing \eqref{Pauli}, \eqref{depolarizing} and \eqref{NPauli}, it is clear that the dynamics of these channels are described by only using Pauli operators. By the Gottesman-Knill theorem (theorem \ref{thm:gotknill}), these quantum channel instances can be efficiently simulated on classical computers, and consequently, it is worthwhile to relate the decoherence processes described in the previous sections, and mathematically modeled via the combined amplitude and phase damping channel, to the transformations described by the Pauli channel. It is clear that the Pauli channel is incapable of capturing the exact manner in which the combined amplitude and damping channel corrupts input states, as there is no combination of $p_\mathrm{x}$, $p_\mathrm{y}$ and $p_\mathrm{z}$ that makes $\mathcal{N}_\mathrm{P}(\rho) = \mathcal{N}_{\mathrm{APD}}(\rho)$. However, we can use techniques of quantum information theory to engender approximations of $\mathcal{N}_{\mathrm{APD}}$ that are in the form of a Pauli channel.

\subsubsection{Approximating quantum channels with Twirling}\label{sub:twirling}
In the prior section, we discussed the fact that quantum operations require an exponential number of parameters to completely describe them. Naturally, this leads to the conclusion that the simulation of such quantum constructs using classical resources cannot be efficiently performed. However, tools aiming at overcoming the issue of exponential parameter growth\footnote{This
issue is not limited to the classical simulation of quantum dynamics. For example, quantum metrology requires an exponential number of experiments (e.g. Nuclear Magnetic Resonances (NMR)) to determine the physical parameters of a quantum system.}, by successfully extracting useful partial information of the quantum dynamics, have been developed in the literature. This information extraction can be performed using a technique called \textit{twirling}. Twirling is an extensively used method in quantum information theory to study the average effect of general quantum noise models via their mapping to more symmetric versions of themselves \cite{twirl1,twirl2,twirl3,twirl4,twirl5,twirl6}. Generally, twirling a quantum channel $\mathcal{N}$ comes down to averaging the composition $\mathcal{U^\dagger\circ N\circ U}$ for unitary operators $\mathcal{U}(\rho) = \mathrm{U}\rho \mathrm{U}^\dagger$ that are randomly chosen according to some probability measure $\mu( \mathrm{U})$ \cite{twirl2,twirl5}. The resulting channel
\begin{equation}\label{eq:twirledchannelGeneral}
\begin{split}
\bar{\mathcal{N}}^{\mathcal{U}}(\rho) &= \int_{\mathcal{U}}d\mu(\mathrm{U})\mathcal{U^\dagger\circ N\circ U}(\rho) \\
& =\int_{\mathcal{U}}d\mu(\mathrm{U}) \mathrm{U}^\dagger\mathcal{N}(\mathrm{U} \rho\mathrm{U}^\dagger)\mathrm{U},
\end{split}
\end{equation}
receives the name of \textit{twirled channel}. A particularly important scenario, and the one we will consider in this thesis, is the case where the distribution over the unitaries $\mu(\mathrm{U})$ is discrete. For such a case, the dynamics of the twirled channel are given by
\begin{equation}\label{eq:twirledchannelDiscrete}
\begin{split}
\bar{\mathcal{N}}^{\mathcal{U}}(\rho) &= \sum_l \mu(\mathrm{U}_l)\mathcal{U}_l^\dagger\circ \mathcal{N}\circ \mathcal{U}_l(\rho) \\ & = \sum_{l} \mu(\mathrm{U}_l) \mathrm{U}_l^\dagger\mathcal{N}(\mathrm{U}_l \rho\mathrm{U}_l^\dagger)\mathrm{U}_l,
\end{split}
\end{equation}
where $\mu(\mathrm{U}_l)$ is a probability distribution over $\mathcal{U}_l$.

Twirling a channel over some unitary is useful to analyze the average effects that said quantum channel produces. However, the serviceability of the information that the twirled channel provides is not clear, nor is the concept of how an approximated more symmetric channel constructed via twirling can be employed to design QECC schemes capable of fighting realistic decoherence. This conundrum is answered by the following lemma \cite{twirl2}, which introduces a fundamental way in which the approximations of decoherence models can be realised to construct QECCs with the twirling method presented in this thesis.
\begin{lemma}\label{lem:QECCtwirl}
\textit{Any correctable code for the twirled channel $\bar{\mathcal{N}}$ is a correctable code for the original channel $\mathcal{N}$ up to an additional unitary correction.}
\end{lemma}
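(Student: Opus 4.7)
The plan is to translate the Knill-Laflamme quantum error correction conditions satisfied by the twirled channel $\bar{\mathcal{N}}$ into Knill-Laflamme conditions for the original channel $\mathcal{N}$ on a suitably rotated version of the same codespace. First I would write down a Kraus decomposition for $\bar{\mathcal{N}}$: if $\{E_k\}$ is a Kraus family for $\mathcal{N}$, then directly from the discrete-twirl formula \eqref{eq:twirledchannelDiscrete} one reads off that $\{F_{k,l} := \sqrt{\mu(U_l)}\, U_l^\dagger E_k U_l\}_{k,l}$ is a valid Kraus family for $\bar{\mathcal{N}}$.

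Next, I would encode the hypothesis that a code with projector $\Pi$ is correctable for $\bar{\mathcal{N}}$ via the Knill-Laflamme conditions $\Pi F_{k,l}^\dagger F_{k',l'} \Pi = \gamma_{(k,l),(k',l')}\, \Pi$ for some Hermitian matrix $\gamma$. Restricting to the diagonal-in-$l$ block ($l=l'$) and introducing the rotated projector $\Pi_l := U_l \Pi U_l^\dagger$, the identity $\Pi U_l^\dagger = U_l^\dagger \Pi_l$ lets me rearrange this into
\[
\Pi_l\, E_k^\dagger E_{k'}\, \Pi_l \;=\; \tilde{\gamma}^{(l)}_{k,k'}\, \Pi_l,
\]
where $\tilde{\gamma}^{(l)}_{k,k'} := \gamma_{(k,l),(k',l)}/\mu(U_l)$ is again Hermitian in $(k,k')$. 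This is precisely the Knill-Laflamme condition stating that the rotated code with projector $\Pi_l$ corrects the original error set $\{E_k\}$, and hence is correctable for $\mathcal{N}$.

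The final step is to reinterpret the rotation as the claimed unitary correction. Because $\Pi$ and $\Pi_l = U_l \Pi U_l^\dagger$ differ only by conjugation by a single unitary, the original code corrects $\mathcal{N}$ provided one prepends or appends the unitary $U_l$ to the recovery. Concretely, if $\mathcal{R}$ is a recovery that corrects $\bar{\mathcal{N}}$ on $\Pi$, then $\mathcal{R}'(\sigma) := \mathcal{R}(U_l^\dagger \sigma U_l)$ is a recovery that corrects $\mathcal{N}$ on the rotated code, and any residual logical transformation induced by $U_l$ on the codespace can be undone by a final logical unitary---this is the \emph{additional unitary correction} referred to in the statement.

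I expect the main subtlety to lie in the bookkeeping around the $l=l'$ restriction: one must check that the sub-block of $\gamma$ obtained by fixing $l$ is still a Hermitian coefficient matrix (which is immediate from the Hermiticity of $\gamma$ itself) and observe that the off-diagonal-in-$l$ blocks, which hold automatically under the hypothesis on $\bar{\mathcal{N}}$, need not be invoked for the reduction to $\mathcal{N}$. Modulo this verification, the argument reduces to a direct transcription through the conjugation identity $\Pi U_l^\dagger = U_l^\dagger \Pi_l$ together with the linearity of the Knill-Laflamme conditions in the error operators.
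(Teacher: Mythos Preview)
The paper does not give its own proof of this lemma; it simply cites \cite{twirl2} (Silva, Magesan, Kribs \& Emerson). Your Knill--Laflamme-based argument---write down the Kraus family $\{F_{k,l}=\sqrt{\mu(U_l)}\,U_l^\dagger E_k U_l\}$ for $\bar{\mathcal{N}}$, restrict the KL conditions to a fixed diagonal block $l=l'$, and conjugate by $U_l$ to obtain $\Pi_l E_k^\dagger E_{k'}\Pi_l=\tilde{\gamma}^{(l)}_{k,k'}\Pi_l$---is precisely the mechanism used in that reference, and it is correct.

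Two small remarks. First, your explicit recovery $\mathcal{R}'(\sigma)=\mathcal{R}(U_l^\dagger\sigma U_l)$ does not quite return $\sigma$: invoking error discretization (Theorem~\ref{thm:discretization}), $\mathcal{R}$ corrects the single-term conjugated channel $\mathcal{U}_l^\dagger\circ\mathcal{N}\circ\mathcal{U}_l$ on $\Pi$, so for $\sigma=U_l\rho U_l^\dagger\in\mathrm{range}(\Pi_l)$ one finds $\mathcal{R}'(\mathcal{N}(\sigma))=\rho=U_l^\dagger\sigma U_l$, not $\sigma$. A final conjugation by $U_l$ is therefore needed, which is exactly the ``additional unitary correction'' in the statement---so your informal remark about a residual logical unitary is doing real work here. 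Second, for the Pauli and Clifford twirls actually used in the thesis the identity belongs to the averaging set, so one may simply choose $l$ with $U_l=\mathrm{I}$; then $\Pi_l=\Pi$ and the original code corrects $\mathcal{N}$ with no extra unitary at all.
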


The proof of lemma \ref{lem:QECCtwirl} is given in \cite{twirl2}. The importance of the lemma resides in the idea that designing QECCs that correct errors for an approximated channel obtained by twirling will also correct errors of the original channel up to unitary correction. This allows us to design error correction codes for the twirled channel, which, due to the lemma, will still be valid for the original channel. The analysis conducted for the approximated channels will not be precise due to the fact that parts of the evolutions of the density matrices will be lost when operating with the twirled channels instead of the original ones. Nevertheless, work in the literature has shown that the estimations obtained using the twirling methods presented here are accurate \cite{honest}.

\paragraph*{Pauli Twirl Approximation (PTA)}\label{subsub:PTA}
An extensively used type of twirl is the Pauli twirl. This twirl consists in averaging the original channel $\mathcal{N}$ with the elements of the $n$-fold set of Pauli operators $\mathcal{P}_n$ weighted in an equiprobable manner. Since the cardinality of $\mathcal{P}_n$ is $4^n$, the Pauli twirled channel deduced from \eqref{twirledchannelDiscrete} will be \cite{twirl2}

\begin{equation}\label{eq:PauliTwirlGen}
\bar{\mathcal{N}}^{\mathcal{P}_n}(\rho) = \frac{1}{4^n}\sum_{l=1}^{4^n}\mathrm{P}_l\mathcal{N}\left(\mathrm{P}_l \rho \mathrm{P}_l\right)\mathrm{P}_l
=\frac{1}{4^n}\sum_{l=1}^{4^n}\mathrm{P}_l\left(\sum_{k} E_k\mathrm{P}_l\rho \mathrm{P}_lE^\dag_k\right)\mathrm{P}_l,
\end{equation}
where the $E_k$ are the Choi-Kraus error operators of the channel under study and $\mathrm{P}_l\in\mathcal{P}_n$ are the $n$-fold Pauli operators.

Next we show that the above expression reduces to 
\begin{equation}
\bar{\mathcal{N}}^{\mathcal{P}_n}(\rho) = \sum_{l=1}^{4^n} \chi_{l} \mathrm{P}_l \rho \mathrm{P}_l,
\end{equation}
where the coefficients of the expansion are given by \cite{twirl3}
\begin{equation}\label{eq:PaulitwirlCoeff}
\chi_{l} = \sum_{k} \frac{|\alpha_{l}^{(k)}|^2}{2^n}, \;\;\mbox{with } \alpha_{l}^{(k)} = \frac{\mathrm{Tr}[E_k \mathrm{P}_l]}{2^{n/2}}.
\end{equation}

To that end, we expand the Choi-Kraus error operators $E_k$ using the Pauli basis, that is, 
\begin{equation}\label{eq:errorOpsPaulibasis}
E_k = \sum_{l=1}^{4^n} \alpha_{l}^{(k)} \frac{\mathrm{P}_l}{2^{n/2}}.
\end{equation}
By the orthogonality of the of the $\{\mathrm{P}_l\}$ under the trace, i.e., $\mathrm{Tr} (\mathrm{P}_l\mathrm{P}_k)=0$, if $l\neq k$ and $2^n$, otherwise, the coefficients $\alpha_{l}^{(k)}$ can be calculated as
\begin{equation}\label{eq:alphacoeffs}
\alpha_{l}^{(k)} = \frac{\mathrm{Tr}[E_k \mathrm{P}_l]}{2^{n/2}}.
\end{equation}
Substituting, \eqref{errorOpsPaulibasis} into \eqref{PauliTwirlGen} yields,

\begin{equation}\label{eq:pedro1}
\bar{\mathcal{N}}^{\mathcal{P}_n}(\rho) = \frac{1}{4^n}\sum_{l=1}^{4^n}\sum_{j=1}^{4^n} \sum_{i=1}^{4^n}\left(\sum_k\frac{\alpha_{j}^{(k)}\alpha_{i}^{(k)*}}{2^{n}}\right) \mathrm{P}_l\mathrm{P}_j\mathrm{P}_l\rho \mathrm{P}_l\mathrm{P}_i\mathrm{P}_l,
\end{equation}
where the fact that elements of the $n$-fold Pauli operators are hermitian has been used. Note that
\[\mathrm{P}_l\mathrm{P}_j\mathrm{P}_l = \left\{ \begin{array}{ll}
         \mathrm{P}_j & \mbox{if $l=j$};\\
        \pm \mathrm{P}_j & \mbox{if $l\neq j$}.\end{array} \right. \] 
        Furthermore, when $i\neq j$ and $l$ sweeps from 1 to $4^n$, half of the terms $\mathrm{P}_l\mathrm{P}_j\mathrm{P}_l$ are $\mathrm{P}_j$ and half $-\mathrm{P}_j$. Therefore, by denoting $\Upsilon_{j,i}^{+}$ and $\Upsilon_{j,i}^{-}$ the sets of Pauli operators that
make $\mathrm{P}_l\mathrm{P}_j\mathrm{P}_l\rho \mathrm{P}_l\mathrm{P}_i\mathrm{P}_l=\mathrm{P}_j\rho \mathrm{P}_i$ and $\mathrm{P}_l\mathrm{P}_j\mathrm{P}_l\rho \mathrm{P}_l\mathrm{P}_i\mathrm{P}_l=-\mathrm{P}_j\rho \mathrm{P}_i$, respectively, expression \eqref{pedro1} can be splited as
\begin{equation}\label{eq:twirledchannelDiscretePauli}
\begin{split}
 \bar{\mathcal{N}}^{\mathcal{P}_n}(\rho) = \frac{2^n}{4^n}\left[\sum_{j,i:\Upsilon_{j,i}^{+}, i\neq j}\left(\sum_k\frac{\alpha_{j}^{(k)}\alpha_{i}^{(k)*}}{2^{n}}\right)\mathrm{P}_j\rho \mathrm{P}_i \right. & \left.- \sum_{j,i:\Upsilon_{j,i}^{-} , i\neq j}\left(\sum_k \frac{\alpha_{j}^{(k)}\alpha_{i}^{(k)*}}{2^{n}}\right)\mathrm{P}_j\rho \mathrm{P}_i\right] \\+\frac{4^n}{4^n}\sum_{j=1}^{4^n}\left(\frac{\sum_k|\alpha_{j}^{(k)}|^2}{2^{n}}\right)\mathrm{P}_j\rho \mathrm{P}_j=\sum_{j=1}^{4^n}\chi_j\mathrm{P}_j\rho \mathrm{P}_j.
\end{split}
\end{equation}
Note that as $ \mathrm{Tr}(\bar{\mathcal{N}}^{\mathcal{P}_n}(\rho))=1$ so that the outcome of the channel is a valid density matrix, then the $\chi_l$ elements must sum to one, i.e., $\sum_{l=1}^{4^n} \chi_l=1$. This can be easily checked by using the completion of the Choi-Kraus error operators, i.e., $\sum_k E^\dag_kE_k=I$ and the expansion in \eqref{errorOpsPaulibasis}.

In other words, the off-diagonal elements\footnote{By off-diagonal elements we refer to the cross products of Pauli matrices $\mathrm{P}_j\rho \mathrm{P}_i$, with $j\neq i$, that arise when writing the Choi-Kraus operators in the Pauli basis as in \eqref{errorOpsPaulibasis}.} of the original channel are eliminated \cite{twirl2}. For this reason, applying the Pauli twirl to a general quantum channel $\mathcal{N}$ will result in a twirled channel with the structure of the Pauli channel. The probabilities for each of the elements of the $n$-qubit Pauli operators can be calculated using \eqref{PaulitwirlCoeff} and \eqref{alphacoeffs}. Additionally, we know from the Gottesman-Knill theorem that quantum CPTP maps with such a structure can be simulated efficiently on classical computers. Lemma \ref{lem:QECCtwirl} implies that we can design QECCs using this twirled approximation in order to construct QECCs for the original channel $\mathcal{N}$, and that such error correction methods will maintain their error correcting capabilities over the original channel. As a result, we will be able to design QECC families for realistic quantum computers by designing and simulating them on classical computers based on approximating the channels by Pauli twirling.

We now proceed with the application of the Pauli twirl to the model of decoherence presented in section \ref{sub:ampphasdamp} in the form of the combined amplitude and phase damping channel $\mathcal{N}_{\mathrm{APD}}$. For this purpose, we first write the transformation of the density matrix induced by the channel presented in \eqref{combinedDensityMatrix}, as a function of the Pauli matrices
\begin{equation}\label{eq:combinedDensityPauliFcn}
\begin{split}
\mathcal{N}_{\mathrm{APD}}(\rho) =& \frac{2 - \gamma + 2\sqrt{1- \gamma -(1-\gamma)\lambda}}{4} \mathrm{I}\rho\mathrm{I} \\
& + \frac{\gamma}{4}\mathrm{X}\rho\mathrm{X} + \frac{\gamma}{4}\mathrm{Y}\rho\mathrm{Y} \\
& +\frac{2- \gamma - 2\sqrt{1-\gamma - (1-\gamma)\lambda}}{4}\mathrm{Z}\rho\mathrm{Z} \\
& - \frac{\gamma}{4}\mathrm{I}\rho\mathrm{Z} - \frac{\gamma}{4}\mathrm{Z}\rho\mathrm{I} + \frac{\gamma}{4i}\mathrm{X}\rho\mathrm{Y}-\frac{\gamma}{4i}\mathrm{Y}\rho\mathrm{X},
\end{split}
\end{equation}
where \eqref{ampphaseKraus} is used to express the Kraus operators in the Pauli basis. As stated in \eqref{PauliTwirlGen}, twirling the combined amplitude and phase damping channel by the Pauli operators $\mathcal{P}_1$ will remove the off-diagonal elements of \eqref{combinedDensityPauliFcn}, resulting in the twirled channel
\begin{equation}\label{eq:PTAdensity}
\begin{split}
\bar{\mathcal{N}}^{\mathcal{P}_1}_{\mathrm{APD}}(\rho) =
& \frac{2 - \gamma + 2\sqrt{1- \gamma -(1-\gamma)\lambda}}{4} \mathrm{I}\rho\mathrm{I} \\
& + \frac{\gamma}{4}\mathrm{X}\rho\mathrm{X} + \frac{\gamma}{4}\mathrm{Y}\rho\mathrm{Y} \\
& +\frac{2- \gamma - 2\sqrt{1-\gamma - (1-\gamma)\lambda}}{4}\mathrm{Z}\rho\mathrm{Z}.
\end{split}
\end{equation}

It is obvious that the channel shown in \eqref{PTAdensity} exhibits the form of a Pauli channel \eqref{Pauli}. The probabilities for each of the possible Pauli errors in \eqref{Pauli} are
\begin{equation}\label{eq:PTAprobs}
\begin{split}
& p_\mathrm{x} = p_\mathrm{y} = \frac{\gamma}{4} = \frac{1}{4}\left(1 - e^{\frac{-t}{T_1}}\right) \\
& p_\mathrm{z} = \frac{2- \gamma - 2\sqrt{1-\gamma - (1-\gamma)\lambda}}{4}  \\ & \quad = \frac{1}{4}\left(1 + e ^{\frac{-t}{T_1}}-2e^{\frac{-t}{T_2}}\right),
\end{split}
\end{equation}
where we used the time dependencies of the damping and scattering probabilities obtained from the solution to the master equation \cite{twirl1,twirl6}. This asymmetric channel takes the form of the depolarizing channel \eqref{depolarizing} when the relaxation time and the dephasing time are the same $T_1= T_2$.

Consequently, twirling a general quantum channel by the set of $n$-fold Pauli operators $\mathcal{P}_n$ results in an approximated channel with the same form as a Pauli channel that will be efficiently implementable on a classical machine thanks to the Gottesman-Knill theorem (theorem \ref{thm:gotknill}). We will use the term \textit{Pauli Twirl Approximation} (PTA) \cite{twirl1} to refer to the family of channels that are obtained using this symmetrization process. The process is simple, as it is based on rewriting the Kraus operators of the original channel in the Pauli basis and then eliminating the off-diagonal terms of the density matrix evolution equation.

When performing a Pauli twirling to the combined amplitude and phase damping channel, the resulting probabilities $p_\mathrm{x}$ and $p_\mathrm{y}$ for the bit-flip ($\mathrm{X}$) and the phase-and-bit-flip ($\mathrm{Y}$), respectively, turn out to be equal, reducing the degrees of freedom of the PTA to two. By defining the \textit{asymmetry coefficient} \cite{EAQIRCC,twirl6}
\begin{equation}\label{eq:asymmetryCoeff}
\alpha = \frac{p_\mathrm{z}}{p_\mathrm{x}} = 1 + 2 \frac{1- e^{\frac{t}{T_1}\left(1-\frac{T_1}{T_2}\right)}}{e^{\frac{t}{T_1}}-1}.
\end{equation}
one can rewrite the expression for the Pauli channel \eqref{Pauli} as
\begin{equation}\label{eq:PauliAlpha}
\begin{split}
\mathcal{N}_{\mathrm{P}}(\rho) =& (1-p)\mathrm{I}\rho\mathrm{I} + \frac{p}{\alpha + 2}\mathrm{X}\rho\mathrm{X} \\
&  + \frac{p}{\alpha + 2}\mathrm{Y}\rho\mathrm{Y} + \frac{\alpha p}{\alpha + 2}\mathrm{Z}\rho\mathrm{Z},
\end{split}
\end{equation}
where $p=p_\mathrm{x}+p_\mathrm{y}+p_\mathrm{z}$. Note that $\alpha$ in \eqref{asymmetryCoeff} is a time-dependent parameter. However, if the coherent time duration $t$ of the task is assumed to be negligible compared to the relaxation time $T_1$, it can be approximated as
\begin{equation}\label{eq:approximAlpha}
\alpha \xrightarrow{t<<T_1} \alpha\approx 2\frac{T_1}{T_2} - 1.
\end{equation}

Based on this approximation, the data from tables \ref{tab:qdevicesT1} and \ref{tab:qdevicesT2} can be used to determine the degree of asymmetry of existing quantum devices under the PTA.
\begin{table}[h!]
\centering
\begin{tabular}{|l|l|}
\hline
\textbf{Name}         & $\mathbf{\alpha}$ \\ \hline
IBM Q System One \cite{IBM}    & $\approx 1$  \\ \hline
Rigetti 32Q Aspen-7 \cite{Rigetti}      & $\approx 1$               \\ \hline
Google Sycamore  \cite{EAQIRCC},\cite{Google}     & $\approx 10$         \\ \hline
Ion Q 11 Qubit \cite{IonQ}       & $\approx 10^4$             \\ \hline
Intel Q (Spin Qubits) \cite{EAQIRCC},\cite{Intel} & $\approx 10^6$          \\ \hline
\end{tabular}
\caption{Asymmetry coefficients, $\alpha$, for some existing quantum devices. Different companies use different technologies to implement the qubits in their quantum machines.}
\label{tab:qdevicesAlphas}
\end{table}
This is shown in Table \ref{tab:qdevicesAlphas}. Note that PTAs obtained for the different quantum technologies give rise to large variations on the asymmetry parameter, ranging from the depolarizing channel ($\alpha \approx 1$) to channels presenting very strong asymmetry levels ($\alpha\approx 10^6$).

\paragraph*{Clifford Twirl Approximation (CTA)}\label{subsub:CTA}
Another twirling operation used for the purpose of channel approximation is known as Clifford twirling. For this twirl, an arbitrary channel $\mathcal{N}$ is averaged with the elements of the $n$-fold Clifford group $\mathcal{C}_1^{\otimes n}$ \cite{OzolsClifford} weighted equiprobably, as in the Pauli twirl. Using the discrete twirling case as in \eqref{twirledchannelDiscrete}, the Clifford twirled channel will be \cite{twirl3}
\begin{equation}\label{eq:cliffTwirlGen}
\begin{split}
\bar{\mathcal{N}}^{\mathcal{C}_1^{\otimes n}}(\rho) &= \frac{1}{|\mathcal{C}_1^{\otimes n}|}\sum_{\mathrm{C}_l\in\mathcal{C}_1^{\otimes n}} \mathrm{C}_l^\dagger\mathcal{N}\left(\mathrm{C}_l\rho\mathrm{C}_l^\dagger\right)\mathrm{C}_l \\
& = \frac{1}{|\mathcal{C}_1^{\otimes n}|}\sum_{\mathrm{C}_l\in\mathcal{C}_1^{\otimes n}} \sum_k \mathrm{C}_l^\dagger E_k\mathrm{C}_l\rho\mathrm{C}_l^\dagger E_k^\dagger\mathrm{C}_l,
\end{split}
\end{equation}
where as before, the Choi-Kraus operator-sum representation of $\mathcal{N}$ has been used, and $|\cdot|$ refers to the cardinality of a set.

Expression \eqref{cliffTwirlGen} can be further expanded by noting that the $n$-fold Clifford group $\mathcal{C}_1^{\otimes n}$ is the semi-direct product of the $n$-fold Pauli operators, $\mathcal{P}_n$, and the symplectic group, $\mathcal{S}_1^{\otimes n}$, of dimension $n$ \cite{CliffIsPauliSymp,SympGroup}. Thus, the elements $\mathrm{C}_l\in\mathcal{C}_1^{\otimes n}$ can be written as $\mathrm{C}_l = \mathrm{P}_j \mathrm{S}_m$ with $\mathrm{P}_j\in\mathcal{P}_n$ and $\mathrm{S}_m\in\mathcal{S}_1^{\otimes n}$ \cite{twirl3,CliffIsPauliSymp}. This way, the Clifford twirl can be written as
\begin{equation}\label{eq:cliffTwirlPauliSymp}
\begin{split}
&\bar{\mathcal{N}}^{\mathcal{C}_1^{\otimes n}} (\rho)  \\ &= \frac{1}{|\mathcal{C}_1^{\otimes n}|} \sum_{\mathrm{S}_m} \sum_{j=1}^{4^n} \sum_k \mathrm{S}_m^\dagger \mathrm{P}_j E_k \mathrm{P}_j \mathrm{S}_m \rho \mathrm{S}_m^\dagger \mathrm{P}_j E_k^\dagger \mathrm{P}_j \mathrm{S}_m,
\end{split}
\end{equation}
with $\mathrm{S}_m\in\mathcal{S}_1^{\otimes n}$ and $\mathrm{P}_j\in\mathcal{P}_n$. The cardinality of the Clifford group is $|\mathcal{C}_1^{\otimes n}| = |\mathcal{P}_n||\mathcal{S}_1^{\otimes n}|$.

By observing expression \eqref{cliffTwirlPauliSymp}, it can be concluded that Clifford twirling a quantum channel $\mathcal{N}$ is equivalent to first Pauli twirling the channel and then performing a symplectic twirl \cite{twirl3}. That is
\begin{equation}\label{eq:twostepCliff}
\mathcal{N} \xrightarrow{\mathcal{C}_1^{\otimes n}} \bar{\mathcal{N}}^{\mathcal{C}_1^{\otimes n}} \equiv \mathcal{N}\xrightarrow{\mathcal{P}_n}
\bar{\mathcal{N}}^{\mathcal{P}_n}\xrightarrow{\mathcal{S}_1^{\otimes n}}\bar{\mathcal{N}}^{\mathcal{C}_1^{\otimes n}}.
\end{equation}

The action of each step can then be summarized as \cite{twirl3}:
\begin{enumerate}
\item $\mathcal{P}_n$-twirl: the effect of the Pauli twirl has been extensively studied in the last section. Twirling an arbitrary quantum channel $\mathcal{N}$ by $\mathcal{P}_n$ will lead to a Pauli channel as in \eqref{PauliTwirlGen}.
\item $\mathcal{S}_1^{\otimes n}$-twirl: the effect of the symplectic twirl on the PTA obtained in step 1 is the mapping of each of the non-identity Pauli operators to a uniform sum over the 3 non-identity Pauli operators. The application of the symplectic twirl to the PTA channel obtained in step one is:
\begin{equation}\label{eq:PTAsymp}
\bar{\mathcal{N}}^{\mathcal{C}_1^{\otimes n}}(\rho) = \frac{1}{|\mathcal{S}_1^{\otimes n}|}\sum_{\mathrm{S}_m} \sum_{j} \chi_{j}\mathrm{S}_m^\dagger \mathrm{P}_j \mathrm{S}_m \rho \mathrm{S}_m^\dagger \mathrm{P}_j \mathrm{S}_m,
\end{equation}
where the coefficients $\chi_{j}$ are the PTA coefficients. In order to analyze the effect of the symplectic twirl, we rewrite \eqref{PTAsymp} as \cite{twirl3}
\begin{equation}\label{eq:PTAsympRewrite}
\begin{split}
\bar{\mathcal{N}}^{\mathcal{C}_1^{\otimes n}}(\rho)
=& \frac{1}{|\mathcal{S}_1^{\otimes n}|} \sum_{\mathrm{S}_m}\sum_{\omega=0}^n \sum_{\nu_\omega}^{\binom{n}{\omega}} \sum_{\mathbf{i}_\omega = 1}^{3^\omega} \chi_{\omega,\nu_\omega,\mathbf{i}_\omega}\mathrm{S}_m^\dagger \\ &\mathrm{P}_{\omega,\nu_\omega,\mathbf{i}_\omega} \mathrm{S}_m
\rho\mathrm{S}_m^\dagger \mathrm{P}_{\omega,\nu_\omega,\mathbf{i}_\omega}\mathrm{S}_m,
\end{split}
\end{equation}
where the Pauli operators have been split based on indexes $\omega$ which gives the weight\footnote{Number of non-identity factors of the tensor product.} of a Pauli operator $\mathrm{P}_j$, $\nu_\omega$ which counts the number of distinct ways that $\omega$ non-identity Pauli operators can be placed in the corresponding $n$-tensor product, and $\mathbf{i}_\omega=\{i_1, \cdots, i_\omega \}$ with $i_\beta = \{1,2,3\}$ denotes which of the three non-identity Pauli operators occupies the position $\beta^{\mathrm{th}}$ inside the $n$-tensor product. Now fixing $\omega$ and $\nu_\omega$, the action of the symplectic twirl for each term is
\begin{equation}\label{eq:sympEachTerm}
\begin{split}
&\frac{1}{|\mathcal{S}_1^{\otimes n}|}\sum_{\mathbf{i}_\omega}^{3^\omega} \chi_{\omega,\nu_\omega,\mathbf{i}_\omega} \sum_{\mathrm{S}_m}\mathrm{S}_m^\dagger \mathrm{P}_{\omega,\nu_\omega,\mathbf{i}_\omega} \mathrm{S}_m
\rho\mathrm{S}_m^\dagger \mathrm{P}_{\omega,\nu_\omega,\mathbf{i}_\omega}\mathrm{S}_m \\
& = \left( \frac{1}{3^\omega}\sum_{\mathbf{i}_\omega}^{3^\omega}\chi_{\omega,\nu_\omega,\mathbf{i}_\omega}\right) \sum_{\mathbf{i}_\omega}^{3^\omega} \mathrm{P}_{\omega,\nu_\omega,\mathbf{i}_\omega}\rho\mathrm{P}_{\omega,\nu_\omega,\mathbf{i}_\omega} \\
& = \xi_{\omega, \nu_\omega}\sum_{\mathbf{i}_\omega}^{3^\omega} \mathrm{P}_{\omega,\nu_\omega,\mathbf{i}_\omega}\rho\mathrm{P}_{\omega,\nu_\omega,\mathbf{i}_\omega},
\end{split}
\end{equation}
so one can observe that the symplectic twirl maps each of the non-identity Pauli operators to a uniform sum over the $3$ non-identity Pauli operators, as stated earlier. For this reason, the expression for the Clifford twirled channel in \eqref{PTAsympRewrite} is
\begin{equation}\label{eq:finalClifftwirled}
\bar{\mathcal{N}}^{\mathcal{C}_1^{\otimes n}}(\rho) = \sum_{\omega = 1}^n \sum_{\nu_\omega=1}^{\binom{n}{\omega}} \xi_{\omega,\nu_\omega}\sum_{\mathbf{i}_\omega}^{3^\omega} \mathrm{P}_{\omega,\nu_\omega,\mathbf{i}_\omega}\rho\mathrm{P}_{\omega,\nu_\omega,\mathbf{i}_\omega},
\end{equation}
with coefficients $\xi_{\omega,\nu_\omega}=\frac{1}{3^\omega}\sum_{\mathbf{i}_\omega}^{3^\omega} \chi_{\omega,\nu_\omega,\mathbf{i}_\omega}$.
\end{enumerate}

From equation, \eqref{finalClifftwirled} it is easy to deduce that by applying the Clifford twirl to an arbitrary quantum channel $\mathcal{N}$, a Pauli channel will be obtained. However, this case differs from the PTA case that an additional degree of symmetrization is performed, since the probabilities of Pauli operators that share the same weight and spatial position regarding non-identity operators will be uniformly summed over the $3$ non-identity Pauli operators. This symmetrization process results on a quantum channel with fewer parameters, as the probabilities of the error operators that share the weight and spatial distribution of non-identity factors will now have the same probabilities of occurring. In general, this was not true for the PTA channels. Note that as in the PTA case, the Clifford twirled channels will fulfill the Gottesman-Knill theorem, and so it will be possible to efficiently implement them on classical computers.

We now apply the Clifford twirl to the combined amplitude and phase damping channel, $\mathcal{N}_{\mathrm{APD}}$. As done for the PTA channels, we use the expansion of the channel in the Pauli basis (refer to \eqref{combinedDensityPauliFcn}), and then apply the two previous steps to this expansion:
\begin{enumerate}
\item $\mathcal{P}_n$-twirl: the application of the Pauli twirl has been addressed in section \ref{subsub:PTA}. The obtained $\mathcal{N}_{\mathrm{APD}}$ is the Pauli channel given in \eqref{PTAdensity}.
\item $\mathcal{S}_1^{\otimes n}$-twirl: from the previous discussion, the symplectic twirl maps each of the non-identity Pauli operators to the uniform sum over the three non-identity Pauli operators. In consequence, and following \eqref{finalClifftwirled}, the Pauli channel of \eqref{PTAdensity} results in
\begin{equation}\label{eq:CTAdensity}
\begin{split}
\bar{\mathcal{N}}_{\mathrm{APD}}^{\mathcal{C}_1^{\otimes n}}(\rho)=&\frac{2-\gamma + 2\sqrt{1-\gamma-(1-\gamma)\lambda}}{4}\mathrm{I}\rho\mathrm{I} + \\ & \frac{2+\gamma - 2\sqrt{1-\gamma - (1-\gamma)\lambda}}{12}( \mathrm{X}\rho\mathrm{X} \\ &+ \mathrm{Y}\rho\mathrm{Y}+ \mathrm{Z}\rho\mathrm{Z}),
\end{split}
\end{equation}
where $p = \frac{2+\gamma - 2\sqrt{1-\gamma - (1-\gamma)\lambda}}{4}$ is the depolarizing probability given by $p=p_\mathrm{x}+p_\mathrm{y}+p_\mathrm{z}$, where $p_\mathrm{x},p_\mathrm{y},p_\mathrm{z}$ are given in \eqref{PTAprobs}.
\end{enumerate}

The channel \eqref{CTAdensity} has the form of a symmetric Pauli channel. In other words, it has the structure of the so-called depolarizing channel in \eqref{depolarizing}. Moreover, by using the temporal expressions of the probabilities of the PTA in \eqref{PTAprobs}, the depolarizing probability of the Clifford twirled channel is related to the physical parameters of a quantum device as
\begin{equation}\label{eq:CTAprob}
\begin{split}
p =& \frac{2+\gamma - 2\sqrt{1-\gamma - (1-\gamma)\lambda}}{4}\\
 = & \frac{3}{4} - \frac{1}{4}e^{\frac{-t}{T_1}}-\frac{1}{2}e^{\frac{-t}{T_2}},
\end{split}
\end{equation}
where $T_1$ and $T_2$ are the relaxation and dephasing times, respectively.

Consequently, twirling an arbitrary quantum noise channel $\mathcal{N}$ by the $n$-fold Clifford group $\mathcal{C}_1^{\otimes}$ will result in an approximated channel having the form of a Pauli channel whose principal characteristic is that the error operators that have the same weight and non-identity operator distribution in the tensor product will have the same probability of occurring. For the one qubit case, this implies that the resulting channel is a depolarizing channel. We will use the term \textit{Clifford Twirl Approximation} (CTA) to refer to the family of channels obtained by this symmetrization method.

As already mentioned, the most important aspect of the PTA and CTA channels is the fact that they are efficiently implementable on classical computers, since they fulfill the  Gottesman-Knill theorem (theorem \ref{thm:gotknill}). This is especially interesting in the field of QECC design, given that from lemma \ref{lem:QECCtwirl} we will be able to design these codes without the need for actual quantum computers.

\subsection{Quantum memoryless channels and channels with memory}\label{sec:memory}
In the previous section, we presented a symmetrization method called twirling that can be used to approximate general quantum channels to PTA and CTA Pauli channels that can be efficiently implemented on classical computers. Generally speaking, PTA and CTA channels act collectively on $n$ qubits, see \eqref{PTAdensity} and \eqref{CTAdensity}, and they are both reduced to $n$-qubit Pauli channels with different degrees of symmetrization. More degrees of symmetrization can be obtained for the $n$-qubit twirled channels with extra twirling operations such as permutation twirls \cite{twirl2,twirl3}. These twirled channels are also efficiently implementable, but in order to obtain the probability distributions that define the specific Pauli channel that comes out as a PTA or CTA, the original channel $\mathcal{N}$ must be modeled. The interaction processes between the elements of an $n$-qubit system can be quite subtle, and, as stated in section \ref{sub:decWhy}, it is reasonable to assume that, if certain conditions apply, each of the qubits of the system will interact with the environment in an independent and similar way. Additionally, simplified memory effects can be employed to express the interactions that each of the qubits has with its counterparts. Consequently, we will be able to use the expressions obtained for the PTA and CTA of the combined amplitude and phase damping channel for $1$ qubit in order to simulate decoherence over complex $n$-qubit quantum states.

Most of the research related to quantum channels considers a scenario in which the corruption of the input quantum states occur independently and identically \cite{QRM,bicycle,qldpc15,jgf,QCC,QTC,EAQTC,toric,QEClidar,EAQIRCC,twirl6}. If two different quantum states are corrupted by the same channel, the noise applied to the first transmitted state is assumed to be independent of the noise applied to the second state. This can be seen as the channel having no ``memory'' of previous events, which is why such a configuration is known as a memoryless quantum channel. The memoryless assumption simplifies the noise induced input-output mapping of quantum states, and provides an accurate portrayal of particular physical events. For instance, communication schemes whose signalling rate is low to allow the environment to reset, or scenarios in which a magnetic field is applied to reset the memory of the channel, can be accurately modeled using the memoryless configuration \cite{qchannelswithmem}.

However, memoryless quantum channels cannot be used to model all quantum communication systems. In optical fibers (a typical medium to transmit quantum information), sufficiently high signalling rates cause the environment to respond to each successive transmission in a way that is correlated to previous ones \cite{Ball1}, \cite{Ball2}. In quantum information processors, quantum bits can be so closely spaced that cross-talk may occur and the channel noise might be correlated \cite{qubits1}, \cite{qubits2}. These two scenarios exemplify situations in which applying the popular memoryless noise configuration would not provide a realistic description of the corresponding physical events. Instead, these systems require a quantum channel model that integrates memory effects.
We introduce such a model in the sequel, where we begin by discussing memoryless quantum channels and proceed by extending them to the more general model of quantum channels exhibiting memory.

\subsubsection{Memoryless quantum channels}\label{sub:memoryless}
In \cite{memless_model}, the authors describe a specific scenario in which applying a memoryless quantum channel is appropriate. They assume that the sequence of signals is transmitted over the physical medium in an ordered fashion and at a constant speed and that the environment undergoes a dissipative process which resets it to a stable configuration on a timescale $\tau$, known as the channel relaxation time\footnote{This refers to the time the channel requires to come back to an idle state.} of the medium. Then, if the rate at which each successive signal is transmitted is much lower than the inverse of the channel relaxation time, i.e $R < \frac{1}{\tau}$, the physical event can be represented by a memoryless quantum channel.

Generally speaking, a memoryless channel affecting a system of $n$ qubits is comprised by the tensor product of the channels affecting each of the individual qubits of the system. Mathematically, this is expressed as
\begin{equation}\label{eq:memorylessGeneral}
\mathcal{N}^{(n)} = \bigotimes_{j=1}^{n}\mathcal{N}^j,
\end{equation}
where $\mathcal{N}^j$ refers to the noise channels that individually affect each of the qubits. Furthermore, it is a common assumption when dealing with this type of channels that the interaction that each qubit has with the environment is identical. Thus, the memoryless expression \eqref{memorylessGeneral} is reduced to
\begin{equation}\label{eq:memorylessIdenticalGen}
\mathcal{N}^{(n)}(\rho) = \mathcal{N}^{\otimes n}(\rho),
\end{equation}
where now $\mathcal{N}$ is the one-qubit decoherence model that will equally affect each of the elements of the quantum system. Therefore, for the decoherence model based on the combined amplitude and phase damping channel, the memoryless channel for an $n$-qubit system will be expressed as
\begin{equation}\label{eq:memlessAmpPhaseDamp}
\mathcal{N}^{(n)}(\rho) = \mathcal{N}_{\mathrm{APD}}^{\otimes n}(\rho),
\end{equation}
where $\mathcal{N}_\mathrm{APD}$ is described by the error operators in  \eqref{ampphaseKraus}.

However, as explained in section \ref{sec:approximatedChannels}, we will not be able to efficiently implement this channel on a classical computer. The solution is to work with the corresponding PTA and CTA channels, which are known to fulfill the Gottesman-Knill theorem and to have the structure of $n$-qubit Pauli channels.

As seen in sections \ref{subsub:PTA} and \ref{subsub:CTA}, the one-qubit channels obtained by twirling the combined amplitude and phase damping channel are the Pauli channel for the PTA case and its symmetric instance or depolarizing channel for the CTA. Accordingly, the $n$-qubit channels that arise in an identical and memoryless manner from these channels will have the structure of the $n$-qubit Pauli channels in \eqref{NPauli}
\begin{equation}\label{eq:memorylessPauli}
\mathcal{N}^{(n)}(\rho) = \mathcal{N}_\mathrm{P}^{\otimes n}(\rho) = \mathcal{N}_\mathrm{P}^{(n)}(\rho) = \sum_{\mathrm{A}\in\mathcal{P}_n} p_\mathrm{A} \mathrm{A}\rho\mathrm{A},\;\;\mathrm{A}\in \mathcal{P}_n
\end{equation}
where $\mathrm{A} = \mathrm{A}_{1}\otimes\cdots\otimes\mathrm{A}_{n-1}\otimes\mathrm{A}_n$ with $\mathrm{A}_j\in\mathcal{P}_1$ denotes each of the possible $n$-fold Pauli error operators, with probability distribution $p_\mathrm{A}$
\begin{equation}\label{eq:memorylessProbDist}
p_{\mathrm{A}} = \prod_{j=1}^{n}p_{\mathrm{A}_j}.
\end{equation}
The last probability expression comes from the fact that the channel acts independently on each of the qubits and the fact that $\mathrm{A} = \bigotimes_{j=1}^n \mathrm{A}_j$, $\mathrm{A}_j\in\mathcal{P}_1$. Therefore, the probability of the total error event $\mathrm{A}$ will be given by the product of the probabilities of each of the individual qubits error events.

\subsubsection{Quantum Channels with Memory}\label{sub:memorychannels}
Quantum information can sometimes be exposed to physical events that exhibit spatial or temporal correlations. The outcome of the general twirling approximations of $n$-qubit channels also shows features related to these phenomena. To obtain simplified channels capable of representing such interactions, twirled channels with the ability to integrate memory effects are necessary. For this purpose, we must find how to model correlations in PTA and CTA channels. To that end, we need to model the conditioned probability distribution $\mathrm{P}(\mathrm{A}_n|\mathrm{A}_{n-1}\otimes\mathrm{A}_{n-2}\otimes\cdots\otimes\mathrm{A}_1),\mathrm{A}_j\in\mathcal{P}_1$, that is, the probability of the current $1$-qubit PTA or CTA error event conditioned by all the previous error events. This way, the expression for the general $n$-fold Pauli channel described in \eqref{NPauli} can be written as
\begin{equation}\label{eq:generalMemCTAPTA}
\begin{split}
\mathcal{N}_{\mathrm{P}}^{(n)}(\rho) &=\sum_{\mathrm{A}\in\mathcal{P}_n} p_\mathrm{A} \mathrm{A}\rho\mathrm{A}\\
&= \sum_{A\in\mathcal{P}_n}\left(\prod_{j=1}^n \mathrm{P}(\mathrm{A}_j|\mathrm{A}_{j-1}\otimes\cdots\otimes\mathrm{A}_1)\right) \mathrm{A}\rho\mathrm{A}.
\end{split}
\end{equation}

The most studied class of quantum channels with memory is the family of channels with Markovian correlated noise \cite{markov_noise1,markov_noise2,markov_noise3}, which considers noise models in which quantum objects are transformed via the application of maps whose elements are randomly generated by a classical Markov process. Markov processes describe sequences of events in which the probability of each event depends only on the previous event, i.e. $\mathrm{P}(\mathrm{A}_j|\mathrm{A}_{j-1}\otimes\mathrm{A}_{j-2}\otimes\cdots\otimes\mathrm{A}_1) = \mathrm{P}(\mathrm{A}_j|\mathrm{A}_{j-1})$. For this type of channel, the correlation between the single qubit Pauli operators of the set $\mathcal{P}_1$ can be described by means of a $4$-state Markov chain \cite{markov1,markov2,markov3,MemQTC}. This can be seen in Figure \ref{fig:markov_chain}, where each state corresponds to one of the single qubit Pauli operators. The transition probability from a previous error state $\mathrm{A}_{j-1}$ to the current error state $\mathrm{A}_j$ is denoted by $q_{\mathrm{A}_j|\mathrm{A}_{j-1}} = \mathrm{P}(\mathrm{A}_j|\mathrm{A}_{j-1})$, where $\mathrm{A}_j,\mathrm{A}_{j-1}\in \mathcal{P}_1$.

\begin{figure}
\centering
\includegraphics[scale=0.7]{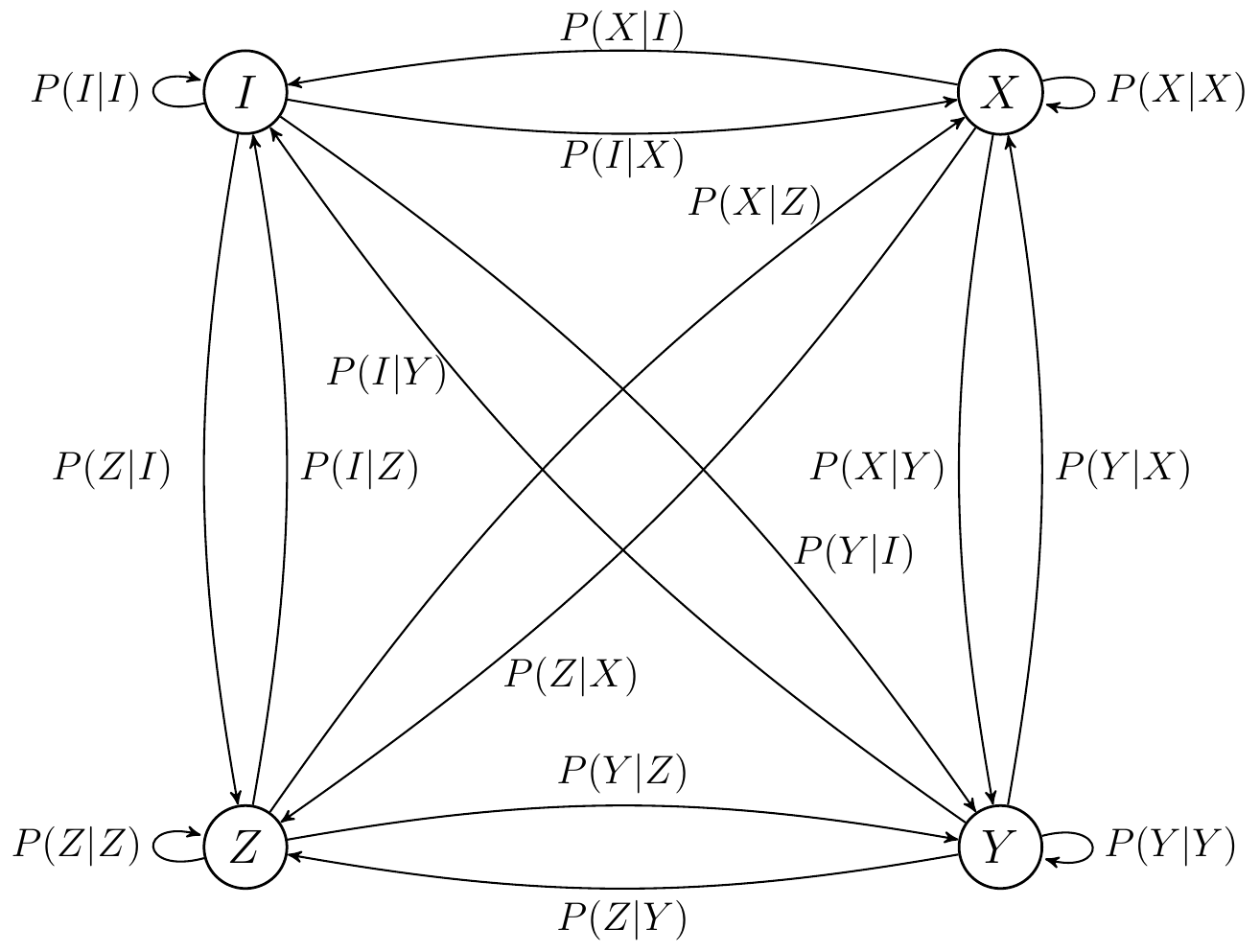}
\caption{State diagram of the Markov chain that describes the correlation between two consecutive qubit Pauli operators in a channel with Markovian memory.}
\label{fig:markov_chain}
\end{figure}

A possible way to capture the temporal correlation of Pauli operators over a channel with memory is to introduce the so called \textit{correlation parameter} $\mu \in [0, 1]$, where $\mu = 0$ indicates zero correlation and $\mu = 1$ indicates perfect correlation. This correlation parameter was originally presented in \cite{first_model}, where the authors derive the first comprehensive quantum information characterization of memory effects in a continuous variable setup. Specifically, a continuous variable model of quantum memory channels that accurately portrays the transmission of quantum objects (photons encoded with information) through optical fibers characterized by finite relaxation times is proposed. The model describes each channel use as an independent bosonic mode, and along with two parameters, enables the description of multiple communication scenarios. In terms of the time delay between each successive transmission of a photon $\Delta t$ and the relaxation time $\tau$ of the optical fiber, the model enables the representation of a memoryless configuration ($\Delta t \gg \tau$), intersymbol interference memory ($\Delta t \sim \tau$) \cite{isimem}, or a perfect memory configuration ($\Delta t \ll \tau$) \cite{perf_mem}. As shown in \cite{first_model} and \cite{model2}, the transmissivity of the optical fiber $\epsilon$ (in the beam-splitter configuration considered for the model) plays the role of a memory parameter and it can be approximated by $\epsilon \approx e^{-\frac{\Delta t}{\tau}}$. For more abstract memory modeling, we can define the correlation parameter $\mu$ and assume, as in previous works \cite{memless_model,MemQTC,model2}, that it can be quantified in terms of the time delay between successive channel uses and the relaxation time of the environment: $\mu \approx e^{-\frac{\Delta t}{\tau}}$. 

A model introduced in \cite{markov_noise3} considers that two successive channel uses are related by the transition probability $q_{\mathrm{A}_j|\mathrm{A}_{j-1}}$ defined as
\begin{equation} \label{transition}
    q_{\mathrm{A}_j|\mathrm{A}_{j-1}} = (1 - \mu)p_{\mathrm{A}_j} + \mu\delta_{\mathrm{A}_{j-1}\mathrm{A}_j},
\end{equation}
where $\delta_{\mathrm{A}_{j-1}\mathrm{A}_j}$ is the Kronecker delta function and $p_{\mathrm{A}_j}$ is the probability that Pauli operator $\mathrm{A}_j \in \mathcal{P}_1$ is imposed on the transmitted quantum state. As a consequence, considering a quantum channel with memory defined by Markovian correlated noise, the joint probability $p_{\mathrm{A}}$ is given by 

\begin{equation}\label{eq:jointMarkovprob}
p_{\mathrm{A}} = p_{\mathrm{A}_1}q_{\mathrm{A}_2|\mathrm{A}_1}\cdots q_{\mathrm{A}_n|\mathrm{A}_{n-1}},
\end{equation}
for each $n$-fold Pauli operator $\mathrm{A}\in\mathcal{P}_n$. The resulting Markovian Pauli channel, $\mathcal{N}_{\mathrm{MP}}$, will thus have the following structure
\begin{equation} \label{eq:Pauli_with_mem} 
\mathcal{N}_{\mathrm{MP}}^{(n)}(\rho)= \sum_{{\mathrm{A}\in\mathcal{P}_n}} p_{\mathrm{A}_1}q_{\mathrm{A}_2|\mathrm{A}_1}\cdots q_{\mathrm{A}_n|\mathrm{A}_{n-1}}\mathrm{A}\rho\mathrm{A},
\end{equation}
where $\mathrm{A} = \mathrm{A}_{n}\otimes\mathrm{A}_{n-1}\otimes\cdots\otimes\mathrm{A}_1$ with $\mathrm{A}_j\in\mathcal{P}_1$.

In this manner, Markovian Pauli memory channels, $\mathcal{N}_{\mathrm{MP}}$, can be created from the PTA and CTA approximations, which allow for  efficient implementation on a classical computer.

Arbitrary Markovian quantum channels with memory can be derived by applying the memory effects to the error-sum operator representation of general quantum channels \cite{memless_model,memreview}. These models will accurately capture the general effects that decoherence processes with memory generate over quantum information. We expect that in the near future the research community will show increasing interest in this topic, including the development of memory models that go beyond the Markovian paradigm. Nonetheless, at the time of writing, the discussion provided in this section completely describes the current state of affairs with regard to channels that are efficiently implementable on classical computers and that have been used for QECC construction. In the following of this dissertation, the noise models considered will be memoryless.

\subsection{Quantum Channel Capacity}\label{sec:qCap}
One of the main results of the work of Claude Shannon \cite{shannon} is the noisy-channel coding theorem, where the concept of \textit{channel capacity} was introduced. In the context of classical information theory, a rate\footnote{The rate of a code is the ratio between the number of information bits, $k$, and the number of codeword bits, $n$, i.e. $R=k/n$.}, $R$, is said to be achievable for a noisy channel if there exists a sequence of codes of such rate such that the probability of error of the code goes to zero as $n\rightarrow \infty$. The channel capacity is the supremum of all achievable rates for a noisy channel. That is, it is the maximum rate $R$, in bits per channel use, at which information can be reliably\footnote{Reliably here means with asymptotically vanishing error probability} transmitted over the channel. Therefore, codes with $R<C$ will exist such that the decoding error probability can asymptotically vanish as the length of the codewords increases. On the contrary, no such codes exist when $R>C$.
Note the importance of channel capacity for the coding theorist in order to construct channel codes.

The capacity is mathematically defined as
\begin{equation}\label{eq:classicalCap}
C = \sup_{p_x} I(X;Y),
\end{equation}
where $X$ and $Y$ refer to the random variables associated to the channel input and the channel output, respectively, $p_x$ refers to the distribution of the channel input random variable and $I(X;Y)$ refers to the mutual information of the two random variables. The mutual information gives a measure of the mutual dependence between the two random variables. From \eqref{classicalCap}, the channel capacity is then given as the supremum of mutual information with respect to all possible input distributions $p_x$.

Quantum channel capacity is a fundamental result of quantum information theory and it can be interpreted as the quantum analogue of the classical result obtained by Shannon \cite{shannon}. The quantum capacity, $C_\mathrm{Q}$, is the maximum rate at which quantum information can be communicated/corrected over many independent uses of a noisy quantum channel. In other words, the concept of the quantum capacity establishes the quantum rate\footnote{The quantum coding rate, $R_\mathrm{Q}$, of an $[[n,k]]$ quantum code is measured in terms of the number of qubits transmitted per channel use, i.e. we have $R_\mathrm{Q} = k/n$, where this means that $k$ logical qubits are encoded per $n$ physical qubits. A rate $R_\mathrm{Q}$ is said to be achievable for a quantum channel, $\mathcal{N}$, if there exists a sequence of $[[n,k]]$ quantum codes such that the probability of error of the codes goes to zero as $n\rightarrow\infty$ \cite{wildeQIT}.}, $R_\mathrm{Q}$, limit for which reliable (i.e., with a vanishing error rate) quantum communication/correction is asymptotically possible. Note that, traditionally, the concept of quantum channel capacity is understood in the context of quantum communications. In the realm of communication, it is convenient to think of a sender (Alice) who wants to relay qubits to a receiver (Bob). For memory or processing devices, Alice and Bob simply label the input and output. In this way, the noise suffered by qubits due to decoherence can be thought as the transmission of the information through a ``virtual" noisy channel \cite{capComm}. Hence, we can also apply the concept of quantum channel capacity to this framework by defining it as the maximum achievable rate by QEC that can make the stored or processed quantum information errorless.

The definition of quantum channel capacity, $C_\mathrm{Q}(\mathcal{N})$, is similar to its classical analogue, that is, the supremum of all achievable quantum rates for a noise channel $\mathcal{N}$ \cite{wildeQIT}. The following theorem, often referred to as the Lloyd-Shor-Devetak (LSD) theorem, relates quantum channel capacity with the regularized coherent information of the channel \cite{quantumcap,wildeQIT}.

\begin{theorem}[LSD capacity]\label{thm:quantumcap}
\textit{The quantum capacity $C_\mathrm{Q}(\mathcal{N})$ of a quantum channel $\mathcal{N}$ is equal to the regularized coherent information of the channel
\begin{equation}\label{eq:cqreg}
C_\mathrm{Q}(\mathcal{N}) = Q_{\mathrm{reg}}(\mathcal{N}),
\end{equation}
where
\begin{equation}\label{eq:coherentreg}
Q_{\mathrm{reg}}(\mathcal{N}) = \lim_{n\rightarrow \infty} \frac{1}{n} Q_{\mathrm{coh}}(\mathcal{N}^{\otimes n}).
\end{equation}
The channel coherent information $Q_{\mathrm{coh}}(\mathcal{N})$ is defined as
\begin{equation}\label{eq:coh}
Q_{\mathrm{coh}}(\mathcal{N}) = \max_{\rho} (S(\mathcal{N}(\rho)) - S(\rho_\mathrm{E})),
\end{equation}
where $S(\rho) = -\mathrm{Tr}(\rho\ln{\rho})$ is the von Neumann entropy and $S(\rho_\mathrm{E})$ measures how much information the environment has.}
\end{theorem}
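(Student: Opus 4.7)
The plan is to prove the two inequalities $C_\mathrm{Q}(\mathcal{N}) \ge Q_{\mathrm{reg}}(\mathcal{N})$ (direct/achievability part) and $C_\mathrm{Q}(\mathcal{N}) \le Q_{\mathrm{reg}}(\mathcal{N})$ (converse part) separately, following the strategy pioneered by Lloyd, Shor and Devetak. The core observation is that, because $C_\mathrm{Q}$ is defined as the supremum of achievable rates over \emph{all block lengths}, and because coding is allowed over arbitrary tensor powers $\mathcal{N}^{\otimes n}$, it suffices to establish both bounds in terms of the single-letter coherent information applied to $\mathcal{N}^{\otimes m}$ for each fixed $m$, and then pass to the limit $m \to \infty$.

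For the achievability direction, I would fix a state $\rho$ on the input of $\mathcal{N}^{\otimes m}$ that nearly attains the maximum in \eqref{coh}, consider its purification $\ket{\psi}_{RA^m}$, and construct a random quantum code on the typical subspace of $\rho^{\otimes k}$ (for large $k$). The central tool is the \emph{decoupling theorem}: if a Haar-random isometry is used to encode a maximally mixed state on a subspace of dimension roughly $2^{k m R_\mathrm{Q}}$, the reference $R^{mk}$ will be nearly decoupled from the environment $E^{mk}$ of $(\mathcal{N}^{\otimes m})^{\otimes k}$ as long as $R_\mathrm{Q} < Q_{\mathrm{coh}}(\mathcal{N}^{\otimes m})/m - \delta$. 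Standard complementarity arguments (Uhlmann's theorem) then produce a decoder whose fidelity tends to one. Taking first $k\to\infty$ and then $m\to\infty$ yields any rate below $Q_{\mathrm{reg}}(\mathcal{N})$, so $C_\mathrm{Q}(\mathcal{N}) \ge Q_{\mathrm{reg}}(\mathcal{N})$.

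For the converse, I would begin from a family of $[[n,k_n]]$ codes with encoders $\mathcal{E}_n$ and decoders $\mathcal{D}_n$ achieving rate $R_\mathrm{Q} = k_n/n$ with vanishing diamond-norm (or entanglement-fidelity) error over $\mathcal{N}^{\otimes n}$. Feeding the maximally mixed code state $\pi_n$ (whose entropy equals $k_n$) through the channel, the quantum data processing inequality gives $k_n = S(\pi_n) \le Q_{\mathrm{coh}}(\pi_n, \mathcal{N}^{\otimes n} \circ \mathcal{E}_n) + n \epsilon_n$ by the quantum Fano inequality (Fannes--Audenaert continuity plus the closeness of $\mathcal{D}_n\circ\mathcal{N}^{\otimes n}\circ\mathcal{E}_n$ to the identity channel on the code subspace). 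Since the input to $\mathcal{N}^{\otimes n}$ is some state, the right-hand side is at most $Q_{\mathrm{coh}}(\mathcal{N}^{\otimes n}) + n\epsilon_n$; dividing by $n$ and letting $n\to\infty$ gives $R_\mathrm{Q} \le Q_{\mathrm{reg}}(\mathcal{N})$.

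The hard part is conceptually the need for regularization: unlike its classical analogue, the coherent information $Q_{\mathrm{coh}}$ is \emph{not} additive on tensor products of channels, so neither the achievability nor the converse can be single-letterized. This forces the limit in \eqref{coherentreg} and is the reason the theorem does not yield a tractable expression for $C_\mathrm{Q}(\mathcal{N})$ in general. Technically, the most delicate step is establishing the decoupling bound with explicit error estimates (typically via the one-shot smooth min-entropy formalism) so that the block-length scaling in the direct part matches the $\epsilon_n$ appearing in the Fano bound on the converse side, ensuring both bounds converge to the same limiting quantity.
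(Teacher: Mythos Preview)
Your outline is a faithful sketch of the standard Lloyd--Shor--Devetak argument (random coding plus decoupling for achievability, quantum data processing and Fano for the converse, followed by regularization because coherent information is non-additive). There is nothing to compare it against, however: the paper does not prove this theorem. Immediately after the statement it reads ``We will not discuss the subtleties of the LSD theorem as it is out of the scope of the dissertation,'' and simply cites \cite{quantumcap,wildeQIT} for the result. So your proposal goes well beyond what the paper itself offers; the theorem is quoted as background, not established.
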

We will not discuss the subtleties of the LSD theorem as it is out of the scope of the dissertation. For arbitrary channels, there is no closed-form analytical expression of the quantum capacity given in theorem \ref{thm:quantumcap}. However, the amplitude damping channel and its twirl approximations have either closed-form expressions or bounds for their LSD capacities. The pure dephasing channel does also have an analytical solution.

\subsubsection*{Amplitude damping channel}\label{sub:CqAD}
The quantum capacity of an AD channel with damping parameter $\gamma \in [0,1]$ is equal to \cite{quantumcap,wildeQIT}
\begin{equation}\label{eq:ADcap}
C_\mathrm{Q}(\gamma) = \max_{\xi\in[0,1]}  \mathrm{H}_2((1-\gamma) \xi) - \mathrm{H}_2(\gamma \xi),
\end{equation}
whenever $\gamma \in [0, 1/2]$, and zero for $\gamma\in [1/2,1]$. $\mathrm{H}_2(p)$ is the binary entropy in bits, i.e., $\mathrm{H}_2(p) = -p\log_2(p)-(1-p)\log_2(1-p)$.
\begin{figure}[!h]
\centering
\includegraphics[scale=0.8]{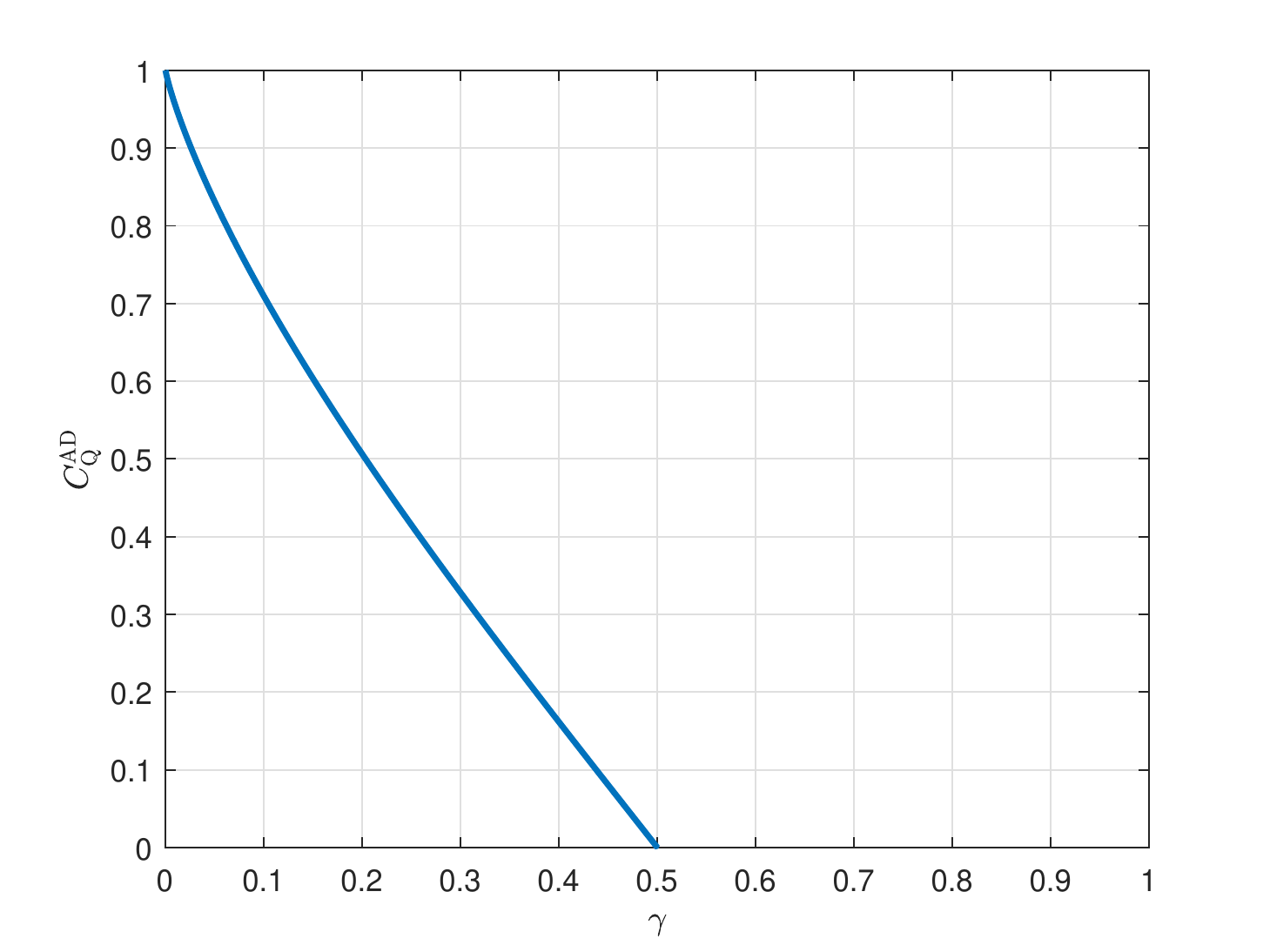}
\caption{Quantum capacity of the amplitude damping channel as a function of the damping parameter.}
\label{fig:CqAD}
\end{figure}

\subsubsection*{Pure dephasing channel}\label{sub:CqPD}
The quantum capacity of a PD channel with scattering parameter $\lambda \in [0,1]$ is equal to \cite{wildeQIT}
\begin{equation}\label{eq:PDcap}
C_\mathrm{Q}(\lambda) = 1- \mathrm{H}_2 \left(\frac{1-\sqrt{1-\lambda}}{2}\right),
\end{equation}
whenever $\lambda \in [0, 1]$, and again $\mathrm{H}_2(p)$ is the binary entropy.

\begin{figure}[!h]
\centering
\includegraphics[scale=0.8]{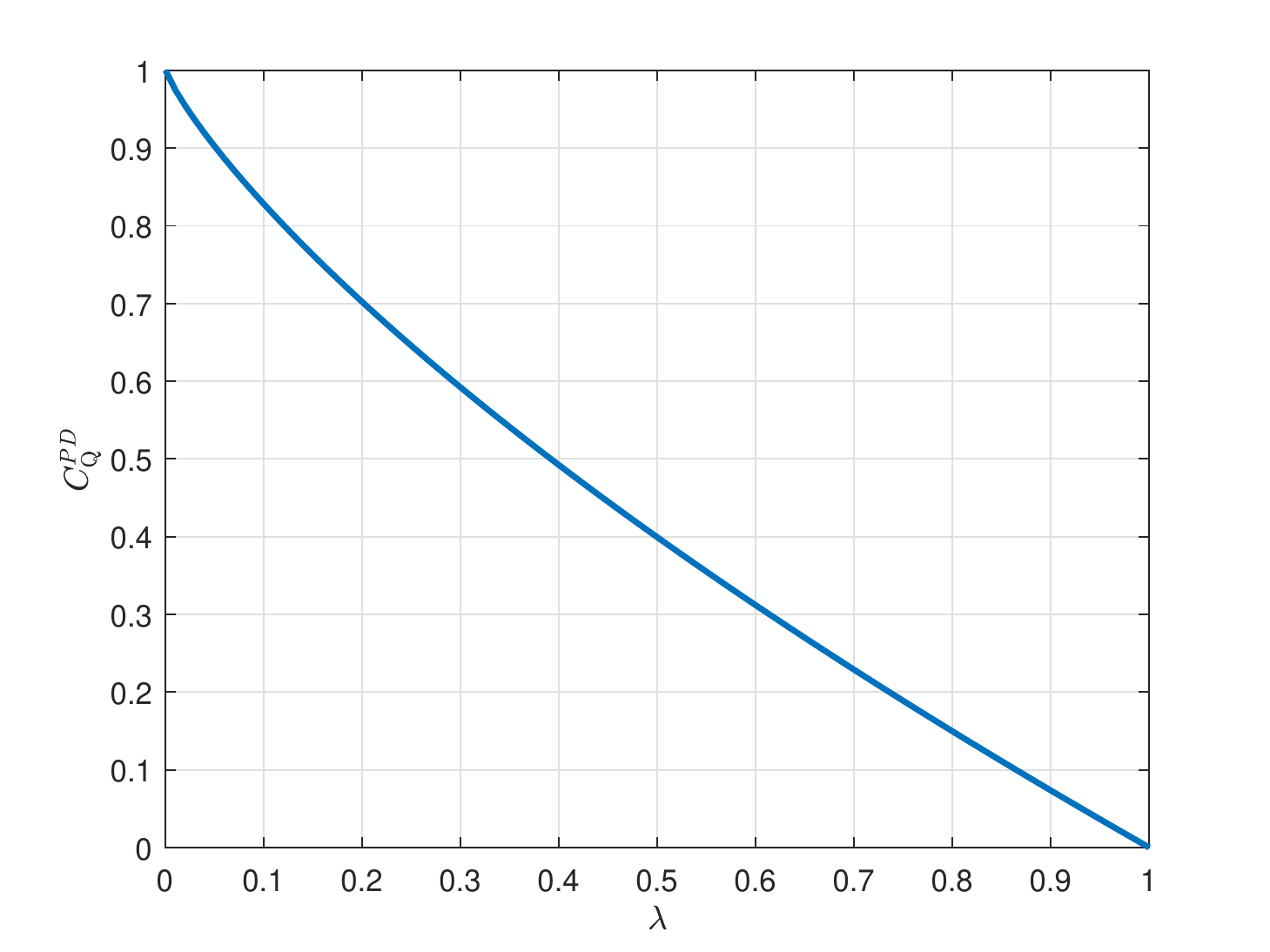}
\caption{Quantum capacity of the pure dephasing channel as a function of the scattering parameter.}
\label{fig:CqPD}
\end{figure}
\subsubsection*{Combined amplitude and phase damping channel}\label{sub:CqAPD}
There is no single-letter formula for the combined amplitude and phase damping channel at the time of writing. However, an upper bound can be obtained by using the bottleneck inequality \cite{bottleneck}
\begin{equation}
C_\mathrm{Q}(\mathcal{N}_1\circ\mathcal{N}_2)\leq \min\{C_\mathrm{Q}(\mathcal{N}_1), C_\mathrm{Q}(\mathcal{N}_2)\}.
\end{equation}

\subsubsection*{Pauli channels}\label{sub:CqP}
An expression for the quantum capacity of the widely used Pauli channels remains unknown \cite{wildeQIT}. However, a lower bound that can be achieved by stabilizer codes, the hashing bound, $C_\mathrm{H}$, \cite{wildeQIT} is known. The reason why the quantum capacity of a Pauli channel can be higher than the hashing bound, i.e. $C_\mathrm{Q} \geq C_\mathrm{H}$, is the degenerate nature of quantum codes \cite{degenPRL}, which arises from the fact that several distinct channel errors affect quantum states in an indistinguishable manner.

The hashing bound for a Pauli channel defined by the probability mass function $\mathbf{p} = (p_\mathrm{I},p_\mathrm{x},p_\mathrm{y},p_\mathrm{z})$ is given by \cite{wildeQIT}
\begin{equation}\label{eq:hash}
C_\mathrm{H}(\mathbf{p}) = 1 - \mathrm{H}_2(\mathbf{p}).
\end{equation}
$\mathrm{H}_2(\mathbf{p}) = -\sum_j p_j\log_2(p_j)$ is the entropy in bits of a discrete random variable with probability mass function given by $\mathbf{p}$.

To sum up, quantum channel capacity gives the maximum achievable quantum rate that the quantum coding theorist can aim in order to construct codes with asymptotically vanishing error rates. Closed-form expressions for such quantity exist for the amplitude damping channel and the pure dephasing channels. Both for the combined amplitude and phase damping channel and the family of Pauli channels, the best we can do at the time of writing are bounds for their capacity.

\section{Quantum error correction}\label{sec:QEC}
Previously we have described how to model decoherence processes, and how to obtain twirled approximated channels from these general models so that they can be appropriately simulated on classical computers. In consequence, the twirled channels we construct will be implementable in run-of-the-mill classical machines available at this moment in time. This is a significant step forward in the endeavour of the quantum information community to construct QECC families for the NISQ and post-NISQ-era, before quantum machines that can take advantage of these error correction methods can be constructed.

Design and simulation of error correction methods for the Pauli channel have been exhaustively studied by the quantum information community and are well documented \cite{QRM,bicycle,qldpc15,jgf,QCC,QTC,EAQTC,toric,
QEClidar,EAQECC,EAQIRCC,twirl6,MemQTC,catalytic}. All those codes are constructed within the Quantum Stabilizer Code framework \cite{QSC} and, thus, will fulfill the conditions of the Gottesman-Knill theorem (including the entanglement-assisted versions), and so we will be able to implement them in regular computers.  The key to implement Pauli channels on classical machines is the Pauli-to-binary isomorphism, which maps elements of the Pauli operator set onto binary strings \cite{isomorphism,catalytic}. Establishing this isomorphism between these sets enables researchers to assess the performance of designed QECC families via Monte Carlo simulations in a fashion reminiscent of classical coding theory.

In this section we describe the Pauli-to-binary isomorphism and the way in which this mapping can be used to simulate the performance of QECCs on classical computers. {With this goal in mind, we also discuss basic stabilizer coding, syndrome measurements, and error discretization. Additionally, we provide a simple example to see how the Word Error Rate (WER) and the QuBit Error Rate (QBER) of a quantum error correction method can be obtained via Monte Carlo simulation. Both WER and QBER are used as figures of merit when gauging the error correcting capability of the QECC schemes.

Note that in the remainder of this dissertation when we write QEC we refer to the family of stabilizer codes. Other approaches to quantum computation and error correction exist, such as bosonic codes \cite{bosonic1,bosonic2}, but here we focus our research and discussions on stabilizer coding.

\subsection{Knill-Lafflame theorem and Error discretization}\label{sub:errorDisc}
Before presenting the theory of stabilizer codes, some important results about the general theory of quantum error correction must be discussed, so that the construction of the stabilizer family can be introduced in a logical and coherent manner.

We begin with the following theorem, which defines the conditions that a quantum error correcting code must fulfill to protect quantum information from a particular noise process $\mathcal{N}$. We refer to this theorem as the \textit{quantum error correction conditions} or the \textit{Knill-Lafflame theorem} \cite{NielsenChuang,knillLaff}.

\begin{theorem}[Knill-Lafflame theorem]\label{thm:QECCconditions}
\textit{Let $\mathrm{C}\subset\mathcal{H}_2^{\otimes n}$ be a quantum code defined as a subspace of the $2^n$-dimensional Hilbert space, that is, the state space of the $n$-qubit system to be protected, and let $P$ be the projector onto $\mathrm{C}$. Let $\mathcal{N}$ be a noise operation defined by Kraus operators $\{E_j\}$. A necessary and sufficient condition for the existence of an error-correction operation $\mathcal{R}$ correcting $\mathcal{N}$ on $\mathrm{C}$ is that
\begin{equation}\label{eq:QECCconditions}
P E_j^\dagger E_k P = \alpha_{jk}P,
\end{equation}
for some Hermitian matrix of complex numbers $\alpha$. }

\textit{We call the Kraus operators $\{E_j\}$ errors. If the recovery operation $\mathcal{R}$ exists, we say that $\{E_j\}$ constitutes a correctable set of errors}.
\end{theorem}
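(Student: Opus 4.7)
The plan is to prove the two directions separately, starting with sufficiency which is the more constructive (and harder) part.

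For \emph{sufficiency}, I would begin by diagonalizing the Hermitian matrix $\alpha$. Since $\alpha = u^\dagger d u$ for some unitary $u$ and diagonal $d$, the change of basis $F_k = \sum_j u_{kj} E_j$ produces a new set of Kraus-like operators (for the same channel, up to the unitary freedom in the Kraus representation) that satisfies the diagonal relation $P F_k^\dagger F_l P = d_{kk}\, \delta_{kl}\, P$. The key structural fact I would then extract from this is a polar-style decomposition: applying $F_k$ to any codeword yields vectors whose inner products depend only on the overlap of the originals, so $F_k P = \sqrt{d_{kk}}\, U_k P$ where $U_k$ is an isometry mapping the code subspace $C$ onto a subspace $C_k \subset \mathcal{H}_2^{\otimes n}$. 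Moreover, the relations $P F_k^\dagger F_l P = 0$ for $k\neq l$ force the subspaces $C_k$ to be mutually orthogonal. From here the recovery operation essentially constructs itself: define projectors $P_k$ onto $C_k$ (together with a projector onto the orthogonal complement to make things trace-preserving), let $\mathcal{R}$ perform the projective measurement $\{P_k\}$, record the outcome $k$, and then apply $U_k^\dagger$ to rotate back into $C$. A direct calculation then shows $\mathcal{R}\circ\mathcal{N}(\rho) \propto \rho$ for every $\rho$ supported on $C$, which after normalization gives exact recovery.

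For \emph{necessity}, I would assume a recovery channel $\mathcal{R}$ with Kraus operators $\{R_r\}$ exists such that $(\mathcal{R}\circ\mathcal{N})(\rho) = \rho$ for every state $\rho$ supported on $C$. The combined map has Kraus operators $\{R_r E_j\}$, and a standard lemma about quantum operations that act as the identity on a subspace forces $R_r E_j P = c_{rj} P$ for some scalars $c_{rj}$ (up to a term annihilated by $P$ from the right). Substituting gives
\begin{equation}\nonumber
P E_j^\dagger E_k P \;=\; P E_j^\dagger \Bigl(\sum_r R_r^\dagger R_r\Bigr) E_k P \;=\; \sum_r c_{rj}^{*} c_{rk}\, P \;=\; \alpha_{jk} P,
\end{equation}
where I used $\sum_r R_r^\dagger R_r = \mathrm{I}$ and defined $\alpha_{jk} = \sum_r c_{rj}^{*} c_{rk}$, which is manifestly Hermitian.

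The main obstacle I foresee is in the sufficiency direction: specifically, justifying the polar-type step $F_k P = \sqrt{d_{kk}}\, U_k P$ cleanly and then verifying that the candidate recovery really is completely positive and trace-preserving (one has to pad the Kraus set so that $\sum_k U_k^\dagger P_k$ plus a complementary term forms a valid instrument). A secondary subtlety is that the diagonalization of $\alpha$ may have zero eigenvalues; these correspond to errors that vanish on the code ($F_k P = 0$) and must simply be dropped from the reconstruction without disturbing the argument. Once those two technical points are handled, the equivalence of the two conditions follows essentially from linear algebra applied to the operator-sum representation.
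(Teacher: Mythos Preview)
The paper does not prove this theorem: it states the Knill--Laflamme conditions as a cited result (referencing \cite{NielsenChuang,knillLaff}) and moves on immediately to the discretization-of-errors theorem, so there is no proof in the paper to compare against.

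That said, your proposal is correct and is essentially the standard textbook proof (in particular, the one in Nielsen and Chuang, which is one of the paper's references). The sufficiency argument via diagonalizing $\alpha$, passing to the rotated errors $F_k$, extracting the polar-type isometries $U_k$ onto mutually orthogonal subspaces, and building $\mathcal{R}$ from the measure-and-rotate instrument is exactly the canonical construction; your identification of the two technical points (padding to make $\mathcal{R}$ a genuine CPTP map, and discarding the $d_{kk}=0$ errors) is accurate and both are easily handled. The necessity direction via $R_r E_j P = c_{rj} P$ and the completeness relation $\sum_r R_r^\dagger R_r = \mathrm{I}$ is also the standard route. Nothing is missing.
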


Theorem \ref{thm:QECCconditions} states the conditions that any quantum error correcting code must verify in order to correct a specific error operator $\mathcal{N}$. At first glance, it seems that the ability that any code has to correct errors will be limited to a finite set $\{E_j\}$ errors operators, something that appears to be an important limitation. However, QECCs are actually significantly more powerful due to the following theorem known as \textit{discretization of errors} \cite{NielsenChuang,knillLaff}.

\begin{theorem}[Discretization of errors]\label{thm:discretization}
\textit{Suppose $\mathrm{C}$ is a quantum code and $\mathcal{R}$ is the error-correction operation to perform recovery from a noise process $\mathcal{N}$ with Kraus operators $\{E_j\}$. Suppose that $\mathcal{F}$ is another noise process with Kraus operators $\{F_k\}$ which are linear combinations of the $E_j$ operators, i.e. $F_k= \sum_j \xi_{kj}E_j$ for some matrix $\xi$ of complex numbers. Then, the error correction operation $\mathcal{R}$ also corrects for the effects of the noise process $\mathcal{F}$ on the code $\mathrm{C}$.}
\end{theorem}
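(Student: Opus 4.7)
The plan is to show that the Knill--Laflamme conditions (Theorem \ref{thm:QECCconditions}) are preserved under linear recombinations of the error operators, and that the \emph{same} recovery map $\mathcal{R}$ already built for $\{E_j\}$ is automatically the right recovery for $\{F_k\}$ by linearity. The argument therefore splits naturally into an algebraic part (checking the correction criterion) and a structural part (tracking how $\mathcal{R}$ acts on the expanded error terms).

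First I would carry out a direct computation using the hypothesis $F_k=\sum_j \xi_{kj}E_j$:
\begin{equation*}
PF_k^\dagger F_l P \;=\; \sum_{j,m} \xi_{kj}^{\ast}\xi_{lm}\, P E_j^\dagger E_m P \;=\; \Big(\sum_{j,m} \xi_{kj}^{\ast}\xi_{lm}\,\alpha_{jm}\Big) P \;=\; \beta_{kl}\,P,
\end{equation*}
where I invoked the Knill--Laflamme condition $PE_j^\dagger E_m P=\alpha_{jm}P$ from Theorem \ref{thm:QECCconditions}. Because $\alpha$ is Hermitian, a line of index manipulation gives $\beta_{kl}^{\ast}=\beta_{lk}$, so $\beta=\xi\,\alpha\,\xi^\dagger$ is Hermitian as well. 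Applying Theorem \ref{thm:QECCconditions} in the converse direction then guarantees that \emph{some} recovery operation corrects $\{F_k\}$; but this alone is not enough, since the theorem asserts that the \emph{same} $\mathcal{R}$ works.

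To upgrade existence to identity of the recovery map, I would use linearity of $\mathcal{R}$ together with an explicit unrolling on an arbitrary codeword $\ket{\psi}\in\mathrm{C}$:
\begin{equation*}
\mathcal{R}\!\left(\mathcal{F}(\ket{\psi}\!\bra{\psi})\right) \;=\; \sum_k \mathcal{R}\!\left(F_k\ket{\psi}\!\bra{\psi}F_k^\dagger\right) \;=\; \sum_{k}\sum_{j,l}\xi_{kj}\xi_{kl}^{\ast}\, \mathcal{R}\!\left(E_j\ket{\psi}\!\bra{\psi}E_l^\dagger\right).
\end{equation*}
The task is then to show that $\mathcal{R}(E_j\ket{\psi}\!\bra{\psi}E_l^\dagger)\propto \alpha_{jl}\,\ket{\psi}\!\bra{\psi}$, after which the double sum collapses to $(\sum_{k}(\xi\alpha\xi^\dagger)_{kk})\,\ket{\psi}\!\bra{\psi}$, i.e.\ a codeword state up to normalization, which is precisely the statement that $\mathcal{R}$ corrects $\mathcal{F}$ on $\mathrm{C}$.

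The main obstacle is justifying the bilinear identity $\mathcal{R}(E_j\ket{\psi}\!\bra{\psi}E_l^\dagger)\propto \alpha_{jl}\ket{\psi}\!\bra{\psi}$ for the cross terms $j\neq l$, because the definition of ``$\mathcal{R}$ corrects $\{E_j\}$'' a priori only constrains the diagonal combinations $\sum_j \mathcal{R}(E_j\rho E_j^\dagger)$. To handle this I would exploit the unitary freedom in the choice of Kraus operators: diagonalise the Hermitian matrix $\alpha$ to obtain an equivalent error set $\tilde{E}_m=\sum_j u_{mj}E_j$ satisfying $P\tilde{E}_m^\dagger \tilde{E}_n P=d_m\delta_{mn}P$. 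For this orthogonalised set the recovery reduces, via the polar decomposition $\tilde{E}_m P = U_m\sqrt{d_m}P$, to a syndrome measurement followed by unitary rotation back to the code, so each $\tilde{E}_m\ket{\psi}$ lives in a distinct orthogonal subspace and $\mathcal{R}$ sends $\tilde{E}_m\ket{\psi}\!\bra{\psi}\tilde{E}_n^\dagger$ to $d_m\delta_{mn}\ket{\psi}\!\bra{\psi}$. Transforming back through $u$ produces the required identity for the original $E_j$, and substituting into the previous display completes the argument.
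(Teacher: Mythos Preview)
The paper does not actually prove Theorem~\ref{thm:discretization}; it is stated as a known result with a citation to \cite{NielsenChuang,knillLaff} and then immediately applied. There is therefore no in-paper proof to compare against.

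Your argument is correct and is essentially the standard Nielsen--Chuang proof: verify that $\beta=\xi\alpha\xi^\dagger$ is Hermitian so the Knill--Laflamme conditions transfer, then use the unitary diagonalisation of $\alpha$ to pass to an orthogonalised error set $\{\tilde{E}_m\}$ for which the explicit recovery (projection onto the mutually orthogonal subspaces $U_mPU_m^\dagger$ followed by $U_m^\dagger$) visibly annihilates the cross terms $\tilde{E}_m\ket{\psi}\!\bra{\psi}\tilde{E}_n^\dagger$ for $m\neq n$ and returns the diagonal ones to $d_m\ket{\psi}\!\bra{\psi}$. Transforming back through $u$ and then through $\xi$ is just bookkeeping. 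One small point worth tightening: when some $d_m=0$ the polar decomposition degenerates, but then $\tilde{E}_mP=0$ and those terms contribute nothing, so the argument still goes through.
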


This result is momentous, since quantum codes designed to correct a finite number of Kraus operators, they will also be able to correct an infinite set of noise processes, i.e., any linear combinations these of Kraus operators. In particular, when considering the previous twirled channels, the discretization of errors means that any code that is able to correct a subset of $n$-fold Pauli operators in $\mathcal{P}_n$ will also be able to correct any linear combinations of these Pauli operators.

This result is invaluable when designing and simulating QECCs efficiently on a classical computer, since it allows us to do so by just considering discrete sets of errors represented by $n$-fold Pauli operators. Considering that the set of errors that have to be tested is discrete, it will be possible to represent the errors using binary notation, i.e. in a way that classical computers can manage.

\subsection{Stabilizer Codes}\label{sub:stabCodes}
The set of $n$-fold Pauli operators $\mathcal{P}_n$ together with the overall factors $\{\pm 1,\pm i\}$ forms a group under multiplication. This group is known as the $n$-fold Pauli group $\mathcal{G}_n$ \cite{EAQECC}. A stabilizer code is defined by an abelian subgroup $\mathcal{S}\subset\mathcal{G}_n$. For a $[[n,k,d]]$ unassisted\footnote{In this section we discuss stabilizer coding without entanglement-assistance for the sake of brevity. However, the methods presented here are equally valid for the entanglement-assisted formulation.} stabilizer code encoding $k$ logical qubits into $n$ physical qubits with distance\footnote{The distance of a stabilizer code is defined as the minimum weight of the Pauli operators that belong to the normalizer of the stabilizer. The normalizer of the stabilizer consists of the elements of the $n$-fold Pauli group that commute with all the generators of the stabilizer but do not belong to the stabilizer.} $d$, the stabilizer set has $2^{n-k}$ distinct elements up to an overall phase, and it is generated by $n-k$ independent generators\footnote{Stabilizer codes are represented by $n-k$ generators, given that the rest are combinations of them. This provides a compact representation of the code.}. The stabilizer code $\mathrm{C}(\mathcal{S})$ associated with the stabilizer set $\mathcal{S}$ is then defined as
\begin{equation}\label{eq:stabilizerCode}
\mathrm{C}(\mathcal{S}) = \{|\bar{\psi}\rangle\in\mathcal{H}_2^{\otimes n} : M|\bar{\psi}\rangle\ = |\bar{\psi}\rangle, \forall M \in \mathcal{S}\},
\end{equation}
where $\mathcal{H}_2^{\otimes n}$ denotes the complex Hilbert space of dimension $2^n$, which is the state space of systems formed by $n$ qubits.
Note that $\mathrm{C}(\mathcal{S})$ is given by the simultaneous $+1$-eigenspace defined by the elements\footnote{The simultaneous eigenspace is generated by the $n-k$ independent generators, and so the code is defined by them.} of $\mathcal{S}$.
\begin{figure}[!h]
\centering
\includegraphics[scale=0.8]{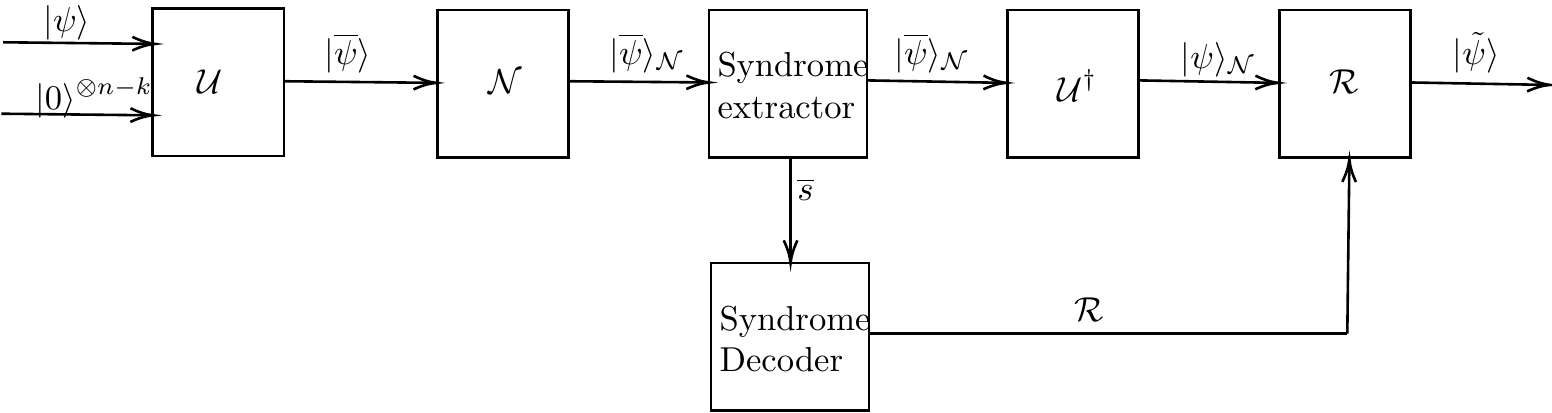}
\caption{General schematic of a stabilizer QECC.}
\label{fig:StabCode}
\end{figure}

Figure \ref{fig:StabCode} presents the general scheme for stabilizer codes that correct a noise operation $\mathcal{N}$. The unitary $\mathcal{U}$ maps the input information word $|\psi\rangle\in\mathcal{H}_2^{\otimes k}$ to the codespace $|\bar{\psi}\rangle\in\mathrm{C}(\mathcal{S})\subset\mathcal{H}_2^{\otimes n}$ with the aid of the ancilla qubits $|0\rangle^{\otimes n-k}$. The existence of such a unitary that takes the arbitrary input states $|\psi\rangle|0\rangle^{\otimes n-k}$ to the codespace is guaranteed\footnote{Such unitary can be formed up to a global phase by using $\mathcal{O}(n^2)$ generators of the Clifford group $\mathrm{H}$, $\mathrm{P}$ and $\mathrm{CNOT}$ \cite{EAQECC,OzolsClifford}.} \cite{EAQECC}. The encoded state then experiences the action of a quantum channel $\mathcal{N}$, which is described as one of the approximated twirl channels with the structure of a Pauli channel that have been introduced earlier in this work. The noisy quantum state, $|\bar{\psi}\rangle_\mathcal{N}$, must then be corrected by a recovery operation. The fundamentals of quantum mechanics establish that measuring a quantum state forces its superposition state to collapse, which causes the loss of the quantum information contained in the original state. Therefore, the theory of quantum error correction codes must circumvent this issue. This is achieved by measuring in an indirect way the so-called \textit{syndrome} of the error $\bar{s}$. Such a measure avoids the destruction of the quantum state and allows us to garner information regarding the error that will be used to estimate the best recovery operation $\mathcal{R}$.

The error syndrome $\bar{s}$ is defined as a binary vector of length $n-k$, i.e., $\bar{s}\in (\mathbb{F}_2)^{n-k}$, that captures the commutation relationship between the generators of the stabilizer set $\mathcal{S}$ and the Kraus error operators of the channel. By the twirled approximation the error operators $\{\mathcal{E}_k\}$  are  elements of the Pauli group, i.e., $\mathcal{E}_k\in\mathcal{G}_n$. Furthermore, as already stated, the designed codes will also correct any linear combinations of the error operators\footnote{For stabilizer codes, the effect of error discretization comes from the syndrome measurement \cite{NielsenChuang,isomorphism}.} $\{\mathcal{E}_k\}$. It is common knowledge that any two elements of the $n$-fold Pauli group either commute or anticommute. For this reason, any error operator $\mathcal{E}$ will either commute or anticommute with each of the generators $\mathrm{S}_j, j\in\{1,\cdots,n-k\}$, of the stabilizer set, and the components $s_j$ of the error syndrome capture their relationship as
\begin{equation}\label{eq:syndrome}
\mathcal{E}\mathrm{S}_j = (-1)^{s_j}\mathrm{S}_j\mathcal{E}.
\end{equation}
That is, $s_j$ will be zero or one depending if $\mathcal{E}$ and $\mathrm{S}_j$ commute or anticommute, respectively.

To construct a circuit that can extract the syndrome for each of the generators $\mathrm{S}_j$, consider the received noisy quantum state  $|\bar{\psi}\rangle_\mathcal{N}=\mathcal{E}|\bar{\psi}\rangle$. Then the vector $|\bar{\psi}\rangle_\mathcal{N}$ is an eigenstate of each of the generators of the stabilizer set $\mathcal{S}$ associated to the $\pm 1$ eigenvalues, i.e.
\begin{equation}\label{eq:syndromeMeasure}
\begin{split}
\mathrm{S}_j|\bar{\psi}\rangle_\mathcal{N} &= \mathrm{S}_j\mathcal{E}|\bar{\psi}\rangle = (-1)^{s_j}\mathcal{E}\mathrm{S}_j |\bar{\psi}\rangle = (-1)^{s_j}\mathcal{E}|\bar{\psi}\rangle \\
 &= (-1)^{s_j}|\bar{\psi}\rangle_\mathcal{N},
\end{split}
\end{equation}
where the particular value out of the two possible eigenvalues depends on the commutation relationship between the channel error and the stabilizer generator. As a result, in order to determine the syndrome of the error that takes place in a specific channel instance, the eigenvalue of $\mathrm{S}_j$ must be measured \cite{EAQECC}. Figure \ref{fig:syndromeMeas} presents a circuit built to perform such measurements.

\begin{figure}[!h]
\centering
\includegraphics[scale=1]{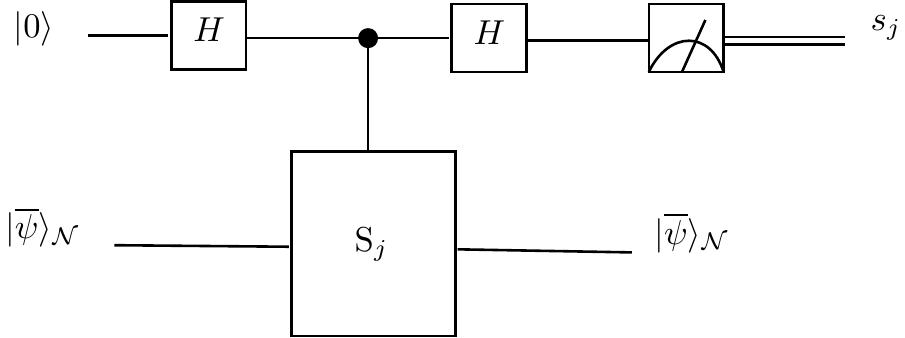}
\caption{Quantum circuit that measures the syndrome associated to each of the stabilizer generators $\mathrm{S}_j$. The $\mathrm{H}$ blocks stand for Hadamard gates and the $\mathrm{S}_j$ is a controlled unitary gate with the stabilizer generator as its unitary.}
\label{fig:syndromeMeas}
\end{figure}

Once the error syndrome is obtained, it is used to decode or estimate the recovery operation that will correct the corrupted quantum information state. Syndrome decoders of quantum stabilizer codes depend on the specific QECC family construction \cite{QRM,bicycle,qldpc15,jgf,QCC,QTC,EAQTC,toric,
QEClidar,EAQECC,EAQIRCC,twirl6,MemQTC,catalytic}. In general, optimal decoding of quantum stabilizer codes must consider the phenomenon known as \textit{degeneracy} \cite{degenPRL,hardness}, a feature which has no equivalence in classical coding. In degenerate quantum codes, there are sets of different error operators that have the same effect in the transmitted codeword. Although it should be possible to exploit this property in the decoding process (and also for the design of better codes), the design of decoders that can efficiently perform the so-called Degenerate Quantum Maximum Likelihood Decoding (DQMLD) remains unsolved for stabilizer codes in general\footnote{Quantum turbo codes are decoded exploiting degeneracy \cite{QTC,EAQTC}.}. Current decoding of these codes is approached in terms of Quantum Maximum Likelihood Decoding (QMLD), which ignores error degeneracy and undertakes the decoding task as in the field of classical linear block codes. Building effective degeneracy-exploiting decoders is out of the scope of this dissertation, so it will not be discussed further.

The last steps in the error correction operation are the application of the inverse of the encoding operation (unitary $\mathcal{U}^\dagger$), followed by the use of the recovery operation $\mathcal{R}$, which depends on the results of the syndrome decoder.

\subsection{Pauli-to-binary isomorphism}\label{sub:paulibinary}
In this section we present the Pauli-to-binary isomorphism as a way to express the operation of QECCs as binary vectors, which are generally referred to as \textit{symplectic strings}. We begin by defining a one-to-one mapping between the set of Pauli matrices $\mathcal{P}_1$ of one qubit and the set of binary strings of length two, and then extrapolating this mapping to the case of more dimensions (i.e., more qubits). To that end, we first review some facts regarding the set of Pauli matrices $\mathcal{P}_1$, Pauli group $\mathcal{G}_1$ and the effective Pauli group $[\mathcal{G}_1]$. Pauli matrices are Hermitian unitary matrices with eigenvalues equal to $\pm 1$ and Table \ref{tab:PauliMult} shows their product operation.

\begin{table}[h!]
\centering
\begin{tabular}{|c|cccc|}
\hline
$\times$      & $\mathrm{I}$ & $\mathrm{X}$   & $\mathrm{Y}$   & $\mathrm{Z}$   \\ \hline
$\mathrm{I}$ & $\mathrm{I}$ & $\mathrm{X}$   & $\mathrm{Y}$   & $\mathrm{Z}$   \\
$\mathrm{X}$ & $\mathrm{X}$ & $\mathrm{I}$   & $i\mathrm{Z}$  & $-i\mathrm{Y}$ \\
$\mathrm{Y}$ & $\mathrm{Y}$ & $-i\mathrm{Z}$ & $\mathrm{I}$   & $i\mathrm{X}$  \\
$\mathrm{Z}$ & $\mathrm{Z}$ & $i\mathrm{Y}$  & $-i\mathrm{X}$ & $\mathrm{I}$\\
\hline
\end{tabular}
\caption{Multiplication table for the Pauli matrices. Pauli matrices either commute or anticommute.}
\label{tab:PauliMult}
\end{table}

Note that the elements of the table are elements of the Pauli group $\mathcal{G}_1$. Given that neglecting the overall phase has no observable consequences \cite{catalytic}, it makes perfect physical sense to ignore it and construct an equivalence class of matrices $[\mathcal{G}_1] = \{[\mathrm{I}],[\mathrm{X}],[\mathrm{Y}],[\mathrm{Z}]\}$, where $[\mathrm{A}]$  refers to the equivalence class of matrices equal to $\mathrm{A}$ up to an overall phase. This equivalence class forms an Abelian group under the multiplication operation defined as $[\mathrm{A}][\mathrm{B}]=[\mathrm{AB}]$. The product relationships of this equivalence class are shown in table \ref{tab:equivTable}. In the literature, this equivalence class $[\mathcal{G}_1]$ receives the name of \textit{effective Pauli group}.

\begin{table}[h!]
\centering
\begin{tabular}{|c|cccc|}
\hline
$\times$      & $\mathrm{I}$ & $\mathrm{X}$   & $\mathrm{Y}$   & $\mathrm{Z}$   \\ \hline
$\mathrm{I}$ & $\mathrm{I}$ & $\mathrm{X}$   & $\mathrm{Y}$   & $\mathrm{Z}$   \\
$\mathrm{X}$ & $\mathrm{X}$ & $\mathrm{I}$   & $\mathrm{Z}$  & $\mathrm{Y}$ \\
$\mathrm{Y}$ & $\mathrm{Y}$ & $\mathrm{Z}$ & $\mathrm{I}$   & $\mathrm{X}$  \\
$\mathrm{Z}$ & $\mathrm{Z}$ & $\mathrm{Y}$  & $\mathrm{X}$ & $\mathrm{I}$\\
\hline
\end{tabular}
\caption{Multiplication table for the effective Pauli group. With a slight abuse of notation we refer to the equivalence class $[\mathrm{A}]$ as $\mathrm{A}$.}
\label{tab:equivTable}
\end{table}

We now consider the Abelian group of binary strings of length two $(\mathbb{F}_2)^2 = \{00,01,10,11\}$, with the usual modulo-$2$ addition (refer to Table \ref{tab:f2addition}). Note that if the modulo-$2$ product is also considered, $(\mathbb{F}_2)^2$ becomes a vector space over the field $\mathbb{F}_2$.

\begin{table}[h!]
\centering
\begin{tabular}{|c|cccc|}
\hline
$+$      & $00$ & $01$   & $11$   & $10$   \\ \hline
$00$ & $00$ & $01$   & $11$   & $10$   \\
$01$ & $01$ & $00$   & $10$  & $11$ \\
$11$ & $11$ & $10$ & $00$   & $01$  \\
$10$ & $10$ & $11$  & $01$ & $00$\\
\hline
\end{tabular}
\caption{Addition table for the binary vectors of length two under modulo-$2$ addition of their elements.}
\label{tab:f2addition}
\end{table}
Note that any element $u\in (\mathbb{F}_2)^2$ can be partitioned\footnote{Note that the isomorphism presented here can be also done taking $u=(x|z)$, which is also common in the quantum information theory literature. Both are equivalent.} as $u = (z|x)$, where $z,x\in\mathbb{F}$. Having said that, we now define the bilinear form \cite{catalytic} called \textit{symplectic form} or \textit{symplectic product} $\odot$, in the vector space $(\mathbb{F}_2)^2$. The symplectic product, $\odot:(\mathbb{F}_2)^2\times(\mathbb{F}_2)^2\rightarrow \mathbb{F}_2$, is defined as: 

\begin{equation}
u\odot v = (zx' + z'x) \text{ mod 2},
\end{equation}
where $u = (z|x)$ and $v=(z'|x')$. Table \ref{tab:sympProd} shows the values of this symplectic product for the elements of $(\mathbb{F}_2)^2$.

\begin{table}[h!]
\centering
\begin{tabular}{|c|cccc|}
\hline
$\odot$      & $00$ & $01$   & $11$   & $10$   \\ \hline
$00$ & $0$ & $0$   & $0$   & $0$   \\
$01$ & $0$ & $0$   & $1$  & $1$ \\
$11$ & $0$ & $1$ & $0$   & $1$  \\
$10$ & $0$ & $1$  & $1$ & $0$\\
\hline
\end{tabular}
\caption{Computation of the symplectic product for the binary vectors of length two.}
\label{tab:sympProd}
\end{table}

We are now ready to consider the one-to-one map $\Upsilon:(\mathbb{F}_2)^2\rightarrow \mathcal{P}_1$ defined in Table \ref{tab:map}. Note that any element $u=(z|x)$ in $(\mathbb{F}_2)^2$ is mapped into $\mathrm{Z}^z\mathrm{X}^x$ up to a phase factor, i.e. is mapped to the equivalence class $\Upsilon_{(z|x)}=[\mathrm{Z}^z\mathrm{X}^x]$, previously defined.

\begin{table}[h!]
\centering
\begin{tabular}{|c|c|}
\hline
$(\mathbb{F}_2)^2$   & $\mathcal{P}_1$  \\ \hline
$00$ & $\mathrm{I}$    \\
$01$ & $\mathrm{X}$    \\
$11$ & $\mathrm{Y}$   \\
$10$ & $\mathrm{Z}$ \\
\hline
\end{tabular}
\caption{Mapping $\Upsilon$ of the Pauli matrices to the elements of the set of length two binary vectors.}
\label{tab:map}
\end{table}

The following two important properties of $\Upsilon$ also hold \cite{catalytic}:
\begin{itemize}
\item From tables \ref{tab:equivTable} and \ref{tab:f2addition}, the map $[\Upsilon]: (\mathbb{F}_2)^2 \rightarrow [\mathcal{G}_1]$ is an isomorphism. That is,
\begin{equation}\label{eq:isomorph}
[\Upsilon_u] [\Upsilon_v] = [\Upsilon_{u+v}].
\end{equation}
\item From tables \ref{tab:PauliMult} and \ref{tab:sympProd}, the commutation relationships between the Pauli matrices are captured by the symplectic product for binary vectors of length two. That is,
\begin{equation}\label{eq:commmasSymp}
\Upsilon_u \Upsilon_v = (-1)^{u\odot v}\Upsilon_v \Upsilon_u.
\end{equation}
\end{itemize}

Note that by using $\Upsilon$, we may represent Pauli operators as binary strings of length two, and still be able to capture the commutation relationships via the symplectic product.

The next step is to extend the isomorphism to the $n$-fold set of Pauli operators in order to cope with systems that have $n$ qubits. The elements of the $n$-fold set of Pauli operators $\mathcal{P}_n$ are tensor products of individual Pauli matrices, and so the equivalence class $[\cdot]$ for these elements of the Pauli group will be defined in the same way, i.e. the set $[\mathcal{G}_n]=\{[\mathrm{A}]:\mathrm{A}\in\mathcal{P}_n\}$. The operation of this equivalence class will be the product defined as $[\mathrm{A}][\mathrm{B}] = [\mathrm{A}_1\mathrm{B}_1]\otimes[\mathrm{A}_2\mathrm{B}_2]\otimes\cdots\otimes[\mathrm{A}_n\mathrm{B}_n]=[\mathrm{AB}]$, and this set will also be an Abelian group under multiplication, as it was the case for the $1$ qubit scenario.

Having defined the $n$-fold effective Pauli group, we now proceed, as in the previous case, by considering the vector space $(\mathbb{F}_2)^{2n}$ under the usual sum and product mod-2  operations. Any element $\bar{u}\in (\mathbb{F}_2)^{2n}$ will be written as $\bar{u}=(\bar{z}|\bar{x})$, where $\bar{z}=z_1\cdots z_n\in(\mathbb{F}_2)^n$ and $\bar{x}=x_1\cdots x_n\in(\mathbb{F}_2)^n$. Now, the symplectic product for $\bar{u}, \bar{v}\in (\mathbb{F}_2)^{2n}$ is the application $\odot: (\mathbb{F}_2)^{2n}\times (\mathbb{F}_2)^{2n}\rightarrow \mathbb{F}_2$ defined as
\begin{equation}\label{eq:symplect2n}
\bar{u}\odot \bar{v} = \left(\bar{z}\bar{x}'^{T} + \bar{z}'\bar{x}^{T}\right)\text{ mod 2} = \left(\sum_j u_j\odot v_j\right)\text{mod 2},
\end{equation}
where $\bar{u}=(\bar{z}|\bar{x})$, $\bar{v}=(\bar{z}'|\bar{x}')$, $u_j = (z_j|x_j)\in(\mathbb{F}_2)^2$, $v_j = (z_j'|x_j')\in(\mathbb{F}_2)^2$ and $T$ denotes the transpose. Notice that we are representing the strings as row vectors. Consequently, the symplectic inner product of the $2n$ length binary string will be the Boolean sum of the symplectic product of the $n$ $(\mathbb{F}_2)^2$ elements that form such a string.

The map $\Upsilon$ is now performed individually for each of the elements of the $n$-fold tensor product that form each of the elements of $\mathcal{P}_n$, namely, the map $\Upsilon: (\mathbb{F}_2)^{2n}\rightarrow \mathcal{P}_n$ is taken as $\Upsilon_{\bar{u}} = \Upsilon_{u_1}\otimes\Upsilon_{u_2}\otimes\cdots\otimes\Upsilon_{u_n}$. It is trivial to see that this map has been selected so that $\Upsilon_{(\bar{z}|\bar{x})}$ is equal to $\mathrm{Z}^{\bar{z}} \mathrm{X}^{\bar{x}} = \mathrm{Z}^{z_1}\mathrm{X}^{x_1}\otimes \mathrm{Z}^{z_2}\mathrm{X}^{x_2}\otimes\cdots\otimes \mathrm{Z}^{z_n}\mathrm{X}^{x_n}$ up to an overall phase.

The $n$-fold map, $\Upsilon$, verifies de following two properties \cite{catalytic}:
\begin{itemize}
\item The map $[\Upsilon]:(\mathbb{F}_2)^{2n}\rightarrow [\mathcal{G}_n]$ is an isomorphism as
\begin{equation}\label{eq:isomorphN}
[\Upsilon_{\bar{u}}][\Upsilon_{\bar{v}}]=[\Upsilon_{\bar{u}+\bar{v}}].
\end{equation}
\item The commutation relationships of the $n$-fold Pauli matrices are captured by the symplectic product as
\begin{equation}\label{eq:commuAsSympN}
\Upsilon_{\bar{u}} \Upsilon_{\bar{v}} = (-1)^{\bar{u}\odot \bar{v}}\Upsilon_{\bar{v}} \Upsilon_{\bar{u}}.
\end{equation}
\end{itemize}

Based on these results, the $n$-fold Pauli error operators of the Pauli channels can be represented by binary vectors of length $2n$. Similarly, since the generators of a stabilizer set $\mathcal{S}$ of a code are $n$-fold Pauli matrices, by using the $n$-fold map, $\Upsilon$, any stabilizer code can also be described by a binary matrix known as \textit{parity check matrix} (PCM), as in classical coding. More explicitly, the PCM of a stabilizer code is the $(n-k)\times 2n$ matrix:
\begin{equation}\label{eq:PCM}
H = \left(H_z  |  H_x\right),
\end{equation}
where each row of the matrix is obtained by applying the map $\Upsilon$ to each of the $(n-k)$ stabilizer generators. Moreover, since the symplectic product captures the commutation relationships that exist among the $n$-fold Pauli matrices, the PCM $H$ of a code can be used to obtain the error syndrome arising from the Pauli error operators of the channel. The syndrome $\bar{s}(\bar{\epsilon})$ of a particular error $\mathcal{E}$, with binary representation $\bar{\epsilon}$, is calculated as
\begin{equation}\label{eq:syndromeSymp}
\bar{s}(\bar{\epsilon}) = H \odot \bar{\epsilon},
\end{equation}
where the symplectic product between the matrix and the error vector is performed row-wise, that is to say, each of the components of the syndrome is obtained by computing the symplectic product of the associated row of the PCM and the error operator.
As a direct result of this, we will be able to evaluate the performance of the target stabilizer code by conducting Monte Carlo simulations as it is detailed in Appendix \ref{app:montecarlo}. An error correction round of the system can be broken down into the following steps:
\begin{enumerate}
\item Generate a binary error pattern, $\bar{\epsilon}$, of length $2n$, with a probability distribution derived from the particular Pauli channel under consideration and the isomorphism $\Upsilon$. 
\item Calculate the syndrome associated to such an error $\bar{s}(\bar{\epsilon})$ by using the symplectic product in \eqref{syndromeSymp}.
\item Run the syndrome decoder that has been implemented for the particular QEC stabilizer code under study. Although the syndrome decoder depends on the stabilizer code under consideration \cite{QRM,bicycle,qldpc15,jgf,QCC,QTC,
EAQTC,toric,QEClidar,EAQECC,EAQIRCC,twirl6,MemQTC,catalytic}, its decoding algorithm is always entirely classical. The decoding algorithms that will be used in this thesis are described in Appendix \ref{app:decoding}. The quantum operation implemented for final recovery depends on the result of the syndrome decoder.
\item Check if the estimated recovery operation in this error correction round has been successful in its attempt to revert the channel error. This depends on whether error degeneracy is taken into account when defining the ``success" of the decoding scheme \cite{degenPRL,hardness}. This happens because for QECCs, the physical error that is sometimes estimated does not always match the error that actually occurred during transmission, however, the corresponding estimated logical error does indeed match the logical error produced by the actual physical error. This implies that even if the physical error is not correctly estimated, quantum information will be successfully recovered.
\end{enumerate}

Multiple error correction rounds will be required to obtain performance metrics with sufficient accuracy. These metrics will quantify the error correcting capability of the code under consideration. The most common metrics used to evaluate the performance of quantum error correcting codes are:
\begin{itemize}
\item Quantum Word Error Rate: it is the probability that at least one qubit in the block is incorrectly decoded, i.e., a decoding failure is accounted for if just one of the operators of the error operators is incorrectly estimated.
\item QuBit Error Rate: it is the probability that an individual qubit is incorrectly decoded, i.e. in each transmission round, each of the operators of the error operator is individually considered.
\end{itemize}

As explained in point $4$, these operational figures will depend on whether the physical errors or their associated logical errors are considered. Both figures of merit are used in the literature when assessing the performance of stabilizer codes  \cite{QRM,bicycle,qldpc15,jgf,QCC,QTC,EAQTC,toric,
QEClidar,EAQECC,EAQIRCC,twirl6,MemQTC,catalytic}, and in both cases the WER and QBER obtained without considering degeneracy will be upper bounds for the WER and QBER that could be achieved if degeneracy were exploited. The minimum number of rounds required to obtain precise estimates of these metrics is given by the theory of Monte Carlo simulations as explained in Appendix \ref{app:montecarlo}.

\subsection{Example: [[5,1,3]] stabilizer code}\label{sub:examp}
In this section we present a simple example of an error correction round, to understand how classical simulation of QECCs works.

Consider the $[[5,1,3]]$ stabilizer code defined by the following stabilizer generators:
\begin{equation}\label{eq:513stabs}
\begin{array}{cccccc}
\mathrm{S}_1 = & \mathrm{Z} & \mathrm{Z} & \mathrm{Z} & \mathrm{Z} & \mathrm{I} \\
\mathrm{S}_2 = & \mathrm{Z} & \mathrm{X} & \mathrm{Y} & \mathrm{I} & \mathrm{Z} \\
\mathrm{S}_3 = & \mathrm{X} & \mathrm{X} & \mathrm{X} & \mathrm{X} & \mathrm{I} \\
\mathrm{S}_4 = & \mathrm{X} & \mathrm{Y} & \mathrm{Z} & \mathrm{I} & \mathrm{X}
\end{array}.
\end{equation}

Figure \ref{fig:encodercirc513} shows a possible encoding circuit for this quantum error correction code. The encoding unitary is constructed by using elements of the Clifford group \cite{OzolsClifford}. We will not discuss here how the unitaries are obtained from the stabilizer generators, but its derivation is based on a Gaussian-like elimination of the parity check matrix \cite{QSC}. Note that for the current discussion of how classical simulation of this code works, the encoding circuit is unnecessary. However, the encoder circuit should be considered if one wants to introduce gate errors in the simulations\footnote{Note that the errors introduced by the gates will be circuit-dependent, and so obtaining the encoding circuit with the lower error probability is important for obtaining the best possible QECC.}, which is not our case.

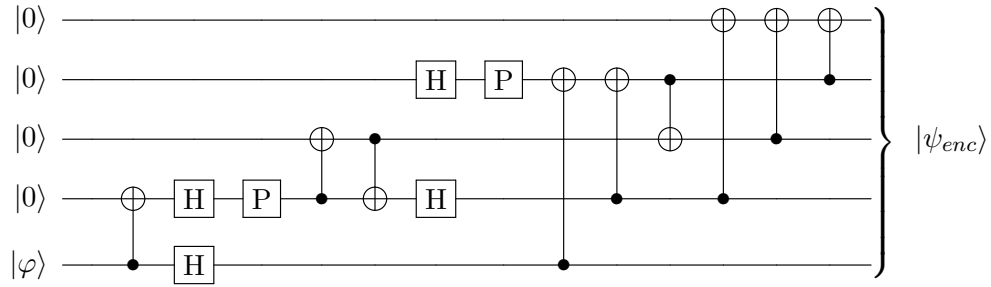
\begin{figure}[h]
\centering
\leavevmode
\Qcircuit @C=1em @R=1em {
& \lstick{\ket{0}} & \qw & \qw & \qw & \qw & \qw & \qw & \qw & \qw & \qw & \qw & \qw & \targ & \targ & \targ & \qw \\
& \lstick{\ket{0}} & \qw & \qw & \qw & \qw & \qw & \qw & \gate{\mathrm{H}} & \gate{\mathrm{P}} & \targ & \targ & \ctrl{1} & \qw & \qw & \ctrl{-1} & \qw \\
& \lstick{\ket{0}} & \qw & \qw & \qw & \qw & \targ & \ctrl{1} & \qw & \qw & \qw & \qw & \targ & \qw  & \ctrl{-2} & \qw & \qw \\
& \lstick{\ket{0}} & \qw & \targ & \gate{\mathrm{H}} & \gate{\mathrm{P}} & \ctrl{-1} & \targ & \gate{\mathrm{H}} & \qw & \qw & \ctrl{-2} & \qw & \ctrl{-3} & \qw & \qw & \qw \\
& \lstick{\ket{\varphi}} & \qw & \ctrl{-1} & \gate{\mathrm{H}} & \qw & \qw & \qw & \qw & \qw & \ctrl{-3} & \qw & \qw & \qw & \qw & \qw & \qw & \rstick{\raisebox{8.9em}{$\ket{\psi_{enc}}$\ }}
\gategroup{1}{17}{5}{17}{.8em}{\}}
}
\caption{Quantum circuit for encoding the $[[5,1,3]]$ stabilizer code. $\mathrm{H}$ and $\mathrm{P}$ stand for the Hadamard and Phase gates, respectively.}
\label{fig:encodercirc513}
\end{figure}

By using the map $\Upsilon$ defined in section \ref{sub:paulibinary}, the parity check matrix of this stabilizer code is
\begin{equation}\label{eq:513PCM}
H = \left(\begin{array}{ccccc|ccccc}
1 & 1 & 1 & 1 & 0 & 0 & 0 & 0 & 0 & 0   \\
1 & 0 & 1 & 0 & 1 & 0 & 1 & 1 & 0 & 0 \\
0 & 0 & 0 & 0 & 0 & 1 & 1 & 1 & 1 & 0  \\
0 & 1 & 1 & 0 & 0 & 1 & 1 & 0 & 0 & 1
\end{array}\right).
\end{equation}

The next step in the simulation is to generate binary error patterns and then obtain their syndromes so that error estimations can be obtained from these syndromes. The probability distributions used to simulate the channel are the distributions for the twirled approximated channels presented in sections \ref{sec:approximatedChannels} and \ref{sec:memory}, where the $n$-fold Pauli operators are mapped to binary vectors via the map $\Upsilon$. For our particular example we consider a memoryless depolarizing channel. In this example, a decoding round will be performed for each of the following error patterns:
\begin{itemize}
\item $\bar{\epsilon}_1 = \begin{pmatrix}
0 & 0 & 0 & 0 & 0 & 0 & 0 & 1 & 0 & 0
\end{pmatrix}$ which maps to the physical error $\mathcal{E}_1 = \mathrm{I}\otimes\mathrm{I}\otimes\mathrm{X}\otimes\mathrm{I}\otimes\mathrm{I}$.
\item $\bar{\epsilon}_2 = \begin{pmatrix}
1 & 0 & 0 & 0 & 1 & 1 & 1 & 0 & 0 & 0
\end{pmatrix}$ which maps to the physical error $\mathcal{E}_2 = \mathrm{Y}\otimes\mathrm{X}\otimes\mathrm{I}\otimes\mathrm{I}\otimes\mathrm{Z}$.
\item $\bar{\epsilon}_3 = \begin{pmatrix}
1 & 0 & 1 & 0 & 1 & 0 & 1 & 1 & 0 & 0
\end{pmatrix}$ which maps to the physical error $\mathcal{E}_3 = \mathrm{Z}\otimes\mathrm{X}\otimes\mathrm{Y}\otimes\mathrm{I}\otimes\mathrm{Z}$.
\end{itemize}

The corresponding error syndromes $\bar{s}(\bar{\epsilon}_j)$ are calculated from equation \eqref{syndromeSymp}, and are given by:
\begin{itemize}
\item $\bar{s}(\bar{\epsilon}_1) = H\odot \bar{\epsilon}_1 = \begin{pmatrix}
1 & 1 & 0 & 1
\end{pmatrix}$.
\item $\bar{s}(\bar{\epsilon}_2) = H\odot \bar{\epsilon}_2 = \begin{pmatrix}
0 & 1 & 1 & 1
\end{pmatrix}$.
\item $\bar{s}(\bar{\epsilon}_3) = H\odot \bar{\epsilon}_3 = \begin{pmatrix}
0 & 0 & 0 & 0
\end{pmatrix}$.
\end{itemize}

Once the error syndrome is found, it has to be fed to the decoding algorithm so that the best recovery operator may be found. Decoding stabilizer codes is substantially nuanced given that the optimal task differs from the classical decoding version due to degeneracy \cite{hardness}. In general, quantum decoding strongly depends on the code and the channel model under consideration. Nonetheless, QMLD is still a valid decoding algorithm, even though it is not optimal. Here, we will ignore degeneracy, and consequently, the QMLD syndrome decoder reduces to the lookup table shown in Table \ref{tab:513lookup}.

\begin{table*}[!h]
\centering
\begin{tabular}{|c|c|c|c|c|}
\hline
$\bar{s}$     & $\begin{pmatrix}0&0&0&0\end{pmatrix}$ & $\begin{pmatrix}0&0&0&1\end{pmatrix}$     & $\begin{pmatrix}0&0&1&0\end{pmatrix}$                      & $\begin{pmatrix}0&0&1&1\end{pmatrix}$     \\ \hline
$\mathcal{R}$ & $\mathrm{I}^{\otimes 5}$              & $\mathrm{I}^{\otimes 4}\otimes\mathrm{Z}$ & $\mathrm{I}^{\otimes 3}\otimes\mathrm{Z}\otimes\mathrm{I}$ & $\mathrm{Z}\otimes\mathrm{I}^{\otimes 4}$  \\ \hline
\end{tabular}
\begin{tabular}{|c|c|c|c|c|}
\hline
$\bar{s}$     & $\begin{pmatrix}0&1&0&0\end{pmatrix}$     & $\begin{pmatrix}0&1&0&1\end{pmatrix}$     & $\begin{pmatrix}0&1&1&0\end{pmatrix}$                                                                      & $\begin{pmatrix}0&1&1&1\end{pmatrix}$  \\ \hline
$\mathcal{R}$ & $\mathrm{I}^{\otimes 4}\otimes\mathrm{X}$ & $\mathrm{I}^{\otimes 4}\otimes\mathrm{Y}$ & $\mathrm{I}^{\otimes 2}\otimes\mathrm{Z}\otimes\mathrm{I}^{\otimes 2}$ & $\mathrm{I}\otimes\mathrm{Z}\otimes\mathrm{I}^{\otimes 3}$ \\ \hline
\end{tabular}
\begin{tabular}{|c|c|c|c|c|}
\hline
$\bar{s}$                           & $\begin{pmatrix}1&0&0&0\end{pmatrix}$                      & $\begin{pmatrix}1&0&0&1\end{pmatrix}$                      & $\begin{pmatrix}1&0&1&0\end{pmatrix}$                      & $\begin{pmatrix}1&0&1&1\end{pmatrix}$                                   \\ \hline
$\mathcal{R}$  & $\mathrm{I}^{\otimes 3}\otimes\mathrm{X}\otimes\mathrm{I}$ & $\mathrm{I}\otimes\mathrm{X}\otimes\mathrm{I}^{\otimes 3}$ & $\mathrm{I}^{\otimes 3}\otimes\mathrm{Y}\otimes\mathrm{I}$ & $\mathrm{I}^{\otimes 2}\otimes\mathrm{Y}\otimes\mathrm{I}^{\otimes 2}$  \\ \hline
\end{tabular}
\begin{tabular}{|c|c|c|c|c|}
\hline
$\bar{s}$& $\begin{pmatrix}1&1&0&0\end{pmatrix}$            & $\begin{pmatrix}1&1&0&1\end{pmatrix}$                                  & $\begin{pmatrix}1&1&1&0\end{pmatrix}$                      & $\begin{pmatrix}1&1&1&1\end{pmatrix}$  \\ \hline
$\mathcal{R}$ & $\mathrm{X}\otimes\mathrm{I}^{\otimes 4}$ & $\mathrm{I}^{\otimes 2}\otimes\mathrm{X}\otimes\mathrm{I}^{\otimes 2}$ & $\mathrm{I}\otimes\mathrm{Y}\otimes\mathrm{I}^{\otimes 3}$ & $\mathrm{Y}\otimes\mathrm{I}^{\otimes 4}$ \\
\hline
\end{tabular}
\caption{Lookup table used to perform syndrome decoding of the stabilizer code considered in the example. Since the channel is assumed to be a memoryless depolarizing channel and degeneracy is ignored, the most probable error for each syndrome will be the one with the lowest weight that produces said syndrome.}
\label{tab:513lookup}
\end{table*}

Note that the recovery operators (called representatives) in the lookup Table (table \ref{tab:513lookup}) are themselves Pauli operators. This is because $n$-fold Pauli operators are Hermitian and unitary matrices, so that $\mathrm{A}^2 = \mathrm{I},\forall \mathrm{A}\in\mathcal{P}_n$. Therefore, the $n$-Pauli representative operator for a particular syndrome $\bar{s}$ in the lookup table of the QMLD decoder is a Pauli error operator of syndrome $\bar{s}$ with the highest channel occurrence probability.  After knowing the syndrome, the decoder multiplies the corresponding representative operator by the noisy quantum information $|\bar{\psi}\rangle_\mathcal{N}$, so that the action of the channel may be reversed \footnote{Note that Pauli matrix product is the modulo $2$ sum for the binary representation under the map $\Upsilon$.}. For the current example, the decoder will render the following output states \ref{tab:513lookup}:
\begin{itemize}
\item $\bar{s}(\bar{\epsilon}_1) = \begin{pmatrix}
1 & 1 & 0 & 1
\end{pmatrix}\rightarrow \mathcal{R} = \mathrm{I}^{\otimes 2}\otimes\mathrm{X}\otimes\mathrm{I}^{\otimes 2}$. The resulting state after recovery will be $(\mathrm{I}^{\otimes 2}\otimes\mathrm{X}\otimes\mathrm{I}^{\otimes 2})|\bar{\psi}\rangle_\mathcal{N}=(\mathrm{I}^{\otimes 2}\otimes\mathrm{X}\otimes\mathrm{I}^{\otimes 2})\mathcal{E}_1|\bar{\psi}\rangle=|\bar{\psi}\rangle$. As a result, the code will succeed in correcting the channel error $\mathcal{E}_1$.
\item$\bar{s}(\bar{\epsilon}_2) = \begin{pmatrix}
0 & 1 & 1 & 1
\end{pmatrix}\rightarrow \mathcal{R} = \mathrm{I}\otimes\mathrm{Z}\otimes\mathrm{I}^{\otimes 3}$. The resulting state after recovery will be $(\mathrm{I}\otimes\mathrm{Z}\otimes\mathrm{I}^{\otimes 3})|\bar{\psi}\rangle_\mathcal{N}=(\mathrm{I}\otimes\mathrm{Z}\otimes\mathrm{I}^{\otimes 3})\mathcal{E}_2|\bar{\psi}\rangle=(\mathrm{I}\otimes\mathrm{Z}\otimes\mathrm{I}^{\otimes 3})(\mathrm{Y}\otimes\mathrm{X}\otimes\mathrm{I}^{\otimes 2}\otimes\mathrm{Z})|\bar{\psi}\rangle=(\mathrm{Y}\otimes\mathrm{Y}\otimes\mathrm{I}\otimes\mathrm{I}\otimes\mathrm{Z})|\bar{\psi}\rangle$. Therefore, for the channel error $\mathcal{E}_2$, decoding is unsuccessful. Note, that for this decoding round, a word error will be produced with $3$ qubits out of 5 in error.
\item $\bar{s}(\bar{\epsilon}_3) = \begin{pmatrix}
0 & 0 & 0 & 0
\end{pmatrix}\rightarrow \mathcal{R} = \mathrm{I}^{\otimes 5}$. The state after recovery will be $(\mathrm{I}^{\otimes 5})|\bar{\psi}\rangle_\mathcal{N}=(\mathrm{I}^{\otimes 5})\mathcal{E}_3|\bar{\psi}\rangle=(\mathrm{Z}\otimes\mathrm{X}\otimes\mathrm{Y}\otimes\mathrm{I}\otimes\mathrm{Z})|\bar{\psi}\rangle$. At first glance, it looks like the decoder has failed to recover the correct quantum information state, since $(\mathrm{Z}\otimes\mathrm{X}\otimes\mathrm{Y}\otimes\mathrm{I}\otimes\mathrm{Z})\neq \mathrm{I}^{\bigotimes 5}$. This would produce a word error, with $4$ qubits in error. However, in turns out that $(\mathrm{Z}\otimes\mathrm{X}\otimes\mathrm{Y}\otimes\mathrm{I}\otimes\mathrm{Z})$ lays in the codespace since is equal to the stabilizer generator $\mathrm{S}_2$. Therefore, no error will be produced since $(\mathrm{Z}\otimes\mathrm{X}\otimes\mathrm{Y}\otimes\mathrm{I}\otimes\mathrm{Z})|\bar{\psi}\rangle=|\bar{\psi}\rangle$. This last error $\mathcal{E}_3$ belongs to the class of errors known as degenerate errors. Hence, if degeneracy is being considered for the computation of the performance metrics, as it should for optimal decoding, this third decoding round will be considered to be successful.
\end{itemize}

Note that by randomly generating error operators according to the probability distributions of the approximated twirl channels and then performing the above decoding operations, we will be able to simulate the behavior of stabilizer codes under channels that realistically model the decoherence processes.

 \label{chapter2}  
\clearemptydoublepage
\part{Quantum Information Theory: Decoherence modelling and asymptotical limits} \label{part1}
\clearemptydoublepage
Decoherence effects experienced by the qubits in a quantum processor have been discussed in Chapter \ref{ch:preliminary}. As stated there, decoherence is often characterized by  the relaxation time ($T_1$) and the dephasing time ($T_2$) parameters. At the time of writing this thesis, the proposed quantum channel models in the literature assume that these parameters are fixed and invariant. This implies that the noise dynamics experienced by the qubits in a quantum device are identical for each quantum information processing task, independently of when the task is performed. As a consequence, the state-of-the-art quantum error correction codes have been constructed and evaluated under static noise channels \cite{QRM,bicycle,qldpc15,jgf,QCC,QTC,EAQTC,toric,
QEClidar,EAQECC,EAQIRCC,twirl6,MemQTC,catalytic}. However, recent experimental studies have shown that the decoherence parameters exhibit a time-varying behaviour \cite{decoherenceBenchmarking,klimov,fluctAPS,fluctApp,temperature,fluctGoogle}. These studies showed that $T_1$ and $T_2$ can experience time variations of up to $50\%$ of their mean value in the sample data, suggesting that qubit-based QECCs implemented in superconducting circuits will not perform, in average, as predicted when static channels models are considered. A possible solution to cope with channel fluctuations is to consider a worst parameters scenario. In this way one can assure that the QECCs will operate reliably for any realization. However, the resource consumption of QECCs designed in this manner would be far from optimal, since a higher number of physical qubits (due to the lower rate or longer blocklength needed to correct the worst case scenario) would be required.
Furthermore, the quantum capacity for the quantum channels discussed in Chapter \ref{ch:preliminary} were valid for the case of static quantum channels. Consequently, including time fluctuations into the framework of quantum channels implies that the asymptotically achievable limits for quantum error correction must be reinterpreted.

Therefore, in this part of the thesis, we propose time-varying quantum channels (TVQCs) for superconducting qubits, $\mathcal{N}(\rho,\omega,t)$, that correctly model the experimentally observed time variations in the parameters. We also focus on the important subject of how to characterize channel capacity under time variant conditions. To that end, we propose the quantum outage probability as the asymptotically achievable error rate for QECCs when operating over the proposed TVQCs. Finally, it is important to study how the error correction codes perform when they operate over the proposed TVQCs. This is why in this part we provide a qualitative analysis regarding the implications of such decoherence model on QECCs by simulating Kitaev toric codes and QTCs.

\chapter{Time-varying quantum channels} \label{cp3}

In this chapter we propose a time-varying quantum channel (TVQC) model that includes the experimentally observed fluctuations of the decoherence paramters of qubits. We present the experimental studies, based on superconducting qubits, where the fluctuations of the relaxation and dephasing time have been observed. We study the stochastic processes that describe the time dynamics of those parameters that define how the qubits suffer from errors. Based on such analysis, we provide the definition of the time-varying quantum channel model, $\mathcal{N}(\rho,\omega,t)$. The proposed model is characterized by the fact that the Kraus operators describing its dynamics are random matrices whose behaviour is governed by the random processes $T_1$ and $T_2$. We use the diamond norm \cite{diamondNat,FanoDiamond} distance between quantum channels $||\mathcal{N}_1 - \mathcal{N}_2||_\diamond$ to show that neglecting the fluctuating nature of $T_1$ and $T_2$ may result in an unrealistic model for quantum noise. By analyzing the statistical properties of the decoherence parameters, we assess when the TV noise channel model will substantially deviate from the static channel assumption.

\section{Experimental evidence}\label{sec:experFluct}
Here, we present some of the recent experiments found in the literature regarding decoherence parameter fluctuation in superconducting qubits. These experiments were the motivation for our time-varying quantum channel proposal.

\subsection*{Burnett et al.}
In \cite{decoherenceBenchmarking}, Burnett et al. studied the fluctuations of the decoherence parameters $T_1$ and $T_2$ for superconducting qubits. The authors mainly focused on the relaxation time. For doing so, they measured the value of these parameters a large number of counts ($N>1000$) into a calibration cycle of the system, which is usually in the order of several hours. In each calibration cycle, the superconducting processor is restarted and cooled down to the desired temperature. For their measurements, the authors considered two different qubits in the quantum processor.

\begin{figure}[!h]
\centering
\includegraphics[scale=1]{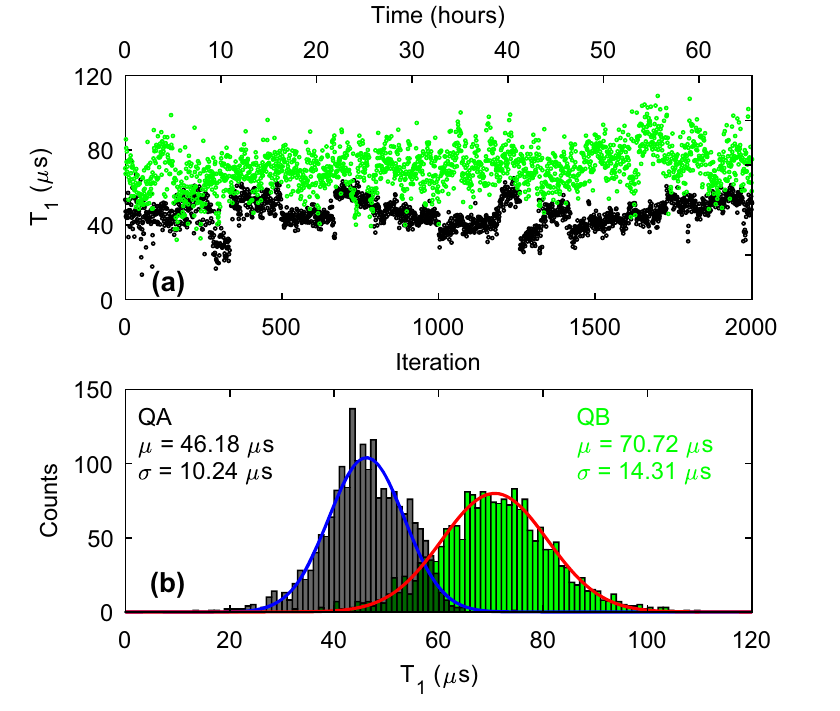}
\caption{(a) Measurements of $T_1$ for both qubits considered in \cite{decoherenceBenchmarking}, labelled as A (black) and B (green) (b) Fit of the data to Gaussian distributions. This data corresponds to cooldown $6$ of their dataset. Image from \cite{decoherenceBenchmarking}.}
\label{fig:bylanderT1fluct}
\end{figure}

In all the cooldowns they performed, substantial variations of the relaxation time were observed. The parameter showed variations of up to $50\%$ of its mean value in the sample data for every scenario they consider and for both of the probed qubits.

Figure \ref{fig:bylanderT1fluct} shows the data for the fluctuations of the relaxation time of one of the experiments they conducted for both of their qubits. It can be seen that the relaxation time fluctuates in a significant manner. The authors fit the data to Gaussian distributions. The stochastic process that describes the dynamics of the relaxation time was modelled as the sum of two Lorentzian processes with the contribution of a white noise process. Burnett et al. conducted Allan deviation and Welch-method spectral analyses in order to obtain such conclusions. We will further discuss this in the following sections.

The authors of \cite{decoherenceBenchmarking} made additional measurements to characterize the fluctuations of the dephasing time $T_2$. The qubit pure dephasing was studied by observing the fluctuations of the qubit frequency ($f_{01}$). The slow integrated fluctuations of the qubit frequency give rise to the pure dephasing time \cite{bylander}, $T_\phi$. It was concluded that the pure dephasing time also fluctuates substantially. 

Furthermore, the authors made Ramsey measurements in order to see the fluctuations of the dephasing time, $T_2$. Note that from the relation (refer to \eqref{reldeph})
\[\frac{1}{T_2}=\frac{1}{2T_1}+\frac{1}{T_{\phi}},\]
the pure dephasing parameter $T_{\phi}$ can also be obtained \cite{bylander}. 

\begin{figure}[!h]
\centering
\includegraphics[scale=1]{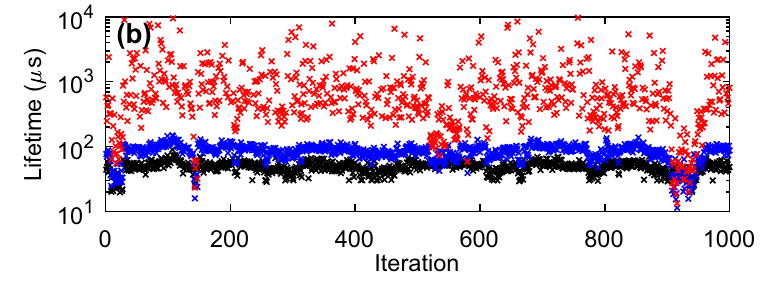}
\caption{Measurements of $T_1$ (black), $T_2$ (blue) and $T_\phi$ (red) for qubit A in cooldown 3. Image from \cite{decoherenceBenchmarking}.}
\label{fig:bylanderTphifluct}
\end{figure}

Figure \ref{fig:bylanderTphifluct} shows the fluctuation of the three parameters through time. It can be seen that the pure dephasing time also fluctuates substantially. The correlation between $T_1$ and $T_2$ is also observable due to the similar trend both have through the time. It is important to state that the authors in \cite{decoherenceBenchmarking} claim that their qubits are in the Ramsey limit, that is, $T_2\approx 2T_1$, so that $T_\phi >> 2T_1$.

\subsection*{Klimov et al.}
In \cite{klimov}, Klimov et al. studied the fluctuations of the relaxation time in superconducting qubits. Dephasing time $T_2$ fluctuation was not measured. The authors considered different qubit frequencies for the $T_1$ measurements, that is, they studied how the relaxation rate changes through time depending on which frequency the qubit is tuned at. They conducted experiments that last several hours by measuring the relaxation time every few minutes.

\begin{figure}[!h]
\centering
\includegraphics[scale=1]{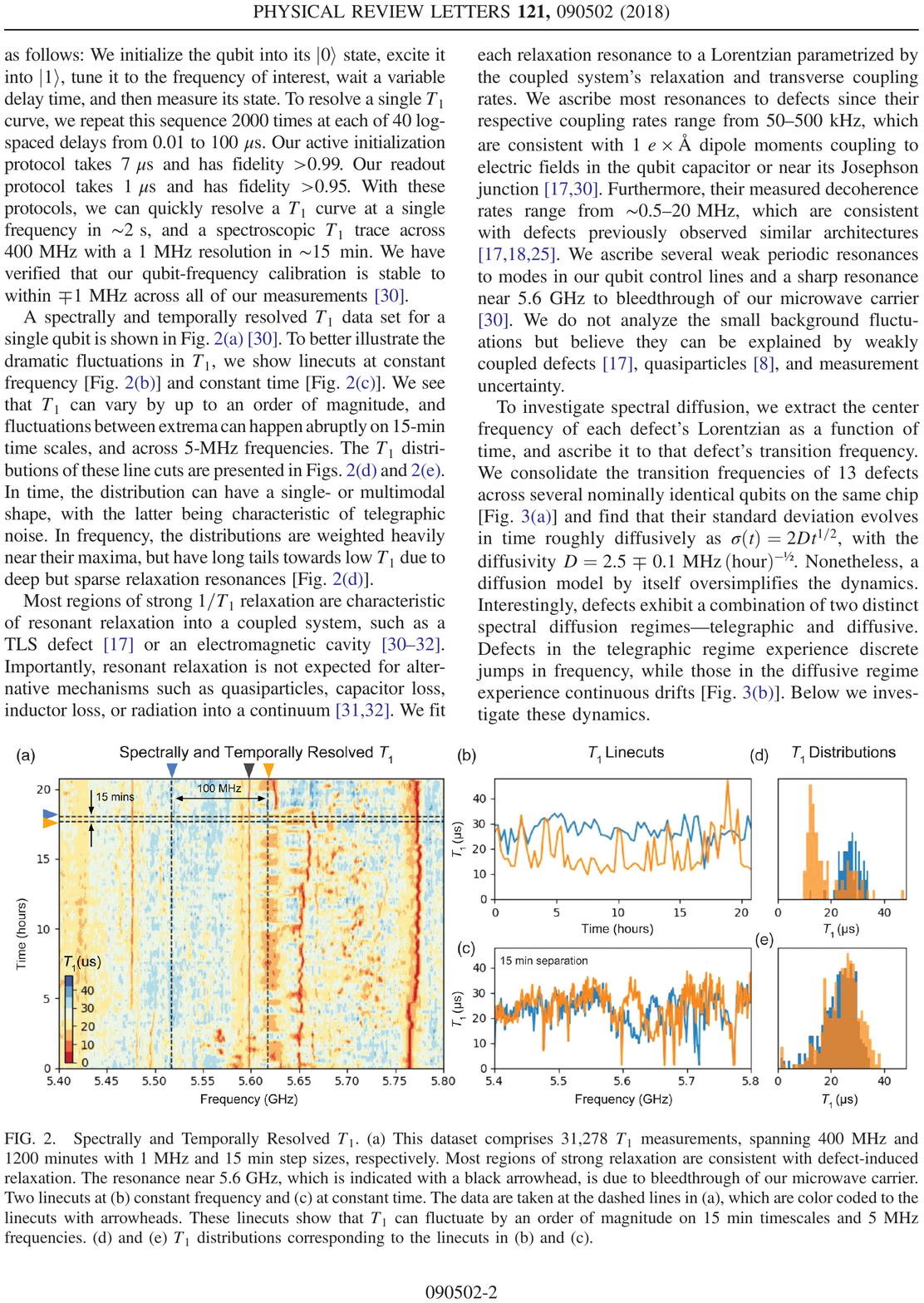}
\caption{Measurements of $T_1$ at different qubit frequencies $f_{01}$ with $100$ MHz of difference. Image from \cite{klimov}.}
\label{fig:klimovT1fluct}
\end{figure}

Figure \ref{fig:klimovT1fluct} shows measurements of $T_1$ at two different qubit frequencies that are separated by $100$ MHz. The figure clearly shows that the relaxation time of the superconducting qubit fluctuates substantially through time for both of the frequencies. The authors of \cite{klimov} concluded that $T_1$ can vary by up to an order of magnitude. Furthermore, they discuss that the observed fluctuations present an obstacle for quantum computing since the relaxation time limits the quantum gate fidelity (in general both relaxation and dephasing limit gate fidelity \cite{limitGate}) in quantum systems.

\subsection*{Schl\"or et al.}

In \cite{fluctAPS}, Schl\"or et al. thoroughly studied the fluctuations of the relaxation time and the dephasing time for superconducting transmon qubits. To that end, they measured the decoherence parameters of one superconducting qubit for several hours with a time resolution of $10$ seconds. The authors also measured the qubit frequency shift, $\Delta\omega_q$, which is also referred as the Ramsey detuning. As stated for the Burnett et al. experiments, the pure dephasing time is related to the frequency shifts of the qubits.

\begin{figure}[!h]
\centering
\includegraphics[scale=1]{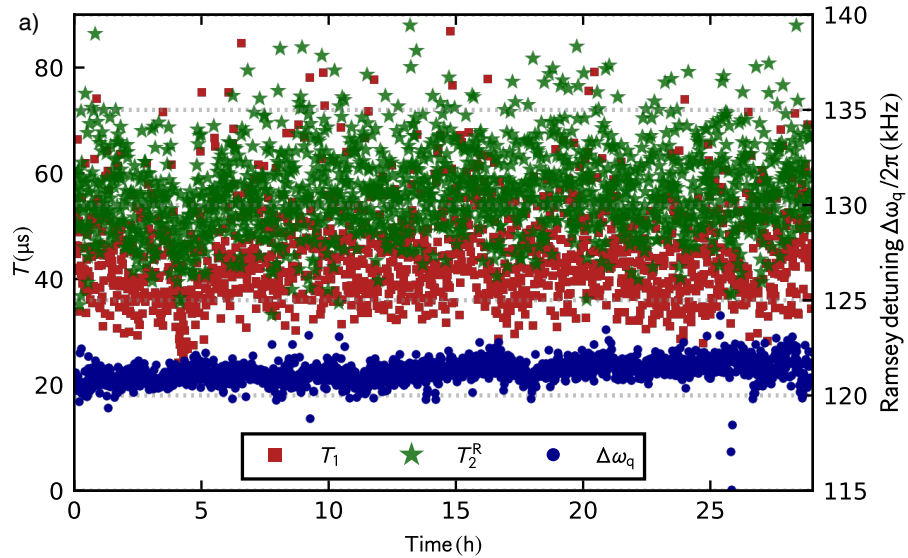}
\caption{Measurements of $T_1$ (red), $T_2$ (green) and $\Delta \omega_q$ (blue). The axis on the left refers to the times and the one in the right refers to the frequency. Image from \cite{fluctAPS}.}
\label{fig:schlorFluct}
\end{figure}

The results of their experiments are shown in Figure \ref{fig:schlorFluct}. It is again observed that the parameters of the qubit vary substantially through time, posing a problem for quantum computing. The authors concluded that the decoherence parameters of their superconducting transmon qubits vary up to an order of magnitude.

It should be mentioned that, in contrast to the qubits tested in \cite{decoherenceBenchmarking}, the qubits of Schl\"or et al. did not saturate the Ramsey limit. Thus, they studied the mechanism of pure dephasing for non-$T_1$-limited qubits and modelled the fluctuations of the qubit frequency as a random process based on the combination of a Lorentzian stochastic process and a $1/f^{1.1}$ process.

More experiments regarding parameter stability have been done in the literature. In \cite{fluctApp}, Stehli et al. measured the decoherence times of four superconducting qubits approximately every minute for several hours and observed substantial fluctuations of those parameters. In \cite{temperature}, Wang et al. studied the dependency of the parameters with the temperature of their superconducting qubits and discussed that their values also vary in a significant manner when the chips go to temperatures that cannot be considered negligible. Finally, in \cite{fluctGoogle}, Carrol et al. measured the relaxation time of a 20 qubit superconducting processor from IBM each day for several months and concluded that the relaxation time significantly fluctuates in a daily basis. It should be mentioned that in contrast to \cite{decoherenceBenchmarking,klimov,fluctAPS,fluctApp} where fluctuations that occur in a calibration cycle were studied, this last study is more related to qubit aging and variations from calibration to calibration.

Note that all the factors and studies that consider parameter stability are relevant for the construction of quantum computers. However, we will focus on modelling the fluctuations of the qubits in a calibration cycle of the hardware. Thus, the experimental studies on which we have based the derivation of our time-varying quantum channel model are \cite{decoherenceBenchmarking,klimov,fluctAPS,fluctApp}.

\section{Modelling $T_1$ and $T_2$}\label{sec:modelingt1t2}

In this section we study the behaviour of the stochastic processes that describe the fluctuations of the decoherence parameters of superconducting qubits.

\subsection*{Relaxation time}
In \cite{decoherenceBenchmarking}, the authors model $\{T_1(\omega,t)-\mu_{T_1}\}$ as the sum of three independent zero mean stationary Gaussian random processes, two of which have a Lorentzian power spectral density\footnote{Power Spectral Density, (PSD, the Fourier transform of the covariance function), $K(\tau)=\mathrm{E}[(T(\omega,t+\Delta t)-\mu_{T})(T(\omega,t)-\mu_{T})]$, where $\mathrm{E}$ denotes the expected value.} whereas the third process has a flat PSD of level $h_0$ (i.e., white noise). That is,
\begin{equation}\label{eq:t1fluct}
T_1(\omega,t) = \mu_{T_1}+\sum_{i=1}^2\text{Lor}_i(\omega,t) + N(\omega,t),
\end{equation}
 where $N(\omega,t)$ denotes the white Gaussian process and $\text{Lor}_i(\omega,t)$, $i=1,2$ represents the two Gaussian random processes with Lorentzian PSD. The authors in \cite{decoherenceBenchmarking} showed that modelling $T_1$ this way with fitted parameters exactly matched the experimental data.

The noise PSD of a Lorentzian process is given by
\begin{equation}\label{eq:lorentzianProfile}
S_i(f) = \frac{4(A^i)^2\tau^i_0}{1+(2\pi f\tau^i_0)^2}= 4(A^i)^2\tau^i_0 \left\lvert\frac{1}{1 + j2\pi f\tau^i_0} \right\rvert^2,
\end{equation}
where $A^i$ is the Lorentzian noise amplitude and $\tau^i_0$ is the characteristic timescale, $i=1,2$.

From the above expression, a Lorentzian gaussian random process can be simulated by filtering a white gaussian noise process \cite{Lorentzcomp}, with PSD level $4(A^i)^2\tau^i_0$, using a filter with a frequency transfer function $F = 1/(1 + j2\pi f\tau^i_0)$.
Therefore, since the Lorentzian random processes $\text{Lor}_i(\omega,t)$, $i=1,2$ are independent, their to the simulation of $T_1$ is obtained by adding the two filtered processes. Finally, by adding a white noise process of power $h_0$, one can simulate the whole $T_1$ process.

\begin{table*}[h!]
\centering
\resizebox{\textwidth}{!}{
\begin{tabular}{|cccccccc|}
\hline
\multicolumn{1}{|c}{Name} & \multicolumn{1}{c}{$\mu_{T_1}$ ($\mu$s)} & \multicolumn{1}{c}{$\sigma_{T_1}$ ($\mu$s)} & \multicolumn{1}{c}{$h_0$ ($\mu \mathrm{s}^2  \mathrm{Hz}^{-1}$)} & \multicolumn{1}{c}{$A^1$ ($\mu$s)} & \multicolumn{1}{c}{$1/\tau_0^1$ ($ \mu$Hz)} & \multicolumn{1}{c}{$A^2$ ($\mu$s)} & \multicolumn{1}{c|}{$1/\tau_0^2$ ($\mu$Hz)} \\ \hline
QA\_C5 & $44.49$ & $11.7$ & $ 2\times 10^{-3}$ & $5.2$ & $142.9$ & $2.6$ & $83.3$\\
QB\_C5 & $81.63$ & $17.01$ & $ 1.4\times 10^{-2}$ & $3.2$ & $1000$ & $6.6$ & $90.9$\\
QA\_C6                     & $46.64$                                    & $10.24$                                       & $1.2\times10^{-3}$                          & $4.5$                                & $333.3$                                       & $1.8$                                & $71.4$                                       \\
       QB\_C6                    &                $71.22$                            &                                              $14.31$ &                $5.7\times 10^{-3}$                             &                                     $4.2$  &                      $1111.1$                         &                                     $2.2$ &                        $76.9$
 \\\hline
\end{tabular}}
\caption{Parameters to model $T_1$ fluctuations in the superconducting qubits of \cite{decoherenceBenchmarking}. $\mu_{T_1}$ refers to the mean relaxation time and $\sigma_{T_1}$ to its standard deviation. The nomenclature for each case is the same as in \cite{decoherenceBenchmarking} and is interpreted as QX\_CY, being X the superconducting qubit and Y the cooldown (in \cite{decoherenceBenchmarking} several cooldowns were performed and data about $T_1$ was measured in each of them) of the quantum chip.}
\label{tab:variationParameters}
\end{table*}

Table \ref{tab:variationParameters} shows the values obtained for the parameters of the noise processes in \eqref{t1fluct} from experimental data measured for the superconducting qubits used in \cite{decoherenceBenchmarking}. Note from this table that the ratio between the PSD level at $f=0$ of the Lorentzian processes and the level of the white noise, $4(A^i)^2 \tau^i/h_0$,  is of the order of $10^{8}$. Therefore, the contribution of the white noise is negligible when compared to the contribution of the $\text{Lor}_i(\omega,t)$ processes. This means that the PSD bandwidth of the process $T_1$ is approximately given by $\mathrm{BW}=\max_{i=1,2}\{1/\tau_0^i\}$. This also implies that its covariance function $K^1(\Delta t)$ is approximately constant for $|\Delta t|<<T_\mathrm{c} = \frac{1}{\mathrm{BW}}$, with $\Delta t$ in the order of microseconds and $T_\mathrm{c}$ in the order of minutes\footnote{As is conventional in the context of fading channels \cite{fading}, we refer to $T_c$ as coherence time.}. Taking into account that the processing time of current state-of-the-art quantum algorithms\footnote{We refer to the simple NISQ algorithms that may be implemented in run-of-the-mill hardware or to a single round of an error correcting code. The implementation of algorithms such as the Shor algorithm would take longer processing times. Note that without error correction, algorithms with processing times higher than the decoherence times of the qubits will not work.}, $t_{\mathrm{algo}}$, is around a few microseconds, it is reasonable to assume that during the execution of a quantum algorithm, the realization of the stochastic process $T_1$, remains constant. In other words, $T_1(\omega,t)$ can be modelled as a random variable $T_1(\omega)=T_1(\omega,t)|_{t=0}$ where $t=0$ has been chosen without any loss of generality since the random process $T_1$ is stationary. Each realization of such random variable will remain constant for $t\in[0,t_{\mathrm{algo}}],t_{\mathrm{algo}} << T_\mathrm{c}$. Specifically, $T_1(\omega)$ can be considered to be a truncated Gaussian random variable (the truncation is necessary as negative $T_1$ values do not make physical sense) with probability density function $T_1(\omega)\sim \mathcal{GN}_{[0,\infty]}(\mu_{T_1},\sigma^2_{T_1})$, where the variance is obtained, based on the parameters of table \ref{tab:variationParameters}, by integrating the PSD of \eqref{t1fluct} in the frequency band $[-\mathrm{BW,BW}]$. We use $\mathcal{GN}_{[a,b]}(\mu,\sigma^2)$ to denote a truncated normal random variable with mean $\mu$ and variance $\sigma^2$, truncated in the time interval $[a,b]$.

\subsection*{Dephasing time}
Recall that the qubits studied in \cite{decoherenceBenchmarking} were showed to be $T_1$-limited ($T_2\approx 2T_1$). This fact implies that the pure dephasing noise will be negligible since from equation \eqref{reldeph}, $T_2\approx 2T_1 \rightarrow 1/T_\phi \approx 0$. Therefore, for $T_1$-limited superconducting qubits, neglecting the contribution of $T_\phi$ is a reasonable assumption and Burnett et al. did not discuss any further the nature of the fluctuations of $T_\phi$.

However, most of the state-of-the-art quantum processors cannot achieve the $T_2\approx 2T_1$ limit. In these cases, the existence of pure dephasing must be taken into account. To that end, the authors of \cite{SchlorPhD,fluctAPS} studied the mechanism of pure dephasing for non-$T_1$-limited and modelled the fluctuations of the qubit frequency as a random process based on the combination of a Lorentzian stochastic process and a $1/f^{1.1}$ process\footnote{This refers to a stochastic process showing a $S(f)\propto 1/f^{1.1}$ PSD.}. It is easy to find that the coherence time of the this stochastic processes is in the order of magnitude of minutes \cite{fluctAPS} ($\mathrm{BW}\approx 1$ mHz set by the Lorentzian process \cite{fluctAPS}). Therefore, following the same reasoning used for the relaxation time fluctuations, and taking into account that pure dephasing time arises from the qubit frequency process, it is reasonable to model $T_\phi$ as a random variable $T_\phi(\omega)=T_\phi(\omega,t)\rvert_{t=0}$ whose realizations will remain constant for $t\in[0,t_{\mathrm{algo}}],t_{\mathrm{algo}}<<T_\mathrm{c}$. Moreover, $T_\phi(\omega)$ can be considered to be a truncated Gaussian random variable with probability density function $T_\phi(\omega)\sim \mathcal{GN}_{[0,\infty]}(\mu_{T_\phi},\sigma^2_{T_\phi})$. Note that by modelling $T_\phi(\omega)$ this way, and $T_1$ as $T_1(\omega)\sim \mathcal{GN}_{[0,\infty]}(\mu_{T_1},\sigma^2_{T_1})$, the depahsing time $T_2(\omega)$ is now obtained  from \eqref{reldeph}.

\section{Time-varying quantum channels}\label{res:TVQC}
We define a time-varying quantum channel, $\mathcal{N}(\rho,\omega,t)$, as
\begin{equation}\label{eq:TVQCgen}
\mathcal{N}(\rho,\omega,t) = \sum_k E_k(\omega,t) \rho E_k^\dagger(\omega,t),
\end{equation}
where the $E_k(\omega,t)$ linear operators are the so-called Kraus operators of the operator-sum representation of a quantum channel. Note that the matrices $\{E_k(\omega,t)\}$ are continuous-time random processes that will determine the time variations of the TVQC. Consequently, the channel dynamics will experience temporal fluctuations determined by the random processes $\{E_k(\omega,t)\}$.

As described in Chapter \ref{ch:preliminary}, decoherence arises from a wide range of physical processes involved in the interaction of the qubits with their environment. Nonetheless, we have seen that a fairly complete mathematical model of these harmful noise effects can be obtained by combining the amplitude damping channel (AD), and the phase damping or dephasing channel (PD) into  the combined amplitude and phase damping channel (APD), $\mathcal{N}_{\mathrm{APD}}$. Therefore, we propose the time-varying amplitude damping channel (TVAD), the time-varying dephasing channel (TVPD) and the time-varying combined amplitude and phase damping channel (TVAPD) as the TVQCs associated to decoherence processes whose parameters fluctuate through time. The Kraus operators of these time-varying quantum channels are:
\begin{itemize}
\item Time-varying amplitude damping channels (TVAD), $\mathcal{N}_{\mathrm{AD}}(\rho,\omega,t)$,
\begin{equation}
\begin{split}
&E_0(\omega,t) = \begin{pmatrix}
1 & 0 \\
0 & \sqrt{1-\gamma(\omega,t)}
\end{pmatrix} \text{ and } \\ &
E_1(\omega,t) = \begin{pmatrix}
0 & \sqrt{\gamma(\omega,t)} \\
0 & 0
\end{pmatrix},
\end{split}
\end{equation}
\item Time-varying dephasing channels (TVPD), $\mathcal{N}_{\mathrm{PD}}(\rho,\omega,t)$,
\begin{equation}
\begin{split}
&
E_0(\omega,t) = \begin{pmatrix}
1 & 0 \\
0 & \sqrt{1-\lambda(\omega,t)}
\end{pmatrix} \text{ and } \\&
E_1(\omega,t) = \begin{pmatrix}
0 & 0 \\
0 & \sqrt{\lambda(\omega,t)}
\end{pmatrix}.
\end{split}
\end{equation}
\item Time-varying combined amplitude and phase damping channel (TVAPD), $\mathcal{N}_{\mathrm{APD}}(\rho,\omega,t)$,
\begin{equation}
\begin{split}
& E_0(\omega,t) = \begin{pmatrix}
1 & 0 \\
0 & \sqrt{1-\gamma(\omega,t) - (1-\gamma(\omega,t))\lambda(\omega,t)}
\end{pmatrix}, \\
& E_1(\omega,t) = \begin{pmatrix}
0 & \sqrt{\gamma(\omega,t)}\\
0 & 0
\end{pmatrix}\text{ and } \\
& E_2(\omega,t) = \begin{pmatrix}
0 & 0 \\
0 & \sqrt{(1-\gamma(\omega,t))\lambda(\omega,t)}
\end{pmatrix},
\end{split}
\end{equation}
\end{itemize}
where the damping $\{\gamma(\omega, t)\}$ and scattering $\{\lambda(\omega,t)\}$ stochastic processes are functions of the qubit relaxation time $\{T_1(\omega,t)\}$ and the qubit dephasing time $\{T_2(\omega,t)\}$ stochastic processes. They are given by

\begin{equation}\label{eq:gammatime}
\gamma(\omega,t) = 1 - \mathrm{e}^{-\frac{t}{T_1(\omega,t)}} \text{ and}
\end{equation}

\begin{equation}\label{eq:lambdatime}
\lambda(\omega,t) = 1 - \mathrm{e} ^{\frac{t}{T_1(\omega,t)} - \frac{2t}{T_2(\omega,t)}}.
\end{equation}

$\{T_1(\omega,t)\}$ and $\{T_2(\omega,t)\}$ are modelled as wide-sense stationary random processes with means $\mu_{T1}$ and $\mu_{T2}$, respectively, and with experimentally measured PSDs as described in the previous section. As noted before, the time realizations of those processes can be considered constant for time intervals less than their coherence times, $T_c = 1/\mathrm{BW}$. Therefore, the realizations of $T_1$ and $T_\phi$ can be modelled by truncated Gaussian random variables. In this way, the TVQCs will remain constant for those time intervals, and their Kraus operators will be fixed by the realizations of the relaxation and pure dephasing time random variables.

\subsection{Twirled approximations of time-varying quantum channels}

As discussed in Chapter \ref{ch:preliminary}, twirling an arbitrary quantum channel provides us with approximated quantum channels that can be efficiently implemented in classical computers. 

The time-varying Pauli twirl approximation (TVPTA), $\mathcal{N}_{\mathrm{PTA}}(\rho,\omega,t)$, is the Pauli channel \cite{twirl6} obtained by twirling a time-varying quantum channel by the $n$-fold Pauli operators $\mathcal{P}_n$. Twirling the TVAD channel will lead to the Pauli channel (TVADPTA) described by the probabilities that each of the Pauli matrices has of taking place. Note that in this context these probabilities are realizations of the random processes \cite{twirl6}:
\begin{equation}\label{eq:TVADPTA}
\begin{split}
& p_\mathrm{I}(\omega,t) = 1 - p_\mathrm{x}(\omega,t) - p_\mathrm{y}(\omega,t) - p_\mathrm{z}(\omega,t), \\
& p_\mathrm{x}(\omega,t) = p_\mathrm{y}(\omega,t) = \frac{1}{4}(1 - \mathrm{e}^{-\frac{t}{T_1(\omega,t)}})\text{ and} \\
& p_\mathrm{z}(\omega,t) = \frac{1}{4}(1 + \mathrm{e}^{-\frac{t}{T_1(\omega,t)}} - 2\mathrm{e}^{-\frac{t}{2T_1(\omega,t)}}).
\end{split}
\end{equation}

For the TVAPD channel, the TVAPDPTA approximation is described by the realizations of the following stochastic processes for each of the Pauli matrices \cite{twirl6}
\begin{equation}\label{eq:TVPTA}
\begin{split}
& p_\mathrm{I}(\omega,t) = 1 - p_\mathrm{x}(\omega,t) - p_\mathrm{y}(\omega,t) - p_\mathrm{z}(\omega,t), \\
& p_\mathrm{x}(\omega,t) = p_\mathrm{y}(\omega,t) = \frac{1}{4}(1 - \mathrm{e}^{-\frac{t}{T_1(\omega,t)}})\text{ and} \\
& p_\mathrm{z}(\omega,t) = \frac{1}{4}(1 + \mathrm{e}^{-\frac{t}{T_1(\omega,t)}} - 2\mathrm{e}^{-\frac{t}{T_2(\omega,t)}}),
\end{split}
\end{equation}
where, once again, $T_1(\omega,t)$ and $T_2(\omega,t)$ are stochastic processes.

Another twirled channel of interest is the time-varying Clifford twirl approximation (TVCTA), $\mathcal{N}_{\mathrm{CTA}}(\rho,\omega,t)$ , which for the TVAD channel will be a depolarizing channel with depolarizing parameter \cite{twirl3}
\begin{equation}\label{eq:TVADCTA}
p(\omega,t) = \frac{3}{4} - \frac{1}{4}\mathrm{e}^{-\frac{t}{T_1(\omega,t)}} - \frac{1}{2}\mathrm{e}^{-\frac{t}{2T_1(\omega,t)}},
\end{equation}
and for the TVAPD channel a depolarizing channel with depolarizing parameter
\begin{equation}\label{eq:TVAPDCTA}
p(\omega,t) = \frac{3}{4} - \frac{1}{4}\mathrm{e}^{-\frac{t}{T_1(\omega,t)}} - \frac{1}{2}\mathrm{e}^{-\frac{t}{T_2(\omega,t)}},
\end{equation}
where, once more, $T_1(\omega,t)$ and $T_2(\omega,t)$ are stochastic processes.

It should be pointed out that although the derived time-variant channel models are based on the statistical characterization of the parameters $T_1$ and $T_2$ from the experimental results in \cite{SchlorPhD,decoherenceBenchmarking,fluctAPS,fluctApp}, they are also applicable to any superconducting quantum processor whose decoherence parameters exhibit slow fluctuations. Moreover, the model is also applicable to any quantum-coherent two-level system that presents similar time dependencies, regardless of its physical implementation. For example, the photonic systems such as the one studied in \cite{bylanderPhoton} with strong pure dephasing fluctuations.

\section{$\diamond$-norm distance numerical analysis}

In this section we assess the difference between the widely employed static channel models and the proposed TVQCs. We make this comparison by using a metric known as the diamond norm distance (see Appendix \ref{app:diamond}) $||\mathcal{N}(\mu_{T_1},\mu_{T_2}) - \mathcal{N}(\rho,\omega,t)||_\diamond$. More concretely, we perform an extensive numerical analysis over the parameter space, for the different types of TVQCs discussed in the previous section (the particular results for the superconducting qubits of \cite{decoherenceBenchmarking} are presented later). We will begin by studying the mean value of such metric for different scenarios and then proceed to study its variability in order to completely understand how the static and TV channels differ.

The parameters required in the computation of the diamond norm distance between the static and time-varying channels are the mean and the standard deviation of the random variables $T_1(\omega)$, $T_2(\omega)$ and $T_{\phi}(\omega)$. Note from \eqref{reldeph} that $T_2(\omega)$ is a function of the independent random variables $T_1(\omega)$ and $T_{\phi}(\omega)$. Table \ref{tab:numericalsim} shows these parameters. The coefficients of variation, $c_\mathrm{v}=\sigma/\mu$, defined as the ratios between the standard deviations and means of these random variables are also shown in the table.
Note that the other parameters defining the stochastic random processes $T_1(t,\omega)$ and $T_\phi(t,\omega)$ are not involved in the computation of the mean diamond norm distance. However, they were used to derive the parameters in Table \ref{tab:numericalsim} and the probability distributions of random variables $T_1(\omega)$ and $T_{\phi}(\omega)$.

\begin{table*}[h!]
\centering
\resizebox{\textwidth}{!}{
\begin{tabular}{|cccccccc|}
\hline
\multicolumn{1}{|c}{Scenario} & \multicolumn{1}{c}{$\mu_{T_1}$} & \multicolumn{1}{c}{$\sigma_{T_1}$} & \multicolumn{1}{c}{$c_\mathrm{v}(T_1)$} & \multicolumn{1}{c}{$\mu_{T_\phi}$} & \multicolumn{1}{c}{$\sigma_{T_\phi}$} & \multicolumn{1}{c}{$c_\mathrm{v}(T_\phi)$} & \multicolumn{1}{c|}{$\mu_{T_2}$} \\ \hline
$T_1$-limited & $100$ & $\{1,10,25\}$ & $ \{1,10,25\}\%$ & - & - & - & -\\
$T_1\approx T_2$ & $100$ & $\{1,10,25\}$ & $ \{1,10,25\}\%$ & $200$ & $\{2,20,50\}$ & $\{1,10,25\}\%$ & $100$\\
$T_2$-dominated                     & $100$                                    & $\{1,10,25\}$                                       & $\{1,10,25\}\%$                          & $100$                                & $\{1,10,25\}$                                       & $\{1,10,25\}\%$                                & $66.67$
 \\\hline
\end{tabular}}
\caption{Parameters used for different superconducting scenarios. We consider the following scenarios for superconducting qubits: $T_1$-limited \cite{decoherenceBenchmarking}, $T_1\approx T_2$ \cite{melbourne} and $T_2$-dominated ($T_2<T_1$) \cite{yorktown}. $c_\mathrm{v}=\sigma / \mu$ refers to the coefficient of variation of the random variables. $\mu_{T_2}$ is calculated via expression \eqref{reldeph}.}
\label{tab:numericalsim}
\end{table*}

The computation of the static channels, denoted as $\mathcal{N}(\mu_{T_1},\mu_{T_2})$, is done by substituting the parameters $T_1$ and $T_2$ in the Kraus operators of the channels (refer to section \ref{res:TVQC}) by the mean $\mathrm{E}\{T_1\} = \mu_{T_1}$ and $\mathrm{E}\{T_2\} = \mu_{T_2}$ of the random variables $T_1(\omega)$ and $T_{2}(\omega)$.
On the other hand, the realizations for the TV quantum channels are obtained by replacing the parameters $T_1$ and $T_2$ in the Kraus operators by the realizations of the random variables $T_1(\omega)$ and $T_2(\omega)$, respectively.

\subsection{Mean value of the diamond norm distance}\label{sub:meandiam}
In the previous sections we have analyzed how the decoherence parameters of superconducting qubits fluctuate through time, and propose TV channel models that describe these varying conditions. We concluded that it is reasonable to assume that these parameters do not fluctuate over the runtime of a QECC encoding-decoding round (or a short quantum algorithm), but they do change from round to round if the rounds are sufficiently separated in time. Note that running error corrected quantum algorithms requires a large number of QECC rounds. For example, the quantum algorithm in \cite{rsaRounds} requires 25 billion surface code cycles in an 8 hour runtime.

Let $L$ be the number of rounds produced during the quantum processor operation, and consider round $k$, $k\in\{1,\ldots L\}$. If $T_1^{(k)}(\omega)$ and $T_2^{(k)}(\omega)$ denote the relaxation and dephasing times associated to this round, then the sequence of random variables $\{T_j^{(k)}(\omega)\}_{k=1}^L$, $j=1,2$ would be independent and identically distributed, as explained in section \ref{sec:modelingt1t2}. Note that the duration of a round is what we have previously called the algorithm time, $t_{\mathrm{algo}}$, and its value is upperbounded by $\min\{\mu_{T_1},\mu_{T_2}\}$ since for times longer than those the superconducting qubit will be very likely in its equilibrium state and, therefore, will be useless as a resource. Later in this section, we will illustrate that QECCs do in fact have cycle times lower than $\min\{\mu_{T_1},\mu_{T_2}\}$ by obtaining the runtime of a Shor code round in the IBM\_Q\_16\_Melbourne processor. To compute the TV channel operator at round $k$, one needs to obtain realizations $t_1^{(k)}$ and $t_2^{(k)}$, of the two random variables $T_1^{(k)}(\omega)$ and $T_2^{(k)}(\omega)$, respectively, and compute the corresponding  Kraus operators at time $t=t_{algo}$. We denote by $\mathcal{N}(\rho,t_1^{(k)},t_2^{(k)},t=t_{algo})$ this channel operator at round $k$. On the other hand, the static channel operator, sets $T_1^{(k)}(\omega)$ and $T_2^{(k)}(\omega)$ equal to their mean values $\mu_{T_1}$ and $\mu_{T_2}$, and therefore they will be constant for all rounds.

We are interested in computing the average value of the diamond norm distance for all rounds. That is,
\begin{equation}\label{eq:aver}
\frac{1}{L}\sum_{k=1}^L||\mathcal{N}(\mu_{T_1},\mu_{T_2},t=t_{\mathrm{algo}})-\mathcal{N}(\rho,t_1^{(k)},t_2^{(k)},t=t_{\mathrm{algo}})||_\diamond.
\end{equation}
Note that as $L$ goes to infinity, \eqref{aver} converges to the expected value $\mathrm{E}\{||\mathcal{N}(\mu_{T_1},\mu_{T_2},t=t_{\mathrm{algo}})-\mathcal{N}(\rho,T_1^{(k)}(\omega),T_2^{(k)}(\omega),t=t_{\mathrm{algo}})||_\diamond\}$.

Figure \ref{fig:diamondMean} shows the average diamond norm distance versus algorithm time, $t$, for the different superconducting qubit scenarios of table \ref{tab:numericalsim}. The range of values of $t$ is set to $t\in[0,\min\{\mu_{T_1},\mu_{T_2}\}]$ and  the number of rounds to $L=20000$. Note that the algorithm time has been normalized with respect to $\min\{\mu_{T_1},\mu_{T_2}\}$. The reason behind this normalization is to decouple the simulations from the fact that qubits with longer decoherence parameters will take longer to have significant mean diamond norms. In this way, it is easier to compare in a single figure the effect of time-variation for different qubit characteristics.

\begin{figure}[h!]
\centering
\includegraphics[width=\linewidth]{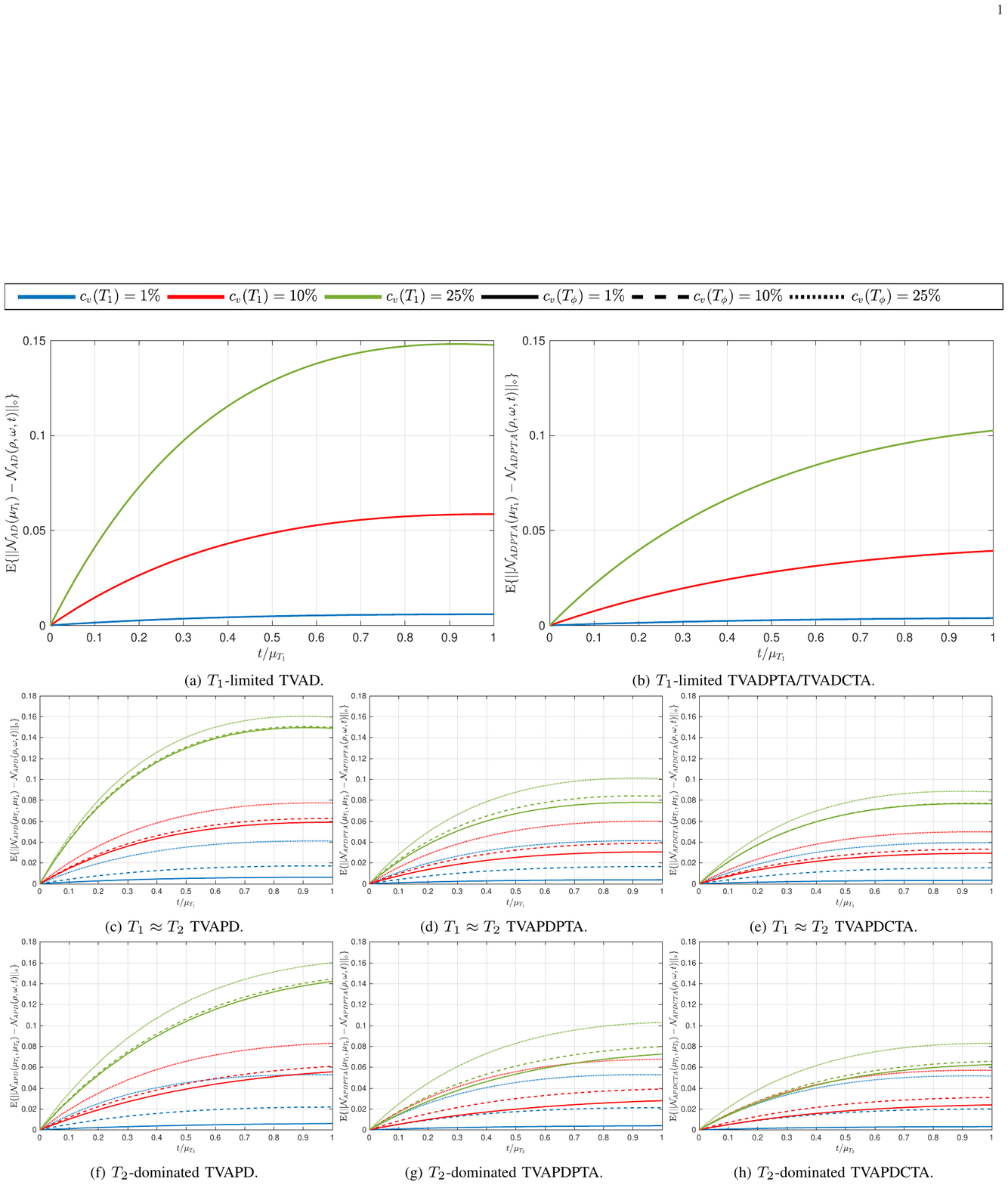}
\caption{Mean diamond norm distance $\mathbf{\mathrm{E}\{||\mathcal{N}(\mu_{T_1},\mu_{T_2})-\mathcal{N}(\rho,\omega,t)||_\diamond\}}$ between the static and the proposed time-varying quantum channels. The simulation for the calculation is done by using the parameters for the different scenarios for superconducting qubits in table \ref{tab:numericalsim}. \textbf{a} $T_1$-limited TVAD. \textbf{b} $T_1$-limited TVADPTA/TVADCTA. \textbf{c} $T_1\approx T_2$ TVAPD. \textbf{d} $T_1\approx T_2$ TVAPDPTA. \textbf{e} $T_1\approx T_2$ TVAPDCTA. \textbf{f} $T_2$-dominated TVAPD. \textbf{g} $T_2$-dominated TVAPDPTA. \textbf{h} $T_2$-dominated TVAPDCTA.}
\label{fig:diamondMean}
\end{figure}

Observe that for all the scenarios considered\footnote{It can be proven that the diamond norm distance between ADPTA and ADCTA channels obtained from AD channels with parameters $\gamma_1,\gamma_2$ is the same, i.e., $||\mathcal{N}_{\mathrm{ADPTA}}(\gamma_1)-\mathcal{N}_{\mathrm{ADPTA}}(\gamma_2)||_\diamond = ||\mathcal{N}_{\mathrm{ADCTA}}(\gamma_1)-\mathcal{N}_{\mathrm{ADCTA}}(\gamma_2)||_\diamond$ (proof in Appendix \ref{app:diamond}), and thus Figure \ref{fig:diamondMean}b will be the exact same for the mean diamond norm distance $\mathrm{E}\{||\mathcal{N}_{\mathrm{ADCTA}}(\mu_{T_1})-\mathcal{N}_{\mathrm{ADCTA}}(\rho,\omega,t)||_\diamond\}$ of the TVADCTA channels.}, the mean diamond norm distance increases with the normalized algorithm time. An interesting conclusion that can be drawn from Figure \ref{fig:diamondMean} is that the impact that time fluctuations of the decoherence parameters have on the behavior of the mean diamond norm distance can be singled out by the coefficients of variation, $c_\mathrm{v}(T_1)$ and $c_\mathrm{v}(T_\phi)$.
The higher the coefficients of variation are, the higher the mean diamond norm distance will be. Therefore, if $c_\mathrm{v}$ are sufficiently low, a static approach might accurately represent the decoherence processes suffered by the superconducting qubits. Figure \ref{fig:diamondMean} also shows that the fluctuations of $T_1$ are more relevant than the fluctuations of $T_\phi$. This can be observed for the APD channels and their twirled approximations since $c_\mathrm{v}(T_1)$ has more impact on the mean diamond norm distance increase than $c_\mathrm{v}(T_\phi)$. This is not surprising considering the fact that $T_2$ is a function not only of $T_\phi$ but also of $T_1$ (refer equation to \eqref{reldeph}), so its fluctuations affect both relaxation and dephasing.

Given that QECCs are used to protect quantum information from decoherence, it is desirable for one encoding-decoding round of these quantum information processing tasks to be short, since the longer they take the noisier the channel becomes. Therefore, it is convenient to have algorithm times $t<< \min\{\mu_{T_1},\mu_{T_2}\}$, which also render small mean diamond norm distances, meaning that in average the use of static channel models may be reasonable. However, this last conclusion is misleading since as shown later in this section, even if the average norm is small, the spread of this metric for different realizations is considerable when the coefficients of variation are large.

Next, we justify that bounding the algorithm times for QECCs by $\min\{\mu_{T_1},\mu_{T_2}\}$ is a reasonable assumption for current quantum processors.
\begin{table}[!ht]
\centering
\begin{tabular}{|cccc|}
\hline
\multicolumn{1}{|c}{$t_{\mathrm{1Q}}$ (ns)} & \multicolumn{1}{c}{$t_{\mathrm{2Q}}$ (ns)} & \multicolumn{1}{c}{$t_\Delta$ (ns)} & \multicolumn{1}{c|}{$t_{\mathrm{meas}}$ (ns)} \\ \hline
$100$ & $\approx 500$ & $20$ & $\approx 300$  \\
\hline
\end{tabular}
\caption{IBM\_Q\_16\_Melbourne values for 1-qubit gate time ($t_{\mathrm{1Q}}$), 2-qubit gate time ($t_{\mathrm{2Q}}$), delay time (buffer) between operations ($t_\Delta$) and measurement time ($t_{\mathrm{meas}}$) \cite{melbourne}. }
\label{tab:IBMq16}
\end{table}

To that end, let us consider one of the simplest QECCs known: the Shor code \cite{decoherenceShor}. Note that in \cite{SchlorPhD,decoherenceBenchmarking,klimov,fluctAPS,fluctApp,fluctGoogle}, which are used to obtain the parameters that describe the fluctuations of $T_1$ and $T_2$, no gates are implemented.  Thus, we select a state-of-the-art experimental quantum computer, the IBM\_Q\_16\_Melbourne, to get the gate times \cite{melbourne}. IBM quantum computers use superconducting technology for their qubits so we expect the behaviour of their decoherence parameters to be similar to the ones considered in \cite{decoherenceBenchmarking,fluctAPS}. From the data in table \ref{tab:IBMq16} for the IBM\_Q\_16\_Melbourne quantum computer, implementing the Shor code would require approximately $t\approx 13.52$ $\mu \mathrm{s}<\min\{\mu_{T_1},\mu_{T_2}\}$ for a full encoding-decoding operation. Therefore, it is reasonable to assume that more advanced error correction methods will have processing time durations similar to that or even of $t\approx \min\{\mu_{T_1},\mu_{T_2}\}$, depending on the code construction.

To close our discussion on $\mathrm{E}\{||\mathcal{N}(\mu_{T_1},\mu_{T_2})-\mathcal{N}(\rho,\omega,t)||_\diamond\}$, we compare the behaviour of this metric for the AD and APD channels to its behaviour for the approximated channels. Figure \ref{fig:diamondMean} shows how the value of the mean diamond norm distance is smaller for the approximated channels obtained via twirling than for the original channels. This stems from the fact that performing the twirl operation eliminates the off-diagonal elements of the original channels (due to the symmetrization of the channel), and thus their contribution to the diamond norm distance is also eliminated. In any case, Figure \ref{fig:diamondMean} shows that the approximated channels also experience time-variation in the decoherence parameters. Furthermore, observe that even though the value of the diamond norm distance is lower for the twirled channels, the shape of the curve that embodies the time dependance of this metric is similar to the curves obtained for the AD and APD channels.

\subsection{Variability of the diamond norm distance}\label{sub:boxplots}

The mean diamond norm distance $\mathrm{E}\{||\mathcal{N}(\mu_{T_1},\mu_{T_2})-\mathcal{N}(\rho,\omega,t)||_\diamond\}$ has enabled us to analyze the average difference between static and TV quantum channels. However, while studying the average distance corroborates that static channels and TVQCs differ, it fails to portray the variability of $||\mathcal{N}(\mu_{T_1},\mu_{T_2}) - \mathcal{N}(\rho,\omega,t)||_\diamond$. Thus, looking solely at the average of this metric may be shortsighted in the sense that additional trends and results regarding the nature of TVQCs may be overlooked. Consequently, a boxplot analysis of the realizations of the diamond norm distance is performed with the goal of studying the variability of $||\mathcal{N}(\mu_{T_1},\mu_{T_2}) - \mathcal{N}(\rho,\omega,t)||_\diamond$ (see Appendix \ref{app:boxSkew} for description of adjusted boxplots for skewed distributions).

\subsubsection*{$T_1$-limited qubits}
Figure \ref{fig:boxplots} presents the boxplots for the AD and the ADPTA channels, considering different coefficients of variation of the relaxation time random variable $T_1(\omega)$. When $c_\mathrm{v}(T_1)=1\%$ (Figure \ref{fig:boxplots}a and \ref{fig:boxplots}d), the size of the boxplots are negligible for all the range of algorithm times. This means that when the fluctuations of $T_1$ around its mean are small, the dispersion of $||\mathcal{N}(\mu_{T_1},\mu_{T_2}) - \mathcal{N}(\rho,\omega,t)||_\diamond$ around its mean is nearly zero and, thus, a static channel approach will be a valid assumption. Increasing the coefficient of variation to $c_\mathrm{v}(T_1)=10\%$ (Figure \ref{fig:boxplots}b and \ref{fig:boxplots}e) the box size and the whisker length begins to widen as the algorithm time increases, meaning that a significantly large number of realizations of $||\mathcal{N}(\mu_{T_1},\mu_{T_2}) - \mathcal{N}(\rho,\omega,t)||_\diamond$ will be higher than its mean given by Figure \ref{fig:diamondMean}. Thus, a static noise assumption may start to fail when algorithm times are sufficiently large. Finally, for $c_\mathrm{v}(T_1)=25\%$ (Figure \ref{fig:boxplots}c and \ref{fig:boxplots}f) one obtains the large box sizes and whiskers of the boxplots, showing a large variation of the diamond norm distance around its mean. Therefore, adopting static noise modelling strategies in such scenarios represents an erroneous approach.

\begin{figure}[h!]
\centering
\includegraphics[width=\linewidth]{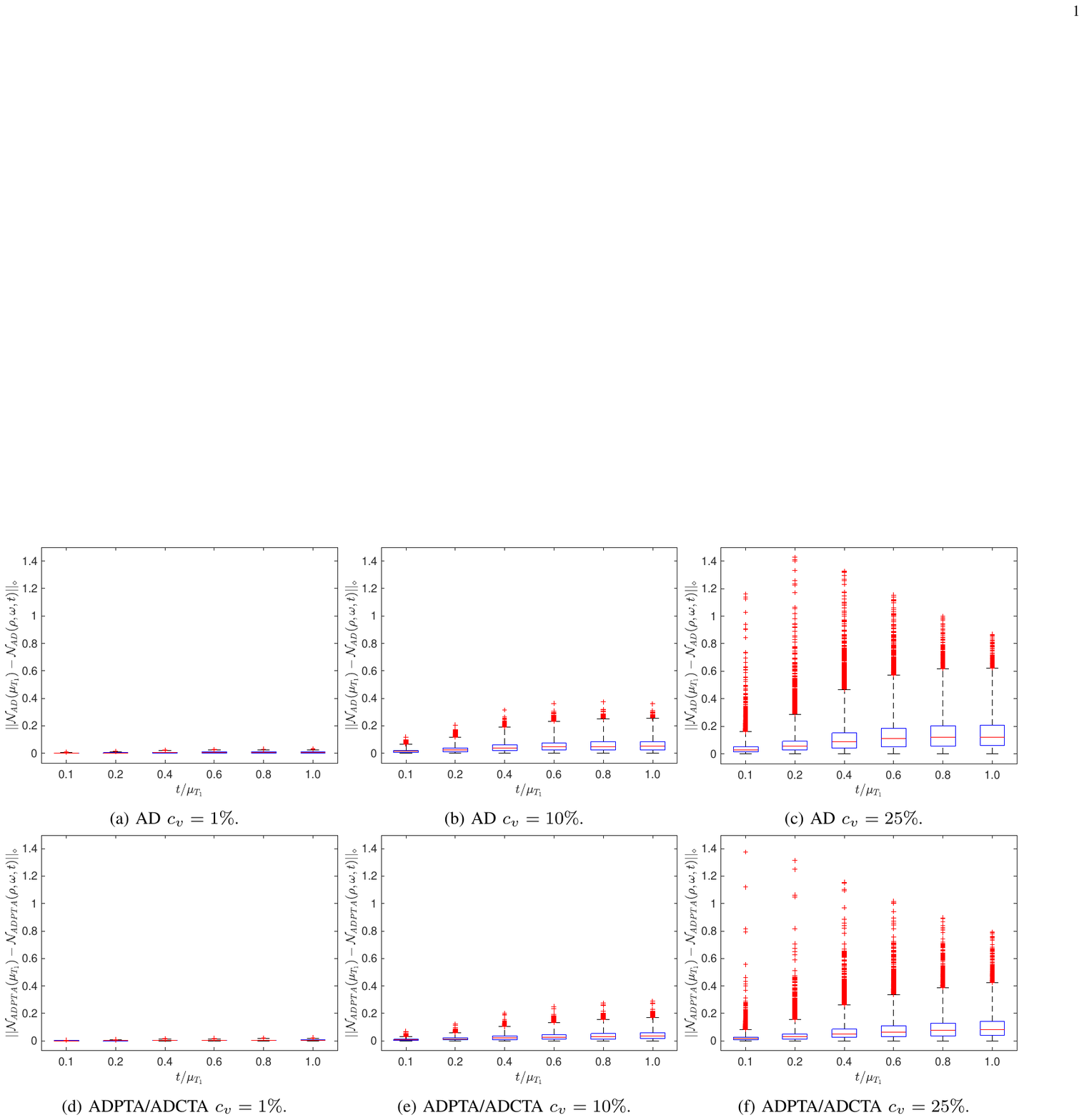}
\caption{Boxplot representation of the simulations of the diamond norm distance $\mathbf{||\mathcal{N}(\mu_{T_1})-\mathcal{N}(\rho,\omega,t) ||_\diamond}$ for the $\mathbf{T_1}$-limited scenario. \textbf{a-c} AD channels with $c_\mathrm{v}(T_1) = \{1,10,25\}\%$. \textbf{d-f} ADPTA/ADCTA channels with $c_\mathrm{v}(T_1) = \{1,10,25\}\%$. Outlier data is represented with red crosses.}
\label{fig:boxplots}
\end{figure}

It is also important to discuss the presence of outliers on these boxplots. These events appear to be significant in number when the coefficient of variation $c_\mathrm{v}(T_1)$ is large. Consequently, even for algorithms that require short processing times, where the mean of the diamond norm distance is very small, the use of a static channel when $c_\mathrm{v}(T_1)$ is large might be misleading. For example, for a normalized algorithm time of $0.1$, the mean value, $\mathrm{E}\{||\mathcal{N}(\mu_{T_1})-\mathcal{N}(\rho,\omega,t)||_\diamond\}$, is $0.04$ and $0.02$ for the AD and ADPTA channels, respectively (refer to Figures \ref{fig:diamondMean}a and \ref{fig:diamondMean}b) considering a $c_\mathrm{v}(T_1)=25\%$. Therefore, by looking at these values it will be reasonable to assume that a static channel approach when simulating QECCs will be valid. However, realizations of $||\mathcal{N}(\mu_{T_1})-\mathcal{N}(\rho,\omega,t)||_\diamond \approx 1.2$ may occur for both the AD and ADPTA channels (refer to Figure \ref{fig:boxplots}c and \ref{fig:boxplots}f), and the frequency of outlier events seems to be important. Thus, the above assumption may fail.

From the above results, it is clear that when the coefficients of variation are large, the diamond norm distance $||\mathcal{N}(\mu_{T_1})-\mathcal{N}(\rho,\omega,t)||_\diamond$ exhibits a significant amount of spread. This entails that events where the TV channel and the static channel substantially differ will be pretty common, even if the mean value of the diamond norm is not very high. Thus, adopting static noise modelling strategies in such scenarios will be an erroneous approach.

\subsubsection*{$T_1\approx T_2$ and $T_2$-dominated qubits}
We also provide boxplots for the $T_1\approx T_2$ and $T_2$-dominated superconducting qubit scenarios.

\begin{figure*}[!h]
\centering
\includegraphics[width=\linewidth]{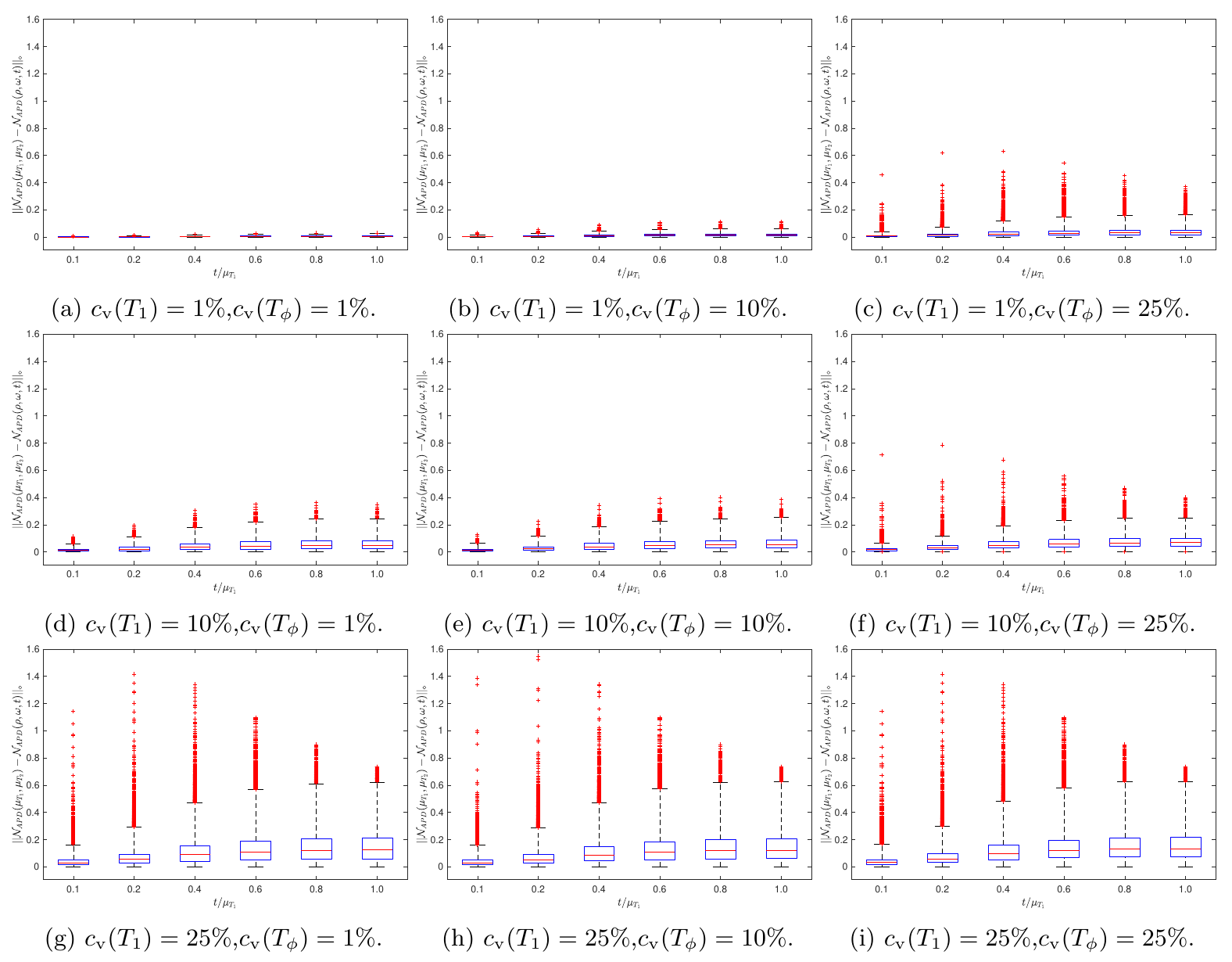}
\caption{Boxplot representation of the simulations of the diamond norm distance $||\mathcal{N}(\mu_{T_1},\mu_{T_2})-\mathcal{N}(\rho,\omega,t) ||_\diamond$ for the $T_1\approx T_2$ scenario. \textbf{a-i} APD channels with $c_\mathrm{v}(T_1)=\{1,10,25\}\%$ and $c_\mathrm{v}(T_\phi)=\{1,10,25\}\%$. Outlier data is represented with red crosses.}
\label{fig:T1T2}
\end{figure*}

Figures \ref{fig:T1T2} and \ref{fig:T2dom} show the resulting diamond norm distance boxplots for the $T_1\approx T_2$ and $T_2$-dominated scenarios studied in section \ref{res:TVQC} considering the combined amplitude and phase damping channel (we omit the results for their twirled channels since they will be similar). The trend observed for $T_1$-limited superconducting qubits can also be appreciated in these figures. For both figures, increasing the coefficient of variation of the random variables $T_1$ and $T_\phi$ results in a higher spread of the diamond norm distance between the static and TV channels. Consequently, for low coefficients of variation of the decoherence parameters (see Figures \ref{fig:T1T2}a, \ref{fig:T1T2}b, \ref{fig:T2dom}a and \ref{fig:T2dom}b), considering a static channel a is valid assumption. For milder coefficients of variation (see Figures \ref{fig:T1T2}c, \ref{fig:T1T2}d, \ref{fig:T1T2}e,\ref{fig:T2dom}c, \ref{fig:T2dom}d and \ref{fig:T2dom}e) the static assumption starts to diverge from the TV channel model, and implications for the QECCs will start to occur. Finally, for the high values of $c_\mathrm{v}(T_1)$ and $c_\mathrm{v}(T_\phi)$ (see Figures \ref{fig:T1T2}f, \ref{fig:T1T2}g, \ref{fig:T1T2}h, \ref{fig:T1T2}i, \ref{fig:T2dom}f, \ref{fig:T2dom}g, \ref{fig:T2dom}h and \ref{fig:T2dom}i), both channel models differ substantially and the non-dynamic assumption is flawed.  Note that as discussed before, the variation of the relaxation time has more impact in the channel fluctuations than the variation of the pure dephasing (recall that relaxation is part of the dephasing time $T_2$). Therefore, the scenarios with low $c_\mathrm{v}(T_1)$, but moderately large $c_\mathrm{v}(T_\phi)$ can still be considered static (as it can be seen in Figures \ref{fig:T1T2}b and \ref{fig:T2dom}b).

\begin{figure*}[!h]
\centering
\includegraphics[width=\linewidth]{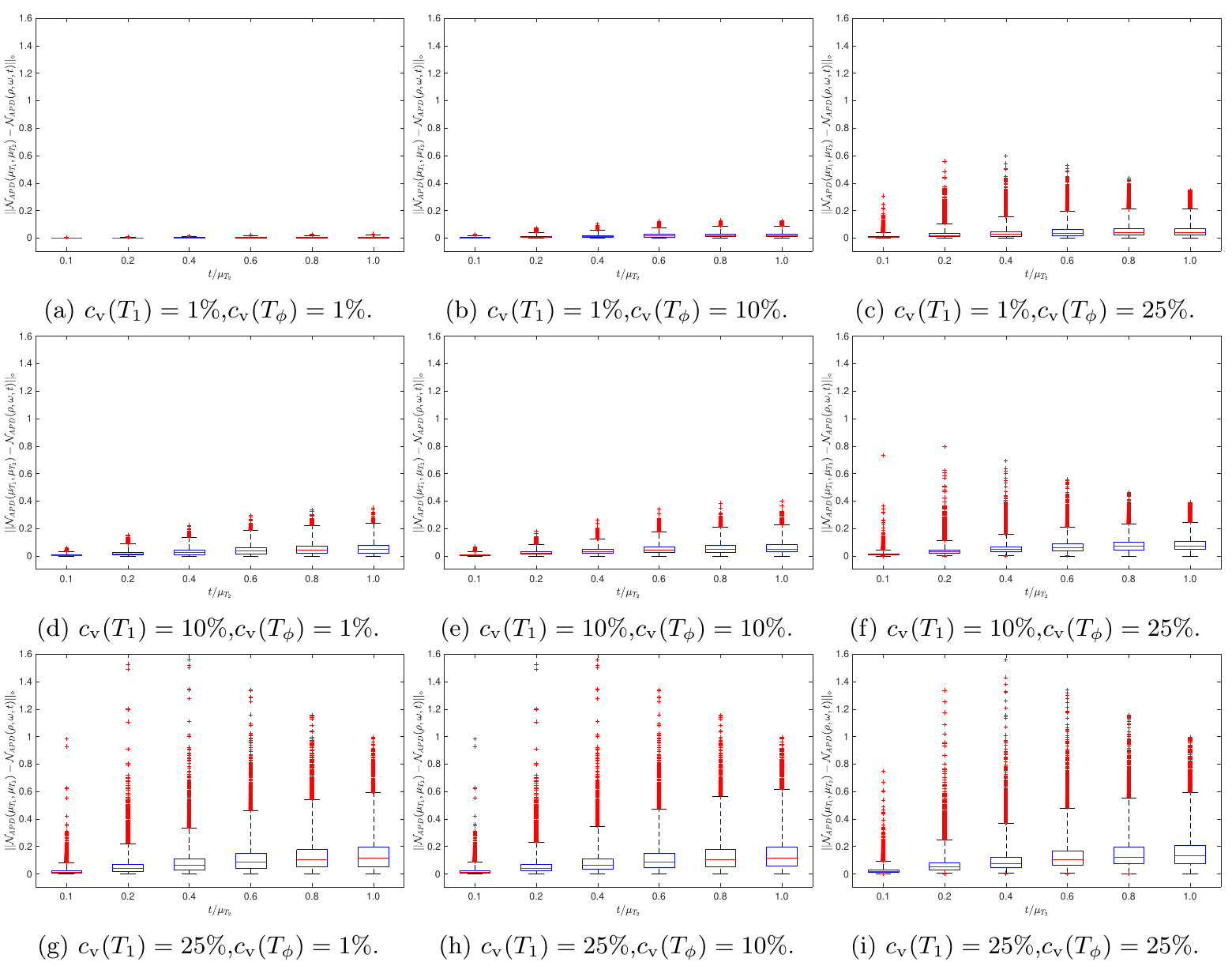}
\caption{Boxplot representation of the simulations of the diamond norm distance $||\mathcal{N}(\mu_{T_1},\mu_{T_2})-\mathcal{N}(\rho,\omega,t) ||_\diamond$ for the $T_2$-dominated scenario. \textbf{a-i} APD channels with $c_\mathrm{v}(T_1)=\{1,10,25\}\%$ and $c_\mathrm{v}(T_\phi)=\{1,10,25\}\%$. Outlier data is represented with red crosses.}
\label{fig:T2dom}
\end{figure*}

The above boxplot analysis for $T_1\approx T_2$ and $T_2$-dominated qubits show similar results as the one obtained for the $T_1$-limited scenario. In this way, superconducting qubits that present pure dephasing channels that require attention are also affected by the variation of the decoherence parameters. Thus, the TVQC model is relevant for all types of superconducting qubits that present fluctuations of $T_1$ and $T_\phi$ with sufficiently large coefficients of variations.

\FloatBarrier

\section{Discussion}\label{cp3_section6}
In this chapter we have presented a time-varying (TV) quantum channel model as a way to include the time dependant fluctuations of the parameters that define the decoherence effects experienced by superconducting qubits. To that end, we have used state-of-the-art experimental results to model the time-variations of these parameters and provided an extensive numerical analysis of the proposed TVQC models. Therefore, the proposed model serves to coalesce recent experimental results regarding the fluctuations of the relaxation time, $T_1$, and the dephasing time, $T_2$, into a quantum channel model that provides a more realistic portrayal of decoherence effects to the quantum error correction community.
The proposed channel models are sufficiently general to include any type of slow variation in the random processes $T_1(\omega,t)$ and $T_2(\omega,t)$. Therefore, even though the present analysis is motivated on superconducting qubits, the proposed model is also applicable to other quantum-coherent two-level systems that present similar time-dependant parameter oscillations, such as trapped ions or quantum dots.
In our analysis, the criterium chosen to measure the deviation between time-variant and time invariant quantum channels has been the diamond norm distance $||\mathcal{N}(\mu_{T_1},\mu_{T_2}) - \mathcal{N}(\rho,\omega,t)||_\diamond$. Based on the statistical characterization of the random processes $T_1(\omega,t)$, $T_{\phi}(\omega,t)$ and $T_2(\omega,t)$, it is found that their stochastic process coherence times, $T_c$, are much smaller than the algorithm times. Therefore, these decoherence parameters can be modelled as random variables taking independent values from round to round. This in turn results in $||\mathcal{N}(\mu_{T_1},\mu_{T_2}) - \mathcal{N}(\rho,\omega,t)||_\diamond$ being independent random variables from round to round. Computing the diamond norm distance average over all rounds in a quantum processing task reveals that by not considering time fluctuations in quantum channel models, an integral part of the decoherence effects are neglected.

In order to asses the dispersion of the realizations of the random variable $||\mathcal{N}(\mu_{T_1},\mu_{T_2}) - \mathcal{N}(\rho,\omega,t)||_\diamond$, a boxplot analysis has been carried out. The primary conclusion drawn from this boxplot study is that the diamond norm distance exhibits a high degree of variability if the coefficients of variation of the decoherence times are sufficiently large, say larger than $\approx 10\%$. This is due to the fact that under this condition, the whisker length from the median and the number of outliers in this region is considerably high. It should be mentioned that the state-of-the-art superconducting processors in \cite{decoherenceBenchmarking} have a $c_\mathrm{v}(T_1)\approx 20\%$ (refer to Tables \ref{tab:variationParameters}). Therefore, the TV quantum channel models proposed in this article are important to describe the decoherence effects suffered by their qubits when studying QECCs requiring a large number of error correction cycles (e.g., a quantum memory or a long quantum algorithm such as the one in \cite{rsaRounds}). If a QECC is run just once or a few number of times, the parameters will not fluctuate and the effects of the proposed channel model will not be noticeable.

These results also speak towards the importance of superconducting qubit construction and cooldown, in the sense that if they are optimized correctly, the fluctuations relative to the mean will be milder. Traditionally, long mean decoherence times are seeked for the qubits to be able to handle long time duration algorithms, however, this should be done jointly with lowering the dispersion in such parameter. In other words, the results presented here point out that qubit benchmarking, at least for superconducting qubits at the moment, should include the variation parameters of their decoherence times. However, the current experimental research on the topic usually only focuses at the mean or best case scenarios, which is far from being accurate when describing the decoherence dynamics of the qubits.

Finally, note that in this chapter, we did not discuss how the proposed time-varying channel models will affect the performance of quantum error correction codes. This will be done in Chapter \ref{cp5}, where QTCs and Kitaev toric codes operating over the TVQCs will be simulated.

\chapter{Quantum outage probability} \label{cp4}
The time-varying quantum channel model was presented in Chapter \ref{cp3}. In this chapter we focus on channel capacity. Quantum channel capacity gives the quantum rate limit at which a QECC can operate with asymptotically vanishing error rate. However, in the literature, quantum capacity has been derived for static quantum channels. Here, we study the asymptotical limits of error correction for time-varying quantum channel models. Motivated by the similarity between the TVQCs and classical slow or block fading scenarios, i.e., when the channel remains constant over the duration of the coded block \cite{tse}; we define the quantum outage probability of a TVQC as the asymptotically achievable error rate for a QECC with quantum rate $R_\mathrm{Q}$ that operates over the aforementioned noise model. Additionally, we also introduce the concept of the quantum hashing outage probability to provide an upper bound on the asymptotically achievable error rate for TVQC channels whose quantum capacity (of their static counterparts) is unknown, but for which a lower bound known as the hashing limit exists. Based on the experimentally determined statistical distribution of $T_1$ discussed in Chapter \ref{cp3}, we provide closed-form expressions for the following TVQCs: time-varying amplitude damping channel (TVAD), time-varying amplitude damping Pauli twirl approximated channel (TVADPTA) and time-varying amplitude damping Clifford twirl approximated channel (TVADCTA). We analyze the quantum outage probability and quantum hashing outage probabilities of the aforementioned TVQCs for different scenarios.

\section{Classical outage probability for the block fading additive Gaussian noise channel}\label{subsub:classicalpout}
The TVQC model proposed in this dissertation clearly resembles the paradigm of classical block fading \cite{tse}. This occurs because the algorithm processing times \cite{decoherenceBenchmarking} are much smaller than the coherence times of the stochastic processes that define the dynamic behaviour of the relaxation and dephasing times. Consequently, the realization of $T_i(\omega)$, $i=1,2$ remains constant for all the qubits in the block, where we have also assumed that all qubits in the codeword are affected equally by the noise. With this we mean that the $T_1(\omega)$ and $T_2(\omega)$ of each of the qubits of the QECC block are completely correlated implying that the realizations of each of the decoherence times are the same for each of the qubits\footnote{In general this is not the only possibility, but here we study such scenario. We study this case following the usual reasoning in the QECC literature that the error probability for each of the qubits of the system is the same \cite{QRM,bicycle,qldpc15,jgf,QCC,QTC,EAQTC,toric,
QEClidar,EAQECC,EAQIRCC,twirl6,MemQTC,catalytic}.}. Mathematically, this can be written as
\begin{equation}\label{eq:correlationFull}
\rho_{T_i^j,T_i^k}=\frac{\mathrm{E}\{(T_i^j(\omega)-\mu_{T_i^j})(T_i^k(\omega)-\mu_{T_i^k})\}}{\sigma_{T_i^j}\sigma_{T_i^k}}=1,\forall i\in\{1,2\},\forall j,k\in\{1,\cdots,n\}
\end{equation}
where $\rho_{X,Y}$ is named the Pearson correlation coefficient and $n$ is the number of qubits of the error correction block.

In this way, the realization of the $n$-qubit TVQC at block $l$, $\mathcal{N}_l^{(n)}(\rho)$, is equal to
\begin{equation}\label{eq:nqubitTVQC}
\mathcal{N}_l^{(n)}(\rho) = \mathcal{N}^{\otimes n}(\rho,t_1^{l},t_2^{l}),
\end{equation}
where $t_1^{l}$ and $t_2^{l}$ refer to the realizations of random variables $T_1(\omega)$ and $T_2(\omega)$ at block $l$, realizations that are assumed to be equal for each of the $n$ qubits of the block.

This is reminiscent of classical block fading scenarios in which a realization of  the fading random process, $\alpha(\omega,t)$ (channel gain), is considered to be constant for the entire codeword length \cite{tse}. We use this similarity to develop the information theoretical concepts for TVQCs. To that end, we begin by explaining in more detail the slow block fading channel model use in classical communications and define the concept of outage probability for these channels.

The general input-output relation of a continuous-time, additive gaussian noise fading channel is given by
\begin{equation}\label{eq:eee}
Y(\omega,t)=\alpha(\omega,t)X_{cd}(t)+N(\omega,t)
\end{equation}
where $X_{cd}(t)$ denotes the input codeword, and $N(\omega,t)$ and $\alpha(\omega,t)$ are the wide-sense stationary (WSS) random processes modeling the gaussian noise channel and the fading process, respectively. Whenever the duration of the codewords, $X_{cd}(t)$, is much smaller than the coherence time, $T_c$, of the fading gain $\alpha(\omega,t)$, (i.e., the time where $\alpha$ can be considered constant) the channel is said to be operating under slow fading conditions. Therefore, the value of the channel gain during the transmission of a codeword can be considered to be approximately constant and given by a realization of the random variable $\alpha(\omega)$. Furthermore, when considering the transmission of a sequence of $K$ codewords which are sufficiently separated in time, say $T$ seconds then the output sequence will be
\[Y(\omega,t)=\sum_{j=0}^{K-1} \alpha_j(\omega)X^{(j)}_{cd}(t-jT) +N(\omega,t),\]
where the sequence of random variables $\{\alpha_j(\omega)\}_{j=0}^{K-1}$ can be considered independent and identically distributed (iid).

The operational concept of channel capacity for these channels is meaningless and has to be replaced by what is known the outage probability \cite{tse}.
Let us now define this parameter. Note that conditional on a realization of the channel gain $\alpha(\omega)$, channel \eqref{eee} reduces to an additive white Gaussian noise (AWGN) channel with received signal-to-noise ratio (SNR) $|\alpha(\omega)|^2\mathrm{SNR}$. By the Shannon channel coding theorem, the maximum rate of reliable communication supported by this channel is $\log_2(1 + |\alpha(\omega)|^2 \mathrm{SNR})$ in bits per channel use. This quantity is a function of the gain $\alpha(\omega)$ and is therefore random. Channel capacity is defined as the maximum coding rate that is achievable (an error correction scheme of such rate with vanishing error rate exists) for a noisy channel. This, however, cannot be done for the slow fading scenario considered here since there will be a non-zero probability that the channel realization has a capacity (realization) lower than the rate considered. Hence, the quantum capacity for such channel in the strict sense is zero. Now suppose that the transmitter encodes data at a rate $R$ bits per channel use. If the channel realization $\alpha$ is such
that $C(\omega)=\log(1+ |\alpha|^2\mathrm{SNR})<R$, then whatever code that was used by the transmitter, the
decoding error probability cannot be made arbitrarily small. The system is said to be
in outage, and the outage probability is given by \cite{tse}
\begin{equation}\label{eq:poutFading}
\begin{split}
p_{\mathrm{out}}(R,\mathrm{SNR}) &= \mathrm{P}(\{\omega\in\Omega : C(\omega) < R\}) \\  &= P(\{\omega\in\Omega : \log_2(1+|\alpha(\omega)|^2 \mathrm{SNR}) < R\}).
\end{split}
\end{equation}

The outage probability will depend on the probability distribution of the channel gain random variable, $\alpha(\omega)$. For the widely used Rayleigh fading model, for which the channel gain follows a circularly symmetric complex normal distribution, $\mathcal{CN}(0,1)$, the outage probability can be shown to be \cite{tse}
\begin{equation}\label{eq:rayleighPout}
p_{\mathrm{out}}(R,\mathrm{SNR}) = 1 - \mathrm{e}^{\frac{-(2^R - 1)}{\mathrm{SNR}}}.
\end{equation}

\section{Quantum outage probability for TVQCs}\label{sub:poutQdef}
Let us now look for the outage probability of time-varying quantum channels. As stated before, during the duration of a QECC codeword, the realization of the random processes $T_1$ and $T_2$, defining the dynamics of the quantum channel, can be considered constant for the duration of the algorithm processing time. Therefore, we can assume that for each of the realizations of the qubit relaxation and dephasing times, $T_1$ and $T_2$, associated to a codeword, a static quantum channel will result. This channel will be valid for the current codeword under consideration, and it will render a specific value for its quantum channel capacity, $C_\mathrm{Q}$ qubits per channel use. Here is where one can observe the similarities with the case slow fading in classical communications.

Therefore, if the realization of the decoherence parameters leads to a channel capacity lower that the quantum coding rate, $R_\mathrm{Q}$, then the quantum bit error rate (QBER) will not vanish asymptotically with the blocklength, independently of the selected QECC. For such realizations we say the channel is in outage, and the probability of such outage event will be
\begin{equation}\label{eq:qPout}
p_{\mathrm{out}}^\mathrm{Q}(R_\mathrm{Q}) = \mathrm{P}(\{\omega\in\Omega : C_\mathrm{Q}(\omega) < R_\mathrm{Q}\}),
\end{equation}
which we name as quantum outage probability.

In other words, with probability $p_{\mathrm{out}}^\mathrm{Q}(R_\mathrm{Q})$, the capacity of the channel $C_\mathrm{Q}(\omega)$ will be lower than the rate of the code, and thus, the error rate will not vanish asymptotically. Conversely, with probability $1-p_{\mathrm{out}}^\mathrm{Q}(R_\mathrm{Q})$, reliable quantum correction will be possible. Thus, the quantum outage probability will be the asymptotically achievable error rate for quantum error correction when the rate is $R_\mathrm{Q}$.

Note that the quantum outage probability is the metric that has to be taken into account in order to benchmark the performance of QECCs when they operate over TVQCs that fulfill the assumptions stated in this chapter.

\section{Computation of the quantum outage probability for the family of time-varying amplitude damping channels}
In Chapter \ref{ch:preliminary}, we provided analytic expressions of the quantum capacity of the amplitude damping (AD) channel. We also mentioned that for its Pauli/Clifford twirl approximations a closed-form expression for its capacity remains unknown, but it can be upper bounded by the hashing bound. Therefore, for the family of TVAD channels, we derive a closed-form expression for the quantum outage probability (refer to Theorem \ref{thm:poutTVAD}). On the other hand, for the Pauli/Clifford twirl approximated channels of the TVAD we propose (refer to Corollary \ref{cor:pouHPauli}) the quantum hashing outage probability, $p_{\mathrm{out}}^\mathrm{H}$, as an upper bound of their quantum outage probability. Note that since we are studying AD channels, we deal with $T_1$-limited qubits, that is, qubits that do not present pure dephasing.

\subsection{Outage probability for the time-varying amplitude damping channel}\label{sec:poutTVAD}
 Before presenting the main result of this section, Theorem \ref{thm:poutTVAD}, we begin by stating some facts. From expression \eqref{ADcap}, the quantum capacity $C_\mathrm{Q}$ of the AD channel is a monotonically decreasing function of the damping parameter, $\gamma$. Therefore, there will be a unique $\gamma^*(R_\mathrm{Q})$ that makes the value of the channel capacity $C_\mathrm{Q}$ equal to $R_\mathrm{Q}$, i.e., $C_\mathrm{Q}(\gamma^*(R_\mathrm{Q})) = R_\mathrm{Q}$. That is,
\begin{equation}\label{josu2}
C_\mathrm{Q}(\gamma^*(R_\mathrm{Q})) = R_\mathrm{Q} \Leftrightarrow\gamma^*(R_\mathrm{Q})=C^{-1}_\mathrm{Q}(R_\mathrm{Q}).
\end{equation}
We will refer to $\gamma^*(R_\mathrm{Q})$ as the noise limit. Note that codes of rate $R_\mathrm{Q}$ operating over channels with a damping parameter $\gamma>\gamma^*(R_\mathrm{Q})$ will not be reliable since $R_\mathrm{Q}> C_\mathrm{Q}(\gamma)$.

In addition, from \eqref{gammatime} we define the critical relaxation time $T_1^*(R_\mathrm{Q}, t_{\mathrm{algo}})$ as
\begin{equation}\label{eq:citicalT1}
T_1^*(R_\mathrm{Q},t_{\mathrm{algo}}) = \frac{-t_{\mathrm{algo}}}{\ln{(1-\gamma^*(R_\mathrm{Q}))}},
\end{equation}
which is a function of the algorithm time, $t_{\mathrm{algo}}$. In order to compare quantum channels with different mean relaxation times, $\mu_{T_1}$, we would like rewrite the critical time as a function of the damping parameter, $\gamma$, that results when assuming a static channel\footnote{This is similar to the normalization done in Chapter \ref{cp3} for the diamond norm analysis, as the damping rate is a function of the algorithm time and the relaxation time.} with mean $\mu_{T_1}$. The reason for it is that if the calculations were done as a function of the algorithm time, the comparison between qubits with different mean relaxation times would not be fair, since for a given $t_{algo}$, the higher the values of $\mu_{T_1}$ are, the lower the values of $\gamma $ will be, so that we will be comparing channels with different degrees of noise. It is obvious that longer mean relaxation times are more favorable for computing applications, as they allow for longer algorithm times. However, we are interested in calculating the quantum outage probability versus the noise level of the channel, i.e, we want to know how much noise a qubit is able to tolerate.

To that end, given a damping parameter $\gamma$ and a $\mu_{T_1}$, the algorithm time will be set to
\begin{equation}\label{eq:talgo}
t_{\mathrm{algo}} = - \mu_{T_1}\ln{(1 - \gamma)}.
\end{equation}
That is, for quantum channels with different $\mu_{T_1}$'s, the above expression gives the corresponding algorithm times that will produce the same damping parameter $\gamma$ in all of them.

Substituting such value in \eqref{citicalT1}, yields, 
\begin{equation}\label{josu12}
T^*_{1}(R_\mathrm{Q},\gamma)=\frac{\mu_{T_1}\ln(1-{\gamma})}{\ln(1-{\gamma^*(R_\mathrm{Q})})}.
\end{equation}
In this way, the critical relaxation time in \eqref{citicalT1} is independent of the algorithm time, and is only a function of the damping parameter $\gamma$ associated to the static AD channel.

We are now ready to introduce the theorem that derives the quantum outage probability for TVAD channels for $T_1$-limited qubits \cite{decoherenceBenchmarking}.
\begin{theorem}[TVAD quantum outage probability]\label{thm:poutTVAD}
\textit{The quantum outage probability for the time-varying amplitude damping channels associated to the damping parameter,  $\gamma\in[0,1-{e}^{-1}]$, is equal to
\begin{equation}\label{eq:poutTVAD}
p_{\mathrm{out}}^\mathrm{Q}(R_\mathrm{Q}, \gamma) = 1 - \frac{\mathrm{Q}\left(\frac{1}{c_\mathrm{v}(T_1)}\left[\frac{\ln(1-{\gamma})}{\ln(1-{\gamma^*(R_\mathrm{Q})})} - 1\right]\right)}{1 - \mathrm{Q}\left(\frac{1}{c_\mathrm{v}(T_1)}\right)},
\end{equation}
where $\mathrm{Q}(\cdot)$ is the Q-function, $c_\mathrm{v}(T_1)$ is the coefficient of variation of $T_1$, $\gamma^*(R_\mathrm{Q})$ is the noise limit, $\mu_{T_1}$ is the mean relaxation time and $\sigma_{T_1}$ is the standard deviation of the relaxation time.}
\end{theorem}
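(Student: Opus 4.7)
The plan is to reduce the outage event to an event on the single random variable $T_1(\omega)$, and then directly evaluate the resulting probability using the CDF of the truncated normal distribution. The structure of the argument has three stages: a monotonicity reduction, a substitution to a threshold event on $T_1$, and a CDF computation followed by algebraic cleanup.

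First I would exploit the two monotonicity facts already established in the chapter. By equation \eqref{ADcap}, the quantum capacity $C_\mathrm{Q}(\gamma)$ is strictly monotonically decreasing on $[0,1/2]$, so the existence and uniqueness of $\gamma^*(R_\mathrm{Q})=C_\mathrm{Q}^{-1}(R_\mathrm{Q})$ from \eqref{josu2} gives the equivalence
\begin{equation}\nonumber
\{\omega : C_\mathrm{Q}(\omega) < R_\mathrm{Q}\} = \{\omega : \gamma(\omega) > \gamma^*(R_\mathrm{Q})\}.
\end{equation}
Second, since $\gamma(\omega) = 1 - e^{-t_{\mathrm{algo}}/T_1(\omega)}$ from \eqref{gammatime} is monotonically decreasing in $T_1$, this outage event is in turn equivalent to $\{\omega : T_1(\omega) < T_1^*(R_\mathrm{Q},t_{\mathrm{algo}})\}$, where $T_1^*$ is given by \eqref{citicalT1}. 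The restriction $\gamma\in[0,1-e^{-1}]$ plays its role here: under the normalization $t_{\mathrm{algo}}=-\mu_{T_1}\ln(1-\gamma)$ of \eqref{talgo}, this constraint guarantees $t_{\mathrm{algo}}\le\mu_{T_1}$, keeping us in the physically meaningful regime and keeping $\gamma^*(R_\mathrm{Q})\le 1/2$ on the capacity-positive branch.

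Third, I would compute $\mathrm{P}(T_1(\omega) < T_1^*)$ directly from the CDF of the truncated Gaussian $T_1(\omega)\sim\mathcal{GN}_{[0,\infty]}(\mu_{T_1},\sigma_{T_1}^2)$:
\begin{equation}\nonumber
\mathrm{P}(T_1(\omega)<t) = \frac{\Phi\!\left(\tfrac{t-\mu_{T_1}}{\sigma_{T_1}}\right)-\Phi\!\left(\tfrac{-\mu_{T_1}}{\sigma_{T_1}}\right)}{1-\Phi\!\left(\tfrac{-\mu_{T_1}}{\sigma_{T_1}}\right)}.
\end{equation}
Rewriting via $\Phi(x)=1-\mathrm{Q}(x)$ and $\mu_{T_1}/\sigma_{T_1}=1/c_\mathrm{v}(T_1)$ turns the denominator into $1-\mathrm{Q}(1/c_\mathrm{v}(T_1))$. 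Evaluating the numerator at $t=T_1^*$ and using the identity
\begin{equation}\nonumber
\frac{T_1^*-\mu_{T_1}}{\sigma_{T_1}} = \frac{1}{c_\mathrm{v}(T_1)}\!\left(\frac{T_1^*}{\mu_{T_1}}-1\right) = \frac{1}{c_\mathrm{v}(T_1)}\!\left(\frac{\ln(1-\gamma)}{\ln(1-\gamma^*(R_\mathrm{Q}))}-1\right),
\end{equation}
which follows from the reformulation of $T_1^*$ in \eqref{josu12}, one obtains a fraction of the form $1-\mathrm{Q}(1/c_\mathrm{v})-\mathrm{Q}(A)$ over $1-\mathrm{Q}(1/c_\mathrm{v})$. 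A final algebraic step rearranges this into the claimed expression $1 - \mathrm{Q}(A)/(1-\mathrm{Q}(1/c_\mathrm{v}(T_1)))$.

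The main obstacle I anticipate is not conceptual but bookkeeping: carefully tracking the two distinct parametrizations of the same noise strength (the physical $t_{\mathrm{algo}}/T_1$ ratio versus the normalized damping parameter $\gamma$ associated with $\mu_{T_1}$) so that the substitution $T_1^*/\mu_{T_1}=\ln(1-\gamma)/\ln(1-\gamma^*(R_\mathrm{Q}))$ is justified and the Q-function arguments line up correctly. I would also double-check that the domain restriction $\gamma\in[0,1-e^{-1}]$ is precisely the range on which (i) the static-channel normalization yields $t_{\mathrm{algo}}\le\mu_{T_1}$, and (ii) $\gamma^*(R_\mathrm{Q})$ remains in the capacity-positive interval $[0,1/2]$ for the relevant range of $R_\mathrm{Q}$; outside this window the outage probability trivially collapses to $1$ and the closed form degenerates.
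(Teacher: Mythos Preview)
Your proposal is correct and matches the paper's proof: reduce the outage event to $\{T_1(\omega)<T_1^*\}$ via the two monotonicity facts, evaluate the truncated-normal CDF, and simplify using $T_1^*/\mu_{T_1}=\ln(1-\gamma)/\ln(1-\gamma^*(R_\mathrm{Q}))$. One small note: the paper justifies the restriction $\gamma\in[0,1-e^{-1}]$ solely through the slow-variation hypothesis (ensuring $t_{\mathrm{algo}}\le\mu_{T_1}\ll T_c$), not via any bound on $\gamma^*(R_\mathrm{Q})$, which depends only on $R_\mathrm{Q}$.
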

\begin{proof}
In order to compute the outage probability $p_{\mathrm{out}}^\mathrm{Q}(R_\mathrm{Q},\gamma)$, we use the decreasing monotonicity of $C_\mathrm{Q}$ and $T_1$ with respect to $\gamma$. This implies that the events $\{\omega\in\Omega : C_\mathrm{Q}(\gamma(\omega))< R_\mathrm{Q}\}$, $\{ \omega\in\Omega : \gamma(\omega)< \gamma^*(R_\mathrm{Q})\}$ and $\{\omega\in\Omega : T_1(\omega) < T_1^*(R_\mathrm{Q},\gamma)\}$ are all the same.
Therefore,
\begin{equation}\label{eq:TVADstep1}
\begin{split}
p_{\mathrm{out}}^\mathrm{Q}(R_\mathrm{Q}, \gamma) &= \mathrm{P}(\{\omega\in\Omega : C_\mathrm{Q}(\gamma(\omega)) < R_\mathrm{Q}\}) \\&=  \mathrm{P}(\{\omega\in\Omega : T_1(\omega) < T_1^*(R_\mathrm{Q},\gamma)\}).
\end{split}
\end{equation}

Next, we compute \eqref{TVADstep1} based on the fact that the random variable $T_1(\omega)\sim \mathcal{GN}_{[0,\infty]}(\mu_{T_1},\sigma^2_{T_1})$ has a probability density function

\begin{equation}\label{eq:pdf}
f_{T_1}(t_1)\begin{cases}
\frac{1}{\sigma_{T_1}\sqrt{2\pi}}\frac{\mathrm{e}^{-\frac{(t_1 - \mu_{T_1})^2}{2\sigma_{T_1}^2}}}{1-\mathrm{Q}\left(\frac{\mu_{T_1}}{\sigma_{T_1}}\right)} &\text{ if } t_1\geq 0 \\
0 &\text{ if } t_1<0
\end{cases},
\end{equation}
where in the above expression, $\mathrm{Q}(\cdot)$ is the Q-function defined as
\begin{equation}\label{josu6}
\mathrm{Q}(x) = \frac{1}{\sqrt{2\pi}}\int_x^{\infty}\mathrm{e}^{-\frac{x^2}{2}} dx.
\end{equation}

The outage probability of the TVAD channel can be calculated as

\begin{equation}\label{eq:TVADstep3}
\begin{split}
p_{\mathrm{out}}^\mathrm{Q}(&R_\mathrm{Q},\gamma) =  \mathrm{P}(\{\omega\in\Omega : T_1(\omega) < T_1^*(R_\mathrm{Q}, \gamma)\})
 = \int_{-\infty}^{T_1^*(R_\mathrm{Q}, \gamma)}f_{T_1}(t_1)dt_1 \\
 &= \int_{0}^{T_1^*(R_\mathrm{Q},\gamma)}\frac{1}{\sigma_{T_1}\sqrt{2\pi}}\frac{\mathrm{e}^{-\frac{(t_1 - \mu_{T_1})^2}{2\sigma_{T_1}^2}}}{1-\mathrm{Q}\left(\frac{\mu_{T_1}}{\sigma_{T_1}}\right)} dt_1 \\
&=\frac{1}{1-\mathrm{Q}\left(\frac{\mu_{T_1}}{\sigma_{T_1}}\right)}\left(\int_{-\infty}^{T_1^*(R_\mathrm{Q}, \gamma)}\frac{1}{\sigma_{T_1}\sqrt{2\pi}}\mathrm{e}^{-\frac{(t_1 - \mu_{T_1})^2}{2\sigma_{T_1}^2}}dt_1\right. \\ & \left.- \int_{-\infty}^{0}\frac{1}{\sigma_{T_1}\sqrt{2\pi}}\mathrm{e}^{-\frac{(t_1 - \mu_{T_1})^2}{2\sigma_{T_1}^2}}dt_1\right)\\
& =\frac{1}{1-\mathrm{Q}\left(\frac{\mu_{T_1}}{\sigma_{T_1}}\right)}\left(\int_{-\infty}^{\frac{T_1^*(R_\mathrm{Q}, \gamma)-\mu_{T_1}}{\sigma_{T_1}}}\frac{1}{\sqrt{2\pi}}\mathrm{e}^{-\frac{\eta^2}{2}}d\eta\right. \\ & \left. - \int_{-\infty}^{\frac{-\mu_{T_1}}{\sigma_{T_1}}}\frac{1}{\sqrt{2\pi}}\mathrm{e}^{-\frac{\eta^2}{2}}d\eta\right)\\
& = \frac{1}{1-\mathrm{Q}\left(\frac{\mu_{T_1}}{\sigma_{T_1}}\right)}\left(1-\int_{\frac{T_1^*(R_\mathrm{Q}, \gamma)-\mu_{T_1}}{\sigma_{T_1}}}^{\infty}\frac{1}{\sqrt{2\pi}}\mathrm{e}^{-\frac{\eta^2}{2}}d\eta\right. \\ & \left. - \int_{\frac{\mu_{T_1}}{\sigma_{T_1}}}^{\infty}\frac{1}{\sqrt{2\pi}}\mathrm{e}^{-\frac{\eta^2}{2}}d\eta\right) \\
& =\frac{1-\mathrm{Q}\left(\frac{T_1^*(R_\mathrm{Q}, \gamma)-\mu_{T_1}}{\sigma_{T_1}}\right) - \mathrm{Q}\left(\frac{\mu_{T_1}}{\sigma_{T_1}}\right)}{1-\mathrm{Q}\left(\frac{\mu_{T_1}}{\sigma_{T_1}}\right)}  = 1 - \frac{\mathrm{Q}\left(\frac{T_1^*(R_\mathrm{Q}, \gamma)-\mu_{T_1}}{\sigma_{T_1}}\right)}{1-\mathrm{Q}\left(\frac{\mu_{T_1}}{\sigma_{T_1}}\right)}\\&= 1 - \frac{\mathrm{Q}\left(\frac{\mu_{T_1}}{\sigma_{T_1}}\left[\frac{\ln(1-{\gamma})}{\ln(1-{\gamma^*(R_\mathrm{Q})})} - 1\right]\right)}{1 - \mathrm{Q}\left(\frac{\mu_{T_1}}{\sigma_{T_1}}\right)} =1 - \frac{\mathrm{Q}\left(\frac{1}{c_\mathrm{v}(T_1)}\left[\frac{\ln(1-{\gamma})}{\ln(1-{\gamma^*(R_\mathrm{Q})})} - 1\right]\right)}{1 - \mathrm{Q}\left(\frac{1}{c_\mathrm{v}(T_1)}\right)},
\end{split}
\end{equation}
as we wanted to prove.

Note that the quantum outage probability as a function of the damping parameter $\gamma$ does not depend on the absolute value of the mean relaxation time, but on the coefficient of variation of $T_1$. This way, we decouple the time-varying effects from the fact that longer mean relaxation times admit longer quantum algorithm processing times. Consequently, we present a result agnostic to the impact that longer coherence times have and we can provide conclusions for all superconducting qubits.

Finally, we need to find a valid range of $\gamma$ that will result in $t_{\mathrm{algo}}$ that are much smaller than the coherence time, $T_\mathrm{c}$, of the random process $T_1(t,\omega)$, (i.e., verifying the initial hypothesis of slow time variations). To that end, note that algorithm times longer than $\mu_{T_1}$ makes no sense since for such a time frame, the qubit will be with high probability in an equilibrium state, which is useless as a resource. Therefore, we select $t_{\mathrm{algo}}< \mu_{T_1}$. Since $\mu_{T_1}$ is of order of microseconds for superconducting qubits (as explained in Chapter \ref{cp3}), and $T_\mathrm{c}$ is of order of minutes \cite{decoherenceBenchmarking} the hypothesis $t_{\mathrm{algo}}\ll T_\mathrm{c}$ holds. From \eqref{talgo}, we conclude that valid for the range for $\gamma$ is $\gamma\in[0,1-e^{-1}]$.
\end{proof}

\subsection{Quantum hashing outage probability for the time-varying twirl approximated channels}\label{sub:TVtwirls}
Since the LSD capacity of the Pauli channels that result from applying the Pauli/Clifford twirled approximation to the AD channel is not known, the quantum outage probability for this family of approximated channels cannot be calculated. Nevertheless, by means of the hashing bound, $C_\mathrm{H}(\mathbf{p})$, (refer to \eqref{hash}), we define the quantum hashing outage probability for the time-varying Pauli channels defined by the probability mass
function $\mathbf{p}(\omega) = (p_\mathrm{I}, p_\mathrm{x}, p_\mathrm{y}, p_\mathrm{z})$ random vector as
\begin{equation}\label{eq:poutPauli}
p_{\mathrm{out}}^\mathrm{H}(R_\mathrm{Q}) = \mathrm{P}(\{\omega\in\Omega : C_\mathrm{H}(\mathbf{p}(\omega))< R_\mathrm{Q}\}).
\end{equation}

Note that since the hashing limit is a lower bound of the LSD capacity $C_Q$, we have that
\[ \{\omega\in\Omega : C_Q(\mathbf{p}(\omega))< R_\mathrm{Q}\}\subseteq \{\omega\in\Omega : C_\mathrm{H}(\mathbf{p}(\omega))<R_\mathrm{Q}\}\]
Consequently, $p_{\mathrm{out}}^\mathrm{H}(R_\mathrm{Q})$ gives an upper bound of the actual quantum outage probability $p_{\mathrm{out}}^\mathrm{Q}(R_\mathrm{Q})$, i.e., $p_{\mathrm{out}}^\mathrm{H}(R_\mathrm{Q})\geq p_{\mathrm{out}}^\mathrm{Q}(R_\mathrm{Q})$.

It is important to realise that the Hashing bound in \eqref{hash}
\begin{equation}
C_\mathrm{H}(\mathbf{p}(\gamma)) =1+\sum_{k\in\{\mathrm{I,x,y,z}\}}p_k(\gamma)\log_2p_k(\gamma) = 1- \mathrm{H}_2(\mathbf{p}(\gamma)),
\end{equation}
 for the twirled approximated channels of the AD channel with the probability distributions given in \eqref{PTAprobs} and \eqref{CTAprob}\footnote{Note that expressions \eqref{PTAprobs} and \eqref{CTAprob} are for the twirl approximations of the complete amplitude and phase damping channel. However, it is easy to obtain the expressions for the ADPTA and ADCTA channels by considering $\lambda = 0$, i.e., the absence of pure dephasing channels.}, is a monotonic decreasing function of the damping probability, $\gamma$. This is justified by the fact that, as $\gamma\in[0,1]$ increases, the values of $p_\mathrm{x},p_\mathrm{y},p_\mathrm{z}$ from either \eqref{PTAprobs} or \eqref{CTAprob} also increase. This results in the uncertainty of the discrete random variables associated to each of these distributions, and consequently their corresponding entropy values, becoming higher. Therefore, as we did for the AD channel, we define the noise limit for these Pauli channels as the unique value of the damping parameter, $\gamma_{\mathrm{T}}^*(R_\mathrm{Q})$, such that:
\begin{equation}\label{eq:josu11}
1-C_\mathrm{H}(\mathbf{p}(\gamma^*_{\mathrm{T}}(R_\mathrm{Q})) =R_\mathrm{Q}\Leftrightarrow \gamma^*_{\mathrm{T}}(R_\mathrm{Q})=C^{-1}_\mathrm{H}(1-R_\mathrm{Q}).
\end{equation}
From \eqref{gammatime}, the critical relaxation time (note that we have added the subindex $\mathrm{T}$ to indicate we are twirling the AD channel) is
\begin{equation}\label{josu122}
T^*_{1,\mathrm{T}}(R_\mathrm{Q},t_{\mathrm{algo}})=\frac{-t_{\mathrm{algo}}}{\ln(1-\gamma^*_{\mathrm{T}}(R_\mathrm{Q}))},
\end{equation}
 where the probability mass function $\mathbf{p}$ in \eqref{josu11} should be taken as $\mathbf{p}_{\mbox{\tiny{ADPTA}}}$ or $\mathbf{p}_{\mbox{\tiny{ADCTA}}}$ when considering the twirled ADPTA or ADCTA channels, respectively. Similarly to the TVAD channel, we can write the critical relaxation time as a function of the damping parameter
 \begin{equation}\label{josu12T}
T^*_{1,\mathrm{T}}(R_\mathrm{Q},\gamma)=\frac{\mu_{T_1}\ln(1-{\gamma})}{\ln(1-{\gamma^*_{\mathrm{T}}(R_\mathrm{Q})})}.
\end{equation}

The following corollary yields the hashing outage probability for the twirled TVADPTA and TVADCTA Pauli channels that consider $T_1$-limited qubits as the ones of \cite{decoherenceBenchmarking}.

\begin{corollary}[Quantum hashing outage probability]\label{cor:pouHPauli}
\textit{The quantum hashing outage probability for the time-varying twirled approximated channels associated to the damping parameter,  $\gamma\in[0,1-e^{-1}]$, is equal to
\begin{equation}\label{eq:poutTVADTAs}
p_{\mathrm{out}}^\mathrm{H}(R_\mathrm{Q}, \gamma) = 1 - \frac{\mathrm{Q}\left(\frac{1}{c_\mathrm{v}(T_1)}\left[\frac{\ln(1-{\gamma})}{\ln(1-{\gamma^*_{\mathrm{T}}(R_\mathrm{Q})})} - 1\right]\right)}{1 - \mathrm{Q}\left(\frac{1}{c_\mathrm{v}(T_1)}\right)},
\end{equation}
where $\mathrm{Q}(\cdot)$ is the Q-function, $c_\mathrm{v}(T_1)$ is the coefficient of variation of $T_1$, $\gamma^*_{\mathrm{T}}(R_\mathrm{Q})$ is the noise limit that depends on the considered twirled approximation, $\mu_{T_1}$ is the mean relaxation time, and $\sigma_{T_1}$ is the standard deviation of the relaxation time.}
\end{corollary}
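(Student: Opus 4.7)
The plan is to mirror the proof of Theorem \ref{thm:poutTVAD} essentially verbatim, exploiting the fact that the only structural ingredient used in that proof was the monotone decreasing dependence of the relevant capacity quantity on $\gamma$, together with the distribution of $T_1(\omega)$. The starting point will be the definition \eqref{poutPauli}: $p_{\mathrm{out}}^\mathrm{H}(R_\mathrm{Q},\gamma) = \mathrm{P}(\{\omega\in\Omega : C_\mathrm{H}(\mathbf{p}(\omega))< R_\mathrm{Q}\})$, where $\mathbf{p}(\omega)$ stands for either $\mathbf{p}_{\mathrm{ADPTA}}$ or $\mathbf{p}_{\mathrm{ADCTA}}$ evaluated at the random damping parameter $\gamma(\omega) = 1 - e^{-t_{\mathrm{algo}}/T_1(\omega)}$.

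First I would invoke the monotonicity argument already established in the paragraph preceding the corollary: as $\gamma$ increases from $0$ to $1$, each of $p_\mathrm{x}, p_\mathrm{y}, p_\mathrm{z}$ in \eqref{PTAprobs}/\eqref{CTAprob} (specialized to $\lambda = 0$) increases, so the Shannon entropy $\mathrm{H}_2(\mathbf{p}(\gamma))$ increases and thus $C_\mathrm{H}(\mathbf{p}(\gamma)) = 1 - \mathrm{H}_2(\mathbf{p}(\gamma))$ decreases. By \eqref{josu11} this yields a unique crossing point $\gamma^*_{\mathrm{T}}(R_\mathrm{Q})$, and the three events
\[ \{\omega : C_\mathrm{H}(\mathbf{p}(\omega)) < R_\mathrm{Q}\},\; \{\omega : \gamma(\omega) > \gamma^*_{\mathrm{T}}(R_\mathrm{Q})\},\; \{\omega : T_1(\omega) < T^*_{1,\mathrm{T}}(R_\mathrm{Q},t_{\mathrm{algo}})\} \]
are identical, with $T^*_{1,\mathrm{T}}$ as in \eqref{josu122}. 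This reduction is precisely the step that replaces $\gamma^*(R_\mathrm{Q})$ by $\gamma^*_{\mathrm{T}}(R_\mathrm{Q})$ relative to the TVAD case.

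Next I would compute $\mathrm{P}(\{\omega : T_1(\omega) < T^*_{1,\mathrm{T}}(R_\mathrm{Q},\gamma)\})$ using the truncated-Gaussian density \eqref{pdf}. This integration was carried out in the chain \eqref{TVADstep3} of Theorem \ref{thm:poutTVAD}, and it is agnostic to whether the critical time is $T_1^*$ or $T^*_{1,\mathrm{T}}$: the only fact used is that the critical time can be written, via \eqref{josu12T}, as $T^*_{1,\mathrm{T}}(R_\mathrm{Q},\gamma) = \mu_{T_1}\,\ln(1-\gamma)/\ln(1-\gamma^*_{\mathrm{T}}(R_\mathrm{Q}))$. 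Substituting this expression in the final line of \eqref{TVADstep3} and simplifying $\mu_{T_1}/\sigma_{T_1}$ to the coefficient of variation $1/c_\mathrm{v}(T_1)$ delivers the claimed closed-form \eqref{poutTVADTAs}.

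The only nontrivial checkpoint I anticipate is the admissible range $\gamma \in [0, 1-e^{-1}]$. This is not a calculation but a consistency condition: one must verify that the algorithm time $t_{\mathrm{algo}} = -\mu_{T_1}\ln(1-\gamma)$ implicit in the rescaling \eqref{talgo} still satisfies $t_{\mathrm{algo}} \ll T_\mathrm{c}$, so that the block-fading hypothesis underlying the quantum outage definition \eqref{qPout} remains valid for the twirled channels. Since the twirled channels inherit their time variation from the very same $T_1(\omega,t)$ process that drives the TVAD channel, the same justification given at the end of the proof of Theorem \ref{thm:poutTVAD} transfers without change, and no separate argument is required. All remaining work is bookkeeping, so I would present the corollary's proof as a one-line reduction to Theorem \ref{thm:poutTVAD} with $\gamma^*(R_\mathrm{Q})$ replaced by $\gamma^*_{\mathrm{T}}(R_\mathrm{Q})$.
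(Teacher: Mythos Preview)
Your proposal is correct and follows essentially the same approach as the paper: reduce the hashing outage event to $\{T_1(\omega) < T^*_{1,\mathrm{T}}(R_\mathrm{Q},\gamma)\}$ via the monotonicity of $C_\mathrm{H}$ in $\gamma$, and then invoke the truncated-Gaussian computation already carried out in Theorem~\ref{thm:poutTVAD} with $\gamma^*$ replaced by $\gamma^*_{\mathrm{T}}$. The paper's proof is indeed just this one-line reduction, and your treatment of the admissible range $\gamma\in[0,1-e^{-1}]$ matches the paper's justification as well.
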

\begin{proof}
In order to compute the hashing outage probability $p_{\mathrm{out}}^\mathrm{H}(R_\mathrm{Q}, \gamma)$, we use the fact that $C_\mathrm{H}$ and $T_1$ are monotonically decreasing functions of $\gamma$. This implies that the events $\{\omega\in\Omega : C_\mathrm{H}(\omega)< R_\mathrm{Q}\}$, $\{ \omega\in\Omega : \gamma(\omega)< \gamma^*_{\mathrm{T}}(R_\mathrm{Q})\}$ and $\{\omega\in\Omega : T_1(\omega) < T_{1,\mathrm{T}}^*(R_\mathrm{Q}, \gamma)\}$ are all same.
Therefore,
\begin{equation}\label{eq:TVADTAstep2}
\begin{split}
p_{\mathrm{out}}^\mathrm{H}(R_\mathrm{Q}, \gamma) &= P(\{\omega\in\Omega : C_\mathrm{H}(\omega) < R_\mathrm{Q}\})= \\&  \mathrm{P}(\{\omega\in\Omega : T_1(\omega) < T_{1,\mathrm{T}}^*(R_\mathrm{Q}, \gamma)\}.
\end{split}
\end{equation}
Thus, the hashing outage corresponds to events where the realization of the relaxation time is lower than the critical relaxation time.

The calculation of the quantum hashing outage probability now follows the same steps as in the proof of Theorem \ref{thm:poutTVAD}, since equation \eqref{TVADTAstep2} is the same as \eqref{TVADstep1}.
\end{proof}

Even though the final expression for the quantum hashing outage probability looks the same as the one of Theorem \ref{thm:poutTVAD}, what makes their values different is that their corresponding noise limit values are computed in a different way.

\subsection{Numerical simulations}\label{sub:numsim}

We now discuss the behaviour of the quantum outage probability and the quantum hashing outage probability. Following the reasoning of Chapter \ref{cp3}, we will compare scenarios set by different coefficients of variation of the random variable $T_1$. In particular, we consider the following values of the coefficient of variation; $c_\mathrm{v}(T_1)=\{1,10,15,20,25\}\%$.

\subsubsection{Quantum outage probability of the TVAD channel}\label{subsub:tvadPoutQ}

Figure \ref{fig:ADpoutQ} plots the quantum outage probability versus the damping parameter, $10^{-3}\leq \gamma\leq 0.6$, for a quantum rate $R_\mathrm{Q}=1/9$, and for all the coefficients of variation of the relaxation time random variable.

\begin{figure}[!ht]
\centering
\includegraphics[width=\linewidth]{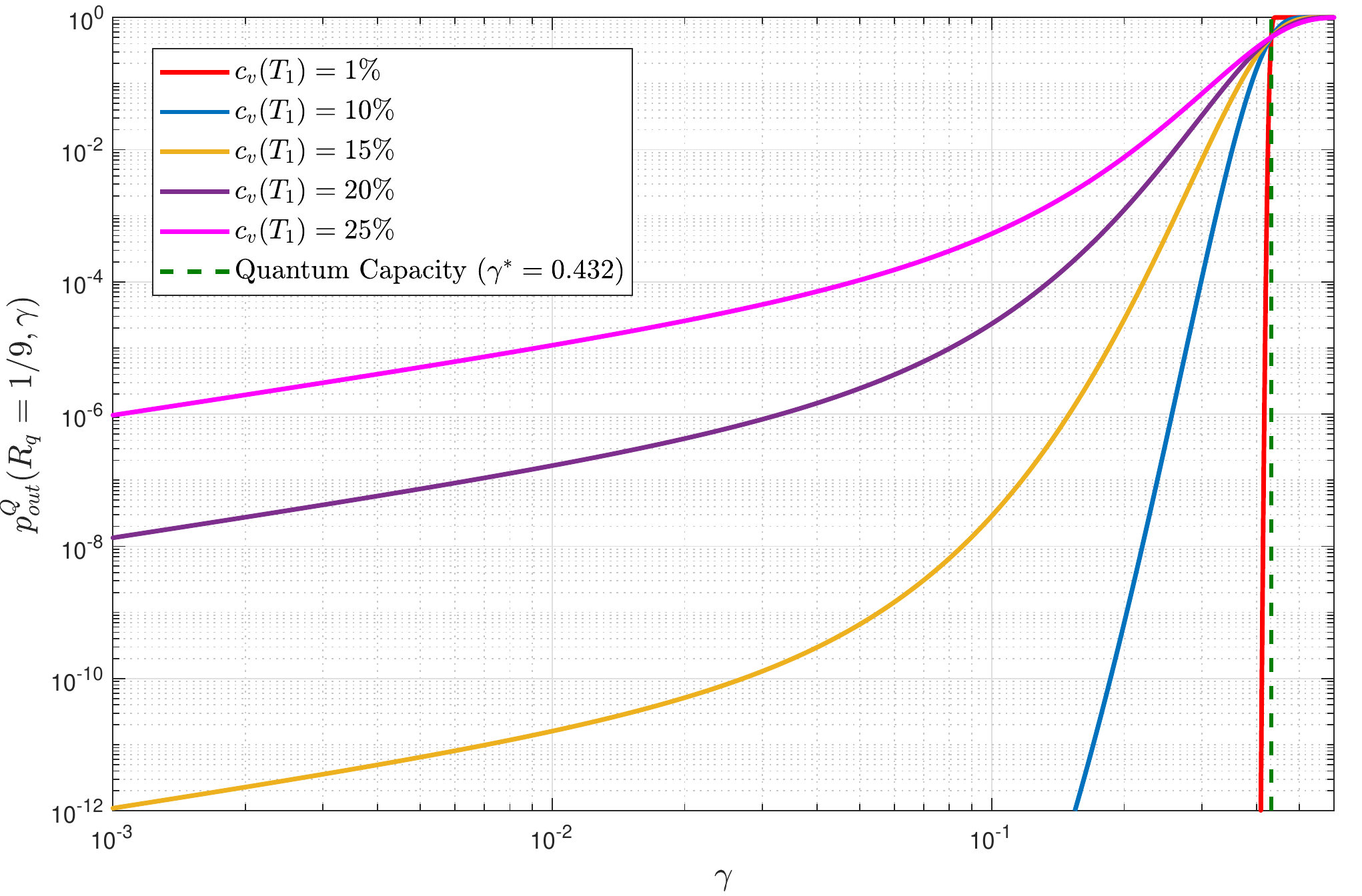}
\caption{Quantum outage probability of the TVAD channel. The metric is calculated for TVADs with $c_\mathrm{v}(T_1)=\{1,10,15,20,25\}\%$ and for a quantum rate of $R_\mathrm{Q}=1/9$.}
\label{fig:ADpoutQ}
\end{figure}

Figure \ref{fig:ADpoutQ} further cements the conclusions derived in Chapter \ref{cp3}, that is, the impact of the fluctuations of the decoherence parameters can be accurately quantified by the coefficient of variation of $T_1$. Notice that when the coefficient of variation is very low, i.e. $c_\mathrm{v}(T_1)=1\%$, the quantum outage probability of the TVAD channel almost coincides with the quantum capacity (represented herein by the noise limit, $\gamma^*$). Consequently, QECCs operating over TVQCs that present low coefficients of variation will behave asymptotically in a similar manner to static channels. However, by increasing the variability of the relaxation time around its mean causes the outage probability to diverge from the static capacity. In this case, the asymptotical bounds flatten and the achievable error rate of QECCs operating over TVQCs does not vanish. Therefore, the higher $c_\mathrm{v}(T_1)$ is, the worse the achievable error rate will be.

The previous discussion indicates that the coefficient of variability of the random variable $T_1(\omega)$ can be used to describe the effect that decoherence parameter time fluctuations will produce on the aymptotical limits of QECCs. These results show the importance of qubit construction and cooldown: if optimized correctly, the fluctuations relative to the mean will be mild, and the outage events will be significantly less likely. Obviously, it is desirable for qubits to exhibit large $\mu_{T_1}$, so that algorithms with longer lifespans can be handled appropriately. However, as important as that is the minimization of the dispersion of $T_1$ around its $\mu_{T_1}$, so the outage probability can be reduced.

\begin{figure}[!h]
\centering
\includegraphics[width=\linewidth]{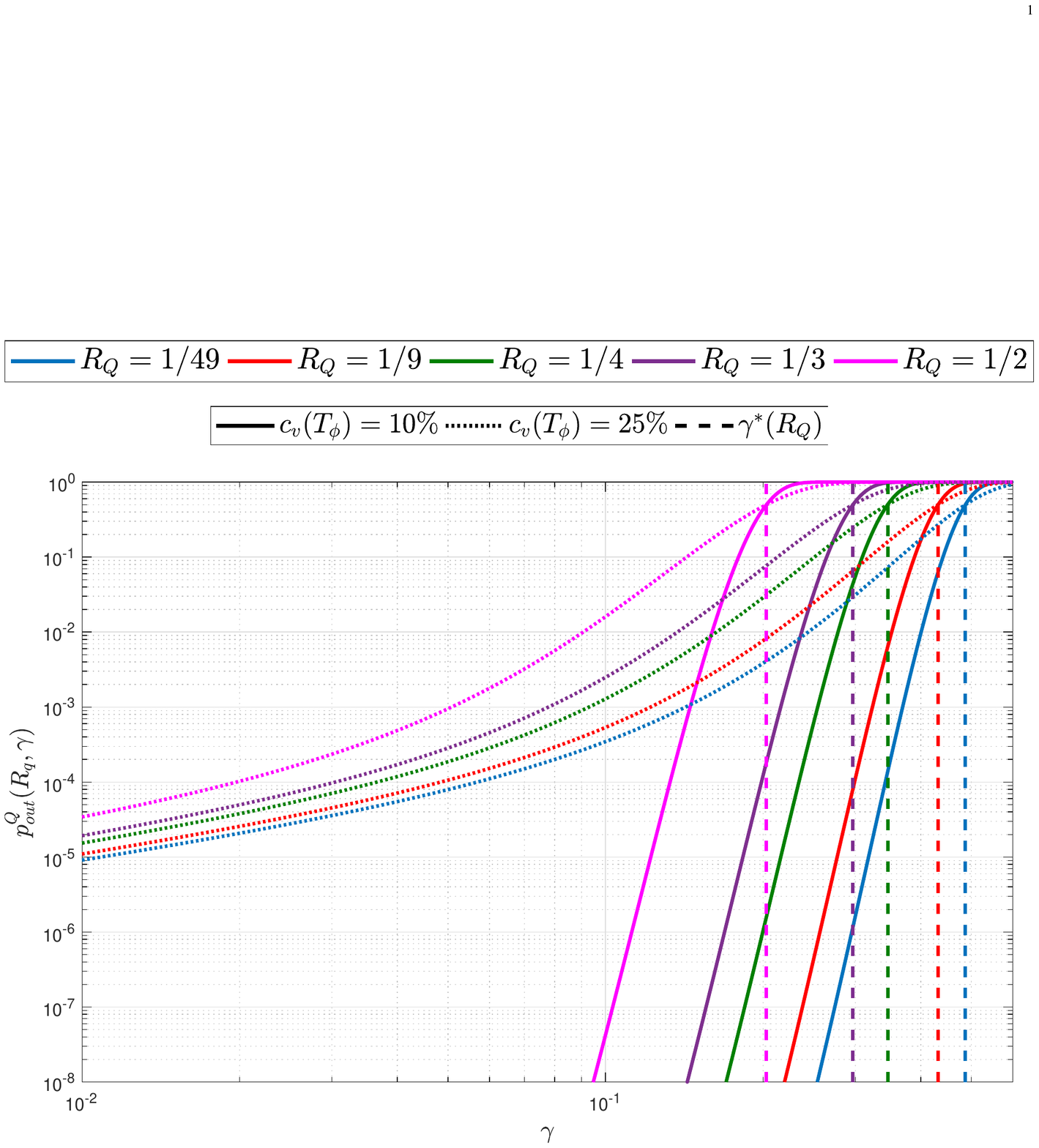}
\caption{Quantum outage probability of the TVAD channel for different quantum rates. The considered quantum rates are $R_\mathrm{Q}\in\{1/49,1/9,1/4,1/3,1/2\}$. We plot the quantum outage probability for $c_\mathrm{v}(T_1) =\{10,25\}\%$. The quantum capacities (noise limits, $\gamma^*$) for the static quantum channels are also represented.}
\label{fig:ADpoutRates}
\end{figure}

Let us now discuss how the quantum rate affects the quantum outage probability. Figure \ref{fig:ADpoutRates} shows the quantum outage probability versus $\gamma$ for $c_\mathrm{v}(T_1) = \{10,25\}\%$, and for the set of quantum rates $R_\mathrm{Q}\in\{1/49,1/9,1/4,1/3,1/2\}$. The figure also shows the corresponding noise limits $\gamma^*(R_Q)$ for the different rates. As expected, the results portrayed in this figure show that increasing the rate leads to an increase of $p_{\mathrm{out}}^\mathrm{Q}(R_\mathrm{Q},\gamma)$. Note also that for the same coefficient of variation, the shape of outage probability versus $\gamma$ curves remains fairly the same. The main conclusion is that, by increasing the rate of an error correction code reduces the overall resource consumption, however, this occurs at the expense of a degradation in the asymptotical error correction performance. Furthermore, given a $R_Q$, the larger the coefficient of variability $c_\mathrm{v}(T_1)$ is, the larger the gap between the outage probability curve and the noise limits $\gamma^*(R_Q)$ curve will be. It is important to mention that this degradation does not occur because there is higher sensitivity to time fluctuations at higher rates. Further inspection of Figure \ref{fig:ADpoutRates} reveals that the noise limits for each rate change as expected, and that the outage probabilities behave similarly according to those noise limits. Thus, similarly to classical coding, the quantum rate does indeed impact the quantum outage probability, but not due to a higher sensitivity to time-variance. Furthermore, similar to what happens in static channels, there is trade-off between resource consumption and how demanding (in terms of noise) the quantum channel is.

\subsubsection{Quantum hashing outage probability of the time-varying twirl approximated channels}\label{subsub:PauliPoutH}
We proceed by comparing the outage of the TVAD channel and its twirled approximated channels. Figure \ref{fig:comparison} plots the hashing outage probability results of the TVADPTA and TVADCTA channels for a coding rate $R_Q=1/9$. Note that the x-axis is still $\gamma$, despite the fact that the defining parameter for the TVADPTA and TVADCTA channels is $\mathbf{p}$. However, the $\gamma$ associated to a given $\mathbf{p}$ can be easily obtained from \eqref{PTAprobs} and \eqref{CTAprob}. The quantum outage probability of the TVAD channel from Figure \ref{fig:ADpoutQ} are also shown. Note that the hashing outage probabilities for the ADPTA and ADCTA channels are slightly higher than the quantum outage probability of the TVAD channel. This is due to the fact the noise limits for those channels are smaller than the one for the TVAD.

Also note that even though the hashing outage probabilities for these approximated channels are higher than the quantum outage probability for the TVAD, one can not conclude that the actual quantum outage probability for these twirled channels will be worse than the one for the AD channel (recall that the hashing outage probability provides an upper bound on the actual outage probability).

\begin{figure}[!h]
\centering
\includegraphics[width=\linewidth]{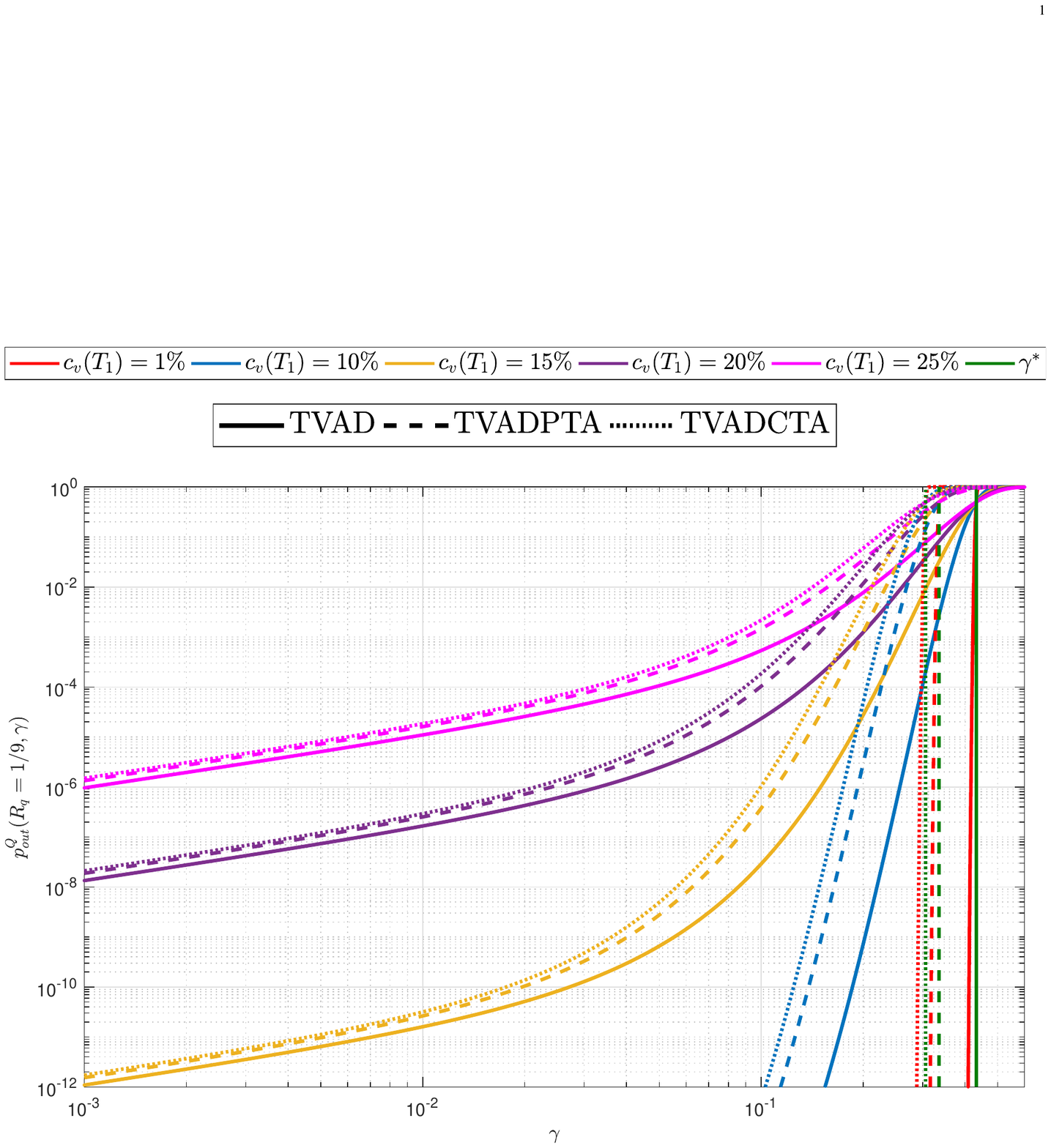}
\caption{Quantum outage and hashing outage probabilities for TVAD, TVADPTA and TVADCTA when $\mathbf{R_\text{\textbf{Q}}=1/9}$. The noise limits are: $\gamma^*_{\mathrm{AD}} = 0.432$, $\gamma^*_{\mathrm{ADPTA}} = 0.3354$ and $\gamma^*_{\mathrm{ADCTA}}=0.3065$.}
\label{fig:comparison}
\end{figure}

\section{Discussion} \label{cp4_conclusion}

In this chapter, we have introduced the concept of quantum outage probability as the asymptotically achievable error rate for quantum error correction when slow time-varying quantum channels are considered. This probability is based on the fact that the realizations of the $T_1$ and $T_2$ remain constant and the hypothesis that they are the same for all the qubits in a quantum codeword (they are fully correlated). We have also introduced the quantum hashing outage probability as an upper bound on the quantum outage probability useful when Pauli channels are considered, since the actual quantum capacity of these channels is not known. We have provided closed-form expressions of these probabilities for the TVAD, TVADPTA and the TVADCTA channels.  The information theoretical analysis presented in this chapter is essential to benchmark the behaviour of quantum error correction codes in TV scenarios. This comes from the fact that for slow time-varying quantum channels the operational meaning of quantum capacity does not give information about the channel and must be replaced by the quantum outage probability.
 
Results show that given a damping parameter, $\gamma$, the quantum outage probability is a monotonic increasing function of the coefficient of variation, $c_\mathrm{v}(T_1)$. Therefore, it is important to experimentally look for qubits that, not only have a large mean relaxation time, $\mu_{T_1}$, but also exhibit a low standard deviation relaxation time, $\sigma_{T_1}$. By doing so, the asymptotically achievable error correction performance of QECCs under time-varying conditions will approach the performance achieved by static channels, since the largest rate that would make $p_{\mathrm{out}}^\mathrm{Q}(R_\mathrm{Q}, \gamma)\approx 0$ (or $p_{\mathrm{out}}^\mathrm{H}(R_\mathrm{Q}, \gamma)\approx 0$) will be near to the quantum capacity of the corresponding static channel (or to the hashing bound).

In summary, we have shown that the time-varying nature of the decoherence parameters may have a significant impact on the performance of future QECCs that will be used to protect quantum information. To further corroborate this fact, in the next chapter, numerical simulations on the performance of QECCs over TVQCs will be done.

\chapter{Time-varying quantum channels and quantum error correction codes} \label{cp5}
In Chapter \ref{cp3}, we showed that the diamond norm distance between  TVQCs and static channel models deviates significantly. As discussed later, the effect that this norm divergence has on error correction is not straightforward. To that end, we will numerically study the performance of QECCs when they operate over the proposed time-varying quantum channels. The error correction codes used for the numerical simulations are quantum turbo codes (QTCs) \cite{QTC,EAQTC,EXITQTC,EAQIRCC} and Kitaev toric codes \cite{toric}. We will asses the performance of these codes by means of the quantum outage probability of Chapter \ref{cp4}. Note that this is equivalent to what is done in the literature when benchmarking the performance of QECC codes operating over static channels respect to the quantum capacity.

We begin by briefly introducing the quantum error correction codes considered in the simulations. We follow by presenting the numerical results obtained from these codes when operating over the proposed time-varying channel models, and discuss how the time-variations of these channels degrade their performance. We close the chapter by using the quantum outage probability to benchmark their performance.

\section{Considered Quantum Error Correction Codes}\label{sec:qeccscp5}
The next section introduces the Kitaev toric codes and the unassisted QTCs that will be numerically simulated. The decoding of these codes is discussed in Appendix \ref{app:decoding} and the Monte Carlo numerical simulation methods in Appendix \ref{app:montecarlo}. As it was explained in Chapter \ref{ch:preliminary}, the classical simulation of QECCs on classical computers is based on the twirled approximations of the combined amplitude and phase damping channel. Therefore, the channels considered in the current simulations are the static and time-varying Pauli channels.

\subsection*{Kitaev Toric codes}\label{sub:kittoriccp5}
The Kitaev toric code is a quantum error correction code that belongs to the family of surface codes \cite{toric,surface}. It is defined as a 2D lattice of qubits with periodic boundaries and, thus, it has the shape of a torus. Figure \ref{fig:toricRep} shows a graphical representation of the toric code.

\begin{figure}[!h]
\centering
\includegraphics[width=0.75\linewidth]{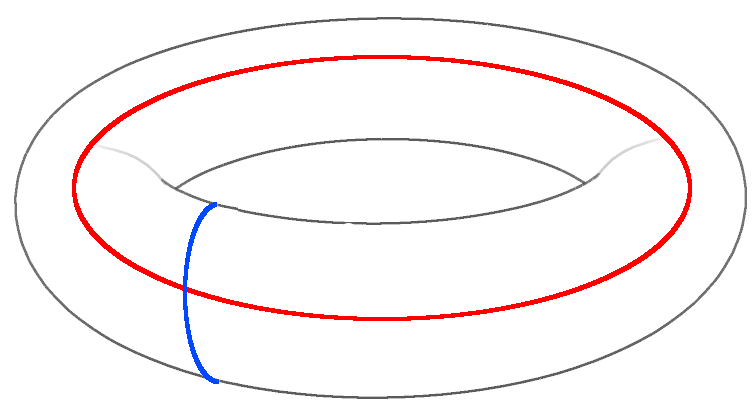}
\caption{Graphical representation of the toric code. The blue and red lines represent logical $\mathrm{X}$ and $\mathrm{Z}$ operators acting on the encoded logical qubit.}
\label{fig:toricRep}
\end{figure}

In the toric code, the stabilizers are implemented in the vertices and in the plaquettes (vertices of the dual lattice) of the lattice that constitutes the code. By using those vertex and plaquette qubits, the syndrome of the errors that occur in the toric code can be obtained. The errors of each of the grid qubits occur with the probabilities given by the PTA or CTA in consideration. Finally, the syndrome is fed to the minimum-weight perfect matching (MWPM) decoder that estimates the logical error that has occurred in an error correction round of the toric code (see Appendix \ref{app:decoding}).

A $d\times d$ Kitaev toric code is said to be a $[[2d^2,2,d]]$ quantum error correction code \cite{surface} since it is able to protect two logical qubits by means of $2d^2$ physical qubits with code distance $d$. For the numerical simulations that will be conducted in this chapter, Kitaev toric codes with distances $d\in\{3,5,7,9\}$ will be considered.

\subsection*{Quantum Turbo Codes}\label{sub:qtccp5}
The Quantum Turbo Codes considered in this chapter consist of the interleaved serial concatenation of unassisted\footnote{In this context, unassisted refers to error correction codes that are not aided by entanglement.} quantum convolutional codes acting as outer and inner codes \cite{EAQIRCC}. Figure \ref{fig:QTCsystemcp5} presents the full schematic representation of such a quantum error correction system. The $k$ input logical qubits that make up the information word $\ket{\psi_1}$ are first fed to the outer convolutional encoder $\mathcal{V}_1$, and encoded into $n'$ physical qubits with the help of ancilla qubits and memory qubits.

\begin{figure}[!h]
\centering
\includegraphics[width=\linewidth]{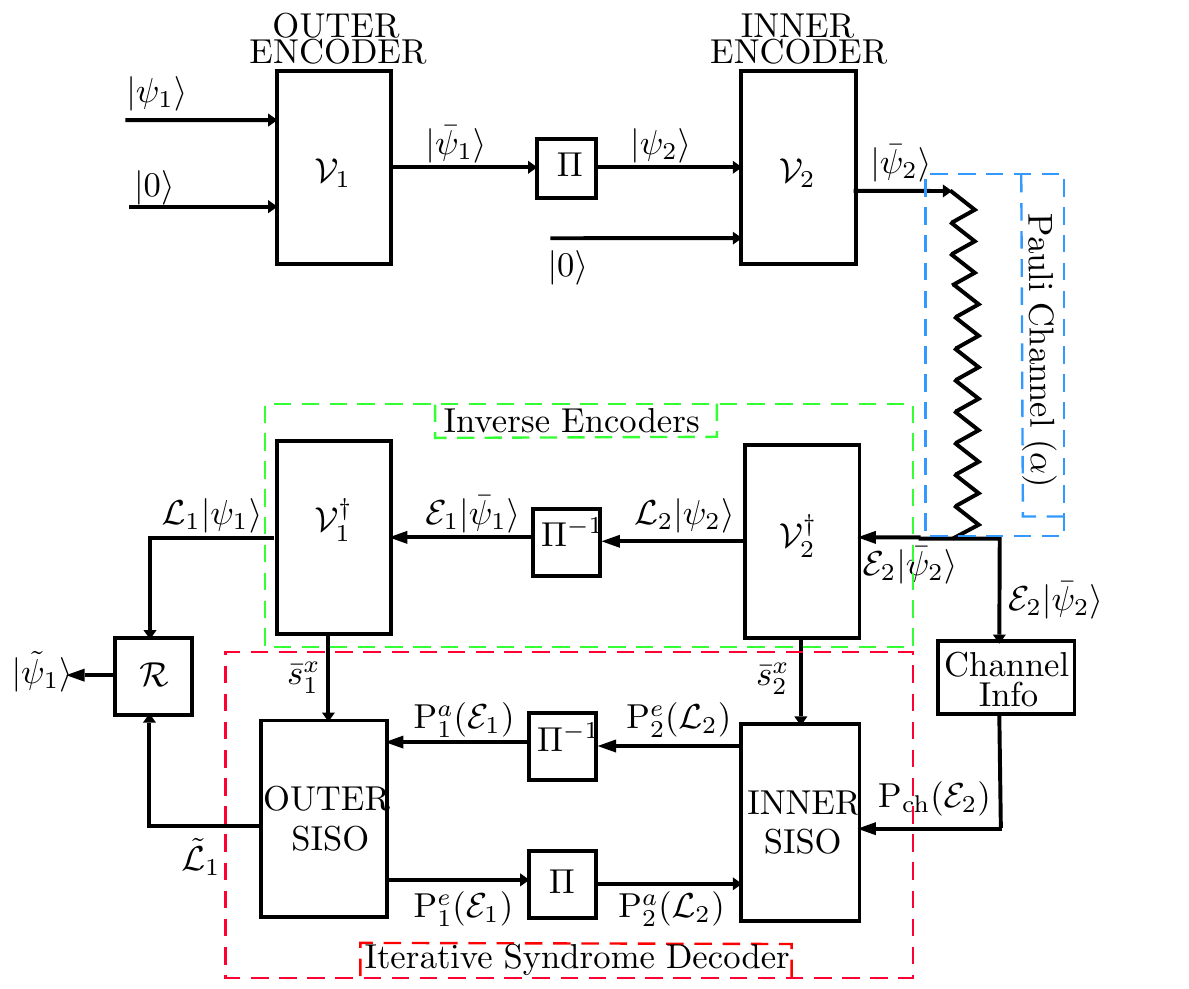}
\caption{Schematic of the QTC. $\mathrm{P}_i^a(.)$ and $\mathrm{P}_i^e(.)$ denote the a-priori and extrinsic probabilities related to each of the SISO decoders used for turbo decoding (see Appendix \ref{app:decoding}).}
\label{fig:QTCsystemcp5}
\end{figure}

The codeword $\ket{\bar{\psi}_1}$ consists of $n'$ physical qubits generated by the first encoder, which are then passed through a quantum interleaver $\Pi$, before being used as the input to the inner convolutional encoder $\mathcal{V}_2$. Such an encoder is an unassisted device that encodes the interleaved sequence of $n'$ qubits $\ket{\psi_2}$ into the codeword $\ket{\bar{\psi}_2}$ of length $n$, aided by ancilla qubits and memory qubits. The codeword $\ket{\bar{\psi}_2}$ is then transmitted through a quantum Pauli channel (static or time-varying) with overall error probability $p=p_\mathrm{x}+p_\mathrm{y}+p_\mathrm{z}$, which inflicts an $n$-qubit Pauli error $\mathcal{E}_2\in\mathcal{G}_n$ on the codeword. The Pauli channel is independently applied to each of the qubits of the stream $\ket{\bar{\psi}_2}$, and, consequently, each of the qubits may experience a bit-flip ($\mathrm{X}$ operator) with probability $p_\mathrm{x}$, a phase-flip ($\mathrm{Z}$ operator) with probability $p_\mathrm{z}$ or a combination of both ($\mathrm{Y}$ operator) with probability $p_\mathrm{y}$. Those probabilities will be defined depending on the static or TV twirl approximation being in consideration.

At the output of the Pauli channel, the state $\mathcal{E}_2\ket{\bar{\psi}_2}$ is fed to the inverse of the inner encoder $\mathcal{V}_2^\dagger$, which outputs the decoded state $\mathcal{L}_2\ket{\psi_2}$, where $\mathcal{L}_2\in\mathcal{G}_{n'}$ refers to the logical error suffered by the decoded state due to the operation of the channel; and the classical syndrome bits\footnote{Note that we emphasize the fact that when the QTC is unassisted, the syndrome measured just gives information of the $\mathrm{X}$ operators applied to the ancillas. This terminology is common for QTCs \cite{EAQTC}.} $\bar{s}_2^x$ obtained from $\mathrm{Z}$ basis measurements on the ancilla qubits. The corrupted logical qubits are then passed through a de-interleaver $\Pi^{-1}$ resulting in the state $\mathcal{E}_1\ket{\bar{\psi}_1}$, which is supplied to the inverse of the outer encoder $\mathcal{V}_1^\dagger$. The resulting output is the state $\mathcal{L}_1\ket{\psi_1}$, which corresponds to the information quantum state corrupted by a logical error $\mathcal{L}_1\in\mathcal{G}_k$; and the classical syndrome bits $\bar{s}_1^x$ obtained after measuring the ancilla qubits on the $\mathrm{Z}$ basis. The classical syndromes $\bar{s}_2^x$ and $\bar{s}_1^x$, obtained in the inverse decoders $\mathcal{V}_2^\dagger$ and $\mathcal{V}_1^\dagger$, respectively, are then provided to the iterative syndrome decoder made up of two serially concatenated SISO decoders, as shown in Figure \ref{fig:QTCsystemcp5}. Based on $\bar{s}_2^x$ and $\bar{s}_1^x$, the SISO decoders estimate the most probable logical error (see Appendix \ref{app:decoding}). Based on the aforementioned estimation, a recovery operation $\mathcal{R}$ is applied to the corrupted state $\mathcal{L}_1\ket{\psi_1}$, yielding the recovered output $\ket{\tilde{\psi}_1}$.

The QCCs used for the quantum turbo code in Figure \ref{fig:QTCsystemcp5} are the ``PTO1R'' unassisted convolutional encoders introduced in \cite{EAQIRCC}. The seed transformation\footnote{The seed transformation is a smaller encoding circuit that is repeatedly executed in order to construct the whole convolutional encoding unitary. The seed transformation $\mathcal{U}$ is represented using the decimal representation presented in \cite{EAQTC}.} associated with the convolutional encoders is

\begin{equation}
\begin{split}
\mathbf{\mathcal{U}} = \{1355, 2847, 558, 2107, 3330, 739, \\
 2009, 286, 473, 1669, 1979, 189  \}.
\end{split}
\end{equation}

Following previous work in the literature, we will refer to the serial concatenation of two similar interleaved PTO1R QCCs as the \textit{``PTO1R-PTO1R'' configuration}. Both QCC encoders have rate $1/3$ resulting in an error correction scheme of rate $R_Q = R_{in}\times R_{out} = 1/3 \times 1/3 = 1/9$. Each of the ``PTO1R'' encoders is aided by $3$ memory qubits in order to perform the convolutional encoding operation. The turbo code in consideration encodes blocks of $k=1000$ qubits.

\section{QECCs operating over TVQCs}\label{sec:qeccstvqcs}
Next, we provide the numerical results obtained for the performance of QECCs when they operate over TVQCs. Similar of what we did in Chapter \ref{cp4}, it is assumed that the realizations of the decoherence times for the TVQCs are the same for each qubits of a block of error correction. Thus, we are considering that the processes are fully correlated.

We begin by providing an example\footnote{In what follows, we select a set of realistic operating scenarios to asses the performance of error correction codes when operating on real hardware that exhibits time fluctuations \cite{decoherenceBenchmarking}.} to asses the impact that the deviations of the diamond norm distance may have on quantum error correction codes when implemented in the superconducting qubits of \cite{decoherenceBenchmarking}. Consider an ADCTA channel\footnote{Note that as we are considering qubits that are $T_1$-limited \cite{decoherenceBenchmarking}, the family of amplitude damping channels describe the decoherence experienced by those qubits.} like the one of scenario QA\_C5 ($c_\mathrm{v}(T_1)\approx 25\%$) at $t/\mu_{T_1}= 0.1$ (refer to Chapter \ref{cp3} for specific data regarding the scenarios discussed). The static version of this channel will have a depolarizing probability of $p\approx 0.05$. Knowing that the diamond norm for depolarizing channels can be simplified to $||\mathcal{N}_{D}(p_1)-\mathcal{N}_D(p_2)||_\diamond = 2|p_1-p_2|$, it can be shown that for a channel instance of the TVADCTA, when the diamond norm distance between the static and TV channels is $||\mathcal{N}_{\mathrm{ADCTA}}(\mu_{T_1})-\mathcal{N}_{\mathrm{ADCTA}}(\rho,\omega,t)||_\diamond = 0.29$, the corresponding depolarizing probability for the TV channel will be $p\approx 0.15$. Let us now consider the toric codes with $d\in\{7,9\}$ \cite{toric} and the QTC from \cite{EAQIRCC} designed for the depolarizing channel (the WER for those scenarios when the channels are static are the red curves in figures \ref{fig:TVvsStattoric} and \ref{fig:TVvsStatQTC}). These error correction codes are capable of operating at a Word Error Rate of $\mathrm{WER}\approx \{5\cdot 10^{-3},10^{-3},10^{-6}\}$, respectively, for $p=0.05$, while for $p=0.15$ they fail completely, resulting in $\mathrm{WER}\approx 1$. This shows that the selected QECCs will be operating with a significantly worse $\mathrm{WER}$ than that which would be expected based on a static channel model. Although there will also be channel instances for which the particular realization of $\{T_1(\omega,t)\}$ will result in operation within a region where the $\mathrm{WER}$ of the QECC is actually smaller than what is assumed for the static channel, on average, the effect of these realizations on the overall performance of the code will be much smaller than that of the events where the real channel action is worse than what is reflected by the static model. This winds up degrading the performance of the QECCs, which is reflected by a flattening of the $\mathrm{WER}$ curve of the code as a result of its waterfall region falling at a lower rate. Given that the existence of channel parameter fluctuations with time has been proven by recent experimental outcomes, it is likely that current error correction schemes will actually yield worse performance than what would be expected based on the results that have been obtained for static quantum channels. It is worth mentioning that the diamond norm distance we have chosen for this this example is pretty low. In fact,  Figure \ref{fig:boxplots} in Chapter \ref{cp3} portrays that values of up to $||\mathcal{N}_{\mathrm{ADCTA}}(\mu_{T_1})-\mathcal{N}_{\mathrm{ADCTA}}(\rho,\omega,t)||_\diamond \approx 1.4$ can occur for a scenario with $c_\mathrm{v}(T_1)\approx 25\%$ at $t/\mu_{T_1}= 0.1$. Note that events with high diamond norm will all be due to events where the channel instance is very noisy since as $||\mathcal{N}_{\mathrm{ADCTA}}(\mu_{T_1})-\mathcal{N}_{\mathrm{ADCTA}}(\rho,\omega,t)||_\diamond = 2|p(\mu_{T_1})-p(\omega)|$, $p(\mu_{T_1})$ is fixed and $p(\omega)\geq 0$, the realizations $||\mathcal{N}_{\mathrm{ADCTA}}(\mu_{T_1})-\mathcal{N}_{\mathrm{ADCTA}}(\rho,\omega,t)||_\diamond \geq 2|p(\mu_{T_1})|$ must occur because $p(\omega)\geq 2p(\mu_{T_1})$. For the example we are considering, the events $||\mathcal{N}_{\mathrm{ADCTA}}(\mu_{T_1})-\mathcal{N}_{\mathrm{ADCTA}}(\rho,\omega,t)||_\diamond \geq 0.1$ will all belong to this kind of TV channel realizations.

\begin{figure*}[!h]
\centering
\includegraphics[width=\linewidth]{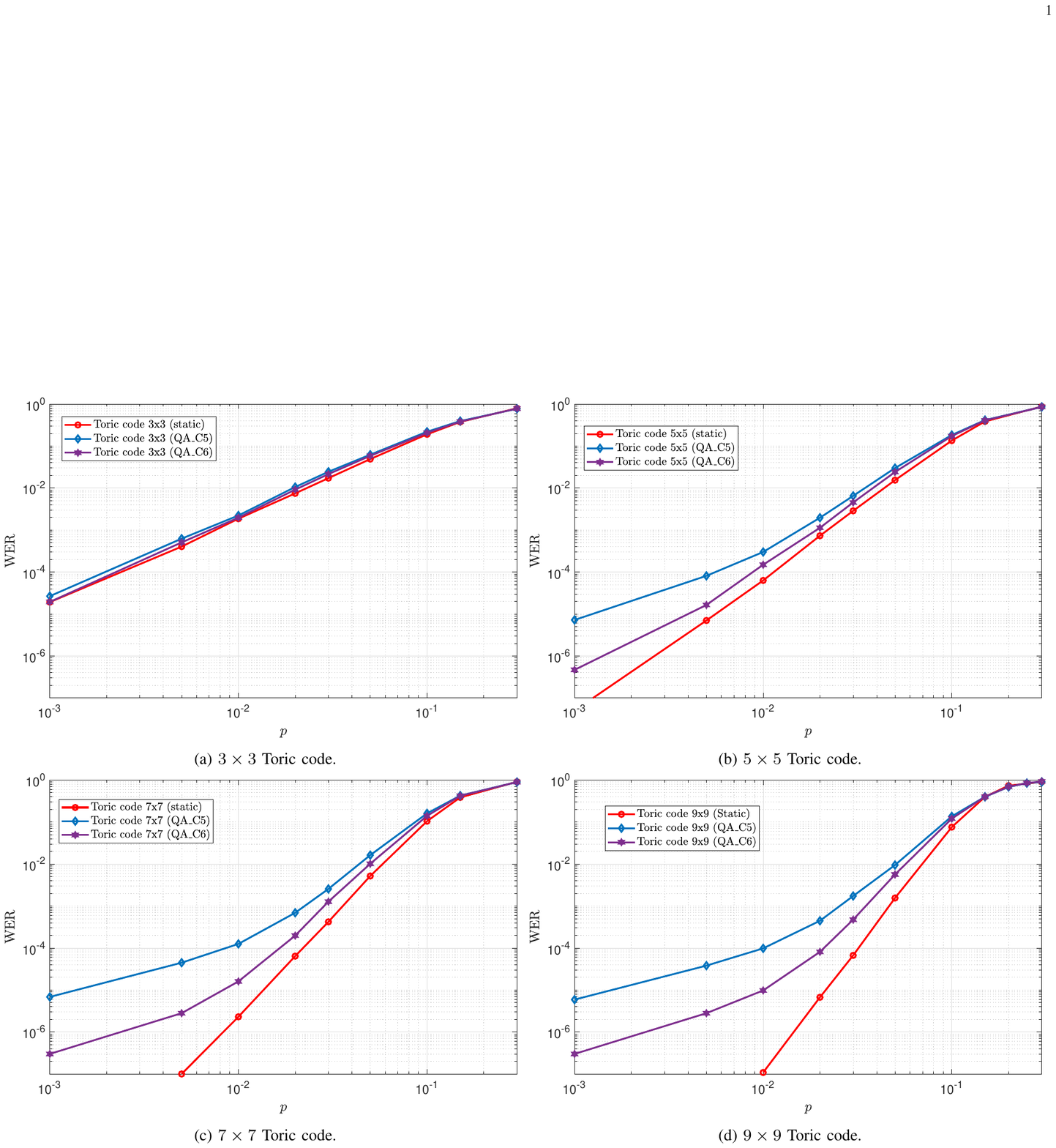}
\caption{Performance of the $d\times d$ Kitaev toric codes with $d\in\{3,5,7,9\}$ \cite{toric}.  Each panel presents performance for i) the static channel (red), ii) the TVADCTA for scenario QA\_C5 ($c_\mathrm{v}(T_1)\approx 25\%$) (blue) and iii) the TVADCTA for scenario QA\_C6 ($c_\mathrm{v}(T_1)\approx 22\%$) (purple).}
\label{fig:TVvsStattoric}
\end{figure*}

Let us now consider the toric codes with $d\in\{3,5\}$ \cite{toric} (the WER for those scenarios when the channels are static are the red curves in Figure \ref{fig:TVvsStattoric}). These surface codes operate at a $\mathrm{WER}\approx \{8\cdot 10^{-2},2\cdot 10^{-2}\}$ for $p=0.05$ and at a $\mathrm{WER}\approx 1$ for $p=0.15$. Considering that the error rate at the static point is not low enough and given that the waterfall region is pretty flat, for these codes the $\mathrm{WER}$ for a good channel realization will not be very different from that of a bad channel realization. Thus, although these codes will experience time-variation effects, these effects will not be as significant as for the previosuly mentioned codes.

\begin{figure*}[!h]
\centering
\includegraphics[width=\linewidth]{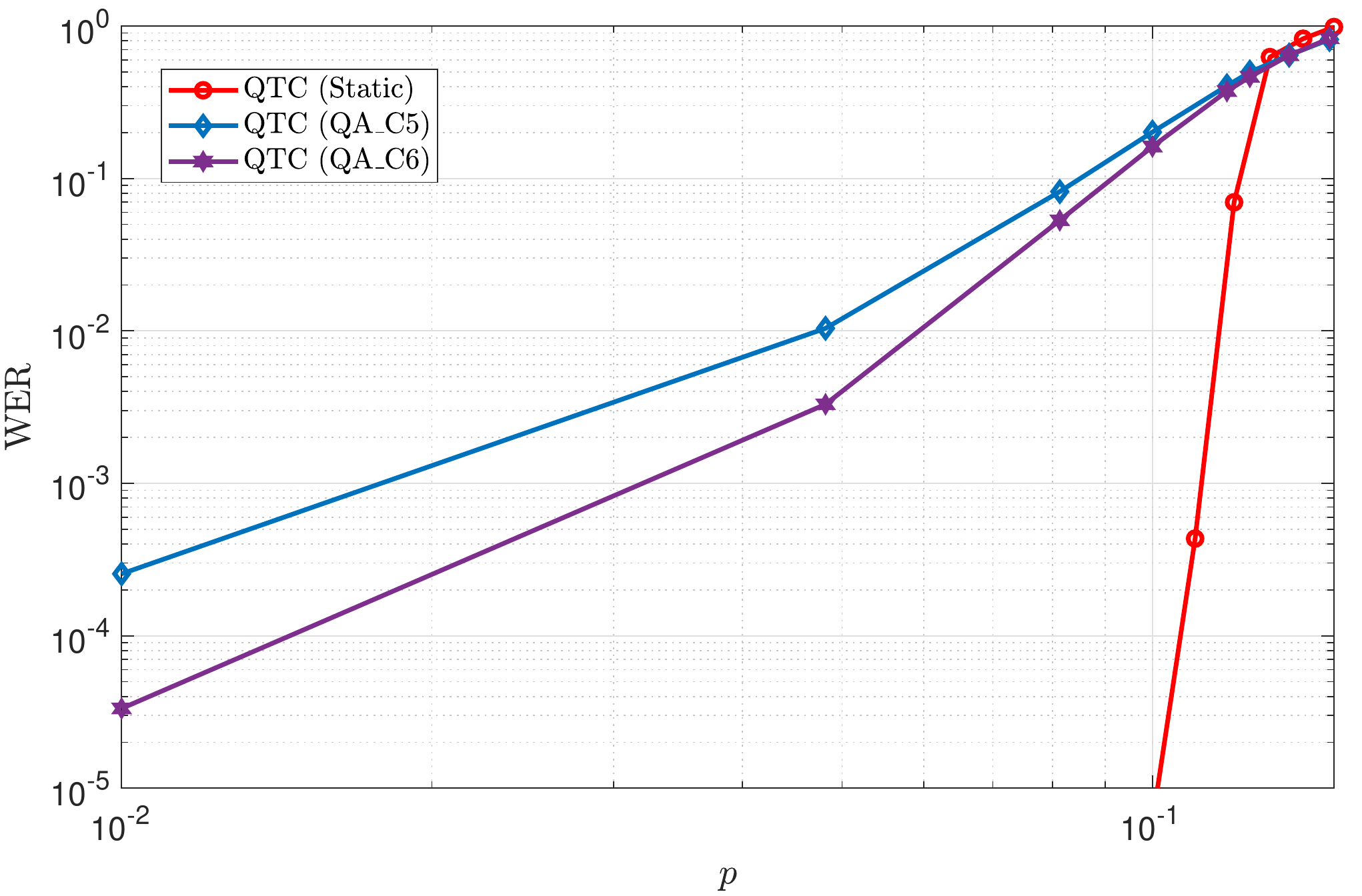}
\caption{Performance of the QTC discussed in \cite{EAQIRCC}. Each panel presents performance for i) the static channel (red), ii) the TVADCTA for scenario QA\_C5 ($c_\mathrm{v}(T_1)\approx 25\%$) (blue) and iii) the TVADCTA for scenario QA\_C6 ($c_\mathrm{v}(T_1)\approx 22\%$) (purple).}
\label{fig:TVvsStatQTC}
\end{figure*}

We prove this intuition by simulating the discussed QECCs for the time-varying amplitude damping Clifford twirl approximated (TVADCTA) channel obtained for the configurations QA\_C5 ($c_\mathrm{v}(T_1)\approx 25\%$) and QA\_C6 ($c_\mathrm{v}(T_1)\approx 22\%$) and comparing the results with the corresponding static channel. Figures \ref{fig:TVvsStattoric} and \ref{fig:TVvsStatQTC} show the results of these simulations. The blue curves are for the QA\_C5 scenario and the purple ones for the QA\_C6 scenario. A quick inspection of this figures reveals the expected outcomes. It can be observed that the low distance toric codes ($d\in\{3,5\}$, figures \ref{fig:TVvsStattoric}a and \ref{fig:TVvsStattoric}b) do not exhibit a significant degradation of their error rate performance (especially the $3\times 3$ toric code; the $5\times 5$ toric code shows signs of performance degradation, but at very low depolarizing probabilities). However, the higher distance toric codes ($d\in \{7,9\}$, figures \ref{fig:TVvsStattoric}c and \ref{fig:TVvsStattoric}d) and the QTC (Figure \ref{fig:TVvsStatQTC}) are significantly affected by the time fluctuations. This is reflected by a reduction in the steepness of the waterfall region (it becomes flattened) in the case of the time-varying channel, which compromises the excellent performance that QECCs exhibit for static channel models. The impact of the coefficient of variation can also be observed, as the scenario with higher $c_\mathrm{v}$ is flattened more.

The above discussion suggests that the manner in which QECCs are affected by the proposed time-varying quantum channel model is a combination of two factors: how much the decoherence parameters fluctuate (which is quantified by the coefficients of variation) and the steepness of the $\mathrm{WER}$ curve when the QECCs operate over the static channel. This occurs because if the code presents a rather flat $\mathrm{WER}$ curve (as a function of the depolarizing probability) in the static channel, the difference in $\mathrm{WER}$ between benign and malign channel realizations of the time-varying channel is not significant, and thus, on average, the overall $\mathrm{WER}$ will be similar to that of the corresponding static channel. Since, everything else being equal, shorter error correction codes present flatter waterfall regions, they will be less susceptible to the mismatch. On the other hand, the contrary effect is observed for error correction codes that exhibit a steep waterfall region in the static channel. In these cases, the difference between a good and a bad channel realization might be either complete decoding success or utter failure, implying that on average, the overall performance in the time-varying channel will experience substantial degradation. Thus, the better the code performance in the static channel, the more it will be affected in the time-varying channel model.

\section{Benchmarking the performance of QECCs operating over TVQCs}\label{sec:outQECCcp5}
We finish this chapter by proposing the quantum outage probability as the information theoretical metric to asses the performance of QECCs when operating over the TVQCs proposed when there is full correlation between the realizations of the decoherence parameters for each of the qubits of a code block. Since the ADCTA depolarizing static channel and its time-varying version TVADCTA belong to the family of Pauli channels, the information theoretical benchmarks that will be considered are the hashing bound (for the static channels) and the quantum hashing outage probability (for the TV channels).

The following simulations are based on a $9\times 9$ Kitaev toric code with coding of rate $R_\mathrm{Q}=1/81$ and on the previous QTC of $R_\mathrm{Q}=1/9$. We have selected these two codes since they experience a large degradation (with respect to the static channel) under channel time variations.
\begin{figure}[!h]
\centering
\includegraphics[width=\linewidth]{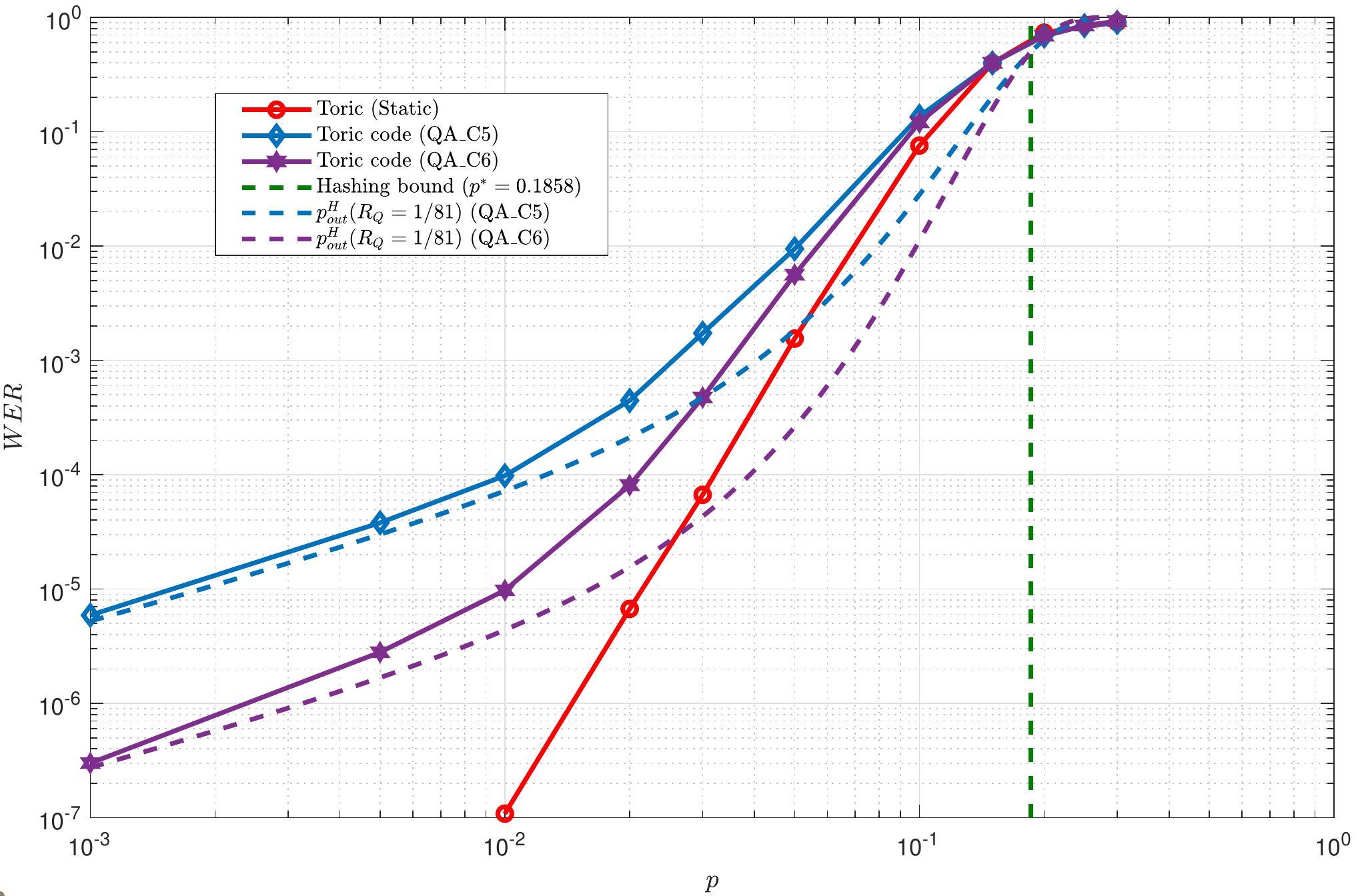}
\caption{Performance of the $9\times 9$ Kitaev toric code \cite{toric}. The $[[162,2,9]]$ quantum error correction code has rate $R_\mathrm{Q}=1/81$. The hashing bound and the hashing outage probabilities are also plotted.}
\label{fig:ToricTVout}
\end{figure}
By considering the same previous channel scenarios, Figures \ref{fig:ToricTVout} and \ref{fig:QTCTVout} show the $\mathrm{WER}$ versus $p$ of these two codes when they operate on the static channels (red curve) and on the TVQCs (blue and purple). The figures also plot their respective quantum hashing outage probabilities. Both figures show that the quantum hashing outage probability presents a shape similar to the performance of the QECCs. Therefore, the quantum outage probability curves are a good indicator of the performance that some QECC will have when it operates over a TVQC with certain coefficients of variation of their decoherence parameters. Also, the closer the $\mathrm{WER}$ curves are to the corresponding quantum outage probability, the better the performance of the codes will be.

\begin{figure}[!h]
\centering
\includegraphics[width=\linewidth]{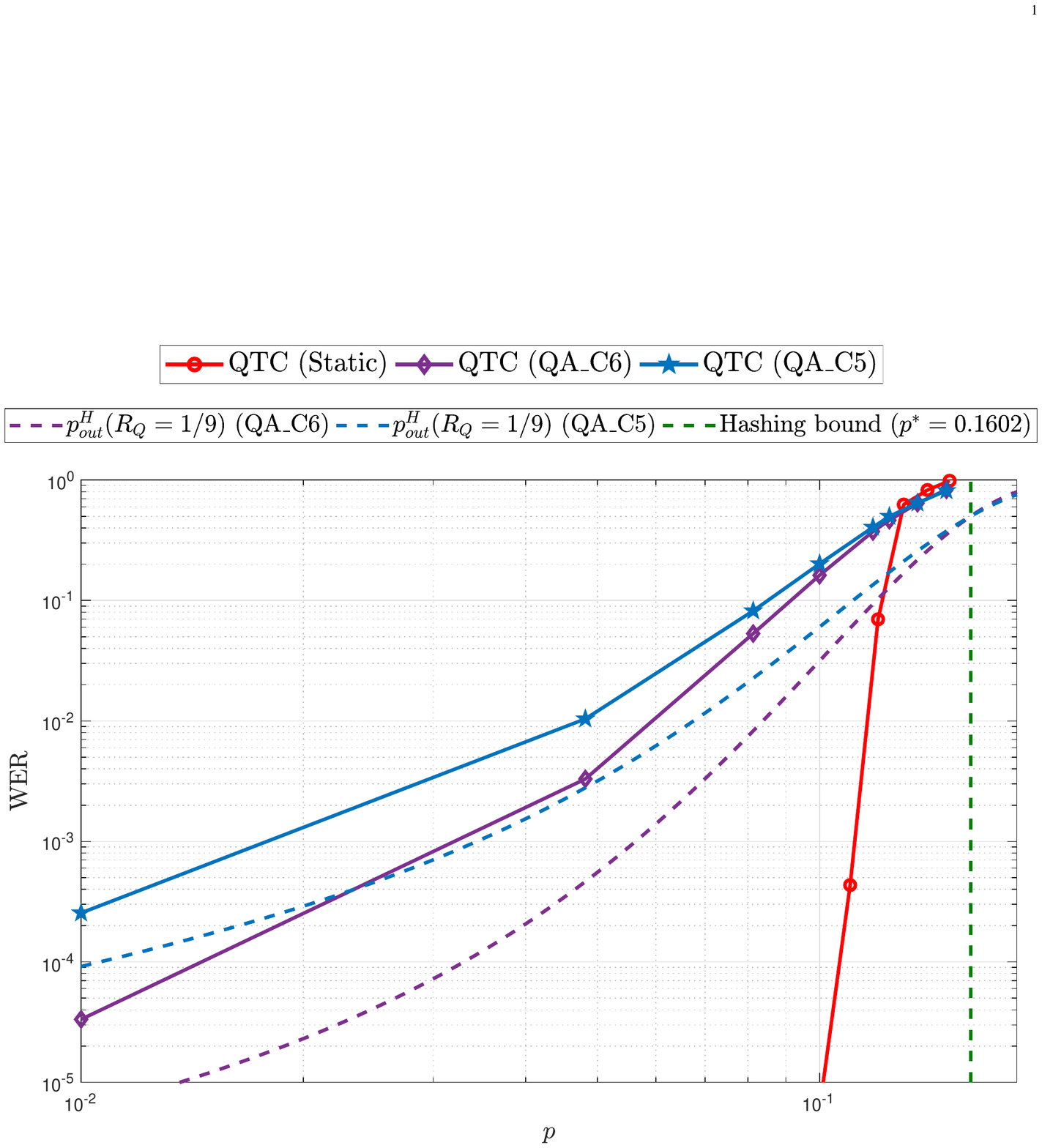}
\caption{Performance of the quantum turbo code from \cite{EAQIRCC}. The quantum error correction code has rate $R_\mathrm{Q}=1/9$ and encodes blocks of $1000$ qubits. The hashing bound and the hashing outage probabilities are also plotted.}
\label{fig:QTCTVout}
\end{figure}

It is also interesting to point out that at low depolarizing probabilities, the $\mathrm{WER}$ curves approach to the quantum hashing outage probability curves. This effect is clearly observed for the $9\times 9$ Kitaev toric code in Figure \ref{fig:ToricTVout}. For the QTC in Figure \ref{fig:QTCTVout} such saturation is not observed mainly due to the range of values of $\mathrm{WER}$ and $p$ considered in the simulation. A possible explanation for this effect is that at low depolarizing probabilities, $p$, the errors produced by the channel are all mainly of low weight (i.e., these errors will affect only a few qubits in the codeword), and the codes will be capable of correcting all of them. Therefore, the code for this scenario of low depolarizing probabilities behaves as an ideal code as the one used to compute the quantum hashing outage probability.

To quantify the distance to the hashing outage bound, we use a metric similar to the one proposed in \cite{CQiso}. It measures the gap in dBs between the depolarizing probabilities of a code and the hashing outage at a given $\mathrm{WER} = \chi$:
\begin{equation}\label{eq:distancPout}
\delta_{\mathrm{out}}(@\chi) = 10\log\left(\frac{p(p_{\mathrm{out}}^\mathrm{H}=\chi)}{p(\mathrm{WER}_{\mathrm{code}} = \chi)}\right).
\end{equation}

Therefore, the lower $\delta_{\mathrm{out}}$ is, the better the performance of the code will be since its performance will be more similar to the asymptotically achievable error rate. Table \ref{tab:hashdist} shows the $\delta_{\mathrm{out}}(@10^{-3})$ for the two codes in consideration.

\begin{table}[!ht]
\centering
\begin{tabular}{|cccc|}
\hline
\multicolumn{1}{|c}{Code} & \multicolumn{1}{c}{Static} & \multicolumn{1}{c}{QA\_C5} & \multicolumn{1}{c|}{QA\_C6} \\ \hline
\multicolumn{1}{|c}{$9\times 9$ Toric} & $3.93$ & $2.04$ & $2.81$  \\
\multicolumn{1}{|c}{QTC} & $1.79$ & $1.75$ & $1.67$  \\ \hline
\end{tabular}
\caption{Distance to the quantum hashing outage probability of the codes in consideration when WER = $10^{-3}$. $\delta_{\mathrm{out}}(@10^{-3})$ is given in dBs.}
\label{tab:hashdist}
\end{table}

It can be seen that the $9\times 9$ Kitaev toric codes are farther from the quantum hashing outage probability than the QTC in consideration at WER=$10^{-3}$. This fact is also true for the distance of the codes to their hashing bounds when they operate over static channels so the result is not surprising. Thus the QTC seems to be a better error correction method than the Toric code since its performance is closer to the hashing bound and its quantum rate is lower. However, it is important to state that this QTC requires codewords of $9000$ physical qubits while the $9\times 9$ Toric code just requires $162$.

Finally, note that for the toric code, the value of $\delta_{\mathrm{out}}(@10^{-3})$ for scenario QA\_C5 is smaller than for scenario QA\_C6. At a first glance, this seems counterintuitive, since the coefficient of variation is higher for QA\_C5. However, this can be explained by observing the $\mathrm{WER}$ curve for scenario QA\_C5 approaches earlier the quantum hashing outage probability curve than for scenario QA\_C6.

\FloatBarrier
\section{Discussion} \label{cp5_sec: conclusion}

In light of the discussion presented in this chapter, and based on the experimental evidence that decoherence parameters vary with time, the proposed time-varying quantum channel models will be very important when studying and optimizing error correction codes capable of protecting qubit-based quantum computers from the dynamics of the decoherence phenomenon. However, it should be pointed out that the proposed time-varying quantum channel model would be relevant for protocols involving a large number of error correction cycles (e.g., a quantum memory or a long quantum algorithm such as the one in \cite{rsaRounds}). If a very short algorithm or if a QECC is run just once or a few number of times, the parameters will not fluctuate as much, and the effects of the proposed channel model will not be as noticeable. Even if the current state-of-the-art experimentally implemented QECCs will not be significantly affected by the TVQC (as they are generally short codes such as the $3\times 3$ toric code discussed above), the proposed channel model will be essential when better QECCs are implemented, such as the $7\times 7$ and $9\times 9$ toric codes for the near term, or the more advanced QTCs or QLDPCs for the long term.

In addition, we have proposed the quantum outage probability as a benchmark of the performance of the QECCs operating over TVQCs. We conclude that the quantum hashing outage probability does represent well the behaviour of QECCs on those dynamic noise scenarios. It has also been observed that the performance of the codes approach the quantum hashing outage probability limit when the depolarizing noise is very small.

\clearemptydoublepage
\part{Quantum Error Correction: Optimization of Quantum Turbo Codes} \label{part2}
\clearemptydoublepage
\chapter{Quantum Low-Density-Generator-Matrix Codes and Degeneracy} \label{cp8}
This chapter serves as a summary of the main results of other QEC-related research that the author has been involved in during this Ph.D. thesis. Given the varied nature of this work, the chapter is divided into five different sections, each one dedicated to a specific topic:

\begin{itemize}
\item Section \ref{sec:QLDGM1} studies the design of a class of non-Calderbank-Shor-Steane (CSS) Quantum Low-Density-Generator-Matrix (QLDGM) codes that outperforms other QLDPC schemes of the literature. It reviews the contents of \cite{patrick}.
\item Section \ref{sec:QLDGM2} discusses the scenario of non-CSS QLDGM codes when they operate without channel state information as it was done in Chapter \ref{cp7} for QTCs. The depolarizing channel is considered. It summarizes the research presented in \cite{patrick2}.
\item Section \ref{sec:QLDGM3} proposes the design of CSS QLDGM codes for the asymmetric Pauli channel. Our scheme outerperforms other QLDPC codes in the literature for some of the scenarios considered. This section presents the methods and results of \cite{patrick3}.
\item Section \ref{sec:degeneracy} presents a group theoretical approach to discuss the issue of degeneracy as it relates to sparse quantum codes. It summarizes the review article \cite{degen}.
\item Section \ref{sec:logical} discusses previously existing methods to compute the logical error rate and presents an efficient coset-based method inspired by classical coding strategies to estimate degenerate errors. It reviews the results and discussions of \cite{logicalRate}.
\end{itemize}

Before moving onward, it must be mentioned that the contents shown herein are meant only as a cursory overview. Readers should refer to the Ph.D. dissertation of Patricio Fuentes Ugartemendia (first-author of this research) or the journal articles themselves for a complete discussion regarding these findings.

\section{Approach for the construction of non-CSS low-density-generator-matrix-based quantum codes}\label{sec:QLDGM1}
Quantum LDPC codes are built by casting classical LDPC codes in the framework of stabilizer codes \cite{GottLDPC}, which enables the design of quantum codes from any arbitrary classical binary and quaternary codes. In \cite{qldpc15}, the authors document the design of QLDPC codes based on their classical counterparts, detailing numerous construction and decoding techniques along with their flaws and merits. Among the discussed methods, the construction of QLDPC codes based on LDGM codes is shown to yield performance and code construction improvements, albeit at an increase in decoding complexity. This method was originally proposed in \cite{jgfQLDGM1,jgfQLDGM2}, where Calderbank-Steane-Shor (CSS) quantum codes based on regular LDGM classical codes were shown to surpass the best quantum coding schemes of the time, and performance was significantly improved in \cite{jgfQLDGM2,jgfQLDGM3} by utilizing a parallel concatenation of two regular LDGM codes.

Quantum LDGM (QLDGM)-based quantum code implementations, as well as most QLDPC designs, are based on CSS constructions. CSS codes, simultaneously proposed by Calderbank, Shor, and Steane in \cite{CSS1,CSS2}, are a particular subset of the family of stabilizer codes. They provide a straightforward method to design quantum codes via existing classical codes. In general, decoding of quantum codes based on CSS designs is performed separately for bit- and phase-flip errors, which negatively impacts their performance. In fact, CSS constructions decoded separately are limited by an unsurpassable bound, referred to as the CSS lower bound \cite{CSSbound}. Joint decoding of bit- and phase-flip errors using modified CSS decoders capable of exploiting the correlation between the aforementioned errors has been considered in the literature \cite{modified1,modified2,modified3,modified4,modified5,modified6,modified7}, and performance improvements associated with these modified CSS decoders have been shown in these articles. Nonetheless, the improvements provided by these modified CSS decoders come at the expense of an increased decoding complexity, which, along with the performance limitations of conventional CSS decoding, inspires the search for non-CSS constructions, as they should theoretically be able to outperform CSS codes and avoid complex decoding strategies. Non-CSS QLDPC-based codes were proposed in \cite{noncssqldpc}. However, despite showing promise, they fail to outperform existing CSS QLDPC codes for comparable block lengths.

In such context, we introduced a technique to design non-CSS quantum codes based on the use of the generator and parity check matrices of LDGM codes. The proposed methods are based on modifying the upper layer of the decoding graph in CSS QLDGM constructions. An example can be seen in Figure \ref{fig:merge}. The feasibility of such modification of the upper layer of the decoding graph is based on the othogonality properties shown by the generator and parity-check-matrices of the classical LDGM codes used for constructing the code. The simplicity of the proposed scheme ensures that the high degree of flexibility in the choice of the quantum rate and the block length for the CSS code utilized as a starting point is translated to the non-CSS design. 

\begin{figure}[!ht]
\centering
\includegraphics[width=\linewidth]{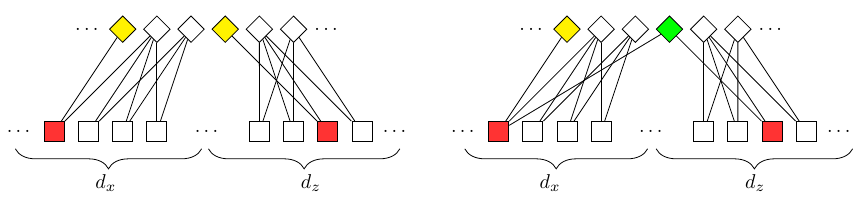}
\caption{Schematic representation of the combination of $\mathrm{x}$ and $\mathrm{z}$ nodes of the upper layer of the decoding graph.}
\label{fig:merge}
\end{figure}

Compared with quantum CSS codes based on the use of LDGM codes, the proposed non-CSS scheme is 0.2 dB closer to the hashing bound in the depolarizing channel, and outperforms all other existing quantum codes of comparable complexity. The results can be seen in Figure \ref{fig:nonCSSresults}.

\begin{figure}[!ht]
\centering
\includegraphics[width=\linewidth]{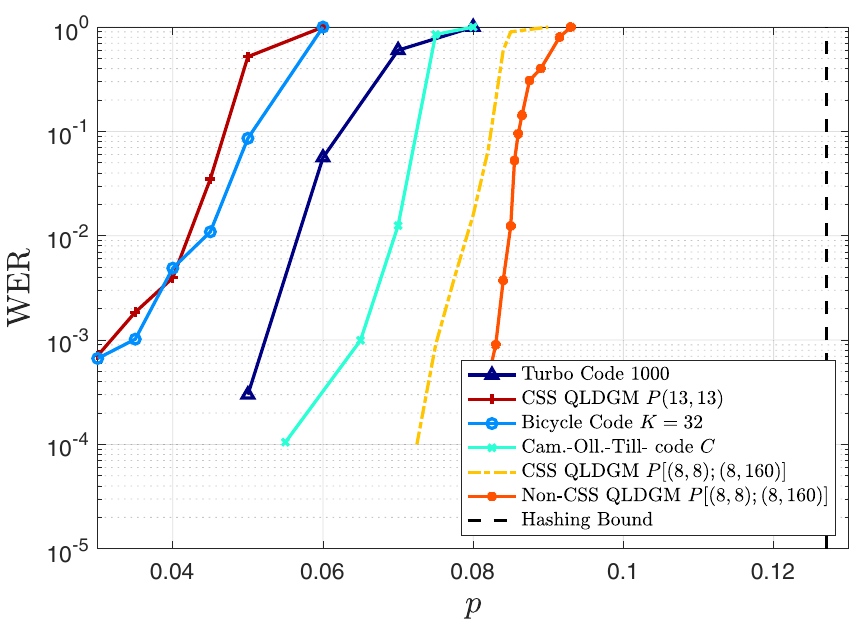}
\caption{Word error rate for various quantum codes when they are applied over the depolarizing channel. The hashing bound is also included. $p$ is the depolarizing probability.}
\label{fig:nonCSSresults}
\end{figure}

\section{Performance of non-CSS LDGM-based quantum codes over the misidentified depolarizing channel}\label{sec:QLDGM2}
Most of the research related to the performance of QLDPC codes has been conducted under the tacit premise that perfect knowledge of the quantum channel in question is available. In reality, such a scenario is highly unlikely, meaning that analyzing how the behaviour of these codes changes in terms of the existing information about the quantum channel is of significant relevance. Such a study was conducted for the quantum depolarizing channel in \cite{QLDPCmismatch}. In \cite{QLDPCMismatchMethods}, the same authors designed an improved decoding strategy for QLDPC codes when only an estimate of the channel depolarizing probability is available. The aforementioned method makes use of quantum channel identification, which requires the introduction of a probe (a known quantum state) into the quantum channel and the subsequent measurement of the channel output state to produce an accurate estimate of the depolarizing probability. This procedure typically makes use of additional qubits and results in a latency increase. Thus, the design of methodologies capable of minimizing this overhead while yielding performance similar to the perfect channel knowledge scenario is germane to this field of research.  \begin{figure}[!ht]
\centering
\includegraphics[width=\linewidth]{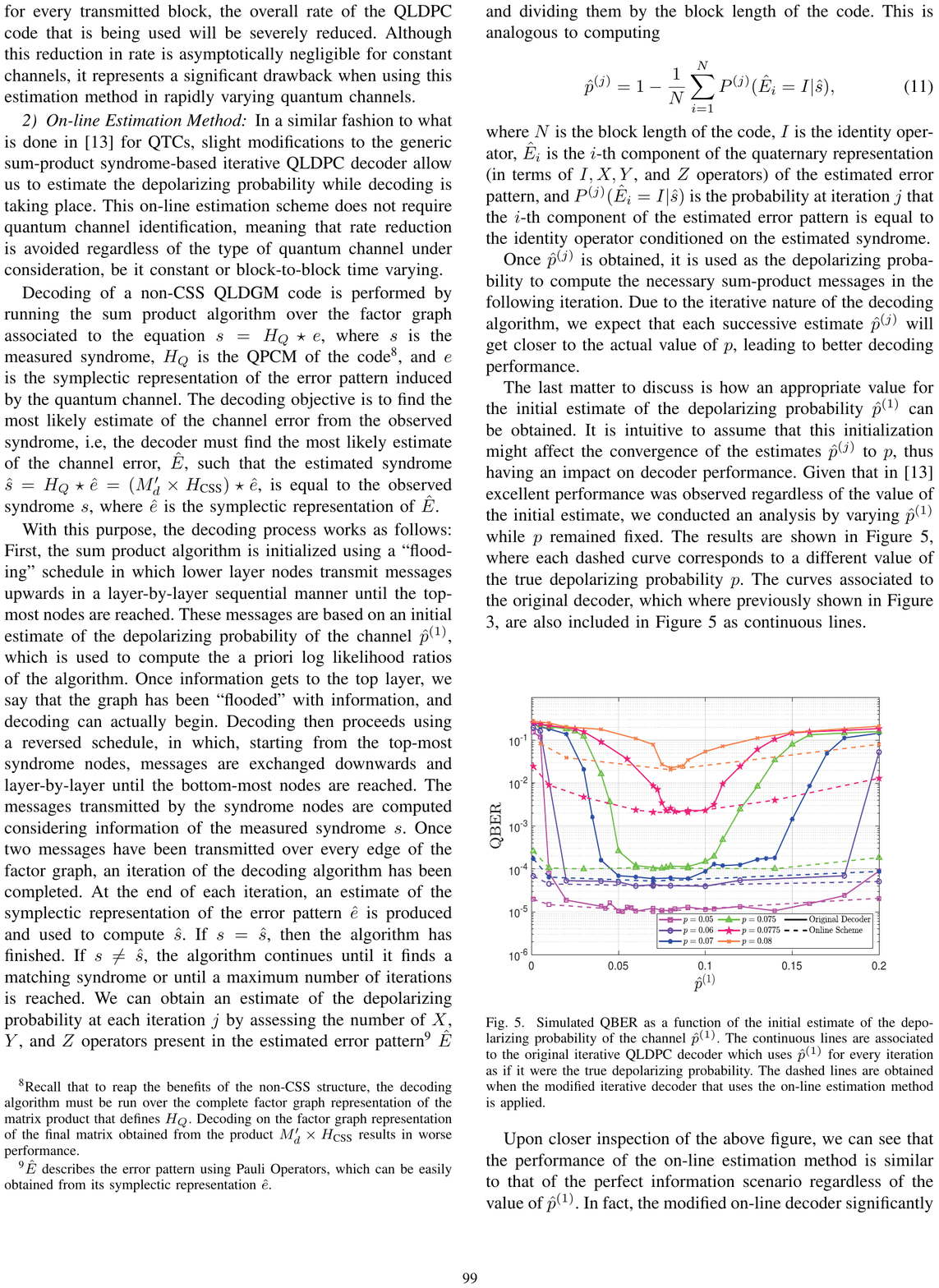}
\caption{Simulated QBER as a function of the initial estimate of the depolarizing probability of the channel $p^{(1)}$.}
\label{fig:mismatchnonCSSresults}
\end{figure}
In Chapter \ref{cp7} of this dissertation, published in \cite{josu2}, we have derived an on-line depolarizing probability estimation technique for Quantum Turbo Codes (QTCs). This method yields similar performance to that obtained when using the same QTCs with perfect channel information but without the need for additional resources. In light of this outcome, we propose a similar on-line estimation procedure for QLDPC codes.

Our analysis revealed an increasing impact of channel mismatch on decoder performance as the depolarizing probability of the channel grows. The mismatch effect is especially noticeable in the waterfall region of the code. As in the case of QTCs, the on-line estimation scheme outperforms off-line channel identification techniques in terms of overall coding rate, while maintaining excellent performance. The simulation results may be seen in Figure \ref{fig:mismatchnonCSSresults}. In contrast to what happens with QTCs, simulation results showed that the on-line estimation method is slightly dependent on the initial estimate of the depolarizing probability when operating in the waterfall region. Selecting the initial estimate of the depolarizing probability as the hashing limit of the code in question yielded good performance regardless of the actual value of the depolarizing probability.

\section{Design of low-density-generator-matrix-based quantum codes for asymmetric quantum channels}\label{sec:QLDGM3}
Most of the research related to the performance of quantum error correcting codes considers the symmetric Pauli channel, generally referred to as the depolarizing channel, which incurs bit flips, phase flips, and bit-and-phase flips with the same probability. However, realistic quantum devices, given the nature of the materials used to construct them, often exhibit
asymmetric behavior, where the probability of a phase flip taking place is orders of magnitude higher than the probability of a bit flip. As seen in Chapter \ref{ch:preliminary}, the behavior of these quantum devices is governed by the single-qubit relaxation time and the dephasing time of the device itself, the former sometimes being orders of magnitude larger than the latter. Generally, relaxation causes both bit-flips and phase-flips, while pure dephasing only leads to phase-flip errors. This difference in relaxation and dephasing times gives rise to the aforementioned asymmetric behavior, where bit-flip errors are much less likely to occur than phase flips, and it can be accurately modelled by the Pauli twirl approximation channel. Naturally, it stands to reason that the best QEC schemes for this asymmetric channel must somehow be able to exploit its asymmetry. In \cite{EAQIRCC}, the authors introduce an EXIT-chart based methodology to design quantum turbo codes (QTCs) specifically for the Pauli twirl approximation channel. We have extended such work in Chapter \ref{cp7}, published in \cite{josu3}, where an on-line estimation protocol to decode QTCs over Pauli twirl approximation channels has been proposed. These results speak to the merit of constructing a coding scheme tailored to the asymmetric characteristics of the quantum channel in question, since performance of the QTCs varies depending on the degree of asymmetry of the channel.

Consequently, we studied the performance of quantum CSS LDGM codes when they are applied over a Pauli channel. We showed how although they are not the best known codes for the depolarizing channel, their simplicity allows for them to be almost seamlessly adapted to the Pauli twirl approximation channel. Based on this result, we introduce a simple yet effective method to derive CSS QLDGM codes that perform well over channels with varying degrees of asymmetry. Such a strategy is necessary because CSS codes designed for the depolarizing channel perform poorly over its asymmetric counterpart. To the extent of our knowledge and at the time of writing, the research on designing quantum codes specifically for asymmetric quantum channels was quite limited \cite{twirl6,aymmetricCode}, especially when compared to results regarding the depolarizing channel. Thus, this work represents one of the first attempts at designing QLDPC codes specifically for asymmetric quantum channels.

\begin{figure}[!ht]
\centering
\includegraphics[width=\linewidth]{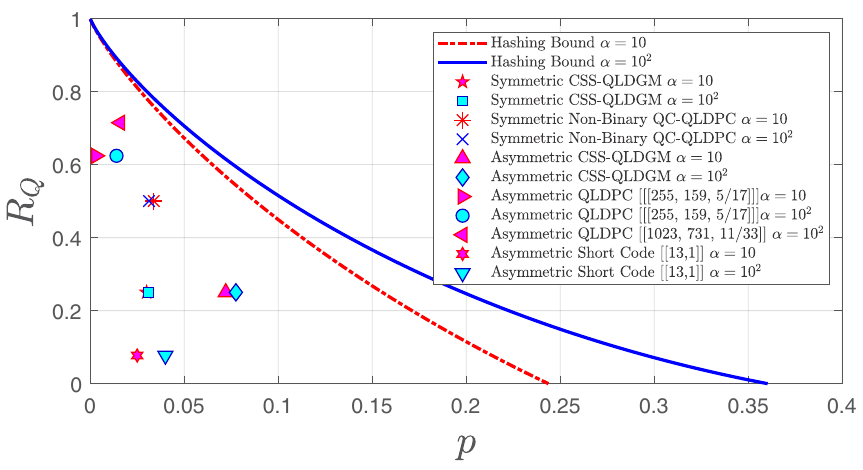}
\caption{Achievable coding rate at a $\mathrm{WER}$ of $10^{-3}$ for various QLDPC codes over Pauli channels with $\alpha = 10$ and $10^2$.}
\label{fig:asymmResults}
\end{figure}

The proposed methods were based on simple modifications to the upper layer of the decoding graph of a symmetric CSS QLDGM code designed for the depolarizing channel. Specifically, the factor graph is adapted to an asymmetric channel by increasing the number of syndrome nodes used to decode the $\mathrm{Z}$ operators and decreasing the number of syndrome nodes used to decode the $\mathrm{X}$ operators. This comes from the fact that $\mathrm{Z}$ errors are more likely to occur when the Pauli twirl approximation channel is considered (and $T_2<<T_1$), as it can be seen in section \ref{ch:preliminary} of this dissertation. Additionally, we showed how for larger block lengths, the proposed asymmetric CSS codes can be further optimized based on the asymmetry coefficient $\alpha$ by increasing the block length of the code and the number of $\mathrm{Z}$ syndrome nodes. Over Pauli channels with $\alpha = 10$ and $10^2$, the schemes proposed are closer to the theoretical limit than other existing asymmetric codes and the best codes designed for the depolarizing channel. The results may be seen in Figure \ref{fig:asymmResults}.

\section{Degeneracy and its impact on the decoding of sparse quantum codes}\label{sec:degeneracy}
The quantum phenomenon known as degeneracy should theoretically improve the performance of quantum codes if exploited appropriately. Unfortunately, degeneracy may end up sabotaging the performance of sparse quantum codes because its existence is neglected in the decoding process.
In this work we provided a broad overview of the role this phenomenon plays in the realm of quantum error correction and, more specifically, in the field of QLDPC codes. We started by introducing a group theoretical explanation of the most relevant concepts in the field of quantum error correction. This approach is helpful to lay out the necessary groundwork to later study the effects of degeneracy. Following this, we examined the differences between the classical decoding problem and its quantum counterpart. Despite the intricate similarities between classical and quantum decoders, the coset partitioning of the $N$-fold Pauli space (see Figure \ref{fig:cosets}) reveals the higher complexity of the quantum decoding task. Based on this discussion, we were able to show in a concise manner why applying the classical decoding algorithm for sparse codes
to the quantum problem is suboptimal. Then, we provided a detailed explanation regarding the origin of degeneracy and how it may have detrimental effects on traditional SPA-based decoders. Finally, we provided a simple example in order to facilitate the comprehension of the topics discussed in the work.

\begin{figure}[!ht]
\centering
\includegraphics[width=0.75\linewidth]{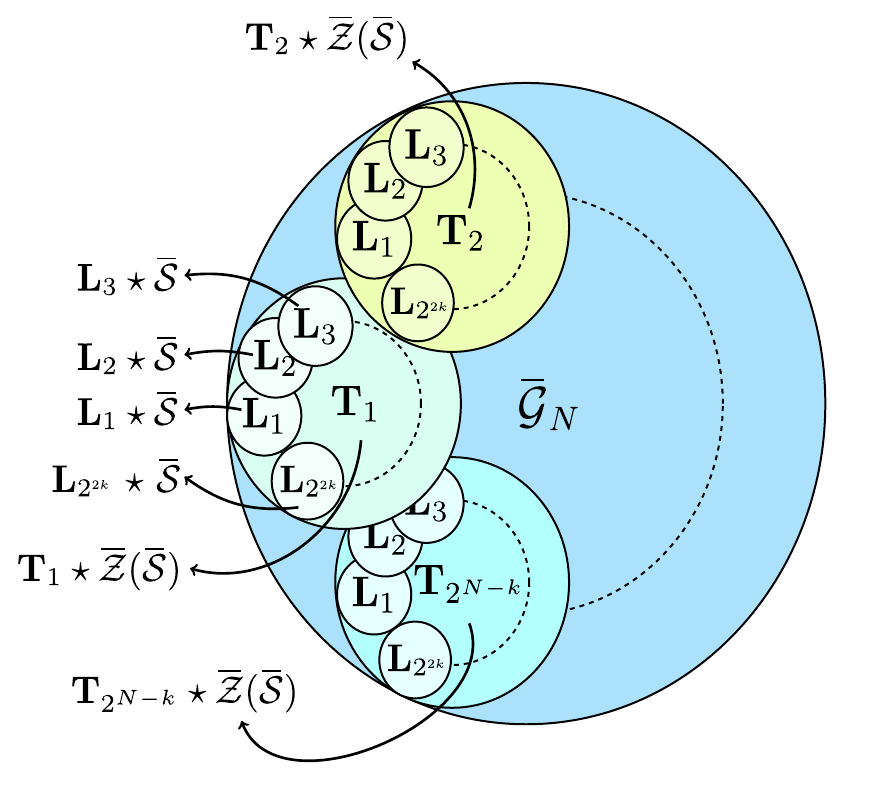}
\caption{Schematic representation of the coset partition of the effective Pauli group by the stabilizer.}
\label{fig:cosets}
\end{figure}

\section{On the logical error rate of sparse quantum codes}\label{sec:logical}
The manifestation of degeneracy in the design of sparse quantum codes and its effects on the decoding process has been studied extensively \cite{QSC,hardness,modified3,modified6,NPhardn,yoshida,viterbiDegen,poulinDegen}. Unfortunately, although degeneracy should theoretically improve performance, limited research exists on how to quantify the true impact that this phenomenon has on Quantum Low Density Parity Check (QLDPC) codes. This has resulted in the performance of sparse quantum codes being assessed differently throughout the literature; while some research considers the effects of degeneracy by computing the metric known as the logical error rate \cite{modified3,noncssqldpc,panteleev,landscape,kuo,kuo2,rigby}, other works employ the classical strategy of computing the physical error rate \cite{qldpc15,jgfQLDGM1,jgfQLDGM2,wang,qldpcBabar}, a metric which provides an upper bound on the performance of these codes since it ignores degeneracy. However, because these codes are highly degenerate \cite{QEClidar,hardness,modified3,NPhardn,vasic}, there is a significant gap between the upper bound provided by the physical error rate and their true performance. Consequently, we devised a group theoretic strategy to accurately assess the effects of degeneracy on sparse quantum codes and we explain another method that was used in \cite{panteleev,landscape,kuo,kuo2} to compute the logical error rate. Then, we used our methodology to show how sparse quantum codes should always be assessed using the logical error rate. \begin{figure}[!ht]
\centering
\includegraphics[width=0.8\linewidth]{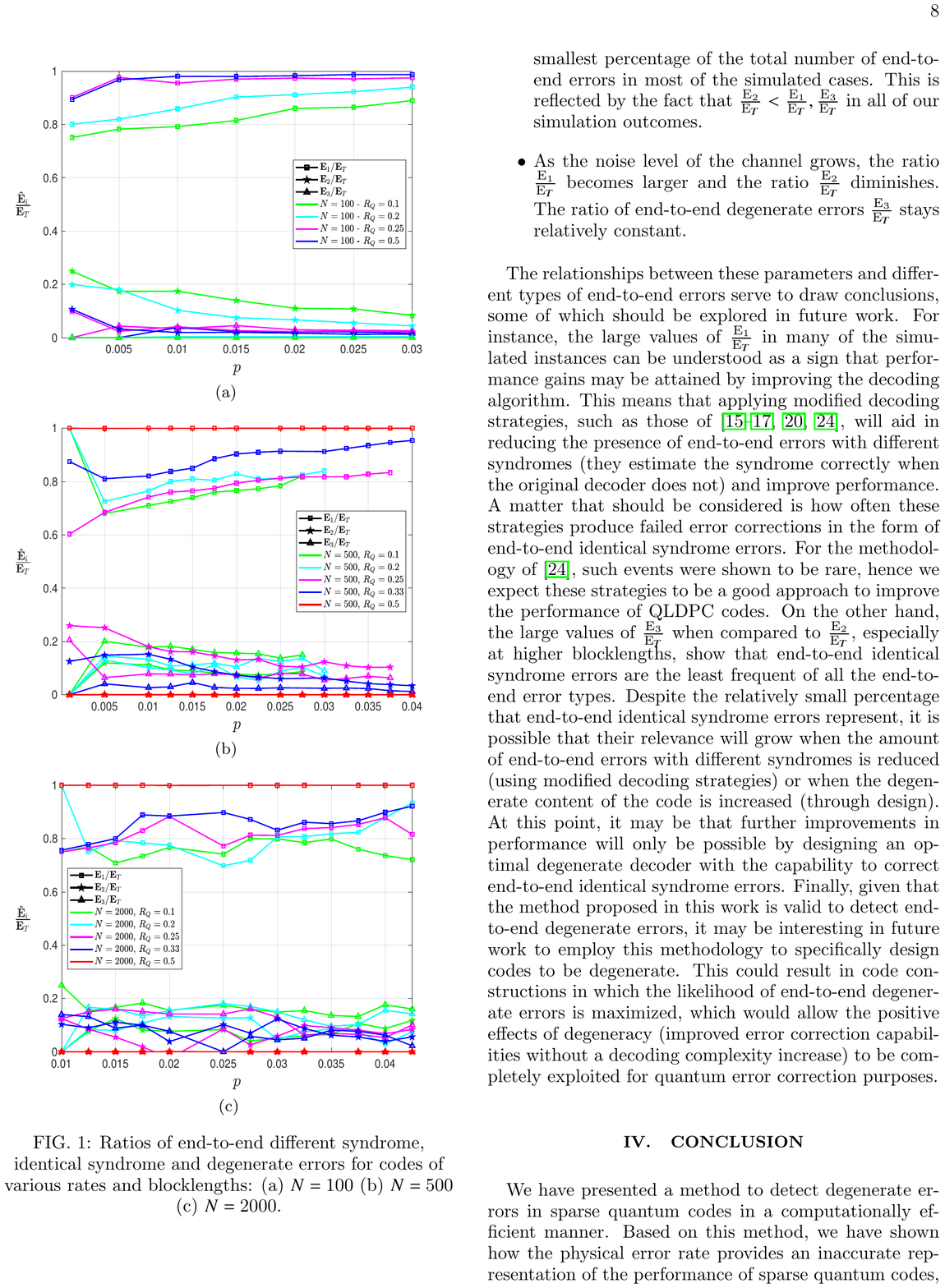}
\caption{Ratios of end-to-end different syndrome, identical syndrome and degenerate errors for the QLDPC code with blocklength $N=2000$ in consideration.}
\label{fig:logicalresults}
\end{figure}
The discrepancy between the logical error rate and the physical error rate is especially relevant to the field of sparse quantum codes because of their degenerate nature. This was reflected by the results (see Figure \ref{fig:logicalresults}) obtained for a specific family of QLDPC codes, whose performance can be up to $20\%$ better than would be expected based on previous results in the literature. In addition, these simulation outcomes serve to show how performance may be improved by constructing degenerate quantum codes, and they also speak towards the positive impact that modified decoding strategies can have on the performance of sparse quantum codes.

\clearemptydoublepage
 \label{chapter8}
\clearemptydoublepage
\chapter{Conclusions and Future Work} 
This Ph.D. thesis set out with the following two objectives in the context of quantum  information and quantum error correction:
\begin{itemize}
\item The analysis of decoherence as the source of the errors that corrupt quantum information.
\item The optimization of the performance of quantum turbo codes when different scenarios are considered.
\end{itemize}

With these goals in mind, we commenced the dissertation by giving the basic tools regarding quantum computing, quantum information theory and quantum error correction needed in order to understand the research that is presented in the following parts of the thesis. In Chapter \ref{ch:preliminary}, we first introduced the postulates of quantum mechanics, both from the state vector and density matrix perspectives, which are the mathematical framework that describe the theory of Quantum Mechanics. We continued by describing decoherence as the source of the errors that make quantum computations unreliable, and provided the way it is mathematically modelled by the use of quantum channels. Moreover, we discussed the way in which such quantum channels can be approximated so that they can be efficiently simulated in classical computers via the quantum information technique named twirling. The concept of quantum capacity was also introduced. We finished this preliminary chapter by discussing the way in which quantum error correction codes are simulated in classical computers.

Following this, we got into Part \ref{part1} (consisted of Chapters \ref{cp3}, \ref{cp4} and \ref{cp5}) of the thesis which is focused on mathematically modelling decoherence and studying the asymptotical limits of error correction of quantum channels. In Chapter \ref{cp3} we proposed a time-varying quantum channel (TVQC) model. The motivation of such proposal was the recent experimental observation that the parameters of decoherence of superconduting qubits (relaxation time, $T_1$, and dephasing time, $T_2$) are not fixed, but fluctuate through time. Therefore, we studied the stochastic processes that describe the random temporal behaviour of such parameters and integrated such randomness into the context of quantum channels. Moreover, we compared the action of the static quantum channels with the proposed TVQCs by making use of the metric named diamond norm distance. We concluded that both models differ significantly in terms of diamond norm distance indicating that the performance of QECCs will be significantly affected as a result of the $T_1$ and $T_2$ flutctuations.

Chapter \ref{cp4} is the natural continuation to Chapter \ref{cp3}. Quantum channel capacity is the asymptotically achievable rate by QECCs when static quantum channels are considered. In this chapter, we studied the asymptotical limits for quantum error correction under the assumption that the realizations of $T_1$ and $T_2$ are the same for all the qubits of a block, which is reminiscent of the classical scenario of slow fading channels. For classical slow fading channels the operational concept of capacity fails to exist and has to be replaced by the so-called outage probability. Thus, we proposed the quantum outage probability as the asymptotically achievable error rate for QECCs when they operate over TVQCs. We provided closed-form expressions for the family of time-varying amplitude damping channels ($T_1$-limited qubits). Furthermore, we proposed the quantum hashing outage probability as an upper bound for the quantum outage probability when Pauli/Clifford twirl approximated channels are considered, and provided closed-form expressions for such bound. The results concluded that the quantum outage probability is a monotonically increasing function with the coefficient of variation of the relaxation time. Thus, we discussed that the asymptotically achievable performance of QECCs will be limited not only by the mean qubit relaxation time, but also by the relative deviation of such parameter respect to the mean.

We finished the Part \ref{part1} of the thesis by studying how the performance of QECCs is affected when they operate over TVQCs. In Chapter \ref{cp3} we discussed that, due to the deviation in terms of diamond norm distance between static and time-varying channels, the performance of QECCs may be compromised when time variations are considered. We proved this intuition by simulating the Kitaev toric code and quantum turbo codes operating over the TVQCs. We concluded that the performance curves of QECCs are flattened when the channels are time-varying. However, the flattening of the curve does not only depend on the nature of the channel, but also on the performance of the code when it operates over static channels. Thus, the degradation of the WER curve is a function of the coefficients of variation of the decoherence parameters and the steepness of WER curve of the code under static noise. This comes from the fact that when the performance of the QECCs is flat in static scenarios, the difference in terms of WER between ``good'' and ``bad'' channel realizations will not be very different and, therefore, the mean performance of the code in a TVQC will be similar to the static one. Finally, we used the quantum outage probability in order to benchmark the performance of the codes in consideration.

The main takeaways of Part \ref{part1} can be summarized as:
\begin{itemize}
\item Decoherence parameter fluctuations are a critical issue to consider when quantum error correction codes are constructed. Neglecting their time-varying nature would leave out an integral part of quantum noise, making the predictions of the performance of QECCs too optimistic in many scenarios.
\item The importance of superconducting qubit construction and cooldown, in the sense that if they are optimized correctly, the fluctuations relative to the mean will be milder. Traditionally, long mean decoherence times are seeked for the qubits to be able to handle long time duration algorithms, however, this should be done jointly with lowering the dispersion of such parameter.
\item Qubit benchmarking, at least for superconducting qubits at the moment, should include the variation parameters of their decoherence times. However, the current experimental research on the topic usually only focuses at the mean or best case scenarios, which is far from being accurate when describing the decoherence dynamics of the qubits.
\item Even if the current state-of-the-art experimentally implemented QECCs will not be significantly affected by the TVQC (as they are generally short codes such as the $3\times 3$ toric code discussed in Chapter \ref{cp5}), the proposed channel model will be essential when better QECCs are implemented, such as the $7\times 7$ and $9\times 9$ toric codes for the near term, or the more advanced QTCs or QLDPCs for the long term.
\end{itemize}

Part \ref{part2} of the thesis is focused on improving the error correction performance of QTCs. Taking this aim into acount, we attempted the optimization of the error floor performance of QTCs by implementing practical interleavers rather than the traditionally considered random ones (Chapter \ref{cp6}). Inspired by the classical-to-quantum isomorphism, we adapted certain interleaving patterns that proved to improve the performance of classical turbo codes to the paradigm of quantum error correction. In particular, we implemented $S$-random, Welch-Costas and JPL interleavers for scrambling the quantum information that comes out of the outer quantum convolutional encoder. Simulation results showed that the error floor performance is improved up to two orders of magnitude. We also showed that memory requirements can be lowered by using specific interleaver designs, while the performance in the error floor region is comparable or even better than that of the original QTCs. We discussed that this benefits come without any decoding complexity costs.

Chapter \ref{cp7} dealt with QTCs whose decoders operate without channel state information. We began studying the sensitivity that the turbo decoder has to a mismatch between the actual channel depolarizing probability and the probability fed to the decoder. Simulation results showed that the decoder is sensitive both to under- and over-estimation of the depolarizing probability, presenting a flat region around the actual value of the depolarizing probability. Therefore, we discussed that estimation techniques are necessary for the QTCs to operate reliably over depolarizing channels. With that in mind, we first considered existing off-line estimation techniques that estimate the noise level of the channel previous to the operation of the QTCs. This techniques proved to be succesful when aiding the turbo decoders to reach the same performance as when perfect CSI is available, but at the expenses of an increase of the qubit overhead (which in certain scenarios may lead to code rate reduction) and system latency.  To overcome such issues, we proposed an on-line estimation technique that is able to use the information being processed in the inner SISO decoder of the QTC for succesfully estimating the channel depolarizing probability. Such an on-line estimator outperformed existing off-line estimators in terms of overall coding rate and latency, while maintaining excellent WER performance and decoding complexity. Simulation results showed that the on-line estimation method is insensitive to the initial value of the depolarizing probability, and thus, the resulting performance experiences very little degradation in the presence of channel mismatch. We finished Chapter \ref{cp7} by extending the on-line estimation method to scenarios where the channel is asymmetric, i.e. the Pauli twirled approximation channel. Monte Carlo simulations corroborated that the extended on-line estimation method is successful in aiding the QTC to achieve the same performance as when it has access to the channel parameters. We discussed that, here, this positive outcome comes at the cost of a slight increase in the complexity of the decoding algorithm.

The main takeaways of Part \ref{part2} can be summarized as:
\begin{itemize}
\item The classical-to-quantum isomorphism allows to import the knowledge of classical coding theory to the context of quantum error correction. This way, the performance of QECCs can be improved by utilizing classical techniques that have been extensively developed in the literature if they are adapted correctly.
\item Interleavers play a central role in the design of QTCs due to their ability to reduce the error floors of such codes while maintaining the performance in the waterfall region. Some interleaver constructions do also reduce the memory requirements of QTCs.
\item Quantum turbo codes are susceptible to channel state information mismatch. Feeding the turbo decoder with channel estimates that are not accurate enough compromises the excellent performance of such family of QECCs.
\item Off-line estimation protocols are succesful in aiding QTCs to operate with the same performance as when the turbo decoder has access to perfect channel information. However, this comes at the expenses of increasing the qubit overhead and system latency.
\item The on-line estimation method proposed is succesful in the task of making QTCs that are blind to CSI to operate with a performance similar to the ones that have access to perfect CSI. This method is succesful without needing a higher number of qubits. The on-line estimation protocol requires lower system latency than the discussed off-line estimation protocols. This modified decoder is able to operate for depolarizing (CTA) and asymmetric (PTA) channels.
\end{itemize}

In the final chapter of the dissertation, Chapter \ref{cp8}, we provided a brief overview of other QEC-related works that the author has been involved in during his time as a Ph.D. student. Such work was related to the design and optimization of quantum low-density-parity-check-matrix codes and the study of the quantum-unique phenomenon named degeneracy.

\section{Future research lines}
Many of the presented discussions and results are encouraging and suggest further study and analysis. Here we present some future research lines that arise from the work done in this Ph.D. thesis.

\subsection*{Asymptotically achievable limits for the combined amplitude and phase damping channel}
As discussed in Chapter \ref{ch:preliminary}, the quantum capacity of the combined amplitude and phase damping channel is yet unknown. It was discussed that, in general, quantum noise for two-level coherent systems is described by such channel that includes both relaxation and pure dephasing effects. Therefore, the knowledge of the asymptotically achievable limits at the time of writing reduces to the set of $T_1$-limited qubits, given by the quantum capacity of the amplitude damping channel. This, however, neglects an integral part of decoherence since most of the run-of-the-mill experimental qubits do not saturate the Ramsey limit ($T_2\approx 2T_1$). With this in mind, it is critical to obtain an expression for the quantum capacity of the combined amplitude and phase damping channel.

Moreover, in Chapter \ref{cp4}, we proposed the concept of quantum outage probability as the asymptotically achievable error rate by QECCs when operating over TVQCs that consider the realizations of the decoherence parameters to be equal for all the qubits of a block. In such chapter, we derived closed-form expressions of $p_\mathrm{out}^Q$ for qubits that are $T_1$-limited. Similar studies for the quantum outage probability of the more general time-varying combined amplitude and phase damping channel will be considered in future work. The aforementioned quantum capacity for such channel under static noise is needed for achieving this last task. This will be critical in order to have a complete tableau of the information theoretical limits of error correction for superconducting qubits with pure dephasing channels.

\subsection*{Fast time-varying quantum channels}
In Chapters \ref{cp4} and \ref{cp5} we assumed that the realizations of the decoherence parameters were the same for each of the qubits in the block of an error correction round. Under such premise, the TVQC model resembled the classical scenario of slow fading channels. Having such similarity in mind, we derived the quantum outage probability as the asymptotically achievable error rate by QECCs and simulated Kitaev toric codes and QTCs over such channels. Nevertheless, such assumption may not be true in general, indicating that depending on the qubit-wise correlation presented by $T_1$ and $T_2$ the system will operate in a different way to what has been studied here. At the time of writing, little to none experimental results regarding the qubit-wise correlation of the fluctuations of $T_1$ and $T_2$ have been done with the sole exception of \cite{decoherenceBenchmarking}. In such study, the authors studied the correlation between the relaxation times of two superconducting qubits and concluded that they were completely uncorrelated. This has important implications on error correction. First, since the realizations of the channel will be different and independent for each of the qubits of the block, this scenario will be reminiscent of the classical fast fading channel. That is, each of the symbols of the block will be affected by an independent realization of the channel. Therefore, the classical concept of \textit{ergodic capacity} may be integrated in the context of quantum error correction as the asymptotically achievable limit. This would obviously have implications on the performance of QECCs. Furthermore, considering a case where there is partial qubit-wise correlation of $T_1$ and $T_2$ would generalize every possible scenario for the proposed TVQC model.

As a consequence, this topic requires to be tackled both from the experimental and theoretical perspectives. First of all, a more complete experimental benchmarking of the fluctuations of the decoherence parameters of qubits needs to be done so that they can be properly characterized. In this way, the correlations among the qubits of the system can be correctly addressed by theoretical models. This will probably depend on the experimental implementation of the system (qubit technology, architecture, ...) and, thus, it is probable that each of the possible time-varying scenarios will be valid for some of the experimental quantum hardwares. Therefore, the theoretical study of the TVQC scenarios with uncorrelated or partially correlated fluctuations will be very important for understanding how QECCs will operate when real hardware is considered.

\subsection*{Switching rate QEC protocols for TVQCs}
Part \ref{part1} of the dissertation described the theoretical modelling of the fluctuations of the decoherence parameters of superconducting qubits and how such fluctuations degrade the performance of QECCs. It is important to remark that the proposed TVQC model is relevant for protocols involving a large number of error correction cycles (e.g., a quantum memory or a long quantum algorithm, such as the one in ref. \cite{rsaRounds}). If a very short algorithm or QECC is run just once or a very few number of times, the parameters will not fluctuate and the effects of the proposed channel model will not be noticeable. Herein lies important future work on this topic, such as the construction of optimized error correction codes that are apt to address these dynamic scenarios. In this way, we consider relevant trying to revert the degradation suffered by QECCs as a consequence of the decoherence parameter fluctuations. Therefore, a possible solution might be based on a switching rate protocol that can adapt the rate of the QECCs depending on the noise realization. In this way, the protocol would switch to lower QECC rates when noise level realizations are high, and to higher rates when the realizations are low. Rate adaptive schemes have been extensively studied in the field of classical coding theory \cite{imagra1,imagra2}. By doing this, it is possible to obtain a protocol that assures that the WER of the code is lower than some threshold value, say $10^{-3}$, while trying to consume as little qubit resources as possible\footnote{Note that by designing the QECCs by considering that the realizations are always bad will imply that the WER target would be achieved, but the qubit consumption would be high. Qubits are an expensive resource and, thus, designing codes that are far from being optimized is not the best praxis.}. It is important to realize that for this protocol to work appropriately, we need to implement an estimation protocol so that, depending on the estimated noise level, the rate is adapted accordingly. The online estimation protocol proposed in Chapter \ref{cp7} of this thesis is a perfect match for doing so. Note that the time needed for an error correction round is much smaller than the stochastic process coherence time and, thus, several QECC blocks can be fitted in the same noise realization. In this way, the on-line estimation protocol can check when the noise level has changed and, therefore, switch the rate accordingly. Another thing to be considered is that the selected QECC can vary its rate in a seamless manner. Protocols such as this one may be needed for the QECCs to maintain excellent performance with low qubit overheads when time variations are important.

\subsection*{Optimizing interleaver constructions for QTCs}
In Chapter \ref{cp6} of this Ph.D. dissertation we demonstrated that implementing interleavers with some construction leads to an improvement of the performance of QTCs in the error floor region. For doing so, we imported some well known interleaving patterns from classical turbo coding theory to the context of QTCs. In this context, there is still a vast amount of knowledge regarding interleaver construction to be imported from the classical coding theory \cite{newInt1,newInt2,newInt3,newInt4,newInt5}. This further optimization of the scrambling patterns in between the convolutional encoders that form the turbo code is  based on the actual structure of those two constituent codes. Therefore, in order to import those interleaving methods to the context of quantum information, the structure of the QCCs must be studied appropriately. Furthermore, as it has been seen in Chapter \ref{cp6}, quantum interleavers have an additional degree of freedom since $1$-qubit symplectic transformations can be included while interleaving the quantum information stream. Nevertheless, at the time of writing, how those would impact the performance of turbo codes remains to be studied. Thus, we consider important to address such additional degree of freedom when optimizing the performance of QTCs. To sum up, we consider that the error floors of QTCs can still be improved significantly by just constructing interleavers that are a good fit for concatenating the constituent QCCs.

These are the main future research topics that follow the investigations conducted throguhout this thesis. However, quantum information and computing are rather young and, thus, a vast amount of possible paths to follow are open at the moment. Fault-tolerant implementations of NISQ-era algorithms, hardware specific implementations of QECCs or the construction of error correction codes that maximize degeneracy may serve as examples of such endeavours from the point of view of quantum error correction. Recently, claims for quantum advantage have been made by Google and China independently \cite{GoogleSup,ChinaSup} and the barrier of $100$ superconducting qubits has been surpassed by the new ibm\_washington processor \cite{ibmeagle}. Needless to say, quantum computing has a bright future that will mainly be developed in this decade and QEC is considered to be key in order to hit the jackpot. This will require an Herculean effort by the community of quantum computing and quantum information. Exciting times are ahead!
 \label{Conclusion}
\clearemptydoublepage

\def\chaptername{APPENDIX}
\appendix
\begin{appendices}
\chapter{Tensor Product}\label{app:tensor}
\textit{Tensor products} are mathematical operations that put vector spaces together in order to form larger vector spaces. The tensor product is a crucial part in order to understand multiparticle systems in quantum mechanics. 

Suppose that $V$ and $W$ are vector spaces of dimension $m$ and $n$ respectively. Then $V\otimes W$, or the tensor product between the vector spaces, is an $mn$-dimensional vector space. The elements of this composite vector space are linear combinations of tensor products $\ket{v}\otimes\ket{w}$ of the elements $\ket{v}$ of $V$ and $\ket{w}$ of $W$. In particular, if $\ket{i}$ and $\ket{j}$ are orthonormal bases for the spaces $V$ and $W$ respectively, then $\ket{i}\otimes\ket{j}$ is a basis for the composite vector space $V\otimes W$.

Tensor product in matrix vector spaces is performed by the so-called \textit{Kronecker product}.
\begin{definition}[Kronecker Product]
Suppose $A\in \mathbb{C}^{m\times n}$ and $B\in \mathbb{C}^{p\times q}$ are two arbitrary complex matrices. Then $A\otimes B\in \mathbb{C}^{mp\times nq}$ and is defined as
\begin{equation}\nonumber
A\otimes B \equiv \begin{pmatrix}
a_{11}B & a_{12}B & \cdots & a_{1n}B \\
a_{21}B & a_{22}B & \cdots
& a_{2n}B \\
\vdots & \vdots & \ddots & \vdots \\
a_{m1}B & a_{m2}B & \cdots & a_{mn}B
\end{pmatrix},
\end{equation}
\label{def:kron}
\end{definition}
where $a_{ij}$ refers to the element in row $i$ and column $j$ of matrix $A$.

The Kronecker product as defined in definition \ref{def:kron} will be the tensor product that will be considered in the thesis. This tensor product presents the next properties:
\begin{itemize}
\item \textbf{Non-commutativity}: $A\otimes B \neq B\otimes A$ in general.
\item \textbf{Bilineality and associativity}:
\begin{itemize}
\item $A \otimes (B+C) = A\otimes B + A\otimes C$, with $B$ and $C$ of same dimensions.
\item $(A+B)\otimes C =A\otimes C + B\otimes C$, with $A$ and $B$ of same dimensions.
\item $(kA)\otimes B= A\otimes (kB) = k(A\otimes B)$, where $k\in \mathbb{C}$.
\item $(A\otimes B)\otimes C=A\otimes (B\otimes C)$.
\end{itemize}
\item \textbf{Mixed-product}: $(A\otimes B)(C\otimes D) = AC \otimes BD$, where the dimensions must be appropriate for the ordinary matrix product. From here it is derived that:
\begin{itemize}
\item $(A\otimes B)^{-1}=A^{-1}\otimes B^{-1}$, from where it is clear that for $(A\otimes B)^{-1}$ to exist, both $A^{-1}$ and $B^{-1}$ must exist.
\item $(A\otimes B)^T=A^T\otimes B^T$.
\end{itemize}
\item \textbf{Trace}: $\mathrm{Tr}(A\otimes B)=\mathrm{Tr}(A)\mathrm{Tr}(B)$.
\item \textbf{Determinant}: $\mathrm{det}(A\otimes B) =\mathrm{det}(A)^q \mathrm{det}(B)^n$, where $A\in\mathbb{C}^{n\times n}$ and $B\in\mathbb{C}^{q\times q}$.
\item \textbf{Rank}: $\mathrm{rank}(A\otimes B)=\mathrm{rank}(A)\mathrm{rank}(B)$.
\end{itemize}

To finish with the tensor product, notation $A^{\otimes k}$ will mean $A^{\otimes k} = A\otimes A\otimes\cdots\otimes A$, that is $A$ matrix tensored by itself $k$ times. For example, $\ket{\psi}^{\otimes 3} = \ket{\psi}\otimes\ket{\psi}\otimes\ket{\psi}$.

\chapter{Decoding algorithms} \label{app:decoding}
In this appendix we will detail the decoding algorithms that have been used in this dissertation. We will discuss the decoding algorithm for the Quantum Turbo codes and the Minimum-Weight Perfect Matching (MWPM) decoder for Kitaev toric codes. We will not go into the subtleties of these decoding methods, but rather present their principles of operation. We provide appropriate references for the reader that would like to go deeper in understanding the decoding protocols for such quantum error correction codes.

\section{Quantum Turbo Code decoder}
Decoding of classical turbo codes rely on belief propagation methods such as the sum-product algorithm \cite{sumproduct} that ouput the most probable codeword transmitted based on the syndrome of the received codeword and the information possesed about the channel. In the quantum setting, decoding is not to find out the most probable quantum state transmitted\footnote{Note that quantum states are continuous, so this is impossible to do.}, but to find out the most probable error coset of effective Pauli errors that have affected the codeword, and that have been discretized after syndrome measurement. Consequently, sum-product like algorithms in the quantum decoding world estimate the most probable error coset and so afterwards such correction operation can be done to the received state.

For QTCs, a \textit{Soft-Input-Soft-Output} (SISO) decoding algorithm was proposed for turbo decoding in \cite{QTC} and it was later enhanced by means of the so called extrinsic information transfer in \cite{EAQTC}. Such decoding algorithm will be described next. The decoding scheme for QTCs is shown in Figure \ref{fig:decodingQTC}.

The SISO decoder for quantum convolutional codes is an adaptation of the usual classical algorithm \cite{QTC,sumproduct}. This algorithm relies on the convolutional structure of the convolutional encoder in order to perform a forward-backward algorithm that will output the estimated channel error. Two of these are used in order to decode the quantum turbo code, as the code is obtained as the interleaved concatenation of two quantum convolutional codes. The SISO decoding algorithm for QCCs is extensively described in \cite{QTC}.

\begin{figure}[h!]
\centering
\includegraphics[scale=1]{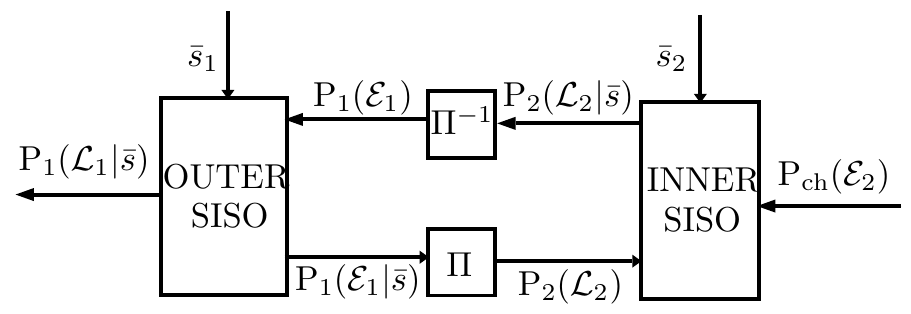}
\caption{Circuit for SISO decoding of quantum turbo codes.}
\label{fig:decodingQTC}
\end{figure}

From Figure \ref{fig:decodingQTC} it can be seen that the receiver first decodes the inner code based on the channel information, $\mathrm{P}_\mathrm{ch}(\mathcal{E}_2)$, and the measured syndrome, $\bar{s}_2$, for such inner code. The syndrome measurement is done as for block QECCs (see Chapter \ref{ch:preliminary}) and the estimation of the channel error is done by taking advantage of the structure of the inner convolutional decoding unitary. Afterwards, the obtained probabilities of the logical errors that may have happened during transmission conditioned to the syndrome read, $\mathrm{P}_2(\mathcal{L}_2|\bar{s})$, are passed thorugh the interleaver and given as a priori input information to the outer SISO decoder, $\mathrm{P}_1(\mathcal{E}_1)$. This element does the same operation as the inner SISO decoder with the new information, $\mathrm{P}_1(\mathcal{E}_1)$, and the syndrome read for the outer code, $\bar{s}_1$. The decoding for the quantum state is now done by taking advantage of the structure of the outer convolutional decoding unitary. After decoding, the information about the probability of the physical operator $\mathcal{E}_1$ conditioned with the syndrome read, $\mathrm{P}_1(\mathcal{E}_1|\bar{s})$, is deinterleaved and sent back to the the inner SISO decoder, $\mathrm{P}_2(\mathcal{L}_2)$. For the first iteration, such probability is taken as equiprobable. This loop is done several times until the estimates of the inner and outer decoders match or some number of limit iterations is achieved (convergence is the most probable thing to happen as stated in \cite{EAQTC}). The output of the decoder is the probability distribution $\mathrm{P}_1(\mathcal{L}_1|\bar{s})$ and, thus, the most probable error coset given the syndrome read can be estimated.

It has been stated that in \cite{EAQTC}, an improved version of the turbo decoding algorithm was presented, and we briefly present how such enchancement was done. The algorithm presented before is based on the exchange of \textit{a posteriori} information between the constituent quantum convolutional SISO decoders of the QTC. As the decoders pass along \textit{a posteriori} information, successive iterations of the constituent decoders are dependent on one another giving rise to a detrimental positive feedback effect that prevents the decoding algorithm to achieve the desired gains that iterative decoding usually presents.

For the sake of avoiding such effect, it is necessary that the \textit{a priori} information directly related to a given information qubit is not reused in the other decoder. Consequently, and similar to the approach employed in classical turbo decoding, this can be achieved by making one decoder to remove the \textit{a priori} information from the \textit{a posteriori} information before feeding it to the other decoder. More explicitly, the iterative decoding procedure should exchange only \textit{extrinsic} information that is new and unknown to the other decoder. To see how this information transfer works, consider a four-port\footnote{Note that in the scheme presented in Figure \ref{fig:decodingQTC}, the SISO decoders are not exactly like that, but this is a way to see how they work.} SISO decoder that generates soft output information pertaining to a logical error $\mathcal{L}$ and a physiscal error $\mathcal{E}$. Such circuit is presented in Figure \ref{fig:APP}.

\begin{figure}
\centering
\includegraphics[scale=1]{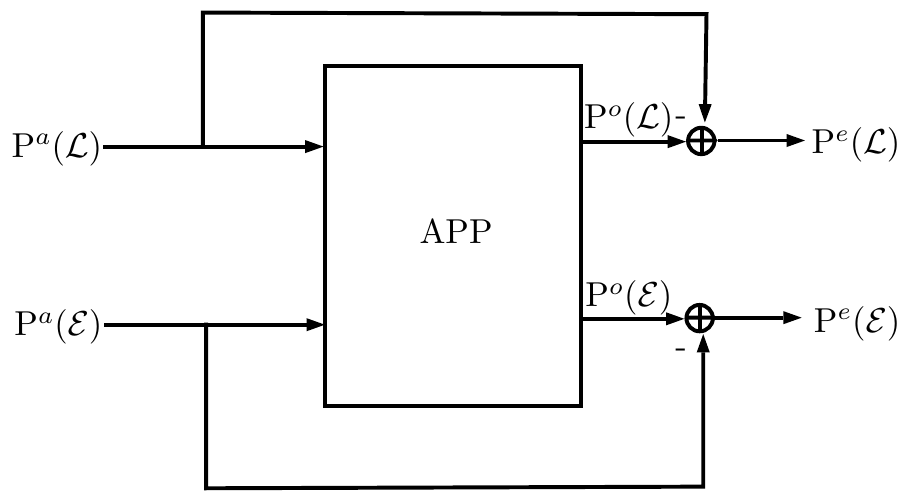}
\caption{A SISO decoder processes \textit{a priori} information and outputs \textit{a posteriori} information, from which the \textit{a priori} information is removed to obtain  \textit{extrinsic} information that is passed to the next decoder. Note that the probabilities here are in logarithmic scale so that they can be just substracted.}
\label{fig:APP}
\end{figure}

As it can be seen in Figure \ref{fig:APP}, a SISO decoder exploits an \textit{A Posteriori Probability} (APP) module that accepts the \textit{a priori} information $\mathrm{P}^a(\mathcal{L})$ and $\mathrm{P}^a(\mathcal{E})$ as input and outputs the \textit{a posteriori} information $\mathrm{P}^o(\mathcal{L})$ and $\mathrm{P}^o(\mathcal{E})$. The corresponding \textit{extrinsic} probabilities $\mathrm{P}^e(\mathcal{L}_i^j)$ and $\mathrm{P}^e(\mathcal{E}_i^j)$ for the $j^{th}$ qubit at time instant $i$ are then obtained by discarding the \textit{a priori} information from the \textit{a posteriori} information as follows
\begin{equation}\nonumber
\begin{split}
& \mathrm{P}^e(\mathcal{L}_i^j)=N_{\mathcal{L}^j}\frac{\mathrm{P}^o(\mathcal{L}_i^j)}{\mathrm{P}^a(\mathcal{L}_i^j)}, \\
& \mathrm{P}^e(\mathcal{E}_i^j)=N_{\mathcal{E}^j}\frac{\mathrm{P}^o(\mathcal{E}_i^j)}{\mathrm{P}^a(\mathcal{E}_i^j)},
\end{split}
\end{equation}
where $N_{\mathcal{L}^j}$ and $N_{\mathcal{E}^j}$ are normalization factors, which ensure that $\sum_{\mathcal{L}^j}\mathrm{P}^e(\mathcal{L}_i^j)=1$ and $\sum_{\mathcal{E}^j}\mathrm{P}^e(\mathcal{E}_i^j)=1$, respectively.

Furthermore, to avoid numerical instabilities and to reduce the computational complexity, log-domain arithmetics are conventionally employed, which convert the multiplicative operations to addition, as given below
\begin{equation}\nonumber
\begin{split}
& \ln[\mathrm{P}^e(\mathcal{L}_i^j)]=\ln[N_{\mathcal{L}^j}]+\ln[\mathrm{P}^o(\mathcal{L}_i^j)] - \ln[\mathrm{P}^a(\mathcal{L}_i^j)], \\
& \ln[\mathrm{P}^e(\mathcal{E}_i^j)]=\ln[N_{\mathcal{E}^j}]+\ln[\mathrm{P}^o(\mathcal{E}_i^j)] - \ln[\mathrm{P}^a(\mathcal{E}_i^j)].
\end{split}
\end{equation}

Therefore, the inputs and outputs of a SISO decoder are the logarithmic \textit{a priori} and extrinsic probabilities, repsectively.

In \cite{EAQTC} it was shown via simulation that this approach of extrinsic information transfer outerperforms the original turbo decoder proposed in \cite{QTC} that relied on the \textit{a posteriori} information.

The turbo decoding algorithm via SISO decoders for the convolutional codes has been presented here. Note that the turbo decoding algorithm is a sum-product like algorithm, similar to the ones used for decoding classical turbo codes with the slight difference that the effective Pauli errors are estimated instead of the codewords. Consequently, this algorithm is a \textit{classical} algorithm that will run on a \textit{classical computer}, and the quantum computer will be classically aided to solve the decoding problem and, thus, the complexity of the algorithm will be classical. Consequently, it is very important that the algorithm does not take more time than the decoherence time of the qubit, as if that was the case, then the qubits may suffer from additional errors that are not considered by the decoder. This implies that the study of quantum computing does also involve classical algorithms, and enhancing them will improve the quality of quantum error correction. Other approach to this problem would be to use the inherent parallel nature of quantum computers to develop quantum sum-product algorithms that would run in the quantum computers with the advantage to have a potentially better complexity than the classical ones and, therefore, improve the latency of the communications \cite{QBP1,QBP2}. Variations of those could also be useful to decode classical codes and reduce the latency of classical communications with hybrid classical-quantum computers \cite{EXITQTC}.

\section{Minimum-Weight Perfect Matching (MWPM) Decoder}
The Kitaev toric code is a surface code, which are codes defined on a 2D lattice of qubits \cite{toric}. Specifically, the toric codes have periodic boundaries in those latices, making them to have a torus-like shape. In surface codes, decoding a syndrome is equivalent to finding paths between the generators whose syndrome has been triggered. There will be multiple paths associated to the same syndrome and, consequently, the decoder must estimate the path that describes the most likely error. In general, the optimal decoder for toric codes has exponential complexity \cite{surface,DelftToric}, and, thus, it is impractical due to the fact that the classical algorithm must obtain an estimation of the error before the qubit suffers from additional errors.

A widely used decoding algorithm for the family of Kitaev toric codes is the so called \textit{Minimum-weight Perfect Matching} (MWPM) decoder \cite{surface,DelftToric}. This decoder is based on the MWPM problem of graph theory, where a matching\footnote{A set of edges without common vertices.} in which the sum of weights is minimized must be found. The term perfect refers to the fact that the matching matches all vertices of the graph. One can convert the lattice of the toric code with errors to a complete graph, where the generators with non-trivial syndrome are the nodes. The edges between the vetices have a weight equal to the minimum number of qubits between them. This way, solving the MWPM problem to such associated graph, the most likely error, i.e. the one with minimum weight, can be estimated \cite{surface,DelftToric}. The MWPM problem can be solved using Edmund's Blossom algorithm \cite{blossom} with some improvements in order to have a better computational complexity.

The toric codes of this dissertation are decoded via the MWPM decoder. We use the implementation of this decoder in the QECSIM tool \cite{surface} for the numerical simulations conducted.

\chapter{Monte Carlo numerical simulations}\label{app:montecarlo}
In this appendix we describe the numerical Monte Carlo simulations that we have conducted in this dissertation in order to asses the performance of the quantum error correction codes discussed. This way we are able to justify that the numerical simulations performed are accurate and, thus, the conclusions obtained from them valid.

The numerical simulations are based on the approximate error models that were explained in Chapter \ref{ch:preliminary}. Each round of the numerical simulation is performed by generating an $n$-qubit Pauli operator, calculating its associated syndrome and finally running the decoding algorithm using the syndrome as its input. The error operators are randomly generated using the distinct probability distributions for the Pauli channel that were discussed in this thesis. Once the logical error is estimated it is compared with the channel error\footnote{We observe if the logical error associated to the channel error and the estimated logical error belong to the same error coset.} in order to decide if the decoding round was succesful. The operational figure of merit selected in order to evaluate the performance of these quantum error correction schemes is the Word Error Rate ($\mathrm{WER}$), which is the probability that at least one qubit of the received block is incorrectly decoded.

Since the simulations we are conducting are based on the random sampling of the channel errors, the rules of Monte Carlo simulations are invoked in order to obtain results that are honest. Thus, for estimating the $\mathrm{WER}$ of the Kitaev toric codes and QTCs, the following Monte Carlo rule of thumb has been used in order to select the number of blocks to be transmitted, $N_{\mathrm{blocks}}$ \cite{montecarlo}:
\begin{equation}
N_{\mathrm{blocks}} = \frac{100}{\mathrm{WER}}.
\end{equation}
As explained in \cite{montecarlo}, under the assumption that the observed error events are independent, this results in a $95\%$ confidence interval of about $(0.8\hat{\mathrm{WER}},$ $  1.25\hat{\mathrm{WER}})$, where $\hat{\mathrm{WER}}$ refers to the empirically estimated value for the $\mathrm{WER}$. This way, we can justify that the results obtained via random sampling of the quantum channel are statistically representative and, thus, the conclusions obtained from those throughout the dissertation are valid.

\chapter{Diamond norm distance}\label{app:diamond}
In Chapter \ref{cp3}, the \textit{diamond norm distance} is used as the metric to compare the action of static quantum channels versus the proposed time-varying quantum channels. The objective of such analysis is to show that neglecting the fluctuating of $T_1$ and $T_2$ may result in an unrealistic model for quantum noise. Here we introduce the needed concepts regarding diamond norm distance in order to understand such comparison. We also provide a proof that shows that the diamond norm distances between the two ADPTA and two ADCTA channels obtained from the same two AD channels are the same. This applies for $T_1$-limited qubits.

\section{Diamond norm distance}\label{meth:diamondnorm}

The diamond norm distance \cite{diamondNat,FanoDiamond} between two quantum channels $\mathcal{N}_1$ and $\mathcal{N}_2$ is defined as
\begin{equation}\label{eq:diamond}
||\mathcal{N}_1 - \mathcal{N}_2||_\diamond = \sup_\rho ||\mathcal{N}_1\otimes \mathrm{I}(\rho) - \mathcal{N}_2\otimes \mathrm{I}(\rho)||_1,
\end{equation}
where $||\cdot||_1$ is the trace norm. The diamond norm considers states $\rho$ that might be entangled with some ancillary system that is not altered by the action of the channels. Consider two quantum channels $\mathcal{N}_1,\mathcal{N}_2$ and a single channel use. A quantum channel discrimination protocol aims to maximize the probability of correctly identifying which channel has acted on the quantum state if one of them is chosen uniformly at random. The diamond norm distance is related to the minimum probability of error, $p_\mathrm{e}$, that a discrimination protocol for $\mathcal{N}_1$ and $\mathcal{N}_2$ can achieve as
\begin{equation}\label{eq:chandisc}
p_\mathrm{e} = \frac{1}{2} - \frac{||\mathcal{N}_1 - \mathcal{N}_2||_\diamond}{4}.
\end{equation}
Therefore, the diamond norm distance serves as a measure of how differently two quantum channels affect input quantum states, and so it is a good metric to assess the dissimilarity between channels.

For Pauli channels $\mathcal{N}_\mathrm{P}$ defined as
\begin{equation}\label{eq:pauli}
\mathcal{N}_\mathrm{P}(\rho) = p_\mathrm{I}\rho + p_\mathrm{x}\mathrm{X}\rho \mathrm{X} + p_\mathrm{y} \mathrm{Y}\rho \mathrm{Y} + p_\mathrm{z} \mathrm{Z}\rho \mathrm{Z},
\end{equation}
with $p_\mathrm{I}=1-p_\mathrm{x}-p_\mathrm{y}-p_\mathrm{z}$, the diamond norm distance has the closed-form expression \cite{FanoDiamond}
\begin{equation}\label{eq:PauliDiamond}
||\mathcal{N}_\mathrm{P}^1 - \mathcal{N}_\mathrm{P}^2||_\diamond = \sum_{k} |p_k^1 - p_k^2|,
\end{equation}
where $k\in\{\mathrm{I,X,Y,Z}\}$.

Amplitude damping channels $\mathcal{N}_{\mathrm{AD}}$ also have a closed-form expression \cite{ADdiamond} for the diamond norm distance which is given by
\begin{equation}\label{eq:ADdiamond}
\begin{split}
  ||\mathcal{N}_{\mathrm{AD}}^1 -& \mathcal{N}_{\mathrm{AD}}^2||_\diamond  =  \\
     &\begin{cases}
       2|\gamma_1 - \gamma_2|  & \text{if } \sqrt{1-\gamma_1} + \sqrt{1-\gamma_2} >1 \\
       \\
       \frac{2|\sqrt{1-\gamma_1}-\sqrt{1-\gamma_2}|}{2 - (\sqrt{1-\gamma_1} + \sqrt{1-\gamma_2})} & \text{otherwise}
     \end{cases}
     \end{split} ,
\end{equation}
where $\gamma_1$ and $\gamma_2$ are the damping probailities that define the dynamics of each of the AD channels.

The action of phase damping or dephasing channels $\mathcal{N}_{\mathrm{PD}}$ can be easily rewritten as a function of the Pauli matrices as
\begin{equation}
\mathcal{N}_{\mathrm{PD}}(\rho) = \frac{1+\sqrt{1-\lambda}}{2}\mathrm{I}\rho \mathrm{I} + \frac{1-\sqrt{1-\lambda}}{2}\mathrm{Z}\rho \mathrm{Z},
\end{equation}
where $\lambda$ is interpreted as the scattering probability of a photon without loss of energy. Consequently, the PD channel is equivalent to a pure Pauli dephasing channel, that is, a Pauli channel where the only non-trivial errors are the phase-flip errors $\mathrm{Z}$. As a result, using \eqref{PauliDiamond}, the diamond norm distance between dephasing channels has the closed-form expression

\begin{equation}\label{eq:PDdiamond}
||\mathcal{N}_{\mathrm{PD}}^1 - \mathcal{N}_{\mathrm{PD}}^2||_\diamond = |\sqrt{1-\lambda_1} -\sqrt{1-\lambda_2}|.
\end{equation}

Finally, for quantum channels whose diamond norm distance has no closed-form expression, a method to efficiently compute it via semidefinite programming (SDP) was presented in \cite{SDPdiamond}. The QETLAB tool \cite{QETLAB} for MATLAB includes the calculation of diamond norm distances using those SDP methods. We use such tool in order to obtain the diamond norm distances between static and time-varying combined amplitude and phase damping channels.

\section{$||\mathcal{N}_{\mathrm{ADPTA}}(\gamma_1) - \mathcal{N}_{\mathrm{ADPTA}}(\gamma_2)||_\diamond = ||\mathcal{N}_{\mathrm{ADCTA}}(\gamma_1)-\mathcal{N}_{\mathrm{ADCTA}}(\gamma_2)||_\diamond$}
\begin{proposition}
The diamond norm distance between the Pauli twirl approximated channels (PTA) obtained from two amplitude damping channels (AD) of parameters $\gamma_1$ and $\gamma_2$ is the same as the diamond norm distance between the Clifford twirl approximated channels (CTA) obtained from the same AD channels:
\begin{equation}
||\mathcal{N}_{\mathrm{ADPTA}}(\gamma_1) - \mathcal{N}_{\mathrm{ADPTA}}(\gamma_2)||_\diamond = ||\mathcal{N}_{\mathrm{ADCTA}}(\gamma_1)-\mathcal{N}_{\mathrm{ADCTA}}(\gamma_2)||_\diamond.
\end{equation}
\end{proposition}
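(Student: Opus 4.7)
Both ADPTA and ADCTA are Pauli channels, so formula \eqref{PauliDiamond} applies and gives, for any two Pauli channels with probability vectors $(p_I^1,p_x^1,p_y^1,p_z^1)$ and $(p_I^2,p_x^2,p_y^2,p_z^2)$, a diamond distance equal to $\sum_{k\in\{\mathrm{I,X,Y,Z}\}} |p_k^1-p_k^2|$. My plan is to evaluate this sum for each family and show that it reduces in both cases to $2|p_I(\gamma_1)-p_I(\gamma_2)|$, then verify that $p_I^{\mathrm{ADPTA}}(\gamma)=p_I^{\mathrm{ADCTA}}(\gamma)$.

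First I would instantiate the probability expressions. Setting $\lambda=0$ in \eqref{PTAprobs} gives the ADPTA distribution $p_x=p_y=\gamma/4$, $p_z=(2-\gamma-2\sqrt{1-\gamma})/4$, and hence $p_I^{\mathrm{ADPTA}}(\gamma)=\tfrac{1}{2}-\tfrac{\gamma}{4}+\tfrac{\sqrt{1-\gamma}}{2}$. Setting $\lambda=0$ in \eqref{CTAprob} gives the ADCTA depolarizing probability $p(\gamma)=(2+\gamma-2\sqrt{1-\gamma})/4$, so $p_x=p_y=p_z=p(\gamma)/3$ and $p_I^{\mathrm{ADCTA}}(\gamma)=1-p(\gamma)=\tfrac{1}{2}-\tfrac{\gamma}{4}+\tfrac{\sqrt{1-\gamma}}{2}$. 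Thus $p_I^{\mathrm{ADPTA}}(\gamma)=p_I^{\mathrm{ADCTA}}(\gamma)$ for every $\gamma\in[0,1]$. This is the conceptual reason the result holds: as explained in Section \ref{subsub:CTA}, the Clifford twirl is a Pauli twirl followed by a symplectic twirl, and the symplectic step only redistributes the mass on non-identity Paulis uniformly while leaving the identity weight intact.

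Next I would exploit monotonicity to collapse the four-term sum. Without loss of generality assume $\gamma_1<\gamma_2$. A direct differentiation shows that $p_I(\gamma)=\tfrac{1}{2}-\tfrac{\gamma}{4}+\tfrac{\sqrt{1-\gamma}}{2}$ is strictly decreasing in $\gamma$ on the valid range, so $p_I^1>p_I^2$ and hence the total non-identity probability satisfies $(1-p_I^1)<(1-p_I^2)$. In the ADCTA case this immediately gives $\sum_k|p_k^1-p_k^2|=(p_I^1-p_I^2)+3\cdot\tfrac{1}{3}(p_2-p_1)=2(p_I^1-p_I^2)$. In the ADPTA case each individual non-identity probability is also increasing in $\gamma$ (since $p_x=p_y=\gamma/4$ is trivially increasing, and $dp_z/d\gamma=(-1+1/\sqrt{1-\gamma})/4>0$ for $\gamma\in(0,1)$), so all three non-identity absolute values open with the same sign and their sum is $(1-p_I^2)-(1-p_I^1)=p_I^1-p_I^2$, giving again $\sum_k|p_k^1-p_k^2|=2(p_I^1-p_I^2)$.

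Combining these two observations closes the argument: both diamond norms equal $2|p_I^{\mathrm{ADPTA/ADCTA}}(\gamma_1)-p_I^{\mathrm{ADPTA/ADCTA}}(\gamma_2)|$, which coincide because the two identity-weight functions coincide. The only delicate point, and where I would be most careful, is the monotonicity check for $p_z^{\mathrm{ADPTA}}(\gamma)$; once that (and the decreasing nature of $p_I$) is established, the collapse of the absolute-value sum is immediate and the identification of the two diamond norms follows purely from the equality $p_I^{\mathrm{ADPTA}}=p_I^{\mathrm{ADCTA}}$.
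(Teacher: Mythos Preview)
Your proposal is correct and, in fact, cleaner than the paper's own argument. Both proofs start identically: they invoke the closed-form \eqref{PauliDiamond} for Pauli channels and observe that $p_\mathrm{I}^{\mathrm{ADPTA}}(\gamma)=p_\mathrm{I}^{\mathrm{ADCTA}}(\gamma)$, so only the non-identity contributions need to be compared. From there the routes diverge. The paper expands the non-identity terms algebraically and performs a case-by-case sign analysis, ultimately needing the auxiliary inequality $2|\sqrt{1-\gamma_2}-\sqrt{1-\gamma_1}|\geq|\gamma_2-\gamma_1|$ (proved via the factorisation $\gamma_2-\gamma_1=(\sqrt{1-\gamma_1}-\sqrt{1-\gamma_2})(\sqrt{1-\gamma_1}+\sqrt{1-\gamma_2})$) to resolve the absolute value in $|p_\mathrm{z}^1-p_\mathrm{z}^2|$; it then shows that both non-identity sums equal the explicit expression $\tfrac14|\gamma_1-\gamma_2|+\tfrac12|\sqrt{1-\gamma_2}-\sqrt{1-\gamma_1}|$. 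You bypass all of this by establishing that $p_\mathrm{x},p_\mathrm{y},p_\mathrm{z}$ are each monotone increasing in $\gamma$ while $p_\mathrm{I}$ is decreasing, so the four absolute values open with a common sign and the entire diamond norm collapses to $2|p_\mathrm{I}(\gamma_1)-p_\mathrm{I}(\gamma_2)|$ for both families. Your monotonicity check for $p_\mathrm{z}$ (namely $dp_\mathrm{z}/d\gamma=\tfrac14(-1+1/\sqrt{1-\gamma})>0$) is precisely what replaces the paper's inequality lemma, and is the more transparent mechanism. The paper's approach has the minor bonus of producing the explicit value of the diamond norm along the way, but for the stated proposition your argument is shorter and conceptually sharper.
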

\begin{proof}
The PTA and CTA channels are Pauli channels \cite{twirl3,twirl6}. The PTA channel for an AD channel with damping probability $\gamma$ has the following probabilities for each of the Pauli matrices:
\begin{equation}\nonumber
p_\mathrm{I} = \frac{2-\gamma + 2 \sqrt{1-\gamma}}{4},
p_\mathrm{x} = p_\mathrm{y} = \frac{\gamma}{4}\text{ and }
 p_\mathrm{z}=\frac{2-\gamma - 2\sqrt{1-\gamma}}{4}.
\end{equation}
The CTA channel is a depolarizing channel (as a special case of the Pauli channels), with the following probabilities:
\begin{equation}\nonumber
 r_\mathrm{I} = \frac{2-\gamma+2\sqrt{1-\gamma}}{4}\text{ and }
r/3=r_\mathrm{x}=r_\mathrm{y}=r_\mathrm{z} = \frac{2+\gamma-2\sqrt{1-\gamma}}{12}.
\end{equation}
The diamond norm distance between Pauli channels has the closed-form expression \cite{FanoDiamond}
\begin{equation}
||\mathcal{N}_\mathrm{P}^1 - \mathcal{N}_\mathrm{P}^2||_\diamond = \sum_k |p_k^1 - p_k^2|,
\end{equation}
with $k\in\{\mathrm{I,X,Y,Z}\}$. Knowing that the PTAs and CTAs are Pauli channels, we can write the diamond norm distances between them obtained from approximating two AD channels:
\begin{enumerate}[(a)]
\item\label{PTA} $||\mathcal{N}_{\mathrm{ADPTA}}(\gamma_1)-\mathcal{N}_{\mathrm{ADPTA}}(\gamma_2)||_\diamond = |p_\mathrm{I}^1-p_\mathrm{I}^2|+|p_\mathrm{x}^1-p_\mathrm{x}^2|+|p_\mathrm{y}^1-p_\mathrm{y}^2|+|p_\mathrm{z}^1-p_\mathrm{z}^2|$
\item\label{CTA} $||\mathcal{N}_{\mathrm{ADCTA}}(\gamma_1)-\mathcal{N}_{\mathrm{ADCTA}}(\gamma_2)||_\diamond = |r_\mathrm{I}^1-r_\mathrm{I}^2|+\sum_{l=k}^3 |r_k^1-r_k^2| = |r_\mathrm{I}^1-r_\mathrm{I}^2|+3|r^1/3 - r^2/3|$
\end{enumerate}

Note that $p_\mathrm{I}=r_\mathrm{I},\forall\gamma$, so the contribution of the probabilities associated to non-identity components to the diamond norm distances must be compared. We start by expanding (\ref{CTA}):

\begin{equation}\label{eq:CTAStep1}
\begin{split}
3|r^1/3 - r^2/3| &= 3\left|\frac{\gamma_1 - \gamma_2 + 2(\sqrt{1-\gamma_2} - \sqrt{1-\gamma_1})}{12}\right| \\ & =  \frac{1}{4}|\gamma_1 - \gamma_2 + 2(\sqrt{1-\gamma_2} - \sqrt{1-\gamma_1})|.
\end{split}
\end{equation}

The results shown below for the terms $\gamma_1 - \gamma_2$  and $\sqrt{1-\gamma_2} - \sqrt{1-\gamma_1}$ that appear in equation \eqref{CTAStep1} will be useful in what follows:
\begin{itemize}
\item $\gamma_1 - \gamma_2 > 0 \rightarrow \gamma_1>\gamma_2 \rightarrow -\gamma_1<-\gamma_2 \rightarrow 1-\gamma_1<1-\gamma_2\rightarrow \sqrt{1-\gamma_1}<\sqrt{1-\gamma_2}\rightarrow \sqrt{1-\gamma_2} - \sqrt{1-\gamma_1}>0$.
\item $\gamma_1 - \gamma_2 < 0 \rightarrow \gamma_1<\gamma_2 \rightarrow -\gamma_1>-\gamma_2 \rightarrow 1-\gamma_1>1-\gamma_2\rightarrow \sqrt{1-\gamma_1}>\sqrt{1-\gamma_2}\rightarrow \sqrt{1-\gamma_2} - \sqrt{1-\gamma_1}<0$.
\item $\gamma_1-\gamma_2 = 0 \rightarrow \sqrt{1-\gamma_2} - \sqrt{1-\gamma_1}=0$.
\end{itemize}

From these observations, it is easy to see that $\gamma_1 - \gamma_2$  and $\sqrt{1-\gamma_2} - \sqrt{1-\gamma_1}$ share the same sign $\forall\gamma_1,\gamma_2$. Thus, we can split the sum inside the absolute value of \eqref{CTAStep1} into a sum of absolute values as
\begin{equation}\label{eq:CTAStep2}
\eqref{CTAStep1} = \frac{1}{4}|\gamma_1 - \gamma_2|+ \frac{1}{2}|\sqrt{1-\gamma_2} - \sqrt{1- \gamma_1}|.
\end{equation}

We continue by developing the non-identity part for the PTA as given in (\ref{PTA}):
\begin{equation}\label{eq:PTAStep1}
\begin{split}
|p_\mathrm{x}^1-p_\mathrm{x}^2|+|p_\mathrm{y}^1-p_\mathrm{y}^2|+|p_\mathrm{z}^1-p_\mathrm{z}^2| =& \frac{|\gamma_1-\gamma_2|}{4} + \frac{|\gamma_1-\gamma_2|}{4} \\& + \frac{|\gamma_2-\gamma_1 + 2(\sqrt{1-\gamma_2} - \sqrt{1-\gamma_1})|}{4}.
\end{split}
\end{equation}

We follow the same reasoning as for the CTA by making use of the following expressions that $\gamma_2 - \gamma_1$  and $\sqrt{1-\gamma_2} - \sqrt{1-\gamma_1}$ fulfil. They are later added within the absolute value of their corresponding terms in the expression:
\begin{itemize}
\item $\gamma_2 - \gamma_1 > 0 \rightarrow \gamma_2>\gamma_1 \rightarrow -\gamma_2<-\gamma_1 \rightarrow 1-\gamma_2<1-\gamma_1\rightarrow \sqrt{1-\gamma_2}<\sqrt{1-\gamma_1}\rightarrow \sqrt{1-\gamma_2} - \sqrt{1-\gamma_1}<0$.
\item $\gamma_2 - \gamma_1 < 0 \rightarrow \gamma_2<\gamma_1 \rightarrow -\gamma_2>-\gamma_1 \rightarrow 1-\gamma_2>1-\gamma_1\rightarrow \sqrt{1-\gamma_2}>\sqrt{1-\gamma_1}\rightarrow \sqrt{1-\gamma_2} - \sqrt{1-\gamma_1}>0$.
\item $\gamma_2-\gamma_1 = 0 \rightarrow \sqrt{1-\gamma_2} - \sqrt{1-\gamma_1}=0$.
\end{itemize}

From these outcomes, we can tell that $\gamma_2 - \gamma_1$  and $\sqrt{1-\gamma_2} - \sqrt{1-\gamma_1}$ have a different sign $\forall \gamma_1,\gamma_2$. In consequence we can write the sum inside the absolute value of the third term of \eqref{PTAStep1} as
\begin{equation}\label{eq:PTAStep2}
\eqref{PTAStep1} = \left|2|\sqrt{1-\gamma_2} - \sqrt{1-\gamma_1}| - |\gamma_2 - \gamma_1|\right|,
\end{equation}
that is to say, the absolute value of computing the substraction of each of the original absolute values. To further expand the expression in \eqref{PTAStep2}, we use
\begin{equation}\label{PTAmid1}
\begin{split}
& |(\sqrt{1-\gamma_2} - \sqrt{1-\gamma_1})(\sqrt{1-\gamma_2}+\sqrt{1-\gamma_1})|  = |1-\gamma_2 - 1 + \gamma_1| \\  &= |\gamma_2 - \gamma_1| \rightarrow 2 |\sqrt{1-\gamma_2} - \sqrt{1-\gamma_1}||\sqrt{1-\gamma_2}+ \sqrt{1-\gamma_1}| \\ &= 2|\gamma_2-\gamma_1| \rightarrow 2|\sqrt{1-\gamma_2}- \sqrt{1-\gamma_1}| = \frac{2|\gamma_2 - \gamma_1|}{|\sqrt{1-\gamma_2} + \sqrt{1-\gamma_1}|}\\ &\rightarrow 2|\sqrt{1-\gamma_2}- \sqrt{1-\gamma_1}| \geq |\gamma_2 - \gamma_1|,
\end{split}
\end{equation}
$\forall \gamma_1,\gamma_2$ where the last inequality comes from the fact that
\begin{itemize}
\item $|\sqrt{1-\gamma_2} - \sqrt{1-\gamma_1}|\in(0,2] \rightarrow 0< \frac{1}{2}|\sqrt{1-\gamma_2} - \sqrt{1-\gamma_1}|\leq 1 \rightarrow \frac{2}{|\sqrt{1-\gamma_2} - \sqrt{1-\gamma_1}|}\geq 1$
\item $|\sqrt{1-\gamma_2} - \sqrt{1-\gamma_1}| = 0 \rightarrow 2|\sqrt{1-\gamma_2} - \sqrt{1-\gamma_1}|=|\gamma_2-\gamma_1| = 0$.
\end{itemize}

Having shown this result, we can work on \eqref{PTAStep2} even further by removing the absolute value, since we have shown that the substraction will always be $\geq 0$:
\begin{equation}\label{eq:PTAmid2}
\eqref{PTAStep2} = 2|\sqrt{1-\gamma_2} - \sqrt{1-\gamma_1}| - |\gamma_2-\gamma_1|.
\end{equation}

We finish by introducing \eqref{PTAmid2} into expression \eqref{PTAStep1}:
\begin{equation}\label{eq:PTAStep3}
\begin{split}
\eqref{PTAStep1} =&\\& \frac{|\gamma_1-\gamma_2|}{4} + \frac{|\gamma_1-\gamma_2|}{4}   + \frac{1}{2}|\sqrt{1-\gamma_2}-\sqrt{1-\gamma_1}| - \frac{|\gamma_2-\gamma_1|}{4} \\ & = \frac{1}{4}|\gamma_1-\gamma_2| + \frac{1}{2}|\sqrt{1-\gamma_2}-\sqrt{1-\gamma_1}|
\end{split}
\end{equation}

If we now compare \eqref{CTAStep2} and \eqref{PTAStep3}, it is clear that both of them are the same, which implies that
\begin{equation}
||\mathcal{N}_{\mathrm{ADPTA}}(\gamma_1) - \mathcal{N}_{\mathrm{ADPTA}}(\gamma_2)||_\diamond = ||\mathcal{N}_{\mathrm{ADCTA}}(\gamma_1)-\mathcal{N}_{\mathrm{ADCTA}}(\gamma_2)||_\diamond,
\end{equation}
and so concludes the proof for the proposition.
\end{proof}

\chapter{Adjusted boxplots for skewed distributions}\label{app:boxSkew}
Boxplots are a heavily employed tool in statistics to see how specific statistical data is distributed. A boxplot is comprised of a box whose top edge refers to the third quartile ($\mathrm{Q3}$) and whose bottom edge refers to the first quartile ($\mathrm{Q1}$). A line inside this box represents the median and two whiskers at both ends indicate $\mathrm{Q3}+1.5(\mathrm{Q3}-\mathrm{Q1})$ for the top whisker and $\mathrm{Q1}-1.5(\mathrm{Q3}-\mathrm{Q1})$ for the bottom one. Data that is found outside whisker range are considered to be outliers in the data set and are individually plotted. Boxplots are based on the normal distribution, and so they are not very representative for heavily skewed distributions. Note that the length is the same for the top and the bottom whiskers. Consequently, the real skewness of heavily skewed data will not be represented in such boxplots, with several data being represented as outliers as a result of the long tails that such types of distributions have. In consequence, conventional boxplots do not reflect a valid range of outliers, since the long tail side requires that more data be accounted for as typical values.

Adjusted boxplots have been presented in the literature as a methodology to take the imbalanced weight of skewed data into account \cite{skewed}. The core idea of this adjusted boxplot is to use a robust measure of the skewness of the dataset to define the length of each of the whiskers. \cite{skewed} employs a measure known as the medcouple ($\mathrm{MC}$) \cite{medcouple} for this purpose. The whisker range $\Delta$ is then defined as
\begin{equation}\label{eq:adjBoxplot}
\begin{cases}
       [\mathrm{Q1}-1.5\mathrm{e}^{-4\mathrm{MC}}\mathrm{IQR}, \mathrm{Q3}+1.5\mathrm{e}^{3\mathrm{MC}}\mathrm{IQR}]  & \text{if }  \mathrm{MC}\geq 0 \\
       \\
       [\mathrm{Q1}-1.5\mathrm{e}^{-3\mathrm{MC}}\mathrm{IQR}, \mathrm{Q3}+1.5\mathrm{e}^{4\mathrm{MC}}\mathrm{IQR}] & \text{if } \mathrm{MC}<0
\end{cases},
\end{equation}
where $\mathrm{IQR} = \mathrm{Q3} - \mathrm{Q1}$ refers to the interquartile range. The whiskers of the boxplot will then take the skewness of the distribution into account, depending on whether the distribution is positively skewed ($\mathrm{MC}>0$), negatively skewed ($\mathrm{MC}<0$) or symmetric ($\mathrm{MC}=0$).
This way several data considered to be outliers in conventinal boxplots will not be considered as such by adjusted boxplots (skewness should be taken into account for outlier consideration).

The probability distribution of $||\mathcal{N}(\mu_{T_1},\mu_{T_2})-\mathcal{N}(\rho,\omega,t)||_\diamond$ discussed in Chapter \ref{cp3} appears to have a heavy positive skew, and thus the adjusted version of the boxplot presented in \cite{skewed} is necessary for a correct representation of such results.

\end{appendices} \label{Appendix}
\clearemptydoublepage

\def\bibname{References}
\addcontentsline{toc}{chapter}{\protect\numberline{}References}
\bibliographystyle{alpha}

\begin{thebibliography}{9}	
\bibitem{Qpout} Etxezarreta Martinez, J., Fuentes, P., Crespo, P. M., \& Garcia-Fr\'ias, J. Quantum outage probability for time-varying quantum channels. \emph{Phys. Rev. A} \textbf{105,} 012432 (2022).

\bibitem{TVQC} Etxezarreta Martinez, J., Fuentes, P., Crespo, P. M., \& Garcia-Fr\'ias, J. Time-varying Quantum Channel Models for Superconducting Qubits. \emph{npj Quantum Inf.} \textbf{7,} 115 (2021).

\bibitem{josu3} Etxezarreta Martinez, J., Fuentes P., Crespo,P. M. \& Garcia-Fr\'ias, J. Pauli Channel Online Estimation Protocol for Quantum Turbo Codes. \emph{2020 IEEE International Conference on Quantum Computing and Engineering (QCE)}, 102-108 (2020).

\bibitem{josuchannels} Etxezarreta Martinez, J., Fuentes, P., Crespo, P. M. \& Garcia-Fr\'ias, J. Approximating Decoherence Processes for the Design and Simulation of Quantum Error Correction Codes on Classical Computers. \emph{IEEE Access} \textbf{8}, 172623-172643 (2020).

\bibitem{josu2} Etxezarreta Martinez, J., Crespo, P. M. \& Garcia-Fr\'ias, J. Depolarizing Channel Mismatch and Estimation Protocols for Quantum Turbo Codes. \emph{Entropy} \textbf{21(12)}, 1133 (2019).

\bibitem{josu1} Etxezarreta Martinez, J., Crespo, P. M. \& Garcia-Fr\'ias, J. On the Performance of Interleavers for Quantum Turbo Codes. \emph{Entropy} \textbf{21(7)}, 633 (2019).

\bibitem{logicalRate} Fuentes, P., Etxezarreta Martinez J., Crespo, P. M., \& Garcia-Fr\'ias, J. On the logical error rate of sparse quantum codes. \emph{arXiv preprint} (2021).

\bibitem{degen} Fuentes, P., Etxezarreta Martinez J., Crespo, P. M., \& Garcia-Fr\'ias, J. Degeneracy and its impact on the decoding of sparse quantum codes. \emph{IEEE Access} \textbf{9,} 89093-89119 (2021).

\bibitem{patrick3} Fuentes, P., Etxezarreta Martinez, J., Crespo, P. M., \& Garcia-Fr\'ias, J. Design of LDGM-based quantum codes for asymmetric quantum channels. \emph{Phys. Rev. A} \textbf{103,} 022617 (2021).

\bibitem{patrick2} Fuentes, P., Etxezarreta Martinez, J., Crespo, P. M., \& Garcia-Fr\'ias, J. Performance of non-CSS LDGM-based quantum codes over the misidentified depolarizing channel. \emph{2020 IEEE International Conference on Quantum Computing and Engineering (QCE)}, 102-108 (2020).

\bibitem{patrick} Fuentes, P., Etxezarreta Martinez, J., Crespo, P. M., \& Garcia-Fr\'ias, J. Approach for the construction of non-CSS LDGM-based quantum codes. \emph{Phys. Rev. A} \textbf{102,} 012423 (2020).

\bibitem{feynman}  Feynman, R. P. Simulating physics with computers. \emph{Int. J. Theor. Phys.} \textbf{21,} 467-488 (1982).

\bibitem{drugs}  Cao, Y.,  Romero, J. \& Aspuru-Guzik, A. Potential of quantum computing for drug discovery. \emph{IBM Journal of Research and Development} \textbf{62,} 6 (2018).

\bibitem{energies} Aspuru-Guzik, A., Dutoi, A. D., Love, P. J. \& Head-Gordon, M. Simulated Quantum Computation of Molecular Energies. \emph{Science} \textbf{309,} 1704-1707 (2005).

\bibitem{reactions} Reiher, M., Wiebe, N., Svore, K. M., Wecker, D. \& Troyer, M. Elucidating reaction mechanisms on quantum computers. \emph{Proc. Nat. Acad. Sci.} \textbf{114,} 7555-7560 (2017).

\bibitem{chemistry} Lanyon, B. P. et al. Towards quantum chemistry on a quantum computer. \emph{Nature Chem.} \textbf{2,} 2 (2010).

\bibitem{shor} Shor, P. W. Polynomial-Time Algorithms for Prime Factorization and Discrete Logarithms on a Quantum Computer. \emph{SIAM Journal on Computing} \textbf{26}, 5 (1997).

\bibitem{byzantine} Fitzi, M., Gisin, N. \& Maurer, U. Quantum Solution to the Byzantine Agreement Problem. \emph{Phys. Rev. Lett.} \textbf{87,} 21 (2001).

\bibitem{grover1} Grover, L. K. A Fast Quantum Mechanical Algorithm for Database Search. \emph{Annual ACM Symp. on Theory of Comp.}, 212-219 (1996).

\bibitem{grover2} Grover, L. K. Schrödinger’s equation to the quantum search algorithm. \emph{Pramana - J. Phys.} \textbf{56,} 333-348 (2001).

\bibitem{BB84} Bennett, C. H. \& Brassard, G. Quantum cryptography: Public key distribution and coin tossing. \emph{Proceedings of IEEE International Conference on Computers, Systems and Signal Processing} \textbf{175,} 8-12 (1984).

\bibitem{E91} Ekert, A. Quantum cryptography based on Bell’s theorem. \emph{Phys. Rev. Lett.} \textbf{67,} 6 (1991).

\bibitem{VonNeumann} Brandl, M. F. A Quantum von Neumann Architecture for Large-Scale Quantum Computing. \emph{arXiv e-prints} (2017).

\bibitem{accel} Humble, T. S., Britt, K. A. \& Mohiyaddin, F. A. Quantum Accelerators for High-performance Computing Systems. \emph{EEE International Conference on Rebooting Computing}, 1-7 (2017).

\bibitem{QKDhard} Zhao, Y., Qi, B., Ma, X., Lo, H. \& Qian, L. Experimental Quantum Key Distribution with Decoy States. \emph{Phys. Rev. Lett.} \textbf{96,} 7 (2006).

\bibitem{decoherenceShor} Shor, P. W. Scheme for reducing decoherence in quantum computer memory. \emph{Phys. Rev. A} \textbf{52,} 4 (1995).

\bibitem{bounds} Chandra, D., Babar, Z., Nguyen, H. V., Alanis, D., Botsinis, P., Ng, S. X. \& Hanzo, L. Quantum Coding Bounds and a Closed-Form Approximation of the Minimum Distance Versus Quantum Coding Rate. \emph{IEEE Access} \textbf{5,} 11557-11581 (2017).

\bibitem{quantumcap} Gyongyosi, L., Imre, S. \& Nguyen, H. V. A Survey on Quantum Channel Capacities. \emph{IEEE Communications Surveys Tutorials} \textbf{20,} 2 (2018).

\bibitem{threshold} Aharonov, D. \& Ben-Or, M. Fault-tolerant quantum computation with constant error. \emph{Proceedings of the twenty-ninth annual ACM symposium on Theory of computing}, 176-188 (1997).

\bibitem{QSC} Gottesman, D. Stabilizer codes and quantum error correction. \emph{Ph.D. dissertation}, California Inst. Tech., Pasadena, CA, USA (1997).

\bibitem{QRM} Steane, A. M. Quantum Reed–Muller codes. \emph{IEEE Trans. Inf. Theory} \textbf{45,} 4 (1999).

\bibitem{bicycle} MacKay, D. J. C., Mitchinson, G. \& McFadden, P. L. Sparse-graph
codes for quantum error correction. \emph{IEEE Trans. Inf. Theory} \textbf{50,} 10 (2004).

\bibitem{qldpc15} Babar, Z., Botsinis, P., Alanis, D., Ng, S. X. \& Hanzo, L. Fifteen Years of Quantum LDPC Coding and Improved Decoding Strategies. \emph{IEEE Access} \textbf{3,} 2492-2519 (2015).

\bibitem{jgf} Lou, H. \& Garcia-Fr\'ias, J. Quantum error-correction using codes with low-density generator matrix. \emph{IEEE 6th Workshop on Signal Processing Advances in Wireless Communications} (2005).

\bibitem{QCC} Ollivier, H. \& Tillich, J.-P. Description of a quantum convolutional code. \emph{Phys. Rev. Lett.} \textbf{91,} 17 (2003).

\bibitem{QTC} Poulin, D., Tillich, J.-P. \& Ollivier,  H. Quantum Serial Turbo Codes. \emph{IEEE Trans. Inf. Theory} \textbf{55,} 6 (2009).

\bibitem{EAQTC} Wilde, M. M., Hsieh, M. \& Babar, Z. Entanglement-Assisted Quantum Turbo Codes. \emph{IEEE Trans. Inf. Theory} \textbf{60,} 2 (2014).

\bibitem{toric} Kitaev, A. Y. Quantum computations: Algorithms and error correction. \emph{Russ. Math. Surveys} \textbf{52,} 6 (1997).

\bibitem{QEClidar} Lidar, D. \& Brun, T. Quantum Error Correction. (Cambridge Univ. Press, Cambridge, 2013).

\bibitem{EAQECC} Brun, T., Devetak, I. \& Hsieh, M. Correcting Quantum Errors with Entanglement. \emph{Science} \textbf{314,} 5798 (2006).

\bibitem{QECCortGeo} Calderbank, A. R., Rains, E. M., Shor, P. W. \& Sloane, N. J. A. Quantum Error Correction and Orthogonal Geometry. \emph{Phys. Rev. Lett.} \textbf{78,} 3 (1997).

\bibitem{CQiso} Babar, Z., Botsinis, P., Alanis, D., Ng, S. X. \& Hanzo, L. The Road From Classical to Quantum Codes: A Hashing Bound Approaching Design Procedure. \emph{IEEE Access} \textbf{3,} 146-176 (2015).
 
\bibitem{shannon} Shannon, C. E. A mathematical theory of communication. in \emph{The Bell System Technical Journal} \textbf{27,} 3 (1948).

\bibitem{SchlorPhD}
Schl\"or, S. Intrinsic decoherence in superconducting quantum circuits. \emph{Ph.D. Thesis}, Karlsruher Institut f\"ur Technologie, Germany, https://inis.iaea.org/search/search.aspx?orig\_q=RN:51064817 (2020).

\bibitem{twirl1} Ghosh, J., Fowler, A. G. \& Geller, M. R. Surface code with decoherence: An analysis of three superconducting architectures. \emph{Phys. Rev. A} \textbf{86,} 6 (2012).

\bibitem{NielsenChuang} Nielsen, M. A. \& Chuang, I. Quantum Computation and Quantum Information: 10th Anniversary Edition, (Cambridge Univ. Press, Cambridge, 2011).

\bibitem{GotKnill} Gottesman, D. The Heisenberg representation of quantum computers. \emph{International Conference on Group Theoretic Methods in Physics} (1998).

\bibitem{NISQ} Preskill, J. Quantum Computing in the NISQ era and beyond. \emph{Quantum} \textbf{2,} 1-79 (2018).

\bibitem{GoogleSup} Arute, F., Arya, K., Babbush, R. et al. Quantum supremacy using a programmable superconducting processor. \emph{Nature} \textbf{574,} 505–510 (2019).

\bibitem{ChinaSup} Zhong, H.-S. et al. Quantum computational advantage using photons. \emph{Science} \textbf{370,} 1460-1463 (2020).

\bibitem{decoherenceBenchmarking} Burnett, J. J. et al. Decoherence benchmarking of superconducting qubits. \emph{npj Quantum Inf.} \textbf{5,} 54 (2019).

\bibitem{klimov} Klimov, P. V. et al. Fluctuations of Energy-Relaxation Times in Superconducting Qubits \emph{Phys. Rev. Lett.} \textbf{121,} 090502 (2018).

\bibitem{fluctAPS}  Schl\"or, S. et al. Correlating Decoherence in Transmon Qubits: Low Frequency Noise by Single Fluctuators \emph{Phys. Rev. Lett.} \textbf{123,} 190502 (2019).

\bibitem{fluctApp} Stehli, A. et al. Coherent superconducting qubits from a subtractive junction fabrication process. \emph{Appl. Phys. Lett.} \textbf{117,} 124005 (2020).

\bibitem{temperature}
Wang, Z. et al. Cavity Attenuators for Superconducting Qubits. \emph{Phys. Rev. Applied}, \textbf{11,} 014031 (2019).

\bibitem{fluctGoogle} Carroll, M., Rosenblatt, S., Jurcevic, P., Lauer, I. \& Kandala, A. Dynamics of superconducting qubit relaxation times. \emph{arXiv preprint} (2021).

\bibitem{isomorphism}
Babar, Z., Chandra, D., Nguyen, H.V., Botsinis, P., Alanis, D., Ng, S.X. \& Hanzo, L. Duality of Quantum and Classical Error Correction Codes: Design Principles \& Examples.
\emph{IEEE Commun. Surv. Tut.} \textbf{21,} 970-1010 (2018).

\bibitem{EXITQTC} Babar, Z., Ng, S.X. \& Hanzo, L. EXIT-Chart-Aided Near-Capacity Quantum Turbo Code Design. \emph{IEEE Trans. Vehic. Tech.} \textbf{64,} 866-875 (2015).

\bibitem{QIrCC} Babar, Z., Botsinis, P., Alanis, D., Ng, S.X. \& Hanzo, L. The Road From Classical to Quantum Codes: A Hashing Bound Approaching Design Procedure.
\emph{IEEE Access} \textbf{3,} 146-176 (2015).

\bibitem{EAQIRCC}
Nguyen, H.V., Babar, Z., Alanis, D., Botsinis, P., Chandra, D., Ng, S.X. \& Hanzo, L. EXIT-Chart Aided Quantum Code Design Improves the Normalised Throughput of Realistic Quantum Devices. \emph{IEEE Access} \textbf{4,} 10194-10209 (2016).

\bibitem{SrandomConstruction} Koutsouvelis, K.V. \& Dimakis, C.E. A Low Complexity Algorithm for Generating Turbo Code s-Random Interleavers. \emph{Wireless Pers. Commun.} \textbf{46,} 365-370 (2008).

\bibitem{Lazcano} Saul, L.S. \& Francisco, G. A comparative study for turbo code interleavers designed for short frame size and relatively high SNR channel. \emph{Proceedings of 2011 5th International Conference on Signal Processing and Communication Systems (ICSPCS)},  1-4 (2011).

\bibitem{twoStep}
Sadjadpour, H.R., Sloane, N.J.A., Salehi, M. \& Nebe, G. Interleaver design for turbo codes. \emph{IEEE J. Selected Areas Commun.} \textbf{19,} 831-837 (2001).

\bibitem{IntDesignforTC} Andrews, K.S., Heegard, C. \& Kozen, D. Interleaver design methods for turbo codes.  \emph{Proceedings of 1998 IEEE International Symposium on Information Theory}, 420 (1998).

\bibitem{Kovaci} Kovaci, M., Balta, H.G. \& Nafornita, M.M. The performances of interleavers used in turbo codes. \emph{Proceedings of 2005 International Symposium on Signals, Circuits and Systems} 363-366 (2005).

\bibitem{Vafi} Vafi, S. \& Wysocki, T. Performance of convolutional interleavers with different spacing parameters in turbo codes. \emph{Proceedings of 2005 Australian Communications Theory Workshop}, 8-12 (2005).

\bibitem{turboCoding} Heegard, C. \& Wicker, S.B. Turbo Coding. (Kluwer Academic Publishers: Norwell, Massachusetts, 1999).

\bibitem{SrandomFlexible} Popovski, P., Kocarev, L. \& Risteski, A. Design of flexible-length S-random interleaver for turbo codes. \emph{IEEE Commun. Lett.} \textbf{8,} 461-463 (2004).

\bibitem{QURC} Babar, Z., Nguyen, H.V., Botsinis, P., Alanis, D., Chandra, D., Ng,
 S.X. \& Hanzo, L. Serially Concatenated Unity-Rate Codes Improve Quantum Codes Without
 Coding-Rate Reduction. \emph{IEEE Commun. Lett.} \textbf{20,} 1916-1919 (2016).

\bibitem{QSBC} Chandra, D., Babar, Z., Ng, S.X. \& Hanzo, L. Near-Hashing-Bound Multiple-Rate Quantum Turbo Short-Block Codes. \emph{IEEE Access} \textbf{7,} 52712-52730 (2019).

\bibitem{Catastrophic} Houshmand, M. \& Wilde, M. M. Recursive Quantum Convolutional Encoders are Catastrophic: A Simple Proof. \emph{IEEE Trans. Inf. Theory} \textbf{59,} 6724-6731 (2013).

\bibitem{QLDPCmismatch} Xie, Y., Li, J., Malaney, R. \& {Yuan}, J. Channel identification and its impact on quantum LDPC code performance. \emph{Proceedings of the Australian Communications Theory Workshop (AusCTW)}, 140-144 (2012).

\bibitem{QLDPCMismatchMethods} Xie, Y., Li, J., Malaney, R. \& Yuan, J. Improved quantum LDPC decoding strategies for the misidentified quantum depolarization channel. \emph{Proceedings of the 2016 24th European Signal Processing Conference
 (EUSIPCO)}, 493-497 (2016).
 
 \bibitem{ReedMismatch} Reed, M. \& Asenstorfer, J. A Novel Variance Estimator for Turbo-Code Decoding. \emph{Ph.D. Thesis}, Monash University, Melbourne, Australia (1997).

\bibitem{SummersMismatch} Summers, T.A. \& Wilson, S.G. SNR mismatch and online estimation in turbo decoding. \emph{IEEE Trans. Commun.} \textbf{46,} 421-423 (1998).

\bibitem{MismatchNoEstimation} Worm, A., Hoeher, P. \& Wehn, N. Turbo-decoding without SNR estimation. \emph{IEEE Commun. Lett.} \textbf{4,} 193-195 (2000).

\bibitem{MismatchSOVA} Khalighi, M.A. Effect of mismatched SNR on the performance of log-MAP turbo detector. \emph{IEEE Trans. Veh.~Technol.} \textbf{52,} 1386-1397 (2003).

\bibitem{VarianceMismatchSCCC} Ho, M. \& Pietrobon, S. A variance mismatch study for serial concatenated turbo codes. \emph{Proceedings of the 2nd International Symposium on Turbo Codes and Related Topics}, 483-485 (2000).

\bibitem{SCCCmismatchTalwar} Hsian, P.C. \& Chin, P. Evaluation of variance mismatch for serial turbo codes with Talwar penalty function under interference of impulsive noise. \emph{J. Eng. Sci. Technol.} \textbf{5,} 350-360 (2010).

\bibitem{Dirac} Dirac, P. A. M.The principles of quantum mechanics. (Clarendon Press, Oxford, 1930).

\bibitem{vonNeumann} von Neumann, J. Mathematical Foundations of Quantum Mechanics. (Springer, Berlin, 1932).

\bibitem{chsh} Clauser, J. F., Horne,  M.A., Shimony, A. \& Holt, R. A. Proposed experiment to test local hidden-variable theories. \emph{Phys. Rev. Lett.} \textbf{23,} 15 (1969).

\bibitem{einstein} Einstein A. Zur Elektrodynamik bewegter K\"orper. \emph{Annalen der Physik} \textbf{322,} 10 (1905).

\bibitem{josephson} Josephson, B. D. Possible new effects in superconductive tunnelling. \emph{Pysics Letters} \textbf{1,} 7 (1962).

\bibitem{transmon2} Koch J. et al. Charge-insensitive qubit design derived from the Cooper pair box. \emph{Phys. Rev. A} \textbf{76,} 042319 (2007).

\bibitem{transmon} Wendin, G. Quantum information processing with superconducting circuits: a review. \emph{IOP Publishing} \textbf{80,} 10 (2017).

\bibitem{xmon} Barends, R. et al. Coherent Josephson Qubit Suitable for Scalable Quantum Integrated Circuits. \emph{Phys. Rev. Lett.} \textbf{111,} 080502 (2013).

\bibitem{paultrap}  Paul, W. and Steinwedel H. Ein neues Massenspektrometer ohne Magnetfeld. \emph{RZeitschrift für Naturforschung} \textbf{8,} 7 (1953).

\bibitem{trapped} Bruzewicz, C. D., Chiaverini, J., McConnell, R. \& Sage, J. M. Trapped-ion quantum computing: Progress and challenges. \emph{Applied Physics Reviews} \textbf{6,} 2 (2019).

\bibitem{quantumdot} Veldhorst, M. et al. An addressable quantum dot qubit with fault-tolerant control-fidelity. \emph{Nature Nanotech} \textbf{9,} 981–985 (2014).

\bibitem{bosonic1} Cai, W., Ma, Y., Wang, W., Zou, C.-L. \& Sun, L. Bosonic quantum error correction codes in superconducting quantum circuits. \emph{Fundamental Research} \textbf{1}, 50-67 (2021).

\bibitem{bosonic2} Joshi, A., Noh, K. \& Gao, Y. Y. Quantum Information Processing with Bosonic Qubits in Circuit QED. Preprint at https://arxiv.org/abs/2008.13471 (2021).

\bibitem{qutechTransmon} Dickel, C. How to make artificial atoms out of electrical circuits – Part II: Circuit quantum electrodynamics and the transmon. \emph{Bits of Quantum: A blog by QuTech}, Aug. 13, 2017. [Online]. Available: http://blog.qutech.nl/index.php/2017/08/13/how-to-make-artificial-atoms-out-of-electrical-circuits-part-ii-circuit-quantum-electrodynamics-and-the-transmon/.

\bibitem{erasure1} Grassl, M., Beth, Th. \& Pellizzari, T. Codes for the quantum erasure channel. \emph{Phys. Rev. A} \textbf{56,} 33 (1997).

\bibitem{erasure2} Bennett, C. H., DiVincenzo, D. P. \& Smolin, J. A. Capacities of Quantum Erasure Channels. \emph{Phys. Rev. Lett.} \textbf{78,} 3217 (1997).

\bibitem{deletion} Leahy, J., Touchette, D. \& Yao, P. Quantum Insertion-Deletion Channels. \emph{arXiv preprint} (2019).

\bibitem{Lindblad} Manzano, D. A short introduction to the Lindblad master equation. \emph{AIP Advances}, \textbf{10,} 025106 (2020).

\bibitem{photonsG} Gimeno-Segovia, M. Fault-tolerant quantum computing with photonics. \emph{Quantum West} \textbf{11714,} 1171407 (2021).

\bibitem{masterTemp} Singh, H. \& Dorai, K. Using a Lindbladian approach to model decoherence in two coupled nuclear spins via correlated phase damping and amplitude damping noise channels \emph{Pramana} \textbf{94,} 160 (2020).

\bibitem{wildeQIT} Wilde, M. M. Quantum Information Theory. (Cambridge Univ. Press, Cambridge, 2017).

\bibitem{IBM}
Gambetta, J. \& Sheldon, S. Cramming More Power Into a Quantum Device. \emph{IBM},
Mar. 4, 2019. [Online]. Available: https://www.ibm.com/blogs/research/2019/03/power-quantum-device/.

\bibitem{Rigetti}
Amazon Braket, Amazon Braket Hardware Providers. \emph{Amazon Web Services},
Oct. 24, 2019. [Online]. Available: https://aws.amazon.com/braket/hardware-providers/\#Rigetti. 

\bibitem{Google}
Arute, F. et al. Quantum supremacy using a programmable superconducting processor. \emph{Nature} \textbf{574,} 505-510 (2019).

\bibitem{IonQ}
Wright, K. et al. Benchmarking an 11-qubit quantum computer. \emph{Nat. Commun.} \textbf{10,} 5464 (2019).

\bibitem{Intel}
Intel, Reinventing Data Processing with Quantum Computing. \emph{Intel}. [Online]. Available: https://www.intel.es/content/www/es/es/research/quantum-computing.html.

\bibitem{watrous} Watrous, J. The Theory of Quantum Information. (Cambridge Univ. Press, Cambridge, 2018).

\bibitem{twirl2} Silva, M., Magesan, E., Kribs, D. W. \& Emerson, J. Scalable protocol for identification of correctable codes. \emph{Phys. Rev. A} \textbf{78,} 1 (2008).

\bibitem{twirl3} Emerson, J., Silva, M., Moussa, O., Ryan, C., Laforest, M., Baugh, J., Cory, D. G. \& Laflamme, R. Symmetrized Characterization of Noisy Quantum Processes. \emph{Science} \textbf{317,} 5846 (2007).

\bibitem{twirl4} Bennett, C. H., DiVincenzo, D. P., Smolin, J. A. \& Wootters, W. K. Mixed-state entanglement and quantum error correction. \emph{Phys. Rev. A} \textbf{54,} 5 (1996).

\bibitem{twirl5} Dankert, C., Cleve, R., Emerson, J. \& Livine, E. Exact and approximate unitary 2-designs and their application to fidelity estimation. \emph{Phys. Rev. A} \textbf{80,} 1 (2009).

\bibitem{twirl6} Sarvepalli, P. K., Klappenecker, A. \& Rötteler, M. Asymmetric quantum codes: constructions, bounds and performance. \emph{Proc. R. Soc. A} \textbf{465,} (2009).

\bibitem{honest} Katabarwa, A. \& Geller, M. Logical error rate in the Pauli twirling approximation. \emph{Sci. Rep.} \textbf{5,} 14670 (2015).

\bibitem{OzolsClifford} Ozols, M. Clifford group. (Essays at University of Waterloo, 2008).

\bibitem{CliffIsPauliSymp} Ukai, R. Multi-Step Multi-Input One-Way Quantum Information Processing with Spatial and Temporal Modes of Light. (Springer Theses, Berlin, 2015).

\bibitem{SympGroup} Marsden, J. E. \& Ratiu, T. S. Introduction to Mechanics and Symmetry. \emph{Springer Texts in Applied Mathematics} \textbf{17,} (1999).

\bibitem{qchannelswithmem} Kretschmann, D. \& Werner, R. F. Quantum Channels with memory. \emph{Phys. Rev. A} \textbf{72,} 062323 (2005).

\bibitem{Ball1} Ball, J., Dragan, A. \& Banaszek, K. Exploiting entanglement in communication channels with correlated noise. \emph{Phys. Rev. A} \textbf{69,} 042324 (2004).

\bibitem{Ball2} Ball, J. \& Banaszek, K. Decoherence-Free Subspaces and Subsystems for a Collectively Depolarizing Bosonic Channel. \emph{Open Systems \& Information Dynamics} \textbf{12,} 121-131 (2005).

\bibitem{qubits1} Duan, L. M. \& Guo, G. C. Reducing decoherence in quantum-computer memory with all quantum bits coupling to the same environment. \emph{Phys. Rev. A} \textbf{57,} 2 (1998).

\bibitem{qubits2} Hu, Y., Zhou, Z. W. \& Guo, G.C. Always on non-nearest-neighbour coupling in scalable quantum computing. \emph{New Journal of Physics} \textbf{9,} 2 (2007).

\bibitem{memless_model} Giovannetti, V. A dynamical model for quantum memory channels. \emph{Journal of Physics A} \textbf{38,} 10989 (2002).

\bibitem{markov_noise1} Demkowicz-Dobrzanski, R., Kolenderski, P. \& Banaszek, K. Effects of imperfect noise correlations on decoherence-free subsystems: SU(2) diffusion model. \emph{Phys. Rev. A} \textbf{76,} 022302 (2007).

\bibitem{markov_noise2} Karimipour, V. \& Memarzadeh, L. Entanglement and optimal strings of qubits for memory channels. \emph{Phys. Rev. A} \textbf{74,} 032332 (2006).

\bibitem{markov_noise3} Karpov,E., Daems, D. \& Cerf, N. J. Entanglement-enhanced classical capacity of quantum communication channels with memory in arbitrary dimensions. \emph{Phys. Rev. A} \textbf{74,} 032320 (2006).

\bibitem{markov1} Macchiavello, C. \& Palma, G. M. Entanglement enhanced information transmission over a quantum channel with correlated noise. \emph{Phys. Rev. A} \textbf{65,} 050301 (2002).

\bibitem{markov2} Macchiavello,C., Palma, G. M. \& Virmani, S. Transition behaviour in the channel capacity of two-qubit channels with memory. \emph{Phys. Rev. A} \textbf{69,} 010303 (2004).

\bibitem{markov3} Daems, D. Entanglement-enhanced transmission of classical information in Pauli channels with memory: Exact solution. \emph{Phys. Rev. A} \textbf{76,} 012310 (2007).

\bibitem{MemQTC} Izhar, M. A. M. et al. Quantum turbo decoding for quantum channels exhibiting memory. \emph{IEEE Access} \textbf{6,} 12369–12381 (2018).

\bibitem{first_model} Lupo, C., Giovannetti, V. \& Mancini, S. Capacities of lossy bosonic memory channels. \emph{Phys. Rev. Letters} \textbf{104,} 030501 (2010).

\bibitem{isimem} Bowen, G., Devetak, I. \& Mancini, S. Bounds on classical information capacities for a class of quantum memory channels. \emph{Phys. Rev. A} \textbf{71,} 034310 (2005).

\bibitem{perf_mem} Bowen, G. \& Mancini, S. Quantum channels with a finite memory. \emph{Phys. Rev. A} \textbf{69,} 012306 (2004).

\bibitem{model2} Lupo, C., Giovannetti, V. \& Mancini, S. Memory effects in attenuation and amplification quantum processes. \emph{Phys. Rev. A} \textbf{81,} 032312 (2010).

\bibitem{memreview} Carusso, F., Giovannetti, V., Lupo, C. \& Mancini, S. Quantum channels and memory effects. \emph{Reviews of Modern Physics} \textbf{86,} 1203-1259 (2014).

\bibitem{capComm} Pfister, C., Rol, M., Mantri, A. et al. Capacity estimation and verification of quantum channels with arbitrarily correlated errors. Nat Commun 9, 27 (2018).

\bibitem{bottleneck} Wolf, M. M. \& P\'erez-Garc\'ia, D. Quantum capacities of channels with small environment. \emph{Phys. Rev. A} \textbf{75,} 012303 (2007).

\bibitem{degenPRL} Smith, G., \& Smolin, J. A. Degenerate Quantum Codes for Pauli Channels. \emph{Phys. Rev. Lett.} \textbf{98,} 030501 (2007).

\bibitem{catalytic} Brun, T. A., Devetak, I. \& Hsieh, M. Catalytic Quantum Error Correction. \emph{IEEE Trans. Inf. Theory} \textbf{60,} 6 (2014).

\bibitem{knillLaff} Knill, E. \& Lafflame, R. Theory of quantum error-correcting codes. \emph{Phys. Rev. A} \textbf{55,} 2 (1997).

\bibitem{hardness} Iyer, P. \& Poulin, D. Hardness of Decoding Quantum Stabilizer Codes. \emph{IEEE Trans. Inf. Theory} \textbf{61,} 9 (2015).

\bibitem{diamondNat}
Pirandola, S. et al. Fundamental limits to quantum channel discrimination. \emph{npj Quantum Inf.} \textbf{5,} 50 (2019).

\bibitem{FanoDiamond}
Benenti, G. \& Strini, G. Computing the distance between quantum channels: Usefulness of the Fano representation. \emph{J. Phys. B: At. Mol. Opt. Phys.} \textbf{43}, 215508, (2010).

\bibitem{bylander}
Bylander J. The Oxford Handbook of Small Superconductors: Superconducting quantum bits of information - coherence and design improvements.  (Oxford Univ. Press, Oxford, 2017).

\bibitem{limitGate} Chow, J. M. et al. Randomized Benchmarking and Process Tomography for Gate Errors in a Solid-State Qubit. \emph{Phys. Rev. Lett.} \textbf{102,} 090502 (2009).

\bibitem{Lorentzcomp} Pullia, A. \& Riboldi, S. Time-domain Simulation of electronic noises. \emph{IEEE Transactions on Nuclear Science} \textbf{51,} 1817-1823 (2004).

\bibitem{fading} Biglieri, E. et al. Fading channels: information-theoretic and communications aspects. \emph{J. Phys. B: At. Mol. Opt. Phys.} \textbf{43,} 215508 (2010).

\bibitem{bylanderPhoton} Lu, Y. et al. Quantum efficiency, purity and stability of a tunable, narrowband microwave single-photon source. \emph{npj Quantum Information} \textbf{7,} 140 (2021).

\bibitem{melbourne} Gambetta, J. M. et al. IBM Q 16 Melbourne V1. \emph{IBM}. https://github.com/Qiskit/ibmq-device-information/tree/master/backends/melbourne/V1 (2019).

\bibitem{yorktown} Gambetta, J. M. et al. IBM Q 5 Yorktown V1. https://github.com/Qiskit/ibmq-device-information/blob/master/backends/yorktown/V1 (2019).

\bibitem{rsaRounds} Gidney, C., \& Ekerå, M. How to factor 2048 bit RSA integers in 8 hours using 20 million noisy qubits. \emph{Quantum} \textbf{5}, 433 (2021).
 
\bibitem{tse} Tse D., \& Viswanath, P. Fundamentals of wireless communication. (Cambridge Univ, Press, Cambridge, 2005).

\bibitem{surface}
Tuckett, D. K. Tailoring surface codes: Improvements in quantum error correction with biased noise. \emph{Ph.D. Thesis}, University of Sydney, Australia, https://qecsim.github.io (2020).

\bibitem{berrouglavieux} Berrou, C., Glavieux, A. \& Thitimajshima, P. Near Shannon limit error-correcting coding and decoding: Turbo-codes. \emph{Proceedings of ICC '93 - IEEE International Conference on Communications}, 1064-1070 (1993).

\bibitem{paralDec} Babar, Z. et al. Fully-Parallel Quantum Turbo Decoder. \emph{IEEE Aceess} \textbf{4,} 6073-6085 (2016).

\bibitem{shortQTC} Chandra, D., Babar, Z., Ng, S. X. \& Hanzo, L. Near-Hashing-Bound Multiple-Rate Quantum Turbo Short-Block Codes. \emph{IEEE Access} \textbf{7,} 52712-52730 (2019).

\bibitem{ccsds} Consultative Committee for Space Data Systems (CCSDS). Telemetry Channel Coding. (2002).

\bibitem{ProcessTomography} Mohseni, M., Rezakhani, A. T. \& Lidar, D. A.Quantum-process tomography: Resource analysis of different strategies. \emph{Phys. Rev. A}  \textbf{77,} 032322 (2008).

\bibitem{Estimation} Collins, D. \& Stephens, J. Depolarizing-channel parameter estimation using noisy initial states. \emph{Phys. Rev. A} \textbf{92,} 032324 (2015).

\bibitem{FisherAchievable} Barndorff-Nielsen, O. E. \&  Gill, R. D. Fisher information in quantum statistics. \emph{J. Phys. A  Math. Gen.} \textbf{33,} 4481-4490 (2000).

\bibitem{Fisher} Hotta, M., Karasawa, T. \& Ozawa, M. Ancilla-assisted enhancement of channel estimation for low-noise parameters. \emph{Phys. Rev. A} \textbf{72,} 052334 (2005).

\bibitem{RobustProbes1} Oszmaniec, M. et al. Random Bosonic States for Robust Quantum Metrology. \emph{Phys. Rev. X} \textbf{6,} 041044 (2016).

\bibitem{RobustProbes2} Yingkai, O., Nathan, S. \& Damian, M. Robust quantum metrology with explicit symmetric states. \emph{arXiv preprint} (2019).

\bibitem{FishPureProbe} Fujiwara, A. Quantum channel identification problem. \emph{Phys. Rev. A} \textbf{63,} 042304 (2001).

\bibitem{GottLDPC} Gottesman, D. Class of quantum error-correcting codes saturating the quantum Hamming bound. \emph{Phys. Rev. A} \textbf{54,} 1862 (1996).

\bibitem{jgfQLDGM1} Lou, H. \& Garcia-Fr\'ias, J. Quantum error-correction using codes with low-density generator matrix. \emph{IEEE 6th Workshop on Signal Processing Advances in Wireless Communications} (2005).

\bibitem{jgfQLDGM2} Lou, H. \& Garcia-Fr\'ias, J. On the application of error-correcting codes with low-density generator matrix over different quantum channels. \emph{4th International Symposium on Turbo Codes and Related Topics} (2006).

\bibitem{jgfQLDGM3} Liu, K. \& Garcia-Fr\'ias, J. Optimization of LDGM-based quantum codes using Density Evolution. \emph{48th Annual Allerton Conferenceon Communication, Control, and Computing} (2010).

\bibitem{CSS1} Calderbank, A. R. \& Shor, P. W. Good quantum error-correcting codes exist. \emph{Phys. Rev. A} \textbf{54,} 1098 (1996).

\bibitem{CSS2} Steane, A. Multiple-particle interference and quantum error correction. \emph{Proc. R. Soc. London, Ser. A} \textbf{452,} 2551 (1996).

\bibitem{CSSbound} Maurice, D., Tillich, J. P. \& Andriyanova, I. A family of quantum codes with performances close to the hashing bound under iterative decoding. \emph{IEEE International Symposium on Information Theory} 907-911 (2013).

\bibitem{modified1} Hadely, C., Serafini, A. \& Bose, S. Entanglement creation and distribution on a graph of exchange-coupled qutrits. \emph{Phys. Rev. A} \textbf{72,} 052333 (2005).

\bibitem{modified2} Duclos-Cianci, G. \& Poulin, D. Fast Decoders for Topological Quantum Codes. \emph{Phys. Rev. Lett.} \textbf{104,} 050504 (2010).

\bibitem{modified3} Poulin, D. \& Chung, Y. On the iterative decoding of sparse quantum codes. \emph{Quantum Information \& Computation} \textbf{8,} 10 (2008).

\bibitem{modified4} Bravyi, S., Suchara, M. \& Vargo, A. Efficient algorithms for maximum likelihood decoding in the surface code. \emph{Phys. Rev. A} \textbf{90,} 032326 (2014).

\bibitem{modified5} Baireuther, P. et al. Machine-learning-assisted correction of correlated qubit errors in a topological code. \emph{Quantum} \textbf{2,} 48 (2018).

\bibitem{modified6} Varsamopoulos, S., Criger, B. \& Bertels, K. Decoding small surface codes with feedforward neural networks. \emph{Quantum Sci. Technol.} \textbf{3,} 015004 (2018).

\bibitem{modified7} Criger, B. \& Ashraf, I. Multi-path Summation for Decoding 2D Topological Codes. \emph{Quantum} \textbf{2,} 102 (2018).

\bibitem{noncssqldpc} Camara,T., Ollivier, H. \& Tillich, J. P. A class of quantum LDPC codes: construction and performances under iterative decoding. \emph{IEEE International
Symposium on Information Theory} 811-815 (2007).

\bibitem{aymmetricCode} Chiani, M. \& Valentini, L. Short Codes for Quantum Channels With One Prevalent Pauli Error Type. \emph{IEEE J. Sel. Areas Inf. Theory} \textbf{1,} 480 (2020).

\bibitem{NPhardn} Hsieh, M. H. \& Le Gall, F. NP-hardness of decoding quantum error-correction codes. \emph{Phys. Rev. A} \textbf{83,} 052331 (2010).

\bibitem{yoshida} Yoshida, B. \& Chuang, I. L. Framework for classifying logical operators in stabilizer codes. \emph{Phys. Rev. A} \textbf{81,} 052302 (2010).

\bibitem{viterbiDegen} Pelchat, E. \& Poulin, D. Degenerate Viterbi Decoding. \emph{IEEE Trans. Inf. Theory} \textbf{59,} 3915-3921 (2013).

\bibitem{poulinDegen} Poulin, D. Optimal and Efficient Decoding of Concatenated Quantum Block Codes. \emph{Phys. Rev. A} \textbf{74,} 052333 (2006).

\bibitem{panteleev} Panteleev, P. \& Kalachev, G. Degenerate quantum LDPC codes with good finite length performance. \emph{arXiv preprint} (2019).

\bibitem{landscape} Roffe, J., White, D. R., Burton, S. \& Campbell, E. T. Decoding Across the Quantum LDPC Code Landscape. \emph{Phys. Rev. Research} \textbf{2,} 043423 (2020).

\bibitem{kuo} Kuo, K. -Y. \& Lai, C. -Y. Refined Belief PropagationDecoding of Sparse-Graph Quantum Codes. \emph{IEEE J. Sel. Areas Inf. Theory} \textbf{1,} 487-498 (2020).

\bibitem{kuo2} Kuo, K. -Y. \& Lai, C. -Y. Exploiting degeneracy in belief propagation decoding of quantum codes. \emph{arXiv preprint} (2021).

\bibitem{rigby} Rigby, A., Olivier, J. C. \& Jarvis, P. Modified belief propagation decoders for quantum low-density paritycheck codes. \emph{Phys. Rev. A} \textbf{100,} 012330 (2019).

\bibitem{wang} Wang, Y., Sanders, B. C., Bai, B. \& Wang, X. Enhanced Feedback Iterative Decoding of Sparse Quantum Codes. \emph{IEEE Trans. Inf. Theory} \textbf{58,} 1231-1241 (2012).

\bibitem{qldpcBabar} Babar, Z., Botsinis, P., Alanis, D., Xin Ng, S. \& Hanzo, L. Construction of Quantum LDPC Codes From Classical Row-Circulant QC-LDPCs. \emph{IEEE Commun. Lett.} \textbf{20,} 9-12 (2016).

\bibitem{vasic} Raveendran, N. \& Vasi\'c, B. Trapping Sets of Quantum LDPC Codes. \emph{arXiv preprint} (2020).

\bibitem{imagra1} Granada, I., Crespo, P. \& Garcia-Fr\'ias, J. Rate Compatible Modulation for Non-Orthogonal Multiple Access. \emph{IEEE Access} \textbf{8,} 224246-224259 (2020).

\bibitem{imagra2} Granada, I., Crespo, P., Burich, M. \& Garcia-Fr\'ias, J. Rate Compatible Modulation for Correlated Information Sources. \emph{IEEE Access} \textbf{9,} 65449-65465 (2021).

\bibitem{newInt1} Divsalar, D. \& Pollara, F. Turbo codes for pcs applications. \emph{'Gateway to Globalization', 1995 IEEE International Conference} \textbf{1,} 54-59 (1995).

\bibitem{newInt2} Yuan, J., Vucetic, B. \& Feng, W. Combined turbo codes and interleaver design. \emph{IEEE Trans. Commun.} \textbf{47,} 484-487 (1999). 

\bibitem{newInt3} Yu, J. et al. Interleaver design for serial concatenated convolutional codes. \emph{IEEE Commun. Lett.} \textbf{8,} 523-525 (2004). 

\bibitem{newInt4} Maunder, R. G. \& Hanzo, L. Evolutionary algorithm aided interleaver design for serially concatenated codes. \emph{IEEE Trans. Commun.} \textbf{59,} 1753-1758 (2011). 

\bibitem{newInt5} Perez, L. C., Seghers, J. \& Costello, D. J. A distance spectrum interpretation of turbo codes. \emph{IEEE Trans. Inf. Theory} \textbf{55,} 1698-1709 (1996). 

\bibitem{ibmeagle} IBM Quantum. https://quantum-computing.ibm.com/ (2021).

\bibitem{sumproduct} Kschischang, F. R., Frey, B. J. \& Loeliger,  H. A. Factor graphs and the sum-product algorithm. \emph{IEEE Trans. Inf. Theory} \textbf{47,} 2 (2001).

\bibitem{QBP1} Renes, J. M. Belief propagation decoding of quantum channels by passing quantum messages. \emph{New J. Phys.} \textbf{19}, 072001 (2017).

\bibitem{QBP2} Rengaswamy, N., Seshadreesan, K.P., Guha, S. et al. Belief propagation with quantum messages for quantum-enhanced classical communications. \emph{npj Quantum Inf.} \textbf{7,} 97 (2021).

\bibitem{DelftToric} Leijenhorst, N. Quantum Error Correction: Decoders for the Toric Code. \emph{B.Sc. Thesis}, TU Delft, Netherlands (2019).

\bibitem{blossom} Edmonds, J. Paths, trees, and flowers \emph{Can. J. Math.} \textbf{17,} 449-467 (1965).

\bibitem{montecarlo} Jeruchim, M. Techniques for Estimating the Bit Error Rate in the Simulation of Digital Communication Systems. \emph{IEEE J. Sel. Areas Commun.} \textbf{2,} 153–170 (1984).

\bibitem{ADdiamond}
Pereira, J. L. \& Pirandola, S. Bounds on amplitude damping channel discrimination. \emph{Phys. Rev. A} \textbf{103,} 022610 (2020).

\bibitem{SDPdiamond}
Watrous, J. Semidefinite Programs for Completely Bounded Norms. \emph{Theory Comput.} \textbf{5}, 217–238 (2009).

\bibitem{QETLAB} Johnston, N. {QETLAB}: A {MATLAB} toolbox for quantum entanglement. http://qetlab.com (2016).

\bibitem{skewed} Hubert, M. \& Vandervieren, E. An adjusted boxplot for skewed distribution. \emph{Computational Statistics and Data Analysis} \textbf{52}, 5186–5201 (2008).

\bibitem{medcouple} Brys, G., Hubert, M. \& Struyf, A. A Robust Measure of Skewness. \emph{Journal of Computational and Graphical Statistics} \textbf{13}, 996–1017 (2004).

\end{thebibliography}

\makeatletter
\renewenvironment{thebibliography}[1]
     {
      \list{\@biblabel{\@arabic\c@enumiv}}%
           {\settowidth\labelwidth{\@biblabel{#1}}%
            \leftmargin\labelwidth
            \advance\leftmargin\labelsep
            \@openbib@code
            \usecounter{enumiv}%
            \let\p@enumiv\@empty
            \renewcommand\theenumiv{\@arabic\c@enumiv}}%
      \sloppy
      \clubpenalty4000
      \@clubpenalty \clubpenalty
      \widowpenalty4000%
      \sfcode`\.\@m}
     {\def\@noitemerr
       {\@latex@warning{Empty `thebibliography' environment}}%
      \endlist}
\makeatother



\end{document}